\documentclass[aps,prx,reprint,nofootinbib,twoside,superscriptaddress,longbibliography,floatfix]{revtex4-2}
\usepackage[english]{babel}
\usepackage[utf8]{inputenc}
\usepackage{amsfonts,amssymb,amsmath}
\usepackage{mathtools}
\usepackage{amsthm}
\usepackage{bm}

\usepackage{tikz}
\usetikzlibrary{arrows.meta,positioning,calc}
\usepackage{xcolor}
\usepackage{hyperref}

\newcommand{\figlink}[2]{\hyperref[#1]{\textcolor{blue}{#2}}}

\usepackage[normalem]{ulem}
\hypersetup{
  colorlinks=true,
  citecolor=blue!55!black,
  linkcolor=blue!55!black,
  urlcolor=blue!55!black,
  pdftitle={A Complete Classification of Complex Hadamard Matrices of Order Six},
  pdfauthor={Mateo Cárdenes Wuttig and Joseph Tindall}
}

\newtheorem{theorem}{Theorem}
\newtheorem{lemma}[theorem]{Lemma}
\newtheorem{proposition}[theorem]{Proposition}
\newtheorem{corollary}[theorem]{Corollary}

\theoremstyle{definition}
\newtheorem{definition}[theorem]{Definition}

\newcommand{\T}{\mathbb T}
\newcommand{\C}{\mathbb C}
\newcommand{\R}{\mathbb R}
\newcommand{\I}{\mathrm i}

\newcommand{\HH}{\mathcal H}
\newcommand{\KK}{\mathcal K}

\newcommand{\FCAtlas}{\mathcal A}
\newcommand{\tr}{\operatorname{tr}}
\newcommand{\RePart}{\operatorname{Re}}

\newcommand{\guide}[1]{}
\providecolor{LLMGREEN}{RGB}{0,128,0}
\DeclareRobustCommand{\LLMgreen}[1]{{\color{LLMGREEN}#1}}
\makeatletter
\def\LLM@blue2#1{{\color{blue}#1}}
\DeclareRobustCommand{\LLM}{\@ifnextchar2{\LLM@blue}{\LLMgreen}}
\makeatother

\begin{document}

\newcommand{\MainCiteHaagerupSzollosi}{%
  ~\cite{Haagerup1997,Szollosi2012FourParameter}}
\newcommand{\MainCiteSzollosi}{~\cite{Szollosi2012FourParameter}}
\newcommand{\MainCiteBasu}{~\cite{BasuPollackRoy2006}}
\newcommand{\MainCiteBondal}{~\cite{BondalZhdanovskiy2016}}

\title{A Complete Classification of Complex Hadamard Matrices of Order Six}

\author{Mateo C\'ardenes Wuttig}
\thanks{Contact author: \href{mailto:mateo.cardeneswuttig@yale.edu}%
{mateo.cardeneswuttig@yale.edu}}
\affiliation{Department of Applied Physics, Yale University, New Haven, Connecticut 06520, USA}
\affiliation{Yale Quantum Institute, Yale University, New Haven, Connecticut 06520, USA}
\author{Joseph Tindall}
\affiliation{Center for Computational Quantum Physics, The Flatiron Institute, 162 5th Avenue, New York, New York 10010, USA}

\ifdefined\SupplementOnly
\maketitle
\tableofcontents
\clearpage
\hypertarget{main-article-numbering}{}
\else
\begin{abstract}
Complex Hadamard matrices encode perfectly balanced unitary transformations.
Their classification is complete through order five, but order six -- the first dimension in which several continuous families coexist with an isolated solution -- has remained open for decades. Here, we give a complete and exact finite-incidence classification of order-six complex Hadamard matrices up to standard equivalence.
We supply the global step missing from Sz\"oll\H{o}si's dilation method, which allows us to prove an even stronger version of his conjecture: every complex Hadamard matrix of order six can be recovered algebraically from a suitable, dephased $3 \times 3$ corner defined by four initial phases.
We then describe the geometry of the reconstruction from these phases and show that, except for Tao's isolated matrix and a single explicit Karlsson matrix, every class admits a representative obtained by solving one quadratic and one cubic equation in both the horizontal and vertical directions.
Our work resolves the classification problem and provides a rigorous framework for further investigating order-six Hadamards, with applications to balanced six-mode interferometers and the study of mutually unbiased bases.
\end{abstract}
\maketitle

\section{Introduction}
Complex Hadamard matrices are square matrices with entries of unit magnitude and mutually orthogonal rows. They arise directly in quantum mechanics as equal-amplitude unitary changes of basis. The subject grew from Sylvester's constructions and Hadamard's determinant problem for real sign matrices with orthogonal rows~\cite{Sylvester1867,Hadamard1893}. 

A physical realization of complex Hadamards is provided by multimode quantum interferometry. After normalization, an order-six complex Hadamard matrix is a perfectly balanced six-port unitary, with identical single-particle transition probabilities but generally different multiparticle interference determined by its internal phases~\cite{Carolan2015Universal,Laing2012BerryPhase,Tichy2010ZeroTransmission,Crespi2015Sylvester,Crespi2016FFT,Shchesnovich2015Partial}. Complex Hadamards are also utilized as coherence-generating gates~\cite{Yao2016MaximalCoherence}, mixing operations in quantum walks~\cite{Mackay2002QuantumWalks,Lorz2019PhotonicWalks}, elements of dual-unitary dynamics~\cite{Gutkin2020DualUnitary,Claeys2022StateDesigns}, and as core components in teleportation, error correction, and contextuality protocols~\cite{Werner2001Teleportation,Musto2016QuantumLatin,Lisonek2019KochenSpecker}. 

Complex Hadamards are also intimately involved in the study of mutually unbiased bases. Two orthonormal bases $\{e_j\}$ and $\{f_k\}$ of the complex Hilbert space $\C^d$ are mutually unbiased when $|\langle e_j,f_k\rangle|=1/\sqrt d$ for every $j,k$. After one basis is
chosen as the standard basis, multiplying the transition matrix to the other basis by $\sqrt d$ gives a complex Hadamard matrix~\cite{TadejZyczkowski2006,WoottersFields1989,BengtssonEtAl2007}.
Determining how many mutually unbiased bases exist in $\C^6$ is one of the central open problems of quantum information theory~\cite{HorodeckiRudnickiZyczkowski2022,McNultyWeigert2026,BengtssonEtAl2007,MatolcsiEtAl2026MUBTriplets}. A complete classification of the $d =6$ Hadamards would provide a natural framework for addressing this problem.

The classification of all complex Hadamard matrices is complete through order five. In dimensions $d=2,3$ and $5$, they are fully classified by the Fourier matrix, while in $d=4$ there is, up to standard equivalence, one continuous one-parameter family~\cite{Haagerup1997,TadejZyczkowski2006}.
Order six is the first, until now, unresolved case, and it is also the first order at which several inequivalent continuous families coexist with a genuinely isolated matrix~\cite{TadejZyczkowski2006,Szollosi2012FourParameter,BengtssonEtAl2007,Tadej2008}. The introduction of Sz\"oll\H{o}si's dilation algorithm~\cite{Szollosi2012FourParameter} represents a significant milestone in the study of order six. The algorithm starts with a dephased $3\times3$ corner of the full matrix, finds candidate blocks completing its first three rows and columns, and then forces the remaining block by linear algebra. Sz\"oll\H{o}si proved that this procedure is exhaustive for a fixed corner when the normalized candidate sets of invertible blocks are finite~\cite{Szollosi2012FourParameter}. He conjectured that, apart from Karlsson's family~\cite{Karlsson2011H2,Karlsson2011ThreeParameter} and Tao's isolated 
matrix~\cite{Tao2004,BengtssonEtAl2007,Tadej2008}, every order-six complex Hadamard could be recovered in this way. 
Bondal and Zhdanovskiy later proved the existence of a family of order-six complex Hadamard matrices of real dimension four~\cite{BondalZhdanovskiy2016}. More recently, several special subclasses have been studied, including non-$H_2$-reducible examples, matrices with prescribed column patterns, eigenvalue-constrained two-parameter families, and matrices with exactly three distinct entries~\cite{LiangEtAl2024,MatszangoszSzollosi2024,HuangChen2026Eigenvalues,HuangLiangChen2026ThreeElements}.

Two obstacles remained between Sz\"oll\H{o}si's fixed-corner result~\cite{Szollosi2012FourParameter} and a complete classification.
First, Sz\"oll\H{o}si's published formulas use divisions and generic polynomial degrees, so special finite branches require separate care.
Second, a chosen corner may have infinitely many candidate blocks that complete it, so fixed-corner completeness does not guarantee that every matrix has a finite corner.

In this work, we close both gaps. We solve exactly for every normalized side block compatible with the chosen corner, using division-free equations so that valid solutions on vanishing-denominator branches are not lost. We then prove that every order-six complex Hadamard matrix has at least one $3\times3$ corner with only finitely many compatible normalized side blocks. At such a corner the actual adjacent blocks occur in the finite candidate sets and the remaining block is uniquely forced. The resulting branch-complete dilation procedure is therefore sound and exhaustive, giving an exact finite-incidence classification and proving Sz\"oll\H{o}si's Conjecture~4.2~\cite{Szollosi2012FourParameter}.

The result of our branch-complete dilation procedure is an explicit classification of all order-six complex Hadamard matrices, up to standard equivalence.
Except for the equivalence classes associated with Tao's isolated matrix and a single explicit Karlsson matrix, every class is shown to admit a representative matrix with a \textit{product-regular} corner specified by just four phase parameters. 
By product-regular, we mean the matrix can be reconstructed using the four phases by solving one quadratic and one cubic equation in each of the horizontal and vertical directions. Compatible solutions determine the adjacent blocks and uniquely force the final $3\times3$ block, whose entries are automatically of unit magnitude.

This work is structured as follows. In section~\ref{sec:definition_hadamard}, we introduce complex Hadamard matrices, the Tao and Karlsson families, and outline the main ideas underlying the classification proof.
Section~\ref{sec:mainclassification} defines our branch-complete finite-dilation extension of Sz\"oll\H{o}si's algorithm before going on to prove the classification.
Section~\ref{sec:atlas} studies the resulting reconstruction over the four-phase seed torus, introducing the notion of product regularity, the corresponding physical domain over the four seed space and relates this local geometry to the full class space.
Section~~\ref{sec:szollosi-conjecture} defines the output of Sz\"oll\H{o}si's published fixed-corner construction and utilizes our classification theorem to prove his conjecture that every order-six class outside the Karlsson and Tao sectors can be recovered by his fixed-corner construction.
In section~\ref{sec:outlook} we conclude by discussing the implications of this classification for the geometry of the order-six Hadamard space, mutually unbiased bases, multiphoton interferometry, and related applications in quantum information and quantum dynamics.
The Supplemental Material~\cite{SupplementalMaterial} contains the complete technical proofs. An accompanying Lean~4 formalization independently verifies the complete classification without relying on any previous classification results~\cite{deMouraUllrich2021,CardenesWuttig2026HadamardRepository}. The separate four-phase geometry of Sec.~\ref{sec:atlas} is supported by exact certificates in the same Github repository~\cite{CardenesWuttig2026HadamardRepository}.

\section{Complex Hadamard Matrices}
\label{sec:definition_hadamard}
\begin{definition}[Complex Hadamard matrix]
\label{def:hadamard}
The unit circle in the complex plane $\C$ is defined as
\begin{equation}
\T=\{z\in\C:|z|=1\}.
\label{eq:unitcircle}
\end{equation}
For $m\geq1$, the torus
\begin{equation}
    \T^m=\{(z_1,\ldots,z_m)\in\C^m:|z_j|=1\ \text{for all }j\},
\end{equation}
is the space of $m$ independent phases. 
A matrix is called unimodular if all of its entries lie in $\T$.
A complex Hadamard matrix of order $n$ is a matrix
$H\in\T^{n\times n}$ satisfying
\begin{equation}
HH^{\dagger}=nI_n,
\label{eq:hadamarddef}
\end{equation}
where $H^{\dagger}$ is the conjugate transpose and $I_n$ is the identity matrix of order $n$. Since $H$ is square, Eq.~\eqref{eq:hadamarddef} is equivalent to
$H^{\dagger}H=nI_n$~\cite{TadejZyczkowski2006}.
\end{definition}

\begin{definition}[Standard equivalence and the class space]
\label{def:equivalence}
Two order-$n$ complex Hadamard matrices are equivalent, written $H\sim H'$, if diagonal unitary matrices $D_{\rm r},D_{\rm c}$ and permutation matrices $P_{\rm r},P_{\rm c}$ exist such that
\begin{equation}
H'=D_{\rm r}P_{\rm r}HP_{\rm c}D_{\rm c}.
\label{eq:equivalence}
\end{equation}
The equivalence class of $H$ is written as $[H]=\{H':H'\sim H\}$. 
A matrix is called dephased if every entry of its first row and first column equals $1$. Every equivalence class contains a dephased representative. 
For a fixed row and column ordering, dephasing produces a unique matrix~\cite{TadejZyczkowski2006,Szollosi2012FourParameter}.
We define the order-six class space by
\begin{equation}
\HH_6 =\bigl\{[H]:H\in\T^{6\times6},\ HH^{\dagger}=6I_6\bigr\}.
\label{eq:H6classes}
\end{equation}
\end{definition}

\begin{definition}[Fourier matrix]
\label{def:fouriermatrix}
The Fourier matrix $F_n$ is a complex Hadamard matrix which exists for every order $n$, defined through its entries
\begin{equation}
(F_n)_{jk}=\exp\!\left(\frac{2\pi\I jk}{n}\right),
\qquad 0\leq j,k<n
\label{eq:fouriern}
\end{equation}
where $\I$ denotes the imaginary unit,
$\I^2=-1$. 
\end{definition}

A $k$-parameter family is a set of equivalence classes
\begin{equation}
\mathcal F_k=\{[H(\mathbf z)]:\mathbf z=(z_1,\ldots,z_k)\in\T^k\}
\subseteq\HH_6,
\end{equation}
where the $k$ phases are independent real coordinates.
In dimension six, Di\c{t}\u{a}-type, circulant, self-adjoint, and other structured ansätze form important continuous families~\cite{Dita2004,BeauchampNicoara2008,MatolcsiSzollosi2008,
Szollosi2010Hypocycloids}. 
Karlsson proved that every matrix containing a $2\times2$ Hadamard submatrix belongs to his complete three-parameter
family~\cite{Karlsson2011H2,Karlsson2011ThreeParameter}. 
We first define the Karlsson family intrinsically through reducibility to order two Hadamard matrices $H_2$.
\begin{definition}[$2\times2$ Hadamard submatrix]
A $2\times2$ complex Hadamard submatrix is any choice of two rows and two columns whose intersection is itself a $2\times2$ complex Hadamard matrix. An order-six complex Hadamard matrix is $H_2$-reducible if it is equivalent to a matrix with such a submatrix. Equivalently, allowed row and column operations can expose the
canonical block
\begin{equation}
F_2=\begin{pmatrix}1&1\\1&-1\end{pmatrix}.
\label{eq:F2}
\end{equation}
We define the Karlsson sector by
\begin{equation}
    \KK_6^{(3)}=\{[H]\in\HH_6:H\text{ is }H_2\text{-reducible}\}.
\label{eq:karlsson}
\end{equation}
Karlsson's theorem identifies this set exactly with the equivalence classes generated by his explicit three-real-parameter family
~\cite{Karlsson2011H2,Karlsson2011ThreeParameter}.
\end{definition}

\begin{figure}[htb!]
\centering
\begingroup
\begin{tikzpicture}[
  every node/.style={font=\footnotesize,align=center},
  >={Latex[length=1.9mm,width=1.25mm]},
  flow/.style={->,semithick,draw=black!62,rounded corners=2pt},
  full/.style={rounded corners=2pt,inner xsep=5pt,inner ysep=3pt,
               minimum width=.90\columnwidth,text width=.82\columnwidth},
  decision/.style={full,draw=black!45,densely dashed,fill=black!1},
  neutral/.style={full,draw=black!42,fill=black!2},
  bluebox/.style={full,draw=blue!58!black,fill=blue!3},
  output/.style={full,draw=blue!68!black,thick,fill=blue!4},
  edge/.style={font=\scriptsize,fill=white,inner sep=1pt,text=black!62},
  panel/.style={font=\footnotesize\bfseries,text=black,inner sep=0pt,outer sep=0pt},
  classlabel/.style={font=\footnotesize,align=center,text=black},
  eqsymbol/.style={font=\footnotesize,align=center,text=black,inner sep=0pt,outer sep=0pt},
  disjoint/.style={font=\normalsize,align=center,text=black,inner sep=0pt,outer sep=0pt},
  pregion/.style={draw=black!75,fill=cyan!8,rounded corners=4pt,
                  minimum width=.575\columnwidth,minimum height=10.0mm},
  subclass/.style={classlabel,draw=black!70,fill=violet!9,rounded corners=3pt,
                   minimum height=6.1mm},
  isolated/.style={circle,draw=black!75,fill=red!8,minimum size=9.4mm,
                   inner sep=0pt,outer sep=0pt}
]

\node[decision] (choice) {%
  \textbf{Finite-corner search}\\[-1pt]
  For arbitrary $[H]\!\in\!\HH_6$, inspect all $400$\\[-1pt]
  positional corners.\\[-1pt]
  \emph{Is any selected $3\!\times\!3$ corner a finite-corner witness?}%
};
\coordinate (figleft) at ([xshift=-7.2mm]choice.west);
\node[panel,anchor=south west]
  at ($(figleft |- choice.north)+(0,0.8mm)$) {\textbf{A.}};

\node[neutral,below=4.4mm of choice] (routing) {%
  \textbf{Failure-of-search route}\\[-1pt]
  \hyperref[prop:main-infinite-fiber-trichotomy]{Propositions~\ref*{prop:main-infinite-fiber-trichotomy}}--\hyperref[prop:tao-finite-corner]{\ref*{prop:tao-finite-corner}}\\[-1pt]
  If every test fails, $[H]\!\in\!\KK_6^{(3)}$ or $\mathcal T_6$;\\[-1pt]
  both sectors provide a finite-corner witness.%
};

\node[bluebox,below=4.4mm of routing] (witness) {%
  \textbf{Universal finite-corner theorem}\\[-1pt]
  \hyperref[thm:classification]{Theorem~\ref*{thm:classification}}\\[-1pt]
  Every $[H]\!\in\!\HH_6$ has a finite-corner witness.%
};

\node[bluebox,below=4.4mm of witness] (dilation) {%
  \textbf{Branch-complete dilation}\\[-1pt]
  \hyperref[prop:algorithmvalid]{Proposition~\ref*{prop:algorithmvalid}}\\[-1pt]
  The finite candidate lists reconstruct $H$.%
};

\node[output,below=4.4mm of dilation] (finish) {%
  \textbf{Exact retained output}\\[-1pt]
  \hyperref[cor:atlas-classification]{Corollary~\ref*{cor:atlas-classification}}\\[-1pt]
  $\FCAtlas_6^{\mathrm{fc}}\!=\!\HH_6$%
};

\draw[flow] (choice.south)
  -- node[edge,right,midway,xshift=1.2mm,font=\footnotesize] {no}
  (routing.north);
\draw[flow] (choice.west)
  -- ++(-4.0mm,0)
  node[edge,above,pos=.45,xshift=-1.0mm,font=\footnotesize] {yes}
  -- (figleft) |- (witness.west);
\draw[flow] (routing.south) -- (witness.north);
\draw[flow] (witness.south) -- (dilation.north);
\draw[flow] (dilation.south) -- (finish.north);

\coordinate (brow) at ([yshift=-11.0mm]finish.south);
\node[panel,anchor=west] (panelb)
  at (figleft |- brow) {\textbf{B.}};
\node[eqsymbol,anchor=west] (hclass)
  at ([xshift=3.2mm]finish.west |- brow) {$\HH_6$};
\node[eqsymbol,anchor=west] (equals)
  at ([xshift=0.9mm]hclass.east |- brow) {$=$};
\node[pregion,minimum width=.38\columnwidth,anchor=west] (preg)
  at ([xshift=0.9mm]equals.east |- brow) {};
\coordinate (pregrow) at (preg.center);
\node[subclass,minimum width=.245\columnwidth,anchor=east] (kclass)
  at ([xshift=-2.5mm]preg.east |- pregrow)
  {$\KK_6^{(3)}\setminus\{[H_\times]\}$};
\coordinate (pcenter) at ($($(preg.west)!0.5!(kclass.west)$ |- pregrow)$);
\node[classlabel,anchor=center] at (pcenter) {$\mathcal P_6$};
\node[disjoint,anchor=west] (unionone)
  at ([xshift=1.9mm]preg.east |- brow) {$\mathbin{\dot\cup}$};
\node[isolated,anchor=west] (taoclass)
  at ([xshift=1.9mm]unionone.east |- brow) {};
\node[classlabel] at (taoclass.center) {$\mathcal T_6$};
\node[disjoint,anchor=west] (uniontwo)
  at ([xshift=1.9mm]taoclass.east |- brow) {$\mathbin{\dot\cup}$};
\node[isolated,anchor=west] (hxclass)
  at ([xshift=1.9mm]uniontwo.east |- brow) {};
\node[classlabel] at (hxclass.center) {$\{[H_\times]\}$};
\end{tikzpicture}
\vspace{2pt}
\caption{\textbf{Proof architecture and class-space structure.} \textbf{A.} Summary of the logic which proves the classification theorem (Theorem ~\ref{thm:classification}). Starting from three rows and three columns of $H$ in block form, we search all corners for a finite-corner witness, which is a dephased $3\times3$ corner whose normalized invertible side fibers are finite, nonempty, and contain the adjacent blocks (\hyperref[def:finitecorner]{Definition~\ref*{def:finitecorner}}). If such a corner exists, the remaining block is uniquely forced by Sz\"oll\H{o}si's construction. If no selected corner is a finite-corner witness, the failure-of-search route places $H$ in the Karlsson or Tao sector, where a finite-corner witness always exists. 
\hyperref[def:completed-output]{Definition~\ref*{def:completed-output}} specifies the branch-complete output, which equals $\HH_6$, the full class space of all order-six Hadamard matrices. \textbf{B.} Geometrical structure of the set of all classes of order-six Hadamards $\HH_6$.
This set consists of the product-regular set $\mathcal P_6$ (see Definition ~\ref{def:product-regularity}), which contains the product-regular Karlsson classes $\KK_6^{(3)}\setminus\{[H_\times]\}$, alongside two isolated sets: Tao's isolated class $\mathcal T_6$ and the unique product-exceptional Karlsson class $\{[H_\times]\}$ (\hyperref[thm:global-product-regular-escape]{Theorem~\ref*{thm:global-product-regular-escape}}).}
\label{fig:finite-corner-proof}
\endgroup
\end{figure}


Distinct from it lies the isolated cubic-root matrix $S_6^{(0)}$ associated with Tao's work~\cite{Tao2004,TadejZyczkowski2006,Szollosi2012FourParameter,Tadej2008}. \begin{definition}[Tao's dephased matrix]
Let $\omega=\exp(2\pi\I/3)$. Then $\omega^3=1$, $\omega\ne1$, and $1+\omega+\omega^2=0$. Tao's dephased matrix~\cite{Tao2004} is
\begin{equation}
S_6^{(0)}=
\begin{pmatrix}
1&1&1&1&1&1\\
1&1&\omega&\omega&\omega^2&\omega^2\\
1&\omega&1&\omega^2&\omega^2&\omega\\
1&\omega&\omega^2&1&\omega&\omega^2\\
1&\omega^2&\omega^2&\omega&1&\omega\\
1&\omega^2&\omega&\omega^2&\omega&1
\end{pmatrix},
\label{eq:taomatrix}
\end{equation}
\end{definition}
which is complex Hadamard. We define the one-point Tao sector~\cite{Tao2004} by
\begin{equation}
\mathcal T_6=\{[S_6^{(0)}]\}.
\label{eq:Taosector}
\end{equation}

We can write any complex Hadamard matrix of order six in block form as
\begin{equation}
H=
\begin{pmatrix}
E&B\\
C&D
\end{pmatrix},
\label{eq:blockform}
\end{equation}
where $E,B,C,D\in\T^{3\times3}$. Sz\"oll\H{o}si's dilation construction~\cite{Szollosi2012FourParameter} starts from a dephased $3\times3$ corner $E$, determines the normalized candidate blocks $B$ and $C$ from their fixed-Gram constraints, and then forces $D$ by block orthogonality. We call the complete normalized solution set for $B$ or $C$ a \emph{side fiber}.

Sz\"oll\H{o}si's fixed-corner completeness theorem states that, when the two invertible side fibers of a prescribed corner $E$ are finite and nonempty, pairing all their candidates and testing the forced block $D$ enumerates every Hadamard completion containing $E$. Accordingly, a selected $3\times3$ corner of a given order-six Hadamard matrix $H$ is a \emph{finite-corner witness} if its two invertible side fibers are finite and nonempty and contain the actual adjacent blocks $B$ and $C$. The original matrix $H$ is then recovered among finitely many completions.
The fixed-corner theorem is conditional: some corners have infinite side fibers, and it does not show that every Hadamard matrix possesses a finite-corner witness. 

In the following, we prove that every order-six complex Hadamard matrix has such a witness. We show that an infinite side fiber forces either a Fourier block, a strictly negative real cubic Gram invariant, or the Karlsson sector. Outside the latter case, complementary $3\times3$ blocks reverse the sign of the cubic invariant, producing a corner with two finite side fibers unless all four complementary blocks are Hadamard. This case then reduces to the Karlsson or Tao sectors. Neither sector is excluded from the witness theorem: every Karlsson class admits a finite-corner witness, while the leading $3\times3$ corner of Tao's matrix is one. Thus every order-six complex Hadamard matrix has a finite-corner witness, closing the gap between Sz\"oll\H{o}si's fixed-corner construction and an exhaustive classification.

\section{Finite-corner classification}
\label{sec:mainclassification}

We formulate the classification in terms of finite $3\times3$ corners.
Choose three rows and three columns of an order-six Hadamard matrix, move them to the leading positions (rows and columns $1,...,3$), and dephase. The matrix splits into four $3\times3$ blocks as in Eq.~\eqref{eq:blockform}, and we call its upper-left block $E$ the \emph{corner} for that choice of rows and columns. Orthogonality of the full matrix fixes every inner product among the rows of the block $B$ and among the columns of the block $C$ in terms of $E$ alone. Each of $B$ and $C$ is therefore confined to a set of candidates determined by the corner (see Eq.~\eqref{eq:candidatesets}).
If both candidate sets are finite and nonempty they can be listed exhaustively, the fourth block is then determined, and the matrix we started from is recovered among the finitely many completions. Such a corner is a \emph{finite-corner witness}, defined rigorously in Definition~\ref{def:finitecorner}.

\begin{theorem}[Complete finite-corner classification]
\label{thm:classification}
Every order-six complex Hadamard matrix is equivalent to a dephased matrix
having a finite-corner witness.
\end{theorem}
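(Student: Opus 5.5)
The plan follows the architecture outlined in the introduction. It proceeds by contradiction through a "failure-of-search" route: fix an arbitrary order-six complex Hadamard matrix $H$ and suppose that none of its $\binom{6}{3}^2=400$ positional $3\times3$ corners, after dephasing, is a finite-corner witness.

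\textbf{Step 1: the fixed-Gram constraints.} Write $H$ in block form \eqref{eq:blockform} for a chosen corner. Orthogonality gives
\begin{equation}
BB^\dagger = 6I_3 - EE^\dagger,
\qquad
C^\dagger C = 6I_3 - E^\dagger E .
\end{equation}
So the side fibers are the unimodular $3\times3$ matrices, normalized by phases and permutations, with a prescribed Gram matrix $G=6I-EE^\dagger$. The actual blocks $B,C$ always lie in these fibers, so nonemptiness and containment are automatic. Only finiteness of the invertible fibers is at stake.

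Once $G$ is fixed, the normalized rows of $B$ are governed by the three off-diagonal entries $g_{12},g_{13},g_{23}$. Solving for the second and third rows reduces to intersecting circles, which gives finitely many solutions unless a degeneracy occurs.

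\textbf{Step 2: classify infinite fibers.} I would show that an infinite invertible side fiber forces one of three things:
\begin{itemize}
\item[(a)] $E$ is a Fourier (Hadamard) block, so that $G=3I$;
\item[(b)] the cubic Gram invariant $g_{12}g_{23}g_{31}$ is real and strictly negative;
\item[(c)] $H$ contains a $2\times2$ Hadamard submatrix, i.e.\ $[H]\in\KK_6^{(3)}$.
\end{itemize}
This is the content of Proposition~\ref{prop:main-infinite-fiber-trichotomy}. I would prove it by writing the real-algebraic conditions for a one-parameter family of unimodular rows with fixed pairwise inner products, and checking the vanishing of the relevant resultant branch by branch. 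The computation must be division-free, so that degenerate branches are not lost. I expect this branch analysis to be the main obstacle. Item (c) in particular requires showing that a continuous family of completions forces two entries of some row pair to realize an $F_2$ pattern.

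\textbf{Step 3: exclude cases (a) and (b) outside the sectors.} Assume $[H]\notin\KK_6^{(3)}$. Unitarity of $H/\sqrt6$ relates complementary blocks: $EE^\dagger+BB^\dagger=6I$, and by the CS decomposition the singular values of $E$ and $D$ coincide. Moreover, the cubic Gram invariants of $E$ and $B$ (respectively of $E$ and $C$) are negatives of one another, since $g_{ij}(B)=-g_{ij}(E)$ off the diagonal. Hence if $E$ has a strictly negative real invariant, the corner $B$ (obtained by selecting the complementary columns) has a positive one, and so does not fall under (b).

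Iterating over the four blocks, one of them yields a corner with two finite side fibers unless all four blocks are Hadamard. Two steps need care here: checking that the side fibers of the new corner involve the transposed Gram data, so that both fibers are controlled, and running the same sign-flip argument for the column direction.

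\textbf{Step 4: the all-Hadamard case and the two sectors.} If all four blocks are $3\times3$ Hadamard, each is equivalent to $F_3$. A direct analysis then shows that $H$ is either $H_2$-reducible or equivalent to $S_6^{(0)}$, which places $[H]$ in $\KK_6^{(3)}\cup\mathcal T_6$.

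Finally, I would invoke the sector results preceding the theorem. Proposition~\ref{prop:tao-finite-corner} shows that the leading corner of $S_6^{(0)}$ is a finite-corner witness; this is a finite explicit computation. For every Karlsson class, a witness can be exhibited from Karlsson's explicit parametrization, choosing a corner that avoids the $F_2$ block.

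Combining Steps 2--4 contradicts the assumption that no witness exists, and this proves Theorem~\ref{thm:classification}.
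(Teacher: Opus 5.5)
\textbf{Main gap: the vertical fiber in Step~3.} Your outline follows the paper's architecture: trichotomy, complementary sign flip, the Fourier-block case, and then Propositions~\ref{prop:karlsson-finite-corner} and~\ref{prop:tao-finite-corner}. However, Step~3 leaves open exactly the step that needs an extra idea.

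Complementation relates invariants only within one direction: $\tau_{\rm r}(E)=-\tau_{\rm r}(B)$, $\tau_{\rm r}(C)=-\tau_{\rm r}(D)$, $\tau_{\rm c}(E)=-\tau_{\rm c}(C)$, $\tau_{\rm c}(B)=-\tau_{\rm c}(D)$. If the horizontal fiber at $E$ is infinite, the trichotomy is applied to the side block $B$ and gives $\RePart\tau_{\rm r}(B)<0$. After the column swap, the horizontal fiber at the corner $B$ contains $E$ and is finite because $\RePart\tau_{\rm r}(E)>0$. But the vertical fiber there contains $D$ and is governed by $\tau_{\rm c}(D)=-\tau_{\rm c}(B)$, a column quantity about which you know nothing.

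``Iterating over the four blocks'' cannot repair this. Suppose $\RePart\tau_{\rm r}(E)$, $\RePart\tau_{\rm c}(E)$, $\RePart\tau_{\rm r}(C)$, $\RePart\tau_{\rm c}(B)$ had signs $(+,-,-,+)$. This is consistent with all four complementary identities and with the CS decomposition. Then each of the corners $E,B,C,D$ would have a side block with negative invariant, and none would be certified.

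The missing ingredient is Lemma~\ref{lem:rowcolumn}: $\RePart\tau_{\rm r}(X)=\RePart\tau_{\rm c}(X)$ for every $X\in\T^{3\times3}$. The reason is that $XX^\dagger$ and $X^\dagger X$ are isospectral Hermitian matrices with diagonal $(3,3,3)$. For such $M$, the spectral quantities $\tr M^2=27+2\sigma(M)$ and $\det M=27-3\sigma(M)+2\RePart(M_{12}M_{23}M_{31})$ determine $\RePart(M_{12}M_{23}M_{31})$. With this lemma, $\RePart\tau_{\rm c}(D)=-\RePart\tau_{\rm r}(B)>0$. The trichotomy applied to $D^\dagger$ then rules out an infinite vertical fiber, so the corner $B$ is a witness after a single swap, as in Proposition~\ref{prop:main-corner-routing}.

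\textbf{Three smaller corrections.} First, alternative (b) is $\RePart\tau_{\rm r}<0$, not ``real and strictly negative.'' The stronger fixed-Gram statement you propose to prove is false. In the dependent branch $s=t$, $\zeta^2\neq1$, the invariant is $s^3\bar\zeta$ with $\RePart\zeta=(s^2-3)/(2s)$, which is not real. For example, take the block $X_2(\I)$ of Lemma~\ref{lem:supp-repeated-ratio-trichotomy}, with rows $(1,1,1)$, $(1,1,(-3-4\I)/5)$, $(1,(3-4\I)/5,\I)$. It is invertible, contains no $2\times2$ Hadamard submatrix, and has $|S|^2=|T|^2=|R|^2=13/5$ with $\Pi=\Theta=0$, so $\mathcal A\equiv\mathcal B\equiv0$. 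Its two Haagerup quadratics are therefore proportional, and its normalized fiber is a curve through the actual columns. Yet $\tau_{\rm r}=(-13+104\I)/25$. Since complementation negates real parts as well, the sign-flip logic survives once (b) is corrected.

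Second, nonemptiness and containment in Step~1 are not automatic. Definition~\ref{def:finitecorner} uses the invertible fibers $\mathcal B_E^\times,\mathcal C_E^\times$, which are empty if $B$ or $C$ is singular, and the trichotomy itself assumes invertibility. Outside $\KK_6^{(3)}$ you need Lemma~\ref{lem:singularcorner}: a singular $3\times3$ submatrix forces a $2\times2$ Hadamard submatrix.

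Third, ``choose a Karlsson corner avoiding the $F_2$ block'' is not an argument. The paper's proof uses a computer-assisted fixed mixed corner, a reciprocal reorientation on the singly degenerate M\"obius curves, and six corners on the affine-Fourier boundary. Cite Proposition~\ref{prop:karlsson-finite-corner} rather than sketching it.
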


Figure~\ref{fig:finite-corner-proof}(a) summarizes the proof, which is assembled at the end of this section. Below, we state two structural results from the literature used in this manuscript's proof. We emphasize that the accompanying Lean formalization ~\cite{CardenesWuttig2026HadamardRepository} directly proves the parts of both results needed for the classification, rather than using them as assumptions. We then formulate our branch-complete refinement of Sz\"oll\H{o}si's dilation construction~\cite{Szollosi2012FourParameter}, which retains solutions on branches where the divisions used in a generic implementation would be invalid.

The central step in our proof is then a global routing argument. Starting from an arbitrary order-six Hadamard matrix, we examine all $400$ positional corners. 
If none of them are a finite-corner witness, then at least one side fiber is infinite at every corner. We show that, by analyzing these infinite fibers and their complementary blocks, this forces the matrix into either the Karlsson or Tao sector. We then prove separately that every matrix in the Karlsson and Tao sectors has a finite-corner witness. 
Hence, every order-six complex Hadamard matrix has such a witness. Applying the branch-complete construction at that corner recovers the original matrix, and therefore completes the classification.
The full class space of all Hadamard matrices in sketched in Figure~\ref{fig:finite-corner-proof}(b).

\subsection{Structural results used in the universal proof}
The proof in this manuscript starts from two previously established structural results for order-six Hadamard matrices. 

\begin{proposition}[Published order-six structural inputs]
\label{prop:publishedinputs}
For an order-six complex Hadamard matrix $H$:
\begin{enumerate}
\item $H$ is $H_2$-reducible if and only if it is equivalent to a
member of Karlsson's complete three-real-parameter family
\cite{Karlsson2011H2,Karlsson2011ThreeParameter}. Karlsson's
parametrization also covers the parameter values at which both
M\"obius formulas degenerate: the resulting Hadamard matrices are,
up to equivalence, members of the two-parameter Fourier family
$F_6^{(2)}$ or its transpose
\cite[Secs.~4--5 and Theorem~11]{Karlsson2011ThreeParameter}.
See also
\cite[p.~623, Theorem~2.11]{Szollosi2012FourParameter}.
\item If $H$ is equivalent to a dephased complex Hadamard matrix having both a noninitial row and a noninitial column composed entirely of cubic roots of unity, then either $[H] \in \mathcal T_6$ or
$[H]\in\KK_6^{(3)}$~\cite[p.~624, Lemma~2.14]{Szollosi2012FourParameter}. 
In a dephased matrix, a noninitial row or
column is any other than the first.
\end{enumerate}
\end{proposition}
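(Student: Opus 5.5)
Both parts of the proposition are quoted from the literature, so the plan is to reprove them directly in the form used later, and to take the published statements as the fallback for any step that does not close.

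\textbf{Part 1.} The converse direction is immediate. Every member of Karlsson's family, including the degenerate-parameter members $F_6^{(2)}$ and $(F_6^{(2)})^T$, visibly contains a $2\times2$ Hadamard submatrix, so it is $H_2$-reducible.

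For the forward direction I would carry out the following steps.
\begin{enumerate}
\item Use the equivalences of Definition~\ref{def:equivalence} to move the $2\times2$ Hadamard submatrix to the top-left and dephase it to $F_2$ of Eq.~\eqref{eq:F2}.
\item Write $H$ as a $3\times3$ array of $2\times2$ blocks. The aim is the key structural lemma of Ref.~\cite{Karlsson2011H2}: the first block row and block column then consist entirely of scaled $2\times2$ Hadamard blocks. This should follow by writing orthogonality of the first two rows against each other, and likewise the first two columns, as equal-modulus sums. Unimodular cancellation of the form $a+b=-(c+d)$ forces pairing patterns, and the $F_2$ block rules out the bad patterns.
\item Propagate this to all nine blocks, so that $H$ is fully $H_2$-reducible.
\item Parametrize the remaining freedom. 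The off-diagonal blocks are fixed by three phases up to M\"obius transformations, which gives the three-parameter family.
\item Treat separately the parameter loci where both M\"obius denominators vanish. There, solve the orthogonality equations directly, expecting to land on $F_6^{(2)}$ or its transpose, as in Ref.~\cite[Secs.~4--5]{Karlsson2011ThreeParameter}.
\end{enumerate}

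\textbf{Part 2.} After dephasing, the noninitial row is a vanishing sum of six cube roots of unity that contains $1$, since it is orthogonal to the all-ones first row. A vanishing sum of cube roots must contain $1,\omega,\omega^2$ with equal multiplicities, so the row is a permutation of $(1,1,\omega,\omega,\omega^2,\omega^2)$, and the same holds for the column. Permute columns and rows to put both in this canonical pattern. The entry at their intersection leaves only a few cases up to the residual symmetries of permuting within equal-value pairs.

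In each case I would write the orthogonality of the remaining rows to row one and to the special row. Each such row $x$ then satisfies two linear conditions, of the form $\sum_{\text{pairs}}(x_{2k-1}+x_{2k})\,\omega^k=0$ together with the constraint from the first row. I would try to show that either some pair $(x_{2k-1},x_{2k})$ is antipodal, producing a $2\times2$ Hadamard submatrix together with the matching rows or columns, or that all entries are forced to be cube roots of unity. In the latter case $H$ is a cube-root Hadamard matrix of order six. These can be enumerated by hand from the vanishing-sum structure: every row other than the first is a permutation of $(1,1,\omega,\omega,\omega^2,\omega^2)$, and the pairwise inner products must vanish. I expect the enumeration to give either Tao's $S_6^{(0)}$ or a matrix containing an $F_2$-type block.

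\textbf{Main obstacle.} I expect the hardest step in Part 2 to be the dichotomy ``antipodal pair or all entries cube roots'' for the remaining rows. If the case analysis does not close, I would fall back on citing Ref.~\cite[Lemma~2.14]{Szollosi2012FourParameter} directly. In Part 1 the delicate step is the degenerate M\"obius branch, where divisions fail.
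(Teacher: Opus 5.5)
You should know that the paper does not prove this proposition. Both parts are imported from Karlsson's two 2011 papers and from Sz\"oll\H{o}si's Theorem~2.11 and Lemma~2.14. The only addition is the remark that the Lean development re-derives the special cases it needs. Your fallback of citing is therefore exactly the paper's route.

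The pieces you actually verify are correct:
\begin{itemize}
\item the converse in Part~1;
\item the fact that a dephased cube-root row is a permutation of $(1,1,\omega,\omega,\omega^2,\omega^2)$;
\item the four-unit-vector lemma ($a+b+c+d=0$ forces two antipodal pairs), which makes the first block row and block column Hadamard for whatever pairing occurs.
\end{itemize}
The self-contained reproof, however, does not close at its load-bearing steps, so it should not be presented as an independent argument.

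\textbf{Part~1.} Step~3 is the whole theorem, not a propagation. For rows $3,4$ the block in columns $1,2$ is Hadamard, so their four relative phases on columns $3$--$6$ again split into antipodal pairs. But the split may be $\{3,5\},\{4,6\}$ or $\{3,6\},\{4,5\}$ rather than the $\{3,4\},\{5,6\}$ fixed by rows $1,2$. The same ambiguity arises in every block row and block column, and your cancellation argument has no mechanism forcing these pairings to agree. Reconciling them is the substance of Karlsson's $H_2$-reducibility paper. Steps~4 and~5 then restate his three-parameter paper rather than derive it.

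\textbf{Part~2.} With the special row in the form $(1,1,\omega,\omega,\omega^2,\omega^2)$, the two conditions on a remaining row give $(x_1+x_2,\,x_3+x_4,\,x_5+x_6)=\lambda(1,\omega^2,\omega)$.
\begin{itemize}
\item The case $\lambda=0$ is fine: all three pairs are antipodal, giving an $H_2$ submatrix with the special row.
\item For $\lambda\neq0$ these equations leave a continuum. With $x_1=1$, $\lambda$ ranges over $1+\T$ and each pair has two orderings. So no ``all entries are cube roots'' conclusion can follow from them.
\end{itemize}
That conclusion would have to come from the special column together with the mutual orthogonality of the remaining rows. This is precisely the case analysis of Lemma~2.14 that your sketch omits.

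In the all-cube-root branch no $F_2$ block can occur, since every cross ratio of cube roots is a cube root and never $-1$. Your expected enumeration is therefore the uniqueness of the Butson class $BH(6,3)$, which is itself a nontrivial classification.

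\textbf{A self-contained route for this paper.} Look at how the proposition is actually used.
\begin{itemize}
\item The implication ``$2\times2$ Hadamard submatrix $\Rightarrow[H]\in\KK_6^{(3)}$'' is definitional, given how $\KK_6^{(3)}$ is defined. The forward direction of Part~1 is needed only where Karlsson's explicit coordinates are used, e.g.\ Proposition~\ref{prop:karlsson-finite-corner}.
\item Part~2 is invoked in Proposition~\ref{prop:main-fourier-block} only after $p,q,r,s\in\{1,\omega,\omega^2\}$ has been derived. There $H$ is completely fixed by $(p,q,r,s)$ through $D=-\tfrac13FD_wF^\dagger D_vF$, so the claim reduces to checking $81$ explicit matrices.
\item Its other use, in Lemma~\ref{lem:supp-degenerate-diagonal-closure}, is likewise on a finite explicit list.
\end{itemize}
This finite version (the cubic-Fourier branch), not the general lemma, is what the paper says the Lean formalization establishes.
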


Our accompanying Lean~4 formalization \cite{CardenesWuttig2026HadamardRepository} does not assume the two published results of Proposition~\ref{prop:publishedinputs}. It proves both within Lean as part of the argument, making the formal verification of the classification self-contained~\cite{CardenesWuttig2026HadamardRepository}.

\begin{lemma}[Singular-corner reduction]
\label{lem:singularcorner}
If a $3\times3$ submatrix of an order-six complex Hadamard matrix is singular, then the matrix contains a $2\times2$ Hadamard submatrix.
\end{lemma}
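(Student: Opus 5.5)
The plan is to work directly with a kernel vector of the singular block and use unimodularity to force two proportional rows or columns inside the block. Orthogonality of the full matrix then forces a complementary, anti-proportional pair, and one entry from each yields a $2\times2$ Hadamard submatrix.

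\textbf{Step 1 (proportional lines suffice).} After permutation, let the singular submatrix be the upper-left block $E$ in Eq.~\eqref{eq:blockform}. Suppose first that two rows $i\neq i'$ of $E$ are proportional, say $e_{i'k}=\lambda e_{ik}$ for $k=1,2,3$. Then $|\lambda|=1$ since the entries are unimodular. The restricted inner product is $\sum_{k\le3} e_{ik}\overline{e_{i'k}}=3\bar\lambda$, which has modulus $3$. Row orthogonality forces the remaining three terms, coming from $B$, to sum to $-3\bar\lambda$. Three unimodular numbers sum to modulus $3$ only if they are all equal, so $b_{i'k}=-\lambda b_{ik}$ for each $k$. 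Take column $1$ from $E$ and column $4$ from $B$. The $2\times2$ submatrix on rows $i,i'$ is
\[
\begin{pmatrix} e_{i1} & b_{i1}\\ \lambda e_{i1} & -\lambda b_{i1}\end{pmatrix},
\]
which is diagonally equivalent to $F_2$ and hence Hadamard. The same argument for two proportional columns uses $H^\dagger H=6I_6$ and the block $C$.

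\textbf{Step 2 (a vanishing kernel coordinate).} Choose $x\neq0$ with $Ex=0$. If some coordinate vanishes, say $x_1=0$, then $e_{i2}x_2+e_{i3}x_3=0$ for every row $i$. Since the entries are unimodular, $|x_2|=|x_3|\neq0$. Hence $e_{i3}/e_{i2}=-x_2/x_3$ is independent of $i$, so columns $2$ and $3$ of $E$ are proportional and Step 1 applies.

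\textbf{Step 3 (all kernel coordinates nonzero).} Now assume all $x_j\neq0$. Each row gives a closed polygon $e_{i1}x_1+e_{i2}x_2+e_{i3}x_3=0$ whose three sides have the fixed lengths $|x_1|,|x_2|,|x_3|$. A triangle is determined up to rotation and reflection by its side lengths, and the labelling of the sides is fixed. Therefore the triple of directions $(e_{i1}x_1,e_{i2}x_2,e_{i3}x_3)$ equals a phase times either a fixed triple $(v_1,v_2,v_3)$ or its complex conjugate $(\bar v_1,\bar v_2,\bar v_3)$. In the degenerate collinear case the two coincide up to a phase.

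There are three rows and at most two types, so by pigeonhole two rows share a type. For those two rows $i,i'$ we get $e_{i'k}x_k=\mu\, e_{ik}x_k$ for all $k$. Dividing by $x_k$ shows that rows $i$ and $i'$ of $E$ are proportional, and Step 1 finishes the proof.

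\textbf{Main obstacle.} The only delicate point is the rigidity claim in Step 3. One must check that a labelled triangle with prescribed side lengths is unique up to a rotation in $\T$ and complex conjugation, including the degenerate collinear case. I would verify this by normalizing $v_1=|x_1|$ (a positive real) and applying the law of cosines: it fixes $\RePart(v_2\bar v_1)$ and $|v_2|$, so $v_2$ is determined up to conjugation, and then $v_3=-v_1-v_2$ is determined as well.
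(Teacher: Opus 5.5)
Your proof is correct, but it reaches the key intermediate fact (two proportional rows or columns of the singular block) by a different route from the paper.

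\textbf{The paper's route.} The paper dephases the block to $E(a,b,c,d)$ and writes $\det E=0$ as $ad-bc-d+b+c-a=0$. It then splits into cases according to which of $a,b,c,d$ equal $1$. In the generic case it takes moduli and clears unit factors to obtain $(a-b)(a-c)(b-1)(c-1)=0$, from which it reads off the coincident rows or columns.

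\textbf{Your route.} You fix a kernel vector $x$. If a coordinate of $x$ vanishes, you get proportional columns directly. Otherwise every row of $E$ closes a triangle with the same labelled side lengths $|x_1|,|x_2|,|x_3|$. Such a triangle is rigid up to rotation and reflection, so by pigeonhole two rows share an orientation and are therefore proportional.

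\textbf{What each buys.} Your argument is coordinate-free. It avoids the case bookkeeping and the modulus-squaring step, and it shows why singularity of a unimodular $3\times3$ block forces proportional lines. It also handles the collinear case automatically, since there $v=\bar v$ up to a phase. The paper's computation uses the same four-phase seed coordinates as the rest of the work. It consists of explicit polynomial identities that are easy to check exactly or to formalize, and it identifies which pair of lines coincides in each branch.

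The final step is the same in both. Equality in the triangle inequality forces the complementary relative phases to be $-1$, which exposes an $F_2$-equivalent $2\times2$ block.
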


\begin{proof}[Proof Sketch]
After dephasing, the singularity condition together with unimodularity forces two rows or two columns of the corner to coincide. If two rows coincide on the selected columns, their relative phases there are all $1$. Orthogonality of the full rows then forces their relative phases on the complementary columns to be all $-1$, so choosing one column from each triple yields a $2\times2$ Hadamard submatrix. The coincident-column case follows by transposition. The detailed proof is contained in Supplemental Sec.~\ref{sec:supp-singular-corner}.
\end{proof}

Sz\"oll\H{o}si proves the equivalent statement that a vanishing $3\times3$ minor -- the determinant of a submatrix -- forces
$[H]\in\KK_6^{(3)}$~\cite{Szollosi2012FourParameter}. The two conclusions agree by Proposition~\ref{prop:publishedinputs}(1).
We present a self-contained proof that stops at the $2\times2$ submatrix, so that Karlsson's theorem is not used here and the two structural inputs for this manuscript remain confined to Proposition~\ref{prop:publishedinputs}.

\subsection{Normalized candidate fibers}
Given an arbitrary matrix, we use the $3\times3$ block form of Eq.~\eqref{eq:blockform}, select three rows and three columns, move the resulting $3\times3$ submatrix to the upper-left corner, and dephase the result.
The first row of $E$ and $B$, and the first column of $E$ and $C$, then consist of ones. Block multiplication in $HH^{\dagger}=6I_6$ and
$H^{\dagger}H=6I_6$ gives
\begin{equation}
\begin{aligned}
EE^{\dagger}+BB^{\dagger}&=6I_3,\\
CC^{\dagger}+DD^{\dagger}&=6I_3,\\
E^{\dagger}E+C^{\dagger}C&=6I_3,\\
B^{\dagger}B+D^{\dagger}D&=6I_3,\\
EC^{\dagger}+BD^{\dagger}&=0,\\
E^{\dagger}B+C^{\dagger}D&=0.
\end{aligned}
\label{eq:blockorthogonality}
\end{equation}
The first and third lines of Eq.~\eqref{eq:blockorthogonality} are the Gram conditions that determine the normalized candidates adjacent to $E$.

\begin{definition}
For a fixed dephased corner $E$, its normalized horizontal and vertical candidate sets are
\begin{equation}
\begin{aligned}
\mathcal B_E={}&\{B'\in\T^{3\times3}:\\
&B'B'^{\dagger}=6I_3-EE^{\dagger},\\
&B'_{1j}=1\ \text{for }1\le j\le3\},\\[2pt]
\mathcal C_E={}&\{C'\in\T^{3\times3}:\\
&C'^{\dagger}C'=6I_3-E^{\dagger}E,\\
&C'_{i1}=1\ \text{for }1\le i\le3\}.
\end{aligned}
\label{eq:candidatesets}
\end{equation}
The first condition in each set is a fixed-Gram condition. The row or column of ones correspond to the full matrix $H$ being dephased.
Write
\begin{equation}
\begin{aligned}
\mathcal B_E^\times&=\{B\in\mathcal B_E:\det B\ne0\},\\
\mathcal C_E^\times&=\{C\in\mathcal C_E:\det C\ne0\},
\end{aligned}
\label{eq:invertible-candidates}
\end{equation}
for the invertible candidates.
\end{definition}

\begin{definition}[Finite-dilation corner]
\label{def:finitecorner}
A dephased corner $E$ is a \emph{finite-dilation corner} when
\begin{align}
\mathcal B_E^\times\text{ is nonempty and finite}, \notag \\
\qquad
\mathcal C_E^\times\text{ is nonempty and finite}.
\label{eq:finite-dilation-corner}
\end{align}
A displayed Hadamard matrix has a \emph{finite-corner witness} when its upper-left block is a finite-dilation corner and its actual adjacent blocks belong to $\mathcal B_E^\times$ and $\mathcal C_E^\times$.
\end{definition}

\subsection{The completed finite-dilation procedure}
\label{sec:dilation}
We now introduce the following branch-complete version of Sz\"oll\H{o}si's dilation algorithm~\cite{Szollosi2012FourParameter}.
His architecture is unchanged: choose a dephased corner, solve for the adjacent blocks, and force the fourth block. Our formulation replaces the generic companion-function divisions by the full normalized fixed-Gram equations. 
Keeping these uncancelled retains finite exceptional solutions that may be lost in the generic quotient formulas. Below, we denote our five-step refinement as the ``branch-complete finite-dilation" procedure.
Every dephased corner has the following form
\begin{equation}
E(a,b,c,d)=
\begin{pmatrix}
1&1&1\\
1&a&b\\
1&c&d
\end{pmatrix},
\label{eq:singular-corner-form}
\end{equation}
with four phases $(a,b,c,d)\in\T^4$. After the normalization in Eq.~\eqref{eq:candidatesets}, $B$ has six unknown phase entries below its first row and $C$ has six unknown phase entries to the right of its first column; these phases are constrained unknowns.

\paragraph*{Branch-complete dilation procedure.}
\begin{enumerate}
\item Choose seed phases $(a,b,c,d)\in\T^4$ and form $E(a,b,c,d)$.
\item Form the complete normalized physical solution sets of the two fixed-Gram systems
\begin{equation}
\begin{aligned}
BB^{\dagger}&=6I_3-EE^{\dagger},\\
C^{\dagger}C&=6I_3-E^{\dagger}E.
\end{aligned}
\label{eq:algorithmgrams}
\end{equation}
Equivalently, solve for every member of $\mathcal B_E$ and $\mathcal C_E$.
\item Continue only when $E$ is a finite-dilation corner (Definition \ref{def:finitecorner}), and consider every pair
\hbox{$(B,C)\in\mathcal B_E^\times\times\mathcal C_E^\times$}.
\item For every such pair define the only possible fourth block,
\begin{equation}
D(E;B,C)=-CE^{\dagger}(B^{-1})^{\dagger}.
\label{eq:maincompletionformula}
\end{equation}
\item Retain the resulting block matrix precisely when all nine entries of $D(E;B,C)$ have modulus one, that is, when $|D_{ij}(E;B,C)|=1$ for every
$1\leq i,j\leq3$.
\end{enumerate}

\paragraph*{Solving the normalized fixed-Gram equations} In Step~2, we determine the complete physical solution sets $\mathcal B_E$ and $\mathcal C_E$, without selecting a single generic algebraic branch. For example, write a normalized horizontal candidate as 
\begin{equation} 
B= 
\begin{pmatrix} 
1&1&1\\ x_1&x_2&x_3\\ y_1&y_2&y_3 \end{pmatrix}, 
\label{eq:main-fixed-gram-coordinate-form} 
\end{equation} 
where $x_j,y_j\in\T$.
Equation \ref{eq:algorithmgrams} prescribes the three off-diagonal row inner products of $B$.
From the entries of Eq.~\ref{eq:singular-corner-form}, we define
\begin{equation}
\begin{split}
   s_h:=&-(1+a+b),\\
    t_h:=&-(1+c+d),\\
    r_h:=&-\left(1+\frac ca+\frac db\right). 
\end{split}
\label{eq:trace-data-main}
\end{equation}
Because the seed phases have unit modulus, we can write the complex conjugate as
\begin{equation}
\begin{aligned}
\overline{s_h}&=-(1+a^{-1}+b^{-1}),\\
\overline{t_h}&=-(1+c^{-1}+d^{-1}),\\
\overline{r_h}&=-\left(1+\frac ac+\frac bd\right).
\end{aligned}
\end{equation}
The complete horizontal fixed-Gram system is therefore
\begin{equation}
\begin{aligned}
\sum_{j=1}^3x_j&=s_h,
&\sum_{j=1}^3x_j^{-1}&=\overline{s_h},\\
\sum_{j=1}^3y_j&=t_h,
&\sum_{j=1}^3y_j^{-1}&=\overline{t_h},\\
\sum_{j=1}^3\frac{y_j}{x_j}&=r_h,
&\sum_{j=1}^3\frac{x_j}{y_j}&=\overline{r_h}.
\end{aligned}
\label{eq:main-fixed-gram-sums}
\end{equation}
Multiplying these equations by the appropriate products $x_1x_2x_3$ and $y_1y_2y_3$ removes all displayed denominators. These
products are nonzero everywhere due to the unimodularity of $x_{i}$ and $y_{i}$, so the conversion is reversible and neither introduces nor discards any physical solution.

The vertical candidate system is obtained in the same way for $C$ from Eq.~\ref{eq:algorithmgrams}. If a physical side fiber is finite, every one of its solutions is retained. If it is positive-dimensional, and hence infinite, we do not replace it by a generic finite root list. Instead, the corner is treated as nonfinite and passed to the global routing argument, which will be introduced in Proposition~\ref{prop:main-corner-routing}.

\paragraph*{Relation to Sz\"oll\H{o}si's published construction.}
Our procedure and Sz\"oll\H{o}si's Construction~3.1
\cite{Szollosi2012FourParameter} solve the same fixed-Gram candidate problem. Using the normalized block $B$ from Eq.~\eqref{eq:main-fixed-gram-coordinate-form}, consider the first unknown column $(x_1,y_1)^T$. These phases are denoted by $(e,f)^T$ in Sz\"oll\H{o}si's construction.

Two consequences of the fixed-Gram equations are quadratic in $y_1$:
\begin{equation}
\begin{aligned}
F_1(x_1)+F_2(x_1)y_1+F_3(x_1)y_1^2&=0,\\
G_1(x_1)+G_2(x_1)y_1+G_3(x_1)y_1^2&=0,
\end{aligned}
\label{eq:main-szollosi-parent-quadratics}
\end{equation}
where the coefficients also depend on the corner phases $a,b,c,d$.
Eliminating $y_1^2$ without division gives
\begin{equation}
A(x_1)+B(x_1)y_1=0,
\label{eq:main-szollosi-companion-relation}
\end{equation}
where
\begin{equation}
A=F_3G_1-F_1G_3,
\qquad
B=F_3G_2-F_2G_3.
\end{equation}
When $B(x_1)\neq0$, the companion phase is
\begin{equation}
y_1=-\frac{A(x_1)}{B(x_1)}.
\label{eq:main-szollosi-companion-quotient}
\end{equation}
Imposing $|y_1|=1$ produces Sz\"oll\H{o}si's fundamental polynomial in $x_1$, which has degree six in the generic case. The same calculation is then used to obtain the other two pairs $(x_2,y_2)$ and $(x_3,y_3)$.

The quotient in Eq.~\eqref{eq:main-szollosi-companion-quotient} requires
$B(x_1)\neq0$. If this condition fails, there are two possibilities:
\begin{equation}
\begin{array}{rcl}
B(x_1)=0,\ A(x_1)\neq0
&\Longrightarrow& \text{no solution},\\[2pt]
B(x_1)=A(x_1)=0
&\Longrightarrow& \text{return to parent equations.}
\end{array}
\label{eq:main-szollosi-companion-cases}
\end{equation}
In the second case, the linear relation becomes $0=0$ and does not determine $y_1$. The two parent equations may then have finitely many exceptional solutions or infinitely many solutions. The degree drops and repeated roots must be handled separately.

Our branch-complete procedure retains the original fixed-Gram equations and strictly avoids premature division. Here \emph{division-free} is meant in the following precise sense: the procedure allows reciprocals of nonvanishing phase variables and the matrix inverse $B^{-1}$, computed only after verifying $\det B\neq0$. We do not divide by a parameter-dependent expression without first proving that it is nonzero.

Because of this strict condition, our method agrees exactly with Sz\"oll\H{o}si's dilation method on regular domains \cite{Szollosi2012FourParameter}. Whenever his divisions are valid and the normalized invertible side fibers are finite, both procedures yield the same candidate blocks and the same Hadamard completions.

The advantage of the complete-fiber formulation emerges in the exceptional cases. Where the standard quotient formulas become undefined, our method returns to the uncancelled parent equations and captures every finite physical solution. For a concrete seed in an effectively presented field, these exceptional solutions are obtained as the unit-torus points of the resulting zero-dimensional Laurent system, rather than from undefined quotients. Concretely, this is achieved through the following pipeline:
\begin{itemize}
\item \emph{Formulation.} Each phase is expressed in its real and imaginary parts, and the unit-circle constraints are imposed.
\item \emph{Elimination.} Exact algebraic elimination is applied, such as computing a Gr\"obner basis or a rational-univariate representation.
\item \emph{Isolation and verification.} All real roots are isolated exactly, and every candidate is verified directly against the original system.
\end{itemize}

\begin{definition}[Total finite-corner atlas]
\label{def:completed-output}
For $E=E(a,b,c,d)$, see Eq.~\ref{eq:singular-corner-form}, let $\operatorname{Out}(E)$ be the set of all retained matrices from the branch-complete dilation procedure.
We define
\begin{equation}
\begin{aligned}
\FCAtlas_6^{\mathrm{fc}}=\bigl\{[H]:\;&\exists(a,b,c,d)\in\T^4
\text{ such that}\\[-2pt]
&H\in\operatorname{Out}(E(a,b,c,d))\bigr\}.
\end{aligned}
\label{eq:Gdefinition}
\end{equation}
Thus, $\FCAtlas_6^{\mathrm{fc}}$ is the set of equivalence classes of the total retained output of our completed procedure, including every regular or special retained output arising from an invertible candidate pair in a finite normalized fiber. The existential quantifier means that for each equivalence class, there is at least one four-phase seed that produces a representative of that class. 
\end{definition}

\begin{proposition}[Soundness of the retained output]
\label{prop:algorithmvalid}
Every matrix retained in Definition~\ref{def:completed-output} is a complex Hadamard matrix.
\end{proposition}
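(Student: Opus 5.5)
We have to show that the retained matrix $H=\begin{pmatrix}E&B\\C&D\end{pmatrix}$, with $D=D(E;B,C)=-CE^\dagger(B^{-1})^\dagger$ and all $|D_{ij}|=1$, satisfies $HH^\dagger=6I_6$. Unimodularity is immediate, since $E$, $B$, $C$ are unimodular by construction and $D$ is unimodular by the retention test of Step~5. It therefore suffices to verify the four block identities of $HH^\dagger=6I_6$:
\[
EE^\dagger+BB^\dagger=6I_3,\quad EC^\dagger+BD^\dagger=0,\quad CC^\dagger+DD^\dagger=6I_3.
\]
The lower-left block is the adjoint of the upper-right one. Since $H$ is square, $HH^\dagger=6I_6$ already implies $H^\dagger H=6I_6$ (Definition~\ref{def:hadamard}), so the column identities need no separate check.

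The first identity is the defining condition of $B\in\mathcal B_E$. For the second, note that $B$ is invertible because $B\in\mathcal B_E^\times$. Then $D^\dagger=-B^{-1}EC^\dagger$, so $BD^\dagger=-EC^\dagger$, and the off-diagonal identity holds by construction.

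The main step is the block $CC^\dagger+DD^\dagger=6I_3$. I would compute
\[
DD^\dagger=CE^\dagger(B^{-1})^\dagger B^{-1}EC^\dagger=CE^\dagger(BB^\dagger)^{-1}EC^\dagger=CE^\dagger(6I_3-EE^\dagger)^{-1}EC^\dagger,
\]
where the last equality uses $B\in\mathcal B_E$. Note that $6I_3-EE^\dagger=BB^\dagger$ is positive definite because $B$ is invertible.

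Next I would apply the push-through identity $E^\dagger(6I-EE^\dagger)^{-1}E=(6I-E^\dagger E)^{-1}E^\dagger E$. It follows from $E^\dagger(6I-EE^\dagger)=(6I-E^\dagger E)E^\dagger$, which requires $6I-E^\dagger E$ to be invertible. That matrix equals $C^\dagger C$ by $C\in\mathcal C_E$, and it is invertible because $C\in\mathcal C_E^\times$. Hence
\[
DD^\dagger=C(C^\dagger C)^{-1}E^\dagger E\,C^\dagger .
\]
Writing $E^\dagger E=6I-C^\dagger C$, this becomes $DD^\dagger=6\,C(C^\dagger C)^{-1}C^\dagger-CC^\dagger$. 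Since $C$ is square and invertible, $C(C^\dagger C)^{-1}C^\dagger=I_3$. Therefore $DD^\dagger=6I_3-CC^\dagger$, as required.

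The only delicate point is making sure every inverse used is justified: $B^{-1}$, $(BB^\dagger)^{-1}$ and $(C^\dagger C)^{-1}$. Each is legitimate because Step~3 restricts to $\mathcal B_E^\times\times\mathcal C_E^\times$. This is exactly the division-free discipline stated earlier.

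Notably, unimodularity of $D$ is not needed for orthogonality; it only ensures that $H$ lies in $\T^{6\times 6}$. So the retained output consists of complex Hadamard matrices, and their classes lie in $\HH_6$.
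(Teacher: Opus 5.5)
Your proof is correct. It follows the same overall plan as the paper: verify the block Gram identities, then get unimodularity of $D$ from the retention test. The difference is that you check $HH^\dagger=6I_6$, whereas the paper checks $H^\dagger H=6I_6$.

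This choice changes where the work lies. In your version the off-diagonal block $EC^\dagger+BD^\dagger=0$ is immediate from the formula for $D$. The effort goes into $DD^\dagger=6I_3-CC^\dagger$, which you obtain via the push-through identity and by inverting $BB^\dagger$ and $C^\dagger C$.

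The paper instead must compute the off-diagonal block,
$C^\dagger D=-C^\dagger CE^\dagger(B^{-1})^\dagger=-E^\dagger(6I_3-EE^\dagger)(B^{-1})^\dagger=-E^\dagger B$.
It then gets $D^\dagger D=6I_3-B^\dagger B$ from the polynomial identity $E(6I_3-E^\dagger E)E^\dagger=6X-X^2=B(6I_3-B^\dagger B)B^\dagger$, where $X=EE^\dagger$. The only inverse it ever uses is $B^{-1}$.

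Your route additionally invokes $\det C\neq0$. This is harmless, since Step~3 restricts to $\mathcal C_E^\times$. It is also automatic: $C^\dagger C=6I_3-E^\dagger E$ has the same spectrum as $6I_3-EE^\dagger=BB^\dagger$, which is positive definite. So the two arguments rest on the same effective hypotheses.
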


\begin{proof}
The candidate equations and Eq.~\eqref{eq:maincompletionformula} imply
\begin{equation}
C^{\dagger}D=-E^{\dagger}B,
\qquad
D^{\dagger}D=6I_3-B^{\dagger}B.
\label{eq:maincompletionidentities}
\end{equation}
Together with Eq.~\eqref{eq:algorithmgrams}, these identities give
\begin{equation}
H(E;B,C)^{\dagger}H(E;B,C)=6I_6.
\label{eq:mainoutputunitarity}
\end{equation}
where 
\begin{equation}
H(E;B,C)=
\begin{pmatrix}
E&B\\
C&D(E;B,C)
\end{pmatrix}.
\label{eq:algorithmoutputmatrix}
\end{equation}
The retention test says that every entry of the remaining block $D$ has modulus one, while this is already true for $E,B,C$. Hence the retained matrix satisfies Definition~\ref{def:hadamard}. The complete derivation of Eq.~\eqref{eq:maincompletionidentities} is given in Supplemental Sec.~\ref{sec:directcompletion}.
\end{proof}

\subsection{Proof of the classification theorem}

Next, we prove the exhaustiveness of the retained output (Theorem~\ref{thm:classification}). Starting from an arbitrary order-six Hadamard matrix $H$, choose three of its six rows and three of its six columns. This gives $\binom{6}{3}^2=400$ positional corners. We prove that at least one is a finite-corner witness to $H$.

We first consider matrices outside the Karlsson sector. By Lemma~\ref{lem:singularcorner} and Proposition~\ref{prop:publishedinputs}(1), every $3\times3$ submatrix of such a matrix is invertible. Consequently, at every possible corner, because $H$ satisfies the block Gram identities, the blocks $B$ and $C$ belong to the corresponding side fibers. Thus both fibers are nonempty at every positional corner. If both fibers are finite, the corner is already a finite-corner witness. The only remaining obstruction is that one or both fibers are infinite.

We resolve this obstruction in three steps. Proposition~\ref{prop:main-infinite-fiber-trichotomy} classifies the possible structures of an infinite fixed-Gram fiber. Proposition~\ref{prop:main-corner-routing} combines that local result with the complementary block identities to find a finite corner, except when all four blocks in a selected partition are order-three Hadamard matrices. Proposition~\ref{prop:main-fourier-block} identifies the matrices
as belonging to the Tao and Karlsson sectors in this remaining case.

For a matrix $X \in \mathbb{T}^{3 \times 3}$, define its row cubic Gram invariant by
\begin{equation}
\tau_{\rm r}(X)
=(XX^\dagger)_{12}(XX^\dagger)_{23}(XX^\dagger)_{31}.
\label{eq:main-row-cubic-invariant}
\end{equation}
Multiplying the columns of $X$ by phases does not change $XX^\dagger$,
so we may normalize its first row to $(1,1,1)$. We define the normalized physical row-Gram fiber of $X$ by
\begin{equation}
\mathcal R_X=
\left\{
Y\in\T^{3\times3}:
Y_{1j}=1\ \text{for }1\leq j\leq3,\ 
YY^\dagger=XX^\dagger
\right\}.
\label{eq:main-normalized-fiber}
\end{equation}

\begin{proposition}[Infinite-fiber trichotomy]
\label{prop:main-infinite-fiber-trichotomy}
Let $X\in\T^{3\times3}$ be invertible. If $\mathcal R_X$ is infinite,
then at least one of
\begin{equation}
\begin{aligned}
&XX^\dagger=3I_3,\\
&\RePart\tau_{\rm r}(X)<0,\\
&X\text{ contains a }2\times2\text{ Hadamard submatrix}
\end{aligned}
\label{eq:main-fixed-gram-trichotomy}
\end{equation}
holds. These alternatives may overlap.
\end{proposition}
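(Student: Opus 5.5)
The plan is to linearize the fixed-Gram system along a hypothetical positive-dimensional component of $\mathcal R_X$, and to show that the existence of a nonzero tangent direction at every point of such a component forces one of the three alternatives in Eq.~\eqref{eq:main-fixed-gram-trichotomy}.

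\textbf{Step 1: reduce to a tangent direction.} Since $X$ is invertible and $YY^\dagger=XX^\dagger$ for $Y\in\mathcal R_X$, we have $Y=XV$ with $V=X^{-1}Y$ unitary. Hence $\mathcal R_X$ is a compact real algebraic subset of $\T^{3\times3}$. Such a set is finite unless it contains a positive-dimensional semialgebraic component. If $\mathcal R_X$ is infinite, curve selection gives a nonconstant real-analytic arc $Y(t)\subset\mathcal R_X$. At a generic point of the arc, $Y'(0)\neq0$ is a nonzero tangent vector.

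Unimodularity and the fixed first row force $Y'=\I\,(Y\circ\Theta)$, where $\Theta$ is a real matrix whose first row vanishes. Its two remaining rows are $\theta_2,\theta_3\in\R^3$, giving six real unknowns. Differentiating $YY^\dagger=\mathrm{const}$ gives
\[
(Y\circ\Theta)Y^\dagger-Y(Y\circ\Theta)^\dagger=0 .
\]
Its diagonal entries vanish identically. Its three off-diagonal entries $(1,2),(1,3),(2,3)$ are complex, so they give six real linear equations in $(\theta_2,\theta_3)$. In explicit form, with $u_j=Y_{2j}$ and $v_j=Y_{3j}$:
\begin{itemize}
\item the $(1,2)$ entry is $\sum_j\theta_{2j}\overline{u_j}=0$;
\item the $(1,3)$ entry is $\sum_j\theta_{3j}\overline{v_j}=0$;
\item the $(2,3)$ entry is $\sum_j(\theta_{2j}-\theta_{3j})u_j\overline{v_j}=0$.
\end{itemize}
A nonzero tangent direction exists exactly when the real $6\times6$ determinant $\Delta(Y)$ of this system vanishes. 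So infinitude forces $\Delta\equiv0$ along an entire arc.

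\textbf{Step 2: factor $\Delta$.} The first equation says $\theta_2$ lies in the real kernel of $\theta\mapsto\sum_j\theta_j\overline{u_j}$. This kernel is one-dimensional when $u_1,u_2,u_3$ are not all collinear through the origin over $\R$; otherwise it is two-dimensional, which means two of the $u_j$ are equal or antipodal. The same holds for $\theta_3$ and the $v_j$.

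In the generic situation the kernels are spanned by explicit vectors, for example
\[
\bigl(\operatorname{Im}(u_2\overline{u_3}),\ \operatorname{Im}(u_3\overline{u_1}),\ \operatorname{Im}(u_1\overline{u_2})\bigr)
\]
for $\theta_2$, and similarly for $\theta_3$. Substituting these into the third equation leaves a $2\times2$ real determinant. I expect it to factor into the following pieces:
\begin{itemize}
\item terms vanishing when a row sum $\sum_ju_j$ or $\sum_jv_j$ is zero, or when $\sum_ju_j\overline{v_j}=0$. These are the Gram entries $(XX^\dagger)_{12}$, $(XX^\dagger)_{13}$, $(XX^\dagger)_{23}$.
\item a factor expressible through $\tau_{\rm r}(X)=(XX^\dagger)_{12}(XX^\dagger)_{23}(XX^\dagger)_{31}$, whose vanishing locus on the physical torus should be contained in $\{\RePart\tau_{\rm r}\le0\}$.
\end{itemize}
The degenerate-kernel cases, with two equal or antipodal entries in a row, are handled separately. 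An antipodal pair in row $2$ or row $3$, together with the ones in row $1$, gives a $2\times2$ Hadamard submatrix directly; this is the third alternative. Equal entries lead back to the generic computation after a permutation.

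\textbf{Step 3: sort the cases along the arc.} Since $XX^\dagger$ is constant on $\mathcal R_X$, the vanishing of $\Delta$ along the arc becomes a condition on the fixed Gram data, not on $Y$. There are three outcomes:
\begin{itemize}
\item all three off-diagonal Gram entries vanish, which is $XX^\dagger=3I_3$;
\item the $\tau$-factor vanishes with $\RePart\tau_{\rm r}<0$;
\item a single Gram entry vanishes, or the degeneration occurs only at isolated points of the arc, and I would show this forces an antipodal pair somewhere on the arc. An antipodal pair in a row means two columns of $Y$ form a $2\times2$ Hadamard block with the first row. Because $Y=XV$ has the same row Gram matrix as $X$, I would then transfer this $2\times2$ Hadamard structure from $Y$ to $X$, or fold this case into the first two alternatives.
\end{itemize}

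\textbf{Main obstacle.} The hard part is the factorization in Step~2, and in particular the boundary case $\RePart\tau_{\rm r}=0$, $\tau_{\rm r}\neq0$. I would first try a direct symbolic computation of $\Delta$ in the coordinates $u_j,v_j$, using the row-sum relations to eliminate variables. If the factor does not separate cleanly, the fallback is second-order information along the arc, obtained by differentiating $\Delta(Y(t))\equiv0$. The aim is to show that the $\RePart\tau_{\rm r}=0$ stratum supports only finitely many solutions unless one of the listed degenerations occurs.
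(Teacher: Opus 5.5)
Your Step~3 assumes that, because $XX^\dagger$ is constant on $\mathcal R_X$, the vanishing of $\Delta$ along an arc becomes a condition on the Gram data. This is where the argument fails. $\Delta(Y)$ is the Jacobian of $Y\mapsto(S,T,R)$ at the point $Y$. Its vanishing says only that $Y$ is a critical point, and critical points also occur in \emph{finite} fibers.

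Take a product-regular seed with $\omega_{\rm n}=0$; such seeds exist, e.g.\ Eq.~\eqref{eq:ramification-cayley-seed}. There the two physical product sheets of Theorem~\ref{thm:regular-seed-domain} coalesce. Each point of that finite fiber is therefore a limit of two distinct families of preimages, and $\Delta$ vanishes there by the inverse function theorem. So any Gram-data factor you extract from $\Delta$ must include fold factors of this kind, and their vanishing implies none of the three alternatives. First-order information cannot distinguish a positive-dimensional fiber from a finite fiber lying over a fold. Moreover, the Step~2 factorization is only conjectured, the third outcome of Step~3 is a placeholder, and the second-order fallback is unspecified. The proposal therefore does not prove the proposition.

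What is missing is a mechanism that uses infinitude itself, globally on the fiber. The paper gets it by elimination. Every column $(x_j,y_j)$ of every fiber member satisfies Eqs.~\eqref{eq:Phi}--\eqref{eq:Gamma}, hence $\mathcal A(x)+\mathcal B(x)y=0$. Here the coefficients depend only on $(S,T,R)$, and $\mathcal A=x(|R|^2-|T|^2)\kappa_S(x)$. Some coordinate takes infinitely many distinct values on $\T$, so either $\mathcal A\equiv\mathcal B\equiv0$ or $|\mathcal A|^2-|\mathcal B|^2$ vanishes identically on $\T$. These polynomial identities are genuine conditions on the Gram data.

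\begin{itemize}
\item The dependent branch forces $|R|=|T|$. Then, via $\Pi=\Theta=0$ or $T=R=0$, it yields a contradiction with invertibility, $\RePart\tau_{\rm r}=-st^2<0$, $\RePart\tau_{\rm r}=s^2(s^2-3)/2<0$, or the Fourier case.
\item The other branch gives $QQ^\#=|\delta|^2\kappa_s^2$. Its root allocations yield the Fourier case ($s=0$), M\"obius families with strictly negative invariant ($0<s<1$), or a forced row $(1,u,-u)$ ($1\le s\le3$).
\end{itemize}

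The strict sign comes from these explicit values, so the $\RePart\tau_{\rm r}=0$ boundary you flag never needs separate treatment.

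One part of your plan is sound, but you should justify it explicitly rather than by appeal to $Y=XV$. An $H_2$ block found at some $Y$ on the arc does transfer to $X$. For a $3\times3$ unimodular matrix, rows $i,k$ contain a $2\times2$ Hadamard submatrix if and only if $|(XX^\dagger)_{ik}|=1$, which is a Gram invariant. The reverse direction holds because four unit numbers summing to zero split into two antipodal pairs.
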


\begin{proof}[Proof sketch]
Fixing $XX^\dagger$ fixes the three off-diagonal row inner products. Supplemental Sec.~\ref{sec:local} analyzes the resulting six unit-circle equations by division-free elimination. It proves that, if the normalized physical fiber is infinite, the uncancelled equations must satisfy one of three algebraic conditions.
Translating these conditions back onto $X$ leads to three cases: $(i)$ the dependent case $XX^\dagger=3I_3$, $(ii)$ a nonsingular M\"obius case, possible only when $\operatorname{Re}\tau_{\rm r}(X)<0$, or $(iii)$ a common-zero case in which a normalized row has the form $(1,z,-z)$, exposing a $2\times2$ Hadamard submatrix. Because the elimination retains every vanishing-denominator and endpoint case, these alternatives exhaust the infinite-fiber possibilities.
\end{proof}

\begin{proposition}[Corner routing]
\label{prop:main-corner-routing}
Let $H$ be an order-six Hadamard matrix, written in the block form
of Eq.~\eqref{eq:blockform}, and suppose that
$[H]\notin\KK_6^{(3)}$. Then either all four displayed blocks are
order-three Hadamard matrices, or an allowed row and column permutation,
followed by dephasing, produces a finite-corner witness.
\end{proposition}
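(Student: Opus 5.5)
The plan is to combine Proposition~\ref{prop:main-infinite-fiber-trichotomy} with the block identities~\eqref{eq:blockorthogonality}. These identities make the cubic invariants of complementary blocks opposite in sign. Since $[H]\notin\KK_6^{(3)}$, Lemma~\ref{lem:singularcorner} together with Proposition~\ref{prop:publishedinputs}(1) shows that every $3\times3$ submatrix of $H$ is invertible and that $H$ has no $2\times2$ Hadamard submatrix. Hence at any positional corner the actual blocks lie in the invertible side fibers, so both fibers are nonempty. The third alternative of the trichotomy never occurs, so an infinite fiber forces either a Fourier-type Gram or a negative real part of the relevant cubic invariant.

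\textbf{Sign bookkeeping.} I first fix notation. Let $\tau_{\rm c}(X)=\tau_{\rm r}(X^{\mathsf T})$ be the column invariant. The horizontal fiber $\mathcal B_E$ is exactly $\mathcal R_B$: column phases do not change $BB^\dagger$, so normalizing the first row of $B$ is harmless. Since $BB^\dagger=6I_3-EE^\dagger$, the off-diagonal entries of $BB^\dagger$ are the negatives of those of $EE^\dagger$, so $\tau_{\rm r}(B)=-\tau_{\rm r}(E)$. The same argument gives
\begin{equation*}
\tau_{\rm r}(D)=-\tau_{\rm r}(C),\qquad \tau_{\rm c}(C)=-\tau_{\rm c}(E),\qquad \tau_{\rm c}(D)=-\tau_{\rm c}(B).
\end{equation*}
Also $EE^\dagger=3I_3$ holds if and only if $BB^\dagger=3I_3$, with the analogous statements for the other pairs. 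Now apply Proposition~\ref{prop:main-infinite-fiber-trichotomy} at corner $E$. The horizontal fiber is infinite only if $BB^\dagger=3I_3$ or $\RePart\tau_{\rm r}(E)>0$. The vertical fiber is infinite only if $C^\dagger C=3I_3$ or $\RePart\tau_{\rm c}(E)>0$. The four blocks $E,B,C,D$ can each be moved to the upper-left position by a block permutation, and dephasing does not change $\tau$ or the Gram conditions. So the same criterion applies at every block corner $X$: the horizontal fiber at $X$ is finite unless $X$ is row-Fourier or $\RePart\tau_{\rm r}(X)>0$, and similarly for columns. In particular, whenever $\RePart\tau_{\rm r}(E)=0$, both $E$ and $B$ have finite horizontal fibers.

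\textbf{Combinatorial step.} Suppose first that no block has a Fourier Gram. Encode each block by the pair of signs $(\RePart\tau_{\rm r},\RePart\tau_{\rm c})$. A block corner is a finite-corner witness exactly when neither coordinate is positive. The sign relations above leave four free signs, and I check all patterns. Every pattern yields a witness among $E,B,C,D$ except the alternating one: $E$ and $D$ of type $(-,+)$ and $B,C$ of type $(+,-)$, or the transpose of this arrangement.

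\textbf{Main obstacle.} The alternating pattern must be handled using the remaining positional corners. My first attempt will use corners obtained by exchanging a single row, or a single column, between the two row triples. Exchanging a row of $E\,|\,B$ with a row of $C\,|\,D$ produces new blocks whose Gram data are again controlled by~\eqref{eq:blockorthogonality}. I would compute how $\RePart\tau_{\rm r}$ and $\RePart\tau_{\rm c}$ of the new upper-left block change. The goal is to show that the alternating pattern cannot persist under every such exchange without forcing $\RePart\tau=0$ somewhere, which yields a finite corner, or without forcing a Fourier Gram.

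\textbf{Fourier cases.} These are treated separately, and I expect this step to be lighter. Suppose $EE^\dagger=3I_3$, so that $E$ and $B$ are row-Fourier. The column relation then gives $C^\dagger C=6I_3-E^\dagger E=3I_3$, so $E$ and $C$ are Hadamard, and propagating through $D^\dagger D=6I_3-B^\dagger B$ shows that all four blocks are order-three Hadamard matrices. This is precisely the exceptional alternative in the statement. If only a column Gram is Fourier, the same propagation applies by transposition.
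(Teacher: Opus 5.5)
There is a genuine gap: your proposal does not prove the statement in the alternating sign pattern, and the single-row-exchange plan offered for it is not an argument. The block identities relate only blocks sharing a row or column triple of one fixed partition, and you give no mechanism controlling the invariants of blocks in a different partition.

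The missing idea is that the two signs of a block are not independent. For every $X\in\T^{3\times3}$,
\[
\RePart\tau_{\rm r}(X)=\RePart\tau_{\rm c}(X).
\]
This is Lemma~\ref{lem:rowcolumn} in the supplement. Your $\tau_{\rm c}(X)=\tau_{\rm r}(X^{\mathsf T})$ is the complex conjugate of the paper's $\tau_{\rm c}$, so the real parts agree. The proof is short. $XX^\dagger$ and $X^\dagger X$ are isospectral Hermitian matrices with diagonal $(3,3,3)$. For such $M$ one has $\tr(M^2)=27+2\sum_{i<j}|M_{ij}|^2$ and $\det M=27-3\sum_{i<j}|M_{ij}|^2+2\RePart(M_{12}M_{23}M_{31})$. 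Equal traces of squares and equal determinants then give the identity. Hence every block has sign type $(\sigma,\sigma)$, so the types $(-,+)$ and $(+,-)$ in your alternating pattern never occur. Your own case analysis then yields a witness among $E,B,C,D$, including the zero-sign case. Only the ``if'' direction of your ``witness exactly when neither coordinate is positive'' is true, but that is all you need.

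With the identity, the paper's proof is direct. Passing to $H^\dagger$ if needed, suppose the horizontal fiber at $E$ is infinite. The trichotomy then gives either the all-Hadamard case or $\RePart\tau_{\rm r}(B)<0$. In the latter case, swap the column triples so that $B$ is the corner, with sides $E$ and $D$. Then $\RePart\tau_{\rm r}(E)=-\RePart\tau_{\rm r}(B)>0$ and $\RePart\tau_{\rm c}(D)=-\RePart\tau_{\rm c}(B)=-\RePart\tau_{\rm r}(B)>0$. The trichotomy therefore forces both fibers at $B$ to be finite.

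The rest of your setup matches the paper and is correct: invertibility of all $3\times3$ submatrices, exclusion of $2\times2$ Hadamard submatrices, the identification $\mathcal B_E=\mathcal R_B$, sign reversal for complementary blocks, and propagation of the Fourier case to all four blocks.
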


\begin{proof}
At the corner $E$, the blocks $B$ and $C$ belong to two normalized side fibers of $E$, and all candidates are invertible because their fixed Gram matrices are positive definite. If both fibers are finite, there is nothing to prove.

By replacing $H$ with $H^\dagger$ if necessary, we may assume that the horizontal side fiber is infinite.  Taking the adjoint exchanges the horizontal and vertical fibers, and the resulting conclusion transfers back by adjunction.

Proposition~\ref{prop:main-infinite-fiber-trichotomy}, applied to the actual adjacent block $B$, gives three alternatives. The $2\times2$ alternative is impossible because such a submatrix would also occur in $H$, placing $[H]$ in $\KK_6^{(3)}$ by Proposition~\ref{prop:publishedinputs}(1).

If $BB^\dagger=3I_3$, the complementary Gram identity
\begin{equation}
BB^\dagger=6I_3-EE^\dagger
\label{eq:main-complementary-row-gram}
\end{equation}
also gives $EE^\dagger=3I_3$. Since $E$ and $B$ are square, both are order-three Hadamard matrices, and the remaining block identities imply that $C$ and $D$ are as well. This gives the first alternative in the proposition.

It remains to consider $\RePart\tau_{\rm r}(B)<0$.
Exchange the two complementary column triples, so that the new block decomposition is
\begin{equation}
    \widetilde H=
    \begin{pmatrix}
    B&E\\
    D&C
    \end{pmatrix}.
\end{equation}
For every off diagonal element, Eq.~\eqref{eq:main-complementary-row-gram} gives $(EE^\dagger)_{ij}=-(BB^\dagger)_{ij}$.
Multiplying the three cyclic off-diagonal entries therefore gives
\begin{equation}
\tau_{\rm r}(E)=-\tau_{\rm r}(B)
\quad \text{and} \quad
\RePart\tau_{\rm r}(E)>0.
\label{eq:main-complementary-cubic-sign}
\end{equation}

It remains to check the vertical fiber at the new corner $B$. Define
\begin{equation}
    \tau_{\rm c}(X)
=(X^\dagger X)_{12}(X^\dagger X)_{23}(X^\dagger X)_{31}.
\end{equation}

The new horizontal side fiber at the corner $B$ contains $E$. If this fiber were infinite, Proposition~\ref{prop:main-infinite-fiber-trichotomy}, applied to $E$, would again give one of three alternatives.
Its order-three-Hadamard alternative would force the all-Hadamard-block case already treated, its negative-invariant alternative would contradict Eq.~\eqref{eq:main-complementary-cubic-sign}, and its $2\times2$ alternative would place $H$ in the Karlsson sector. Hence the new horizontal side fiber is finite.

Supplemental Sec.~\ref{sec:invariant}, Lemma~\ref{lem:rowcolumn}, gives $\RePart\tau_{\rm c}(X)=\RePart\tau_{\rm r}(X)$  for every $3\times3$ unimodular $X$. Thus $\RePart\tau_{\rm c}(B)<0$, whereas the complementary identity $B^\dagger B+D^\dagger D=6I_3$ gives $\RePart\tau_{\rm c}(D)>0$. 
If the new vertical fiber were infinite, Proposition~\ref{prop:main-infinite-fiber-trichotomy}, applied to the normalized adjoint $D^\dagger$, would give the same three excluded alternatives. Hence both fibers at $B$ are finite, so $B$ is a finite-corner witness.
\end{proof}

\begin{proposition}[Fourier-block alternative]
\label{prop:main-fourier-block}
If one block in a block partition of an order-six Hadamard matrix $H$ is
an order-three Hadamard matrix, then
\begin{equation}
[H]\in\KK_6^{(3)}
\qquad\text{or}\qquad
[H]\in\mathcal T_6.
\label{eq:main-fourier-block-alternative}
\end{equation}
\end{proposition}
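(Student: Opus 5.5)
The plan is to reduce to Proposition~\ref{prop:publishedinputs}(2). It suffices to exhibit a dephased representative of $[H]$ with a noninitial row and a noninitial column made entirely of cubic roots of unity, or else to exhibit a $2\times2$ Hadamard submatrix, which places $[H]$ in $\KK_6^{(3)}$ by Proposition~\ref{prop:publishedinputs}(1).

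First, permute rows and column triples so that the Hadamard block is the corner $E$. The relation $EE^\dagger=3I_3$ and the Gram identities in Eq.~\eqref{eq:blockorthogonality} give $BB^\dagger=3I_3$ and $C^\dagger C=3I_3$. Then $D^\dagger D=6I_3-B^\dagger B=3I_3$. Hence all four blocks are order-three Hadamard matrices.

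Next I use the classical order-three classification: every order-three complex Hadamard matrix is equivalent to $F_3$. After permuting within the first triples and dephasing $H$, I may take $E=F_3$. The blocks $B$ and $C$ then have a row, respectively a column, of ones. By the same classification they take the forms
\begin{equation}
B=\operatorname{diag}(1,u,v)\,F_3\,P,\qquad C=P'\,F_3\,\operatorname{diag}(1,p,q),
\end{equation}
with $u,v,p,q\in\T$ and $P,P'$ permutation matrices. Because $B$ is a scaled unitary, $(B^{-1})^\dagger=B/3$. The completion formula \eqref{eq:maincompletionformula} therefore gives
\begin{equation}
D=-\tfrac13\,P'\,F_3\operatorname{diag}(1,p,q)F_3^\dagger\operatorname{diag}(1,u,v)F_3\,P .
\end{equation}
Here $F_3\operatorname{diag}(1,p,q)F_3^\dagger$ is a circulant whose entries are $1+p\,\omega^k+q\,\omega^{2k}$. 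The whole problem thus becomes the condition that an explicit $3\times3$ matrix, depending only on $(p,q,u,v)$, has all nine entries of modulus $3$.

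The main obstacle is solving this unimodularity system. I expect a dichotomy. In the generic branch, the constraints force $u,v$ (equivalently $p,q$, by the symmetry $H\leftrightarrow H^\dagger$) to be cubic roots of unity. Then the second row of $H$, namely $(1,\omega,\omega^2,u\cdot\text{perm}(1,\omega,\omega^2))$, and the second column consist of cubic roots, and Proposition~\ref{prop:publishedinputs}(2) yields $\mathcal T_6$ or $\KK_6^{(3)}$. In the degenerate branch, some circulant entry vanishes, or an entry of $D$ acquires the pattern $(1,z,-z)$. From that I would read off a pair of rows of $H$ that agree on one column and are opposite on another, giving a $2\times2$ Hadamard submatrix.

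To organize the case analysis, I would first try to use the remaining freedom to simplify. Re-choosing which row of each triple is first, and multiplying the rows of the lower triple (which rescales $p,q$ relative to $u,v$), should reduce the system to a few polynomial equations on $\T^4$. I would then solve these exactly, by division-free elimination in the same spirit as Proposition~\ref{prop:main-infinite-fiber-trichotomy}, and check every boundary branch separately.
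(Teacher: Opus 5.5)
Your reduction is correct and matches the paper's up to notation. The complementary Gram identities make all four blocks order-three Hadamard, the normal form $E=F_3$, $B=\operatorname{diag}(1,u,v)F_3P$, $C=P'F_3\operatorname{diag}(1,p,q)$ is legitimate, and your formula for $D$ is right. The gap is that the whole content of the proposition, solving $|D_{ij}|=1$ for all nine entries, is left as ``I expect a dichotomy'' to be settled by an unspecified elimination on $\T^4$. Nothing you wrote shows that unimodularity forces cubic roots in one branch and a $2\times2$ Hadamard submatrix in the other. The degenerate branch is only guessed, and a raw elimination of nine real conditions in four phases gives no a priori reason to close into a clean, exhaustive case split.

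There is also a bookkeeping issue. Proposition~\ref{prop:publishedinputs}(2) needs a cubic row and a cubic column at the same time. In the branch where $(1,u,v)$ is a Fourier row, $u,v$ are cubic while $p,q$ are completely free. So your symmetry $H\leftrightarrow H^\dagger$ only delivers both at once if the dichotomy is stated as ``$H_2$-reducible, or $u,v$ cubic''.

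The missing idea is to linearize the modulus conditions through autocorrelations. Put $\mathbf v=(1,u,v)$, $\mathbf w=(1,p,q)$, and $\widehat{\mathbf v}_k=\sum_\ell \mathbf v_\ell\omega^{k\ell}$, with indices mod $3$. Column $k$ of $F_3\operatorname{diag}(\mathbf w)F_3^\dagger\operatorname{diag}(\mathbf v)F_3$ is the unnormalized three-point transform of $z^{(k)}_m=\mathbf w_m\widehat{\mathbf v}_{k-m}$.

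A triple has a transform of constant modulus exactly when its nontrivial cyclic autocorrelations vanish. Parseval gives $\sum_j|\widehat{z}^{(k)}_j|^2=27$, so constant modulus means modulus $3$, which is precisely what you need. The shift-one autocorrelation equals $\sum_m\alpha_mA_{k-m}$, where $\alpha_m=\mathbf w_m\overline{\mathbf w_{m+1}}$ and $A_k=\widehat{\mathbf v}_k\overline{\widehat{\mathbf v}_{k-1}}$. Hence all nine conditions collapse to the cyclic convolution $\alpha*A=0$, i.e.\ $\widehat\alpha_j\widehat A_j=0$ for $j=0,1,2$, with $\widehat A_0=0$ automatically.

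Each of $\widehat A_1/3$, $\widehat A_2/3$, $\widehat\alpha_1$, $\widehat\alpha_2$ is a sum of three unit numbers with product one. Such a sum vanishes only when its summands are exactly $1,\omega,\omega^2$. The four ways of distributing the zeros then give:
\begin{enumerate}
\item $\mathbf v$ is a Fourier row, so a column permutation makes $B=F_3$ and $D=-C$, exposing an $F_2$ pattern across the two row triples;
\item $\mathbf w$ is a Fourier row, so a bottom-row permutation makes $C=F_3$ and $D=-B$;
\item in the two mixed cases, $u,v,p,q$ are all cubic roots simultaneously, and Proposition~\ref{prop:publishedinputs}(2) applies.
\end{enumerate}
This is what turns your expected dichotomy into a proof.
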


\begin{proof}[Proof sketch]
Moving the Hadamard block to the upper left and applying the complementary Gram identities shows first that all four blocks are order-three Hadamard matrices. Since every order-three Hadamard matrix is equivalent to the Fourier matrix $F_3$, see Eq.~\ref{eq:fouriern}, block-preserving equivalence reduces the matrix to a Fourier-block normal form. Solving its remaining orthogonality equations gives either a matrix equivalent to Tao's isolated matrix or an $H_2$-reducible matrix. The latter belongs to Karlsson's sector by Proposition~\ref{prop:publishedinputs}(1). The normal form calculation and all endpoint cases are proved in Supplemental Sec.~\ref{sec:fourierblock}.
\end{proof}

\begin{proposition}[Absence of every finite-corner witness forces Karlsson or Tao]
\label{thm:finitecornerintro}
If an order-six complex Hadamard matrix $H$ has no equivalent dephased representative with a finite-corner witness, then
\begin{equation}
[H]\in\KK_6^{(3)}
\qquad\text{or}\qquad
[H]\in\mathcal T_6 \,.
\label{eq:failed-search-routing}
\end{equation}
\end{proposition}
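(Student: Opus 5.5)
The plan is to argue by contraposition, combining Propositions~\ref{prop:main-corner-routing} and~\ref{prop:main-fourier-block}. Suppose $H$ is an order-six complex Hadamard matrix with $[H]\notin\KK_6^{(3)}$. We will show that either $[H]\in\mathcal T_6$ or some equivalent dephased representative of $H$ has a finite-corner witness. This is exactly the contrapositive of the statement.

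First, write $H$ in any block form, Eq.~\eqref{eq:blockform}; the leading three rows and columns are a convenient choice. Since $[H]\notin\KK_6^{(3)}$, Lemma~\ref{lem:singularcorner} together with Proposition~\ref{prop:publishedinputs}(1) shows that every $3\times3$ submatrix is invertible. Hence at the chosen corner the actual blocks $B$ and $C$ lie in $\mathcal B_E^\times$ and $\mathcal C_E^\times$, so both invertible side fibers are nonempty. Proposition~\ref{prop:main-corner-routing} then gives two alternatives:
\begin{itemize}
\item[(a)] an allowed row and column permutation followed by dephasing produces a finite-corner witness;
\item[(b)] all four blocks $E,B,C,D$ are order-three Hadamard matrices.
\end{itemize}

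In case (a), the resulting dephased matrix is equivalent to $H$ in the sense of Definition~\ref{def:equivalence}, since permutations and diagonal unitary dephasing are allowed operations. So $H$ does have an equivalent dephased representative with a finite-corner witness, contradicting the hypothesis. In case (b), Proposition~\ref{prop:main-fourier-block} applies with $E$ as the Hadamard block. It gives $[H]\in\KK_6^{(3)}$ or $[H]\in\mathcal T_6$, and since we assumed $[H]\notin\KK_6^{(3)}$, we conclude $[H]\in\mathcal T_6$. Combining the two cases, any $H$ with no finite-corner witness lies in $\KK_6^{(3)}$ or $\mathcal T_6$.

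The main point to check is a bookkeeping issue in Proposition~\ref{prop:main-corner-routing}, not new mathematics. That proof may replace $H$ by $H^\dagger$ and then transfer the conclusion back by adjunction. We must confirm that a finite-corner witness for $H^\dagger$ yields one for $H$ itself, up to equivalence. This should hold because taking the adjoint swaps the roles of the horizontal and vertical normalized fibers, $\mathcal B_E$ and $\mathcal C_{E^\dagger}$, and conjugates the entries. Conjugation is a bijection on $\T$ preserving both finiteness and invertibility, so the witness property carries over. I would state this transfer explicitly; everything else is a direct concatenation of the earlier propositions.
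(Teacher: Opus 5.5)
Your proposal is correct and matches the paper's proof. It contraposes on $[H]\notin\KK_6^{(3)}$ and applies Proposition~\ref{prop:main-corner-routing}. The finite-corner-witness alternative then contradicts the hypothesis, while the all-Hadamard-block alternative goes through Proposition~\ref{prop:main-fourier-block} to $\mathcal T_6$. The adjoint bookkeeping you flag is already absorbed into the statement of Proposition~\ref{prop:main-corner-routing}; the paper handles it inside that proof by noting that adjunction exchanges the horizontal and vertical fibers. So your check via $\mathcal B_E\leftrightarrow\mathcal C_{E^\dagger}$ is sound but optional here.
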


\begin{proof}
Assume $[H]\notin\KK_6^{(3)}$. By
Proposition~\ref{prop:main-corner-routing}, either $H$ has an equivalent representative with a finite-corner witness, contrary to the hypothesis, or some block decomposition of $H$ consists of four order-three Hadamard blocks. In the latter case, Proposition~\ref{prop:main-fourier-block} places
$[H]$ in
$\mathcal T_6\cup\KK_6^{(3)}$. The Karlsson alternative is excluded by assumption, so
$[H]\in\mathcal T_6$. 
Together with the case $[H]\in\KK_6^{(3)}$, this proves the proposition.
\end{proof}

\begin{proposition}[Finite-corner witnesses for the Karlsson family]
\label{prop:karlsson-finite-corner}
Every equivalence class in $\KK_6^{(3)}$ has a finite-corner witness. Consequently,
\begin{equation}
\KK_6^{(3)}\subset\FCAtlas_6^{\mathrm{fc}},
\label{eq:karlsson-finite-corner-main}
\end{equation}
i.e. the Karlsson family is contained in the output of our branch-complete dilation procedure, Eq.~\eqref{eq:Gdefinition}.
\end{proposition}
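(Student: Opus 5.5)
The plan is to prove the statement directly from Karlsson's explicit normal form. The routing of Proposition~\ref{prop:main-corner-routing} is unavailable here, because it uses $[H]\notin\KK_6^{(3)}$ to exclude the $2\times2$ alternative of Proposition~\ref{prop:main-infinite-fiber-trichotomy}. By Proposition~\ref{prop:publishedinputs}(1), every class in $\KK_6^{(3)}$ has a representative from Karlsson's three-parameter family. Such a representative is a $3\times3$ array of $2\times2$ Hadamard blocks, in which the off-diagonal blocks are obtained from M\"obius-type formulas in the parameters. At the degenerate parameter values it is instead a member of $F_6^{(2)}$ or its transpose.

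\textbf{Choice of corner.} I would take a \emph{transversal} corner: one row from each of the three block rows and one column from each of the three block columns, for instance rows and columns $\{1,3,5\}$, and then dephase. This choice is designed to keep the $2\times2$ Hadamard submatrices of $H$ out of $E$, $B$ and $C$. Each row pair $(2k-1,2k)$ and column pair $(2l-1,2l)$ is split between the two triples, so a Hadamard $2\times2$ block of the array meets every $3\times3$ block in only one entry. Two points then need checking.

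\begin{itemize}
\item I must check that no \emph{other} $2\times2$ Hadamard submatrix lies inside $B$, $C$ or $D$. For generic parameters this is an open condition. By Lemma~\ref{lem:singularcorner}, it is also what guarantees that the relevant $3\times3$ blocks are invertible, or at least that $B$ and $C$ are.
\item Having kept the $2\times2$ alternative out, the trichotomy of Proposition~\ref{prop:main-infinite-fiber-trichotomy} leaves two cases for $B$ (and for $C$, by adjunction): $BB^\dagger=3I_3$, or $\RePart\tau_{\rm r}(B)<0$.
\end{itemize}

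\textbf{Handling the two alternatives.} The case $BB^\dagger=3I_3$ forces Fourier blocks, and I would treat it via the normal form of Proposition~\ref{prop:main-fourier-block}, exhibiting a finite corner explicitly. The case $\RePart\tau_{\rm r}(B)<0$ I would handle exactly as in the proof of Proposition~\ref{prop:main-corner-routing}. Swap the complementary column triples, which flips the sign of the cubic invariant by $\tau_{\rm r}(E)=-\tau_{\rm r}(B)$. Then use Lemma~\ref{lem:rowcolumn} to control the vertical fiber at the new corner. The swap preserves transversality, so the $2\times2$ exclusion persists.

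\textbf{Main obstacle: the non-generic parameter locus.} This is the set where some transversal $3\times3$ block acquires an extra $2\times2$ Hadamard submatrix or becomes singular. It includes the M\"obius-degenerate endpoints that land in $F_6^{(2)}$ and its transpose. There I would do the following.
\begin{itemize}
\item Write the degenerate conditions as explicit real-algebraic equations in the three Karlsson parameters.
\item Show that on each component some \emph{other} of the $400$ positional corners avoids them. For this, search over the transversal corners and then over the non-transversal ones.
\item Check finiteness of both side fibers there, by exact elimination in the division-free fixed-Gram system.
\end{itemize}
I expect that a single exceptional class may require a hand-picked corner, with its finite fibers certified by direct computation. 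This is plausibly the product-exceptional matrix $H_\times$. The $F_6^{(2)}$ endpoints I would handle by an explicit corner in the Fourier family, verified similarly.

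Once a finite-corner witness exists for every class in $\KK_6^{(3)}$, the inclusion in Eq.~\eqref{eq:karlsson-finite-corner-main} follows. The actual $B$ and $C$ lie in the finite lists, and $D(E;B,C)$ in Eq.~\eqref{eq:maincompletionformula} is forced and unimodular. Hence $H$ is retained in $\operatorname{Out}(E)$.
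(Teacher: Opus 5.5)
\textbf{There is a genuine gap: the exceptional strata are not handled.} Your generic mechanism is sound and genuinely different from the paper's. The proof of Proposition~\ref{prop:main-corner-routing} uses $[H]\notin\KK_6^{(3)}$ only to make the four blocks invertible and free of $2\times2$ Hadamard submatrices. So a transversal partition with these properties, holding for all four blocks, does give a witness via the cubic-invariant swap, provided no block is an order-three Hadamard matrix. The freeness condition must include $E$, which becomes a side block after the swap.

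The gap is the complement of that open set, which you defer to an unspecified search. That complement is not a finite list of classes.
\begin{itemize}
\item Requiring that a $2\times2$ submatrix of a transversal block have cross ratio $-1$ is a single real equation. It typically cuts out a two-dimensional stratum of the three-parameter family.
\item The Fourier alternative $BB^\dagger=3I_3$ is a whole two-parameter family. In the array form of Eq.~\eqref{eq:supp-fourier-family}, rows and columns $\{1,3,5\}$ give four order-three Hadamard blocks of $F_6^{(2)}(z_1,z_2)$ for every $(z_1,z_2)$.
\item Proposition~\ref{prop:main-fourier-block} cannot help there. Its conclusion $[H]\in\KK_6^{(3)}\cup\mathcal T_6$ is already known, and it produces no witness.
\end{itemize}
Nothing in your proposal shows that the search over other corners succeeds on these positive-dimensional strata. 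One explicit Fourier corner is also not known to suffice: the paper needs six corners whose witness products $W_1,\dots,W_6$ are certified to have no common zero on $\T^2$.

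\textbf{The invertibility step also fails as written.} For a Karlsson matrix the conclusion of Lemma~\ref{lem:singularcorner} holds trivially. The $2\times2$ Hadamard submatrix built in its proof has both rows in the selected row triple and one column on each side. So excluding $H_2$ submatrices inside single blocks does not give $\det E\neq0$, and nonsingularity becomes a further exceptional condition. Separately, $H_\times$ is exceptional only for the product-regularity guards, not for finiteness of fibers. There is no reason to expect it to need a hand-picked corner here.

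\textbf{How the paper avoids this.} The paper sidesteps the trichotomy altogether. Lemma~\ref{lem:supp-oriented-leading-finiteness} certifies finiteness directly from nonzero oriented leading coefficients of the fundamental sextics, which is insensitive to $H_2$ submatrices. Indeed, the paper's mixed corner $I=J=\{1,3,4\}$ contains the full array block in rows and columns $3,4$. At that single corner, all eleven witness factors are shown nonzero uniformly on the nondegenerate chart:
\begin{itemize}
\item the determinants, because $|u|+|v|\le\sqrt2<\sqrt3$;
\item the squared factors, because their vanishing would force $v\operatorname{Im}p=-1/\sqrt3$;
\item the residual $R(t,p)$, by exact Bernstein positivity.
\end{itemize}
The M\"obius seam $M_+=0$ is removed by passing to the reciprocal representative $t\mapsto1/t$. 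The affine-Fourier boundary is closed by the six-corner certificate. To complete your route you would need a comparable uniform certificate on every exceptional stratum, and that is where essentially all of the work lies.
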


\begin{proof}[Proof sketch]
We use Karlsson's explicit parametrization of the complete
family~\cite{Karlsson2011H2,Karlsson2011ThreeParameter}. The parametrization contains two phase-determining formulas. When neither formula degenerates, an exact calculation identifies a fixed positional $3\times3$ corner for which both normalized side fibers are finite and contain the actual adjacent blocks. This corner is therefore a finite-corner witness.

If exactly one formula degenerates, an equivalent parametrization of the same matrix makes the corresponding calculation regular and again supplies a finite-corner witness. It remains only to treat simultaneous degeneration of both formulas. This locus is precisely the affine-Fourier subfamily and its transpose. On this boundary, exact algebraic calculations show that at least one of six explicitly specified positional corners is a finite-corner witness. Hence every matrix in Karlsson's family has a finite-corner witness. The six corners and the complete three-case proof are given in Supplemental Sec.~\ref{sec:supp-karlsson-witnesses}~\cite{SupplementalMaterial}. The exact arithmetic identities used there are provided in the accompanying Lean repository~\cite{CardenesWuttig2026HadamardRepository}.
\end{proof}

\begin{proposition}[A finite-corner witness for Tao]
\label{prop:tao-finite-corner}
The leading $3\times3$ corner of the displayed Tao matrix in
Eq.~\eqref{eq:taomatrix} is a finite-corner witness. Consequently,
\begin{equation}
\mathcal T_6\subset\FCAtlas_6^{\mathrm{fc}}.
\label{eq:tao-finite-corner-main}
\end{equation}
\end{proposition}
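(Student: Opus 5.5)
The plan is to verify the three requirements of Definition~\ref{def:finitecorner} directly at the leading corner of Eq.~\eqref{eq:taomatrix}. Finiteness will come from the contrapositive of Proposition~\ref{prop:main-infinite-fiber-trichotomy}, and the vertical side will be reduced to the horizontal one by symmetry.

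\emph{Step 1 (the blocks).} Reading off Eq.~\eqref{eq:taomatrix}, the corner is $E=E(1,\omega,\omega,1)$. The adjacent blocks are
\[
B=\begin{pmatrix}1&1&1\\ \omega&\omega^2&\omega^2\\ \omega^2&\omega^2&\omega\end{pmatrix},
\qquad C=B^{T},
\]
and $C=B^T$ holds because $S_6^{(0)}$ is symmetric. The first row of $B$ and the first column of $C$ are already all ones, so the displayed matrix is dephased in the required sense. Since $H$ is Hadamard, the block identities of Eq.~\eqref{eq:blockorthogonality} give $B\in\mathcal B_E$ and $C\in\mathcal C_E$.

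\emph{Step 2 (invertibility).} A cofactor expansion along the first row gives
\[
\det B=(1-\omega)+(\omega-\omega^2)+(1-\omega)=2-\omega-\omega^2=3\neq0 .
\]
Hence $B\in\mathcal B_E^\times$ and $C\in\mathcal C_E^\times$, so both invertible fibers are nonempty.

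\emph{Step 3 (horizontal finiteness).} The normalized horizontal fiber coincides with the fiber of the actual block:
\[
\mathcal B_E=\mathcal R_B,
\]
because both sets consist of unimodular matrices with first row $(1,1,1)$ and row Gram matrix $6I_3-EE^\dagger=BB^\dagger$. Suppose $\mathcal R_B$ were infinite. Then Proposition~\ref{prop:main-infinite-fiber-trichotomy} would force one of its three alternatives, and each one fails:
\begin{enumerate}
\item A direct computation gives $(EE^\dagger)_{23}=1+\omega^2+\omega=0$, while $(EE^\dagger)_{12}=2+\omega^2\neq0$. Hence $BB^\dagger=6I_3-EE^\dagger$ has a nonzero off-diagonal entry, so $BB^\dagger\neq3I_3$.
\item For the same reason, $\tau_{\rm r}(B)=-\tau_{\rm r}(E)=0$, so $\RePart\tau_{\rm r}(B)=0$, which is not negative.
\item A $2\times2$ Hadamard submatrix would require two entries in one row whose ratio equals $-1$ times the ratio of the corresponding entries in another row. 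Every entry of $B$ is a cube root of unity, so every such ratio is a cube root of unity, and $-1$ times a cube root of unity is never a cube root of unity. Hence $B$ has no $2\times2$ Hadamard submatrix.
\end{enumerate}
Therefore $\mathcal B_E$ is finite, and so is $\mathcal B_E^\times$.

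\emph{Step 4 (vertical side by transposition).} Since $E=E^T$, the map $Y\mapsto Y^T$ is a bijection from $\mathcal B_E$ onto $\mathcal C_E$. Indeed, it exchanges a first row of ones with a first column of ones, and it converts the condition
\[
YY^\dagger=6I_3-EE^\dagger
\]
into the transposed condition
\[
(Y^T)^\dagger Y^T=\overline{YY^\dagger}=6I_3-\overline{EE^\dagger}=6I_3-E^\dagger E,
\]
where the last equality uses $E=E^T$. Transposition also preserves invertibility. Hence $\mathcal C_E^\times=\{Y^T:Y\in\mathcal B_E^\times\}$ is finite and nonempty, and it contains $C=B^T$.

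Combining Steps 1--4, $E$ is a finite-dilation corner whose actual adjacent blocks lie in its invertible fibers, so it is a finite-corner witness. Proposition~\ref{prop:algorithmvalid}'s construction then retains $S_6^{(0)}$ in $\operatorname{Out}(E)$, because $D$ is uniquely forced by Eq.~\eqref{eq:maincompletionformula}. This gives $\mathcal T_6\subset\FCAtlas_6^{\mathrm{fc}}$.

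\emph{Main obstacle.} The delicate point is the identification $\mathcal B_E=\mathcal R_B$ and its transposed analogue in Step 4. I would check these carefully, in particular the conjugation step, which relies on the symmetry of $E$. Once they hold, everything else is short exact arithmetic with $1+\omega+\omega^2=0$.
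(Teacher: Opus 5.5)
Your proof is correct and follows the paper's argument. You rule out the three alternatives of the infinite-fiber trichotomy for the actual block $B$, using the vanishing Gram entry, the resulting zero cubic invariant, and the cube-root ratios, and then transfer finiteness to the vertical fiber by transposition; the only difference is that you justify the transfer via $E=E^{T}$, whereas the paper uses $C^{T}=B$.
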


\begin{proof}
For the leading decomposition of Eq.~\eqref{eq:taomatrix}, use
$(a,b,c,d)=(1,\omega,\omega,1)$, and thus
\begin{equation}
E=\begin{pmatrix}1&1&1\\1&1&\omega\\1&\omega&1\end{pmatrix}, 
\,\,
B=\begin{pmatrix}
1&1&1\\
\omega&\omega^2&\omega^2\\
\omega^2&\omega^2&\omega
\end{pmatrix},
\,\, C^T=B,
\label{eq:tao-seed-main}
\end{equation}
with $\det E=3\omega$ and $\det B=\det C=3$. The local infinite-fiber trichotomy (Proposition~\ref{prop:main-infinite-fiber-trichotomy}) shows that an infinite normalized fiber through this invertible block would force an order-three Hadamard block, a strictly negative cubic invariant, or an $H_2$ submatrix in the ambient Hadamard matrix. Direct multiplication gives
\begin{equation}
(BB^\dagger)_{12}=\omega-1\ne0,
\qquad
(BB^\dagger)_{23}=0.
\label{eq:tao-gram-check}
\end{equation}
The first equality shows that $B$ is not an order-three Hadamard block, and the second makes its row cubic invariant zero. Tao has no $H_2$ submatrix: every $2\times2$ cross ratio is a cubic root of unity, whereas the $H_2$ value is $-1$. Thus the horizontal fiber is finite. For the vertical fiber, transpose its normalized candidates: their fixed column-Gram condition becomes a fixed row-Gram condition, and the actual transposed block is $C^T=B$. The same calculation therefore proves that the vertical fiber is finite. Both fibers are nonempty because they contain the actual blocks $B$ and $C$. Finally Eq.~\eqref{eq:maincompletionformula} reconstructs the lower-right block of Eq.~\eqref{eq:taomatrix} exactly. The algorithm therefore retains the Tao matrix.
\end{proof}

We are now ready to prove the main classification result.

\begin{proof}[Proof of Theorem~\ref{thm:classification}]
Let $[H]\in\HH_6$ be an arbitrary equivalence class. Suppose, for contradiction, that no equivalent representative of $H$ has a finite-corner witness. Proposition~\ref{thm:finitecornerintro} then gives $[H]\in\KK_6^{(3)}$ or $[H]\in\mathcal T_6$. In the first case, Proposition~\ref{prop:karlsson-finite-corner} supplies a finite-corner witness, and in the second, Proposition~\ref{prop:tao-finite-corner} supplies one. Either conclusion contradicts the supposition. Hence every order-six complex Hadamard matrix is equivalent to a dephased matrix having a finite-corner witness.
\end{proof}

Theorem~\ref{thm:classification} has the following algorithmic consequence.

\begin{corollary}[Exact retained output]
\label{cor:atlas-classification}
The branch-complete dilation procedure is sound and exhaustive:
\begin{equation}
\FCAtlas_6^{\mathrm{fc}}=\HH_6.
\label{eq:completeclassification}
\end{equation}
\end{corollary}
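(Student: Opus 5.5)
The plan is to prove the equality $\FCAtlas_6^{\mathrm{fc}}=\HH_6$ through the two inclusions, \emph{soundness} and \emph{exhaustiveness}.

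\textbf{Soundness.} Take any class in $\FCAtlas_6^{\mathrm{fc}}$. By Definition~\ref{def:completed-output} it has a representative $H\in\operatorname{Out}(E(a,b,c,d))$ for some seed. Proposition~\ref{prop:algorithmvalid} shows that every retained matrix is complex Hadamard, so $[H]\in\HH_6$. This inclusion needs nothing further.

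\textbf{Exhaustiveness.} Fix $[H]\in\HH_6$.
\begin{enumerate}
\item Apply Theorem~\ref{thm:classification}. It gives an equivalent dephased representative $H'=\begin{pmatrix}E&B\\C&D\end{pmatrix}$ whose upper-left block $E$ is a finite-dilation corner, with $B\in\mathcal B_E^\times$ and $C\in\mathcal C_E^\times$.
\item Since $H'$ is dephased, $E$ has the form $E(a,b,c,d)$ of Eq.~\eqref{eq:singular-corner-form} for some $(a,b,c,d)\in\T^4$. This is the seed chosen in Step~1 of the procedure.
\item Step~2 computes the full fibers $\mathcal B_E,\mathcal C_E$ without discarding any branch. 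This is exactly what the division-free formulation guarantees.
\item Step~3 then proceeds, because both invertible fibers are finite and nonempty. The actual pair $(B,C)$ appears among the enumerated pairs.
\end{enumerate}

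The remaining point is that Step~4 returns the actual block $D$. From the block identity $EC^\dagger+BD^\dagger=0$ in Eq.~\eqref{eq:blockorthogonality}, we get $BD^\dagger=-EC^\dagger$. Since $\det B\neq 0$, this gives $D^\dagger=-B^{-1}EC^\dagger$. Taking adjoints yields
\[
D=-CE^\dagger(B^{-1})^\dagger=D(E;B,C).
\]
The entries of $D$ are unimodular because $H'$ is Hadamard, so Step~5 retains $H'$. Hence $H'\in\operatorname{Out}(E(a,b,c,d))$ and $[H]=[H']\in\FCAtlas_6^{\mathrm{fc}}$.

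\textbf{Main obstacle.} I expect the only real subtlety to be Step~3 of the exhaustiveness argument. We must ensure that the enumerated candidate sets are the true physical fibers $\mathcal B_E^\times,\mathcal C_E^\times$, not a generic root list that might miss $B$ or $C$ on a vanishing-denominator branch. I would handle this by invoking the reversibility remark: multiplying the equations in Eq.~\eqref{eq:main-fixed-gram-sums} by the nonvanishing products $x_1x_2x_3$ and $y_1y_2y_3$ neither introduces nor discards solutions. Together with the definition of Step~2 as the complete solution set, this shows that membership of $B,C$ in the finite fibers guarantees they are enumerated. Everything else reduces to Theorem~\ref{thm:classification} and Proposition~\ref{prop:algorithmvalid}.
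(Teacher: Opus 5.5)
Your proposal is correct and follows the paper's proof essentially verbatim: soundness is Proposition~\ref{prop:algorithmvalid}, and exhaustiveness combines Theorem~\ref{thm:classification} with the uniqueness of the forced block $D=-CE^{\dagger}(B^{-1})^{\dagger}$ (the paper cites Proposition~\ref{prop:directcompletion}; you rederive it directly from $EC^{\dagger}+BD^{\dagger}=0$). The ``obstacle'' you flag is already settled by definition: Step~2 is defined to return the complete physical fibers $\mathcal B_E,\mathcal C_E$, and the finite-corner witness already places $B,C$ in $\mathcal B_E^\times,\mathcal C_E^\times$, so the reversibility remark is harmless but not needed.
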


\begin{proof}
Proposition~\ref{prop:algorithmvalid} proves soundness: $\FCAtlas_6^{\mathrm{fc}}\subseteq\HH_6$. Conversely, Theorem~\ref{thm:classification} gives every $[H]\in\HH_6$ a finite-corner witness. At that corner, the adjacent blocks $B,C$ occur in the finite candidate lists, and Eq.~\eqref{eq:maincompletionformula} recovers the actual unimodular fourth block. Hence
$[H]\in\FCAtlas_6^{\mathrm{fc}}$, proving
$\HH_6\subseteq\FCAtlas_6^{\mathrm{fc}}$. Thus, $\FCAtlas_6^{\mathrm{fc}}=\HH_6$.
\end{proof}

\section{Four-phase reconstruction}
\label{sec:finite-corner-reconstruction-space}
\label{sec:atlas}

The classification Theorem \ref{thm:classification} shows that every class in $\HH_6$ has an equivalent representative with a finite $3\times3$ corner. To describe the reconstruction locally, we choose such a corner, order its three rows and three columns -- every such choice is called a frame -- move them to the leading positions, and dephase. The resulting corner $E(a,b,c,d)$ has four phases $(a,b,c,d)\in\T^4$, see Eq.~\eqref{eq:singular-corner-form}, which provide the continuous coordinates of the reconstruction. In the following, we show that every class outside that of the isolated Tao matrix and a single explicit Karlsson matrix admits a frame on which a single universal quadratic--cubic reconstruction applies, see Theorem~\ref{thm:global-product-regular-escape}.
The full proof is given in Supplemental Material Sec.~\ref{sec:supp-generic-cover} and sketched in this section. It additionally invokes two published results in References~\cite{BondalZhdanovskiy2016} and~\cite{BasuPollackRoy2006}.

\subsection{The regular product cover}
\label{subsec:quadratic-cubic-regular-cover}
For a Laurent polynomial in the scalar variable $z$,
\begin{equation}
f(z)=\sum_{m\in\mathbb Z}c_mz^m,
\end{equation}
define
\begin{equation}
f^\#(z)=\sum_{m\in\mathbb Z}\overline{c_m}z^{-m},
\qquad
\left(\frac{f}{g}\right)^\#=\frac{f^\#}{g^\#}\quad(g\neq0).
\label{eq:sharp-definition-main}
\end{equation}
For $z\in\T$ we have $f^\#(z)=\overline{f(z)}$ and $z^\#=z^{-1}=\overline z$.

We define $(s_h,t_h,r_h)$ from $E$ as in Eq.~\eqref{eq:trace-data-main}. The corresponding $(s_v,t_v,r_v)$ are obtained from $E^{\mathsf T}$, see Supplemental Eq.~\eqref{eq:supp-vertical-data}. 
For either triple $(s,t,r)$, let $A_{s,t,r}$ and $B_{s,t,r}$ be the uncancelled companion polynomials defined in Supplemental Eqs.~\eqref{eq:supp-A-explicit}--\eqref{eq:supp-B-explicit}. We write $A_h=A_{s_h,t_h,r_h}$ and $B_h=B_{s_h,t_h,r_h}$, with $A_v,B_v$ defined analogously. For horizontal candidates, $x_j$ and $y_j$, reversible elimination on the regular locus gives
\begin{equation}
A_h(x_j)+B_h(x_j)y_j=0,
\label{eq:regular-companion-relation}
\end{equation}
where 
\begin{equation}
    y_j=-\frac{A_h(x_j)}{B_h(x_j)}\quad\text{when }B_h(x_j)\neq0.
\end{equation}
Substituting the companion formula into $y_jy_j^\#=1$ gives
\begin{equation}
A_h(x_j)A_h(x_j)^\#-B_h(x_j)B_h(x_j)^\#=0
\label{eq:horizontal-companion-phase-condition}
\end{equation}
provided that $B_h(x_j)\neq0$. 
Because $x_j\in\T$ and hence $x_j\neq0$, this equation remains zero after multiplication by $x_j^3$, i.e. 
\begin{equation}x_{j}^3\bigl(A_h(x_j)A_h(x_j)^\#-B_h(x_j)B_h(x_j)^\#\bigr) = 0.
\label{eq:horizontal-fundamental-sextic}
\end{equation}
This defines the sextic
\begin{equation}
  \Phi_h(x_{j})  =\sum_{k=0}^{6}c_{k,h}x_j^k,
\end{equation}
for which the three horizontal coordinates $x_1,x_2,x_3$ are roots.
Let
\begin{equation}
u=x_1x_2x_3
\label{eq:u}
\end{equation}
be their product. Combining $\Phi_h(x_j) = 0$ for all $j$ with the fixed-Gram equation gives the denominator-free product equation
\begin{equation}
(1+s_hs_h^\#)c_{6,h}u^2+c_{3,h}u+(1+s_hs_h^\#)c_{0,h}=0,
\end{equation}
see Supplemental Eq.~\eqref{eq:supp-product-quadratic}.
On the seed torus, $1+s_hs_h^\#=1+|s_h|^2>0$. Therefore, when $c_{6,h}\neq0$, division by $(1+s_hs_h^\#)c_{6,h}$ gives the quadratic
\begin{equation}
u^2-U_hu+V_h=0
\label{eq:horizontal-product-quadratic}
\end{equation}
with
\begin{equation}
    U_h=-\frac{c_{3,h}}{(1+s_hs_h^\#)c_{6,h}},
\,\,
V_h=\frac{c_{0,h}}{c_{6,h}}.
\end{equation}
Denote the two roots of the quadratic in Eq.~\ref{eq:horizontal-product-quadratic} by $u_+$ and $u_-$. For either choice $u=u_\pm$, the corresponding coordinates $x_1,x_2,x_3$ are the roots of
\begin{equation}
q_{s_h,u}(x)=x^3-s_hx^2+us_h^\#x-u.
\label{eq:horizontal-coordinate-cubic}
\end{equation}
Supplemental Lemma~\ref{lem:supp-product-factor} proves the exact factorization,
also recorded in Supplemental Eq.~\eqref{eq:supp-sextic-factor}:
\begin{equation}
\Phi_h(x)=c_{6,h}q_{s_h,u_+}(x)q_{s_h,u_-}(x).
\end{equation}

The horizontal reconstruction therefore proceeds by solving the quadratic for $u$, solving the associated cubic for $x_1,x_2,x_3$, and then determining each paired coordinate from
\begin{equation}
y_j=-\frac{A_h(x_j)}{B_h(x_j)}.
\end{equation}
The same procedure can be applied to $E^{\mathsf T}$ to get the relevant coordinates of $C$. 

\begin{definition}[Product-regular lift]
\label{def:product-regularity}
An algebraic lift of a framed seed consists of compatible horizontal and vertical product roots, their coordinate roots, and the paired coordinates determined by the companion relations, all satisfying the parent fixed-Gram equations. For the selected horizontal and vertical product roots $u$ (Eq.~\ref{eq:u}) and $m$, respectively, define
\begin{equation}
\begin{aligned}
q_h(x)&:=x^3-s_hx^2+us_h^\#x-u,\\
q_v(p)&:=p^3-s_vp^2+ms_v^\#p-m,
\end{aligned}
\end{equation}
and set $\delta_h:=r_hr_h^\#-t_ht_h^\#$ and $\delta_v:=r_vr_v^\#-t_vt_v^\#$. Define the following discriminant $\operatorname{Disc}$ and resultant $\operatorname{Res}$ for both polynomials
\begin{equation}
\begin{aligned}
\operatorname{Disc}q_h&:=\operatorname{Disc}_x\bigl(q_h(x)\bigr),&
R_h&:=\operatorname{Res}_x\bigl(q_h(x),B_h(x)\bigr),\\
\operatorname{Disc}q_v&:=\operatorname{Disc}_p\bigl(q_v(p)\bigr),&
R_v&:=\operatorname{Res}_p\bigl(q_v(p),B_v(p)\bigr).
\end{aligned}
\label{eq:product-regularity-auxiliary-guards}
\end{equation}
The lift is \emph{product regular} when the following holds:
\begin{equation}
\begin{aligned}
&\det(E)\det(B)(\det B)^\#\det(C)(\det C)^\#\\
&\qquad{}\times c_{6,h}c_{6,v}\delta_h\delta_v\\
&\qquad{}\times\operatorname{Disc}(q_h)\operatorname{Disc}(q_v)R_hR_v\neq0.
\end{aligned}
\label{eq:product-regularity-guards}
\end{equation}
The star-stable determinant guards ensure that the matrices and formal adjoints required by the reconstruction are invertible. On the physical torus, $(\det B)^\#=\overline{\det B}$ and similarly for $C$, so these guards impose no additional physical restriction. The factors $c_{6,h},c_{6,v},\delta_h,\delta_v$ keep the product and companion equations nondegenerate. The discriminant guards ensure that the coordinate cubics have distinct roots, and the resultant guards ensure that the companion denominators are nonzero at those roots. 
The discriminant of Eq.~\ref{eq:horizontal-product-quadratic} is not included because its vanishing describes the regular ramification studied below. 
If a guard fails and the physical fiber is finite, the uncancelled incidence equations retain all of it. If the physical fiber is positive-dimensional, the global routing argument beginning with Proposition~\ref{prop:main-corner-routing} applies.
\end{definition}

The reciprocal symmetry of $\Phi_h$ imposes the coefficient identities
\begin{equation}
V_hV_h^\#=1,
\qquad
U_h=V_hU_h^\#.
\label{eq:self-inversive-product-data}
\end{equation}
The normalized real discriminants of the horizontal and vertical product quadratics are
\begin{equation}
\begin{aligned}
\omega_{\rm n}&:=U_hU_h^\#-4=|U_h|^2-4,\\
\omega_{{\rm n},v}&:=U_vU_v^\#-4=|U_v|^2-4,
\end{aligned}
\label{eq:normalized-product-discriminants}
\end{equation}
respectively.
On the product-regular locus, $\delta_h=r_hr_h^\#-t_ht_h^\#$ and its vertical analogue $\delta_v$ are real and nonzero. After removing the corresponding nonzero square factors, the two discriminants reduce to the same real function $\Omega$:
\begin{equation}
\Omega=\frac{|c_{6,h}|^2}{\delta_h^2}\,\omega_{\rm n}
=\frac{|c_{6,v}|^2}{\delta_v^2}\,\omega_{{\rm n},v}.
\label{eq:omega-residual-discriminant-bridge}
\end{equation}
On the product-regular locus, both prefactors are strictly positive. Therefore, $\omega_{\rm n}$ and $\omega_{{\rm n},v}$ have the same sign as $\Omega$ and vanish on the same set of seeds. The horizontal product roots are
\begin{equation}
u_\pm=\frac{U_h\pm\sqrt{U_h^2-4V_h}}{2}.
\label{eq:horizontal-product-roots}
\end{equation}
We define their physical separation as
\begin{equation}
    |u_+-u_-|^2=-\omega_{\rm n}\qquad(\omega_{\rm n}\leq0).
\end{equation}
The two horizontal product roots $u_\pm$ are distinct phases for $\omega_{\rm n}<0$, coincide for $\omega_{\rm n}=0$, and are nonunimodular for $\omega_{\rm n}>0$.

Although the horizontal and vertical product quadratics each have two roots, these roots cannot be chosen independently. Requiring every entry of the forced lower-right block to have modulus one pairs each horizontal root $u_\pm$ with a unique vertical root $m_\pm$ according to
\begin{equation}
m_\pm=\frac{U_v-\kappa(2u_\pm-U_h)}{2},
\qquad
\kappa=\frac{\delta_vc_{6,h}}{\delta_hc_{6,v}}.
\label{eq:affine-minus-matching-main}
\end{equation}
After simultaneous permutations of the three paired horizontal coordinates and the three paired vertical coordinates are identified, each compatible pair $(u_\pm,m_\pm)$ defines one \emph{product sheet}. 

\begin{figure}[t!]
\centering
\includegraphics[width=\columnwidth]{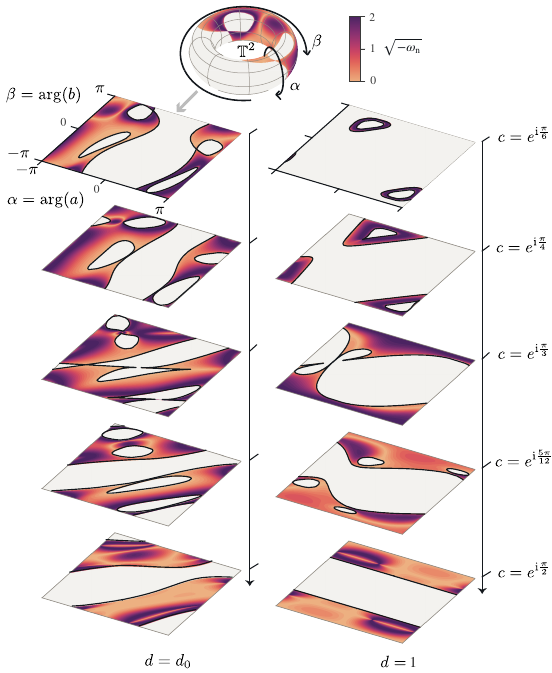}
\caption{\textbf{Geometry of the regular four-phase reconstruction on different two-dimensional slices.} Value of the product-regular witness $\sqrt{-\omega_{\rm n}}$ for varying initial seeds $a,b,c,d$. The two stacks correspond to two different $d$ values. Note that $d_{0} = \frac{1 + {\rm i}\xi_{0}}{1 - {\rm i}\xi_{0}} \approx -0.674 - 0.739 \mathrm{i}$, where $\xi_{0}$ is the real root of Eq.~\eqref{eq:ramification-seed-polynomial}.
The slices within a stack are at the marked values of $c$, which are regularly spaced in phase space. 
The colored region corresponds to $\omega_{\rm n} \leq 0$ where a product regular completion from the initial seed is possible by solving two quadratic and two cubic equations. When $\omega_{\rm n} < 0$, there are two physical product sheets from the two quadratic solutions, separated by $\sqrt{-\omega_{\rm n}}=|u_+-u_-|$. Gray regions denote $\omega_{\rm n}>0$, where no product-regular completion exists in this frame. There are only two classes of order-six complex Hadamard matrices that cannot be constructed in this way: the Tao class and $\{[H_{x}]\}$.}
\label{fig:regular-seed-geometry}
\end{figure}

\begin{theorem}[Regular seed-domain theorem]
\label{thm:regular-seed-domain}
A product-regular lift above a seed $(a,b,c,d)\in\T^4$ is physical (i.e. all reconstructed variables lie on the unit circle, so that the resulting completion is an actual complex Hadamard matrix) if and only if
\begin{equation}
\omega_{\rm n}(a,b,c,d)\leq0.
\label{eq:regular-physical-seed-domain}
\end{equation}
Whenever both matched branches satisfy the guards in
Eq.~\eqref{eq:product-regularity-guards}, $\omega_{\rm n}<0$
gives two distinct physical product sheets, while at $\omega_{\rm n}=0$ they
coalesce into one physical sheet. On each physical sheet,
Eq.~\eqref{eq:affine-minus-matching-main} determines the vertical root from
the horizontal root, and the forced fourth block is unimodular.
\end{theorem}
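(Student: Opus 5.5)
The plan is to split the physicality condition into three layers: product roots, coordinate roots, and paired coordinates. Each layer will then be checked under the guards of Eq.~\eqref{eq:product-regularity-guards}.

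\emph{Layer 1: product roots.} By the self-inversive identities \eqref{eq:self-inversive-product-data}, the quadratic $u^2-U_hu+V_h$ satisfies $V_h\in\T$ and $U_h=V_hU_h^\#$. Writing $V_h=e^{\I\theta}$, we get $U_he^{-\I\theta/2}\in\R$. Setting $u=e^{\I\theta/2}w$ reduces the quadratic to $w^2-\mu w+1=0$ with $\mu$ real and $\mu^2=|U_h|^2$. Its roots are unimodular, or complex-conjugate reciprocals, exactly when $\mu^2\le4$, i.e.\ $\omega_{\rm n}\le0$. When $\omega_{\rm n}>0$ the roots are real, reciprocal and distinct from $\pm1$, so $|u_\pm|\ne1$. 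This proves necessity, because any physical lift has $u=x_1x_2x_3\in\T$. Equality $\omega_{\rm n}=0$ is exactly the double root. By \eqref{eq:omega-residual-discriminant-bridge}, the same sign condition governs $m_\pm$, so both directions are physical simultaneously.

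\emph{Layer 2: coordinate roots.} For $u\in\T$, the cubic $q_{s_h,u}$ is self-inversive: $x^3q^\#(1/x)=-\bar u\,q(x)$. Hence its root set is closed under $x\mapsto1/\bar x$. This alone does not force all three roots onto $\T$; the missing input must come from the Gram data. I would use the fixed-Gram sums $\sum x_j=s_h$ and $\sum x_j^{-1}=\overline{s_h}$, together with the sextic $\Phi_h$, whose roots are the $x_j$ of genuine candidates. The aim is to show that a root pair $x,1/\bar x$ off the circle would contradict \eqref{eq:horizontal-companion-phase-condition}, i.e.\ would give $|y|\neq1$, when combined with the companion relation.

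Concretely, I would prove that on each physical sheet the companion map sends $x\mapsto y$ with $yy^\#=1$ identically modulo $q_h$. This follows from $\Phi_h=c_{6,h}q_{u_+}q_{u_-}$ (Supplemental Lemma~\ref{lem:supp-product-factor}) and $R_h\ne0$. The step that turns $yy^\#=1$ into $|y|=1$ for roots off the circle is the one I expect to be \textbf{the main obstacle}. It requires a real-algebraic argument rather than the formal $\#$-identities alone. My first attempt would use the positive-definite Gram $6I-EE^\dagger$ to build a unimodular matrix from any $\#$-consistent lift. Failing that, I would fall back on a continuity/degree argument: the number of unimodular roots of $q_h$ is locally constant on $\{\omega_{\rm n}<0\}$ away from $\operatorname{Disc}q_h=0$, and it equals $3$ at one exhibited physical seed, such as a Fourier-type corner.

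\emph{Layer 3: paired coordinates, matching, and the fourth block.} Given physical horizontal and vertical data, the pairing \eqref{eq:affine-minus-matching-main} is derived by expanding the unimodularity conditions $|D_{ij}|=1$ for $D=-CE^\dagger(B^{-1})^\dagger$. Using $\det B\ne0$ and $\delta_h\delta_v c_{6,h}c_{6,v}\ne0$, these collapse to one affine relation between $u$ and $m$. Conversely, imposing this relation makes $D^\dagger D=6I-B^\dagger B$ (Proposition~\ref{prop:algorithmvalid}) have unit diagonal entries entrywise. For a $3\times3$ Gram identity, that alone does not give unimodular entries, so I would verify $|D_{ij}|=1$ directly through the product identities. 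This yields the sheet count: two distinct sheets when $\omega_{\rm n}<0$, and one coalesced sheet when $\omega_{\rm n}=0$.
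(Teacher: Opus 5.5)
Your Layer~1 is correct and is essentially the paper's necessity argument: a physical lift has $u\in\T$, and $V_h\in\T$ makes the second root $V_h/u$ a phase, so $|U_h|\le2$. It also correctly shows that $\omega_{\rm n}\le0$ puts both product roots on $\T$.

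\textbf{Layer~2 (sufficiency) has a genuine gap.} Your target, turning $yy^\#=1$ into $|y|=1$ at roots off the circle, is the wrong one. For $x\notin\T$ the formal identity only relates $y(x)$ to the companion value at $1/\overline x$. In the sufficiency direction nothing forces $|y|=1$ there, so obtaining $|y|\neq1$ gives no contradiction. What must be shown is that $q_h$ has no root off $\T$ at all.

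Your first attempt presupposes that $G=6I_3-EE^\dagger$ is positive definite. That is false for general phase seeds: near the all-ones corner, $EE^\dagger$ has an eigenvalue close to $9$. Deriving $G>0$ from $\omega_{\rm n}\le0$ under the guards is exactly the nontrivial, computer-assisted Lemma~\ref{lem:supp-complement-positivity}.

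Your fallback also fails. The number of unimodular roots (one or three) changes precisely across double roots on the circle. So $\{\omega_{\rm n}<0\}\setminus\{\operatorname{Disc}q_h=0\}$ need not be connected, and a single base point, which you do not exhibit, proves nothing globally.

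The missing chain runs as follows.
\begin{enumerate}
\item The odd-degree self-inversive cubic has at least one phase root $x$.
\item At that root, $\Phi_h(x)=0$ gives $|A_h(x)|=|B_h(x)|$, and $R_h\neq0$ gives $B_h(x)\neq0$, so $y$ is a phase.
\item Lemma~\ref{lem:supp-companion-sufficiency} supplies the parent system. Hence $N=G-vv^\dagger$ with $v=(1,x,y)^{\mathsf T}$ has $\det N=0$.
\item With $G>0$, the rank-one downdate criterion gives $N\succeq0$, so $|s_h-x|^2\le N_{11}N_{22}=4$.
\item The other two roots solve $z^2-(s_h-x)z+u/x=0$. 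Substituting $z=\beta w$ with $\beta^2=u/x$ gives $w^2-\mu w+1=0$, where $\mu=\beta^{-1}(s_h-x)$ is real by self-inversiveness and lies in $[-2,2]$. So these roots are phases too.
\end{enumerate}
Only after this does the factorization $\Phi_h=c_{6,h}q_{s_h,u_+}q_{s_h,u_-}$ make every companion $y_j$ a phase.

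\textbf{Layer~3 is unsupported.} You assert that the nine conditions $|D_{ij}|=1$ collapse to one affine relation between $u$ and $m$, and conversely that imposing it yields unimodular $D$. That is exactly what needs proof, and you offer no mechanism. Moreover, $D^\dagger D=6I_3-B^\dagger B$ holds for every candidate pair regardless of any matching, and its diagonal entries are $3$, not $1$.

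The paper obtains flatness from a generic cubic norm identity $NN^\#=1$ over the nonsplit quadratic function field (Proposition~\ref{prop:supp-generic-flatness}). This is proved in three steps:
\begin{enumerate}
\item Physical Hadamard seeds are shown to be Zariski dense, using Bondal--Zhdanovskiy's four-dimensional family and Hardt triviality.
\item Irreducibility of the double cover (Proposition~\ref{prop:supp-nonsplit}) then forces one of the two matching remainders $F_\pm$ to vanish identically.
\item An exact specialization identifies the vanishing one as the minus matching.
\end{enumerate}
At a product-regular physical point, where the guards allow specialization (Lemma~\ref{lem:supp-regular-localization}), each row of $D$ satisfies $\prod_j|D_{ij}|^2=1$ and $\sum_j|D_{ij}|^2=3$. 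AM--GM then forces $|D_{ij}|=1$. Without this argument or a substitute, neither the sheet-matching claim nor unimodularity of the forced block is established.
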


\begin{proof}[Proof sketch]
Let $u$ be the product root selected by the lift, and let $u'=V_h/u$ be the second root of the product quadratic. If the lift is physical, then $u$ is a product of phases and is therefore itself a phase. Since $V_h$ is also a phase, $u'$ is a phase as well. The sum of the two roots is $U_h=u+u'$, so $|U_h|\leq2$ and hence $\omega_{\rm n}=|U_h|^2-4\leq0$.

For the reverse implication, suppose that $\omega_{\rm n}\leq0$. The self-inversive product quadratic then has phase roots, distinct when $\omega_{\rm n}<0$ and coincident when $\omega_{\rm n}=0$. Choose either root $u$. The corresponding coordinate cubic is an odd-degree self-inversive polynomial, so it has at least one phase root $x$. Since this cubic divides $\Phi_h$, one has $\Phi_h(x)=0$ and therefore $|A_h(x)|=|B_h(x)|$. The resultant guard ensures that $B_h(x)\neq0$, so
\begin{equation}
y=-\frac{A_h(x)}{B_h(x)}
\end{equation}
is also a phase. Thus $(1,x,y)^{\mathsf T}$ is one phase column of $B$.

Supplemental Lemma~\ref{lem:supp-complement-positivity} and the parent determinant identity show that the remaining Gram matrix is positive semidefinite of rank two. Haagerup's two-phase decomposition then supplies the other two phase columns of $B$~\cite{Haagerup1997,Szollosi2012FourParameter}. Applying the same argument to $E^{\mathsf T}$ constructs $C$. Supplemental Proposition~\ref{prop:supp-generic-flatness} proves the affine-matching flatness identities in the product function field, and Supplemental Lemma~\ref{lem:supp-regular-localization} shows that the displayed guards specialize those identities at every product-regular lift. Hence the forced block $D$ is unimodular. The argument includes $\omega_{\rm n}=0$ directly.
\end{proof}

\subsection{Global reach of the regular cover}
\label{subsec:global-reach-regular-cover}

A finite-corner presentation of $[H]$ is an equivalent dephased representative obtained by moving an ordered choice of three rows and three columns to the first three positions, such that the resulting leading $3\times3$ block is a finite-corner witness.
For a product-regular finite-corner presentation,
Theorem~\ref{thm:regular-seed-domain} identifies the physical region.

We define
\begin{equation}
\mathcal P_6:=\{[H]\in\HH_6:\text{$[H]$ admits a product-regular frame}\}.
\label{eq:main-product-regular-class-locus}
\end{equation}
The exceptional Karlsson class appearing below is represented through the matrix
\begin{equation}
H_\times=
\begin{pmatrix}
1&1&1&1&1&1\\
1&-1&1&\I&-1&-\I\\
1&-1&-1&-\I&\I&1\\
1&\I&-\I&-1&-\I&\I\\
1&1&\I&-\I&-1&-1\\
1&-\I&-1&\I&1&-1
\end{pmatrix}.
\label{eq:main-product-exceptional-karlsson-matrix}
\end{equation}

\begin{theorem}[Exact product-regular reach]
\label{thm:atlas-main}
\label{thm:global-product-regular-escape}
The only order-six classes without a product-regular frame are Tao's class
and the single Karlsson class represented by $H_\times$:
\begin{equation}
\HH_6\setminus\mathcal P_6
=\mathcal T_6\,\dot\cup\,\{[H_\times]\}.
\label{eq:global-product-regular-reach}
\end{equation}
Theorem~\ref{thm:regular-seed-domain} characterizes the physical domain of
each such presentation.
\end{theorem}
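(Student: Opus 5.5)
The statement has two inclusions. First, every class outside $\mathcal T_6\cup\{[H_\times]\}$ admits a product-regular frame. Second, neither Tao's class nor $[H_\times]$ admits one.

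\emph{Step 1: non-product-regular frames have codimension at least one.} I start from Theorem~\ref{thm:classification}. Every class has a finite-corner presentation, and by Corollary~\ref{cor:atlas-classification} it arises as a retained output over some seed $(a,b,c,d)$. The guard product in Eq.~\eqref{eq:product-regularity-guards} is a nonzero polynomial function on the incidence variety of framed lifts. Nonvanishing can be checked at a single explicit point, for instance a generic member of the Bondal--Zhdanovskiy four-dimensional family~\cite{BondalZhdanovskiy2016}. So the frames failing a guard form a proper real-algebraic subset of each irreducible component of the framed incidence set.

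\emph{Step 2: every class has many frames, so one avoids the bad set.} Each class has $400$ positional corners, and each corner has $36$ orderings of its three rows and three columns. I aim to show the following. If all of these frames violate some guard, then a finite list of polynomial identities holds on the class. By the cylindrical decomposition and quantifier elimination of~\cite{BasuPollackRoy2006}, the set of classes satisfying these identities is semialgebraic. Its complement in each family is dense, so the task reduces to identifying the semialgebraic exceptional set exactly.

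I split by sector, reusing the routing of Propositions~\ref{prop:main-corner-routing}--\ref{prop:karlsson-finite-corner}.
\begin{itemize}
\item Outside $\KK_6^{(3)}\cup\mathcal T_6$, all $3\times3$ minors are nonzero (Lemma~\ref{lem:singularcorner}).
\item The determinant guards are then automatic.
\item The conditions $c_6=0$ or $\delta=0$ are closed conditions. Reordering rows and columns inside a frame changes $(s,t,r)$ and hence these quantities. I would show they cannot vanish simultaneously for all $36$ orderings unless a $2\times2$ Hadamard submatrix or an $F_3$ block appears, using the trichotomy of Proposition~\ref{prop:main-infinite-fiber-trichotomy}.
\item Discriminant and resultant failures mean repeated coordinates or a zero companion denominator. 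I would show these are destroyed by choosing a different corner, namely a complementary triple as in Proposition~\ref{prop:main-corner-routing}.
\end{itemize}
Inside $\KK_6^{(3)}$, I would use Karlsson's explicit parametrization and the six explicit corners from Proposition~\ref{prop:karlsson-finite-corner}. There I compute the guard product as an explicit function of the three Karlsson parameters for each frame. The common zero locus of all frames' guard products should then cut out exactly the point $[H_\times]$; proving this is an exact symbolic elimination.

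\emph{Step 3: the exceptions really have no product-regular frame.} For Tao and $H_\times$, every entry lies in a finite set: cube roots of unity for Tao, fourth roots of unity for $H_\times$. So every frame seed and every block is explicit, and I would check all $400\times36$ frames by exact arithmetic. For Tao, Eq.~\eqref{eq:tao-gram-check} already shows a vanishing Gram entry. I expect that, by the transitive symmetry of $S_6^{(0)}$, this forces $\delta=0$ or $c_6=0$ in every frame. For $H_\times$, I expect repeated cubic roots or a vanishing resultant in every frame.

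The main obstacle is the Karlsson step in Step 2. The affine-Fourier boundary strata $F_6^{(2)}$ and its transpose carry large symmetry. On them several guards can vanish simultaneously across many frames, so one must certify exhaustively that only $[H_\times]$ survives. I would attempt this first by computing the guard polynomials on the two-parameter Fourier families, where entries are simple, and factoring them.
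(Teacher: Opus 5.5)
\textbf{Step~3 is right; the first inclusion has a real gap.} Your Step~3 matches what the paper does for the two exceptional classes: exact enumeration of all $14{,}400$ frames. One correction: for Tao, $5{,}760$ frames have both leading coefficients and both $\delta$ factors nonzero, and they fail only because an actual candidate has a repeated coordinate. So your prediction that the vanishing Gram entry forces $\delta=0$ or $c_6=0$ in every frame is false, although a check of all eleven guards still succeeds.

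The gap is the inclusion $\HH_6\setminus(\KK_6^{(3)}\cup\mathcal T_6)\subseteq\mathcal P_6$. Steps~1--2 give at most genericity, and nonvanishing at one Bondal--Zhdanovskiy point only controls the component through that point. A dense good set says nothing about a given class, and, as you concede, everything reduces to identifying the bad set exactly. The bullets meant to do that do not work.
\begin{itemize}
\item The trichotomy machinery can control $c_6$ and the resultant once a row pair is admissible, as the paper's Lemma~\ref{lem:supp-admissibility-guards} does. It cannot force $\delta\neq0$ or simple cubics.
\item $\delta_h$ vanishes in all orderings exactly when the three off-diagonal row-correlation moduli are equal. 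This is not excluded by an $H_2$ or $F_3$ condition: $E_M(a,b)$ of Eq.~\eqref{eq:supp-biequimodular-two-forms}, on the curve $2\theta=q(3-q)>0$ with $\theta=\RePart\tau_{\rm r}(E_M)$, has $\delta_h=\delta_v=0$ in all $36$ orderings while being neither.
\item The paper's own local analysis (Corollary~\ref{cor:supp-block-polarized-normal-form}) leaves exactly this block as a survivor.
\item Passing to a complementary triple does not help. Off the diagonal $(EE^\dagger)_{ij}=-(BB^\dagger)_{ij}$, so the row-correlation moduli, and with them the $\delta$-obstruction in that direction, are inherited; only the other direction changes.
\end{itemize}

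\textbf{The missing idea is the paper's global count.}
\begin{itemize}
\item At a finite non-$H_2$ corner, a direction is product regular exactly when it is admissible: some row pair has distinct entrywise ratios and unequal correlation moduli to the third row. Admissibility already forces $c_6\neq0$ and $R\neq0$.
\item Split the corners into $100$ squares of complementary blocks with invariant signs $(\theta,-\theta,-\theta,\theta)$. In each square, the negative-invariant corners have finite fibers, and Lemma~\ref{lem:supp-two-sided-badness} forbids both directions being obstructed there.
\item Assuming no product-regular frame then gives the xor relations, rules out the direction-polarized pattern, and forces a positive-dependent $E_M$ block in every square. That is at least $100$ such blocks.
\item Against this, each row triple supports at most four dependent blocks. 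The reasons are the rank-three column Gram matrix of each $3\times6$ row restriction and the sharp range $1\le q<9/5$ on the dependent curve. That is at most $80$ in total.
\end{itemize}
Nothing in your plan plays the role of this $100$ versus $80$ contradiction.

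\textbf{On the Karlsson sector.} Eliminating guard products over all frames in Karlsson's coordinates is conceivable but unexecuted. The paper instead uses the two orders of one mixed corner, a $49$-case division-free pair analysis, and the Matszangosz--Sz\"oll\H{o}si theorem. The surviving exception $[H_\times]$ comes from the two-circulant sector. The affine-Fourier strata you single out as the main obstacle all have product-regular frames.
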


\begin{proof}[Proof sketch]
Outside the Karlsson and Tao sectors, the counting argument in Supplemental
Sec.~\ref{sec:supp-product-escape} proves product regularity: the assumption
of no regular frame gives at least $100$ positive-dependent obstructions,
whereas the exact incidence bound permits at most $80$.
Supplemental Proposition~\ref{prop:supp-karlsson-product-exceptional-singleton}
shows that every Karlsson class except $[H_\times]$ has a product-regular
frame and that $[H_\times]$ does not, while
Proposition~\ref{prop:supp-tao-product-exceptional} proves that Tao does not.
The exact certificates used in these arguments are listed alongside the
corresponding proofs in the Supplemental Material.
\end{proof}

Theorems~\ref{thm:regular-seed-domain} and~\ref{thm:global-product-regular-escape} reveal the geometry of the regular reconstruction. The four seed phases are the continuous base coordinates, the product quadratic supplies the two sheets, and changing a corner or frame changes only the presentation. Figure~\ref{fig:regular-seed-geometry} shows a two-dimensional slice of $\omega_{\rm n}$ for fixed $c$ and $d$. At product-regular points, $\omega_{\rm n}<0$ gives two physical sheets and the black contour $\omega_{\rm n}=0$ is where they merge. The gray region $\omega_{\rm n}>0$ has no product-regular completion in this frame, but does not exclude a guard-failing completion or a presentation through another frame.

\subsection{Ramification in the class space}
\label{subsec:intrinsic-loci-product-ramification}

Write the seed as $(a,b,c,d)=(e^{\I\alpha},e^{\I\beta},e^{\I\gamma},e^{\I\delta})$ and regard $\omega_{\rm n}$ as a real-valued function of $(\alpha,\beta,\gamma,\delta)$. Its differential is the cotangent vector
\begin{equation}
\mathrm d\omega_{\rm n}=\frac{\partial\omega_{\rm n}}{\partial\alpha}\,\mathrm d\alpha+\frac{\partial\omega_{\rm n}}{\partial\beta}\,\mathrm d\beta +\frac{\partial\omega_{\rm n}}{\partial\gamma}\,\mathrm d\gamma+\frac{\partial\omega_{\rm n}}{\partial\delta}\,\mathrm d\delta.
\label{eq:normalized-discriminant-differential-main}
\end{equation}
Thus $\mathrm d\omega_{\rm n}\neq0$ means that at least one of these four partial derivatives is nonzero. At a product-regular seed satisfying $\omega_{\rm n}=0$ and $\mathrm d\omega_{\rm n}\neq0$, the real implicit-function theorem makes the zero set of $\omega_{\rm n}$ a smooth real three-dimensional hypersurface in the four-dimensional seed torus $\mathbb{T}^4$. Its class-space image is
\begin{equation}
\begin{aligned}
\mathcal R_{6,\mathrm{prod}}=\{[H]\in\HH_6:\ &\text{some representative of $[H]$}\\
&\text{has a product-regular frame}\\
&\text{with $\omega_{\rm n}=0$}\}.
\end{aligned}
\label{eq:product-ramification-locus}
\end{equation}
This existential definition does not exclude Karlsson by convention. Since $\mathcal R_{6,\mathrm{prod}}\subseteq\mathcal P_6$, Theorem~\ref{thm:global-product-regular-escape} shows that it is disjoint from Tao and $[H_\times]$. Its possible intersection with the remaining Karlsson classes is not determined here and remains an open problem.

\begin{figure}[t!]
\centering
\includegraphics[width=\columnwidth]{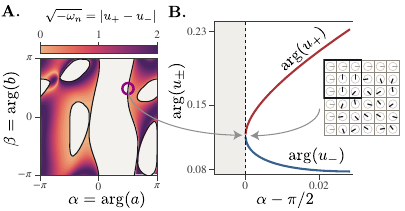}
\caption{\textbf{Ramified Hadamard completion at a branch collision.} \textbf{A.} Value of $\sqrt{-\omega_n} = \vert u_{+} - u_{-} \vert$ for varying ${\rm arg}(a)$ and ${\rm arg}(b)$ at fixed $c = {\rm i}$ and $d = d_{0} = \frac{1 + {\rm i}\xi_{0}}{1 - {\rm i}\xi_{0}}$, where $\xi_{0}$ is the unique real root of Eq.~\eqref{eq:ramification-seed-polynomial}. At the circled seed, the two branches of the quadratic in Eq.~(\ref{eq:horizontal-product-quadratic}) meet and $\omega_{n} = 0$ (black lines).  \textbf{B.} Arguments of the two roots of the relevant quadatics versus $\alpha$. They coalesce at $\alpha = \pi/2$ and $\beta = \pi/2$. The phase portrait of the ramified, product-regular completed Hadamard at this branch collision point is shown. Numerical diagnostics for this representative, reported in Supplemental Table~\ref{tab:supp-ramification-bounds}, distinguish it from the Karlsson, Tao, and four-$3\times3$-circulant-block
constructions.}
\label{fig:ramifiedHadamard}
\end{figure}

\paragraph*{A representative ramification seed.}
Let $\xi_0$ be the unique real root of
\begin{equation}
\begin{aligned}
P(\xi)={}&16\xi^7+120\xi^6+428\xi^5+952\xi^4\\
&+1363\xi^3+1231\xi^2+664\xi+176.
\end{aligned}
\label{eq:ramification-seed-polynomial}
\end{equation}
The Cayley seed
\begin{equation}
(a,b,c,d)=\left(\I,\I,\I,\frac{1+\I\xi_0}{1-\I\xi_0}\right),
\label{eq:ramification-cayley-seed}
\end{equation}
lies on the sheet-collision set $\omega_{\rm n}=0$, which is visualized in Fig.~\ref{fig:ramifiedHadamard}. Our high-precision numerics finds a product-regular completion and separates this seed from the Karlsson, Tao, and four-$3\times3$-circulant-block constructions. The numerical diagnostics are reported in Supplemental Table~\ref{tab:supp-ramification-bounds}.

\section{Proof of Sz\"oll\H{o}si's conjecture}
\label{sec:szollosi-conjecture}
The purpose of this section is to prove Sz\"oll\H{o}si's Conjecture~4.2~\cite{Szollosi2012FourParameter}, clarifying the relationship between $G_6^{(4)}$, the notation he used for the generic family produced by his Construction~3.1, and $\FCAtlas_6^{\mathrm{fc}}$.

\begin{definition}[Sz\"oll\H{o}si's construction output]
\label{def:historical-G-output}
Following Sz\"oll\H{o}si, let $G_6^{(4)}$ denote the non-Karlsson, non-Tao classes returned by Steps~1--8 of his Construction~3.1 \cite{Szollosi2012FourParameter}. The definition includes every unimodular root and every finite common-root branch retained in Case~2 of his detailed construction. It excludes the identically vanishing fundamental-polynomial case for which that construction is declared to fail. Equivalently, by his Theorem~4.1, it is the output obtained from corners having finite nonempty
normalized invertible candidate sets. This is Sz\"oll\H{o}si's construction
output, not the output of our branch-complete procedure by definition. The
exclusion of Karlsson and Tao records his sector convention rather than an
obstruction inherent in fixed-corner dilation.
\end{definition}

At a product-regular corner, all elimination steps are reversible, so
Sz\"oll\H{o}si's Construction~3.1 and our quadratic--cubic
reconstruction produce the same candidate blocks and Hadamard
completions.

\begin{proposition}[Identification of Sz\"oll\H{o}si's sector]
\label{prop:historical-output-identification}
\begin{equation}
G_6^{(4)}
=
\FCAtlas_6^{\mathrm{fc}}\setminus
\bigl(\KK_6^{(3)}\cup\mathcal T_6\bigr).
\label{eq:historical-output-identification}
\end{equation}
\end{proposition}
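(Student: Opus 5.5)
The plan is to prove the two inclusions separately, using Definition~\ref{def:historical-G-output} in its ``equivalent'' form: by Sz\"oll\H{o}si's Theorem~4.1, $G_6^{(4)}$ is the set of non-Karlsson, non-Tao classes obtained from corners whose normalized invertible candidate sets are finite and nonempty. The comparison with $\FCAtlas_6^{\mathrm{fc}}$ then reduces to checking that, at a finite-dilation corner, his Construction~3.1 and our branch-complete procedure return the same completions.

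\emph{Inclusion $\subseteq$.} Take $[H]\in G_6^{(4)}$. By definition $[H]\notin\KK_6^{(3)}\cup\mathcal T_6$, and $H$ is returned by Construction~3.1 from some corner $E(a,b,c,d)$. Two cases arise.
\begin{itemize}
\item If the corner is a finite-dilation corner, its output consists of pairs $(B,C)$ with $B,C$ in the finite invertible fibers, together with the forced block $D$.
\item If it is not, Sz\"oll\H{o}si's construction either fails (the identically vanishing fundamental-polynomial case, which is excluded by definition) or returns finitely many common-root branches.
\end{itemize}
In either case every returned block $B$ satisfies the fixed-Gram equations of Eq.~\eqref{eq:algorithmgrams}, so $B\in\mathcal B_E$, and likewise $C\in\mathcal C_E$. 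Invertibility follows from positive definiteness of $6I_3-EE^\dagger$ when $E$ is not Hadamard, or from Lemma~\ref{lem:singularcorner} together with $[H]\notin\KK_6^{(3)}$.

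The subtle point is whether the corner really has finite fibers. I would avoid this issue entirely: $H$ is a genuine Hadamard matrix, so $[H]\in\HH_6=\FCAtlas_6^{\mathrm{fc}}$ by Corollary~\ref{cor:atlas-classification}. This gives the inclusion directly, with no need to compare procedures at that corner.

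\emph{Inclusion $\supseteq$.} Take $[H]\in\FCAtlas_6^{\mathrm{fc}}$ with $[H]\notin\KK_6^{(3)}\cup\mathcal T_6$. Theorem~\ref{thm:classification} (or the definition of the atlas) supplies a representative whose leading corner $E$ is a finite-corner witness. Then $\mathcal B_E^\times$ and $\mathcal C_E^\times$ are finite and nonempty and contain the actual blocks $B,C$, and $D=D(E;B,C)$ by Eq.~\eqref{eq:maincompletionformula}.

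I then need Construction~3.1 at this corner to return $H$. The key steps are as follows.
\begin{enumerate}
\item Finiteness of the fibers means the fundamental polynomial is not identically zero. Otherwise $x_1$ would be unconstrained, giving an infinite fiber.
\item Each column $(x_j,y_j)$ of $B$ is then either a unimodular root with $B(x_j)\ne0$, recovered by the quotient in Eq.~\eqref{eq:main-szollosi-companion-quotient}, or a common root $A(x_j)=B(x_j)=0$.
\item In the common-root case, Definition~\ref{def:historical-G-output} explicitly includes the finite common-root branches of his Case~2. These are solved from the parent quadratics of Eq.~\eqref{eq:main-szollosi-parent-quadratics}, which $(x_j,y_j)$ satisfies.
\item The same argument applies to $C$. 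Construction~3.1 forces the same fourth block $D$, so $H$ is among his outputs and $[H]\in G_6^{(4)}$.
\end{enumerate}

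The main obstacle is the common-root branch. One must make sure that Sz\"oll\H{o}si's Case~2 really enumerates all solutions of the parent equations on that branch, and not a proper subset, whenever the overall fiber is finite. I would first argue that finiteness of $\mathcal B_E$ forces the parent system restricted to $x_1=x_j$ to have finitely many $y$-solutions. In that situation his Case~2 retains exactly those solutions, which is what the definition's inclusion of ``every finite common-root branch'' asserts. Equivalently, I would invoke his Theorem~4.1 in the form recorded in Definition~\ref{def:historical-G-output}: his output coincides with the output over corners with finite nonempty invertible candidate sets. That reduces the identification to Corollary~\ref{cor:atlas-classification} together with the sector exclusion.
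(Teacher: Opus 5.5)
Your argument is correct, and for $\supseteq$ it ends up exactly where the paper's does: you pass to a representative with a finite-corner witness (from the definition of $\FCAtlas_6^{\mathrm{fc}}$) and invoke Sz\"oll\H{o}si's Theorem~4.1, i.e.\ the ``equivalently'' clause of Definition~\ref{def:historical-G-output}. Drop your Steps~1--4, though, because they are not an equivalent alternative to that citation. Step~1 is a non sequitur: the fundamental sextic records only the eliminant $\mathcal A(x)+\mathcal B(x)y=0$ together with $|y|=1$, which is a single consequence of the parent system. So $\Phi_{\rm fund}\equiv0$ does not leave $x_1$ free in the full fixed-Gram system. In the supplement's analysis of $|\mathcal A|=|\mathcal B|$ on $\T$ with $1<s<3$, every fiber member has a coordinate in the finite set $Z_s$, and each such cell carries only finitely many second rows. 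Likewise, Lemma~\ref{lem:supp-admissibility-guards} needs simplicity and the absence of $H_2$ edges, not finiteness alone, to exclude $c_6=0$. The real difference from the paper is in $\subseteq$. You note that Sz\"oll\H{o}si's retained outputs are Hadamard and apply $\HH_6=\FCAtlas_6^{\mathrm{fc}}$ from Corollary~\ref{cor:atlas-classification}. The paper instead argues directly: his outputs come from finite normalized candidate sets satisfying the parent fixed-Gram equations and pass the same forced-block and unimodularity test, so they already lie in $\operatorname{Out}(E)$. Your shortcut is valid and not circular, since that corollary is proved earlier and independently of $G_6^{(4)}$. It even avoids having to know that the corner producing a given Sz\"oll\H{o}si output is a finite-dilation corner. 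The price is that the identification then depends on the full classification theorem. The paper keeps it a pure comparison of the two procedures and brings in $\FCAtlas_6^{\mathrm{fc}}=\HH_6$ only in Corollary~\ref{cor:szollosi-conjecture}.
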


\begin{proof}
Let $[H]\in G_6^{(4)}$. By
Definition~\ref{def:historical-G-output}, a representative of
$[H]$ is produced by Sz\"oll\H{o}si's Construction~3.1 from finite
normalized candidate sets. These candidates satisfy the parent
fixed-Gram equations, and the retained completion uses the same
forced-block formula and unimodularity tests as our branch-complete
procedure. Hence $[H]\in\FCAtlas_6^{\mathrm{fc}}$. Since
$G_6^{(4)}$ excludes the Karlsson and Tao sectors by definition,
this proves
\[
G_6^{(4)}
\subseteq
\FCAtlas_6^{\mathrm{fc}}\setminus
\bigl(\KK_6^{(3)}\cup\mathcal T_6\bigr).
\]

Conversely, let
\[
[H]\in
\FCAtlas_6^{\mathrm{fc}}\setminus
\bigl(\KK_6^{(3)}\cup\mathcal T_6\bigr).
\]
By the definition of $\FCAtlas_6^{\mathrm{fc}}$, some representative
of $[H]$ is recovered from a finite-corner witness.
Sz\"oll\H{o}si's Theorem~4.1 states that his Construction~3.1
exhausts all Hadamard completions at such a corner, including $H$.
Since $[H]$ is neither Karlsson nor Tao, it follows that
$[H]\in G_6^{(4)}$.
\end{proof}

\begin{corollary}[Sz\"oll\H{o}si's Conjecture~4.2]
\label{cor:szollosi-conjecture}
The three historical sectors form an exhaustive list:
\begin{equation}
\HH_6
=
G_6^{(4)}\cup\KK_6^{(3)}\cup\mathcal T_6,
\label{eq:szollosi-conjecture-list}
\end{equation}
and the three sectors are pairwise disjoint. Equivalently, every
class outside the Karlsson and Tao sectors is recovered by
Sz\"oll\H{o}si's Construction~3.1.
\end{corollary}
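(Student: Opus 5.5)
The plan is to combine Corollary~\ref{cor:atlas-classification} with Proposition~\ref{prop:historical-output-identification}. The identity $\FCAtlas_6^{\mathrm{fc}}=\HH_6$ gives
\[
\HH_6=\FCAtlas_6^{\mathrm{fc}}=\Bigl(\FCAtlas_6^{\mathrm{fc}}\setminus\bigl(\KK_6^{(3)}\cup\mathcal T_6\bigr)\Bigr)\cup\bigl(\KK_6^{(3)}\cup\mathcal T_6\bigr).
\]
Proposition~\ref{prop:historical-output-identification} identifies the first set on the right with $G_6^{(4)}$. For the union to land inside $\HH_6$, we also need $\KK_6^{(3)}\subseteq\HH_6$ and $\mathcal T_6\subseteq\HH_6$. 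Both hold trivially, since they are defined as sets of classes in $\HH_6$; alternatively they follow from Propositions~\ref{prop:karlsson-finite-corner} and~\ref{prop:tao-finite-corner}. This proves the exhaustive list in Eq.~\eqref{eq:szollosi-conjecture-list}. The ``equivalently'' clause is then just a restatement: any class outside $\KK_6^{(3)}\cup\mathcal T_6$ lies in $G_6^{(4)}$, which is by Definition~\ref{def:historical-G-output} the output of Construction~3.1.

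Pairwise disjointness comes in two parts. First, $G_6^{(4)}$ is disjoint from $\KK_6^{(3)}$ and from $\mathcal T_6$, because by definition (or by the right-hand side of Eq.~\eqref{eq:historical-output-identification}) it excludes both sectors. Second, we need $\KK_6^{(3)}\cap\mathcal T_6=\emptyset$. Since $\mathcal T_6$ is the single class $[S_6^{(0)}]$, it suffices to show that Tao's matrix is not $H_2$-reducible. Suppose $S_6^{(0)}$ were equivalent to a matrix with a $2\times2$ Hadamard submatrix. Equivalence permutes rows and columns and multiplies them by phases, so $S_6^{(0)}$ itself would contain a $2\times2$ submatrix with entries $h_{ij},h_{il},h_{kj},h_{kl}$ whose cross ratio $h_{ij}h_{kl}\overline{h_{il}h_{kj}}$ equals $-1$. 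This holds because a unimodular $2\times2$ block is Hadamard exactly when this invariant is $-1$, and the invariant is unchanged by diagonal phase multiplication. However, every entry of $S_6^{(0)}$ is a cube root of unity, so every such cross ratio is a cube root of unity, and $-1$ is not one. This is the same argument already invoked in the proof of Proposition~\ref{prop:tao-finite-corner}.

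The main subtlety is not technical depth but bookkeeping. We must check that $G_6^{(4)}$ is understood as a set of classes, so that set-theoretic complements in $\HH_6$ make sense, and that the conventions of Definition~\ref{def:historical-G-output} match those of Proposition~\ref{prop:historical-output-identification}. In particular, the excluded identically vanishing fundamental-polynomial case must not remove any class. This is guaranteed by the converse direction of Proposition~\ref{prop:historical-output-identification}, which routes every non-Karlsson, non-Tao class through a finite-corner witness where Sz\"oll\H{o}si's Theorem~4.1 applies. Beyond that, the proof is a short set-theoretic assembly.
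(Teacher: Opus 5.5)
Your proposal is correct and follows the paper's own proof. Substituting $\FCAtlas_6^{\mathrm{fc}}=\HH_6$ from Corollary~\ref{cor:atlas-classification} into Proposition~\ref{prop:historical-output-identification} gives exhaustiveness and separates $G_6^{(4)}$ from the other two sectors, and Tao is separated from Karlsson by the same cross-ratio argument (cube roots of unity versus $-1$); your check that this cross ratio is invariant under standard equivalence and characterizes $2\times2$ Hadamard blocks supplies a detail the paper states without proof.
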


\begin{proof}
Corollary~\ref{cor:atlas-classification} gives
\[
\FCAtlas_6^{\mathrm{fc}}=\HH_6.
\]
Substitution into
Proposition~\ref{prop:historical-output-identification} yields
\[
G_6^{(4)}
=
\HH_6\setminus
\bigl(\KK_6^{(3)}\cup\mathcal T_6\bigr),
\]
which proves exhaustiveness and shows that $G_6^{(4)}$ is disjoint
from the other two sectors.

It remains only to separate Tao from Karlsson. Every cross ratio of
Tao's matrix is a cubic root of unity, whereas an $H_2$-reducible
matrix has a cross ratio equal to $-1$. Hence
$\mathcal T_6\cap\KK_6^{(3)}=\varnothing$.
\end{proof}

\section{Outlook}
\label{sec:outlook}
We have completed an explicit finite-corner classification of order-six complex Hadamard matrices. Specifically, we have proved that a representative of every equivalence class can be constructed from our branch-complete dilation procedure starting from four initial seeds in $\mathbb{T}^4$. This dilation procedure provides the missing step from Sz\"oll\H{o}si's Construction~3.1 and, as our procedure also covers the Karlsson and Tao sectors within the same framework, has allowed us to prove a stronger version of his Conjecture 4.2~\cite{Szollosi2012FourParameter}. 

Moreover, outside the Tao sector and the sector represented by a single explicit Karlsson matrix, we show that every equivalence class admits a \textit{product-regular} witness. In this case, the branch-complete dilation procedure simplifies to solving only two quadratic and two cubic equations, whose discriminant provides the geometric criterion for product regularity.

A direct experimental application of this work is multiphoton interference in balanced six-mode networks. After normalization, every order-six Hadamard matrix is a lossless six-port unitary with uniform single-particle probabilities, while its internal phases remain observable in multiparticle correlations. Such transformations are already realized in programmable photonic circuits, where two-photon experiments distinguish networks that are identical at the single-particle level~\cite{Carolan2015Universal,Laing2012BerryPhase,Tichy2010ZeroTransmission,Crespi2015Sylvester,Crespi2016FFT,Shchesnovich2015Partial}. Our results supply a complete, finite-chart design space for these balanced six-port transformations, extending searches for suppression laws and phase-sensitive interference signatures beyond selected families. The same Hadamard ingredient recurs as a coherence-generating gate~\cite{Yao2016MaximalCoherence} and as a building block of dual-unitary circuits, a standard testbed for entanglement growth and thermalization~\cite{Gutkin2020DualUnitary,Claeys2022StateDesigns}. In each setting our theorem classifies the Hadamard ingredient, while device-specific constraints remain explicit conditions on this complete space.

The classification does not settle the problem of mutually unbiased bases in $\mathbb C^6$, but it removes the need to search over an unclassified set of Hadamard matrices. After fixing one basis as standard, every transition matrix to a second unbiased basis satisfies $[\sqrt6\,U]\in\FCAtlas_6^{\mathrm{fc}}$.
With three or more bases, each basis is shared by several transition matrices at once, so the relabelings and phase choices used to reach standard form cannot be applied independently to each matrix: fixing one shared basis constrains every matrix built from it. The remaining task is to impose these joint compatibility conditions directly on the classified space constructed here~\cite{WoottersFields1989,BengtssonEtAl2007,HorodeckiRudnickiZyczkowski2022,McNultyWeigert2026,MatolcsiEtAl2026MUBTriplets}. 
This also gives a controlled domain for exact or certified searches in quantum-state determination, cryptography, and related measurement-design
problems~\cite{WoottersFields1989,Cerf2002,Englert2001,Aravind2003}. An important structural question is whether analogous finite-corner arguments extend to complex Hadamards in higher orders.

\section*{Acknowledgements}
The authors thank Andy Millis and Uli Schollwöck for helpful discussions on this manuscript.
We thank Gavin Crooks for suggesting that the structural inputs be derived internally in the Lean formalization.
M.C.W. acknowledges the hospitality of the Center for Computational Quantum Physics at the Flatiron Institute. The Flatiron Institute is a division of the Simons Foundation.

OpenAI's ChatGPT Sol 5.6 Pro and Codex 5.6 Sol and Claude's Opus 5.0 were used interactively in the development of this work. The LLMs assisted in proposing and testing proof strategies, searching for gaps and possible counterexamples, checking intermediate algebra, drafting and debugging symbolic and formal-verification code, and revising the manuscript. 
The authors specified the definitions and target statements, supplied and checked the relevant literature, selected the proof strategies, reviewed proposed arguments and counterexamples, and decided which derivations and computations to retain.
Several model-generated proposals were found to be incomplete or incorrect and were either discarded or repaired. 

The authors independently checked all retained LLM-assisted material and verified the computer-assisted claims described in the Supplemental Material. The authors take full responsibility for the mathematical content and presentation of this work.

\section*{Data availability}
No external datasets were used. The accompanying repository at \url{github.com/mateocardeneswuttig/all_hadamard_matrices_in_dimension_six} contains the Lean~4 formalization and its pinned dependencies~\cite{CardenesWuttig2026HadamardRepository}, together with
the exact and interval certificates supporting the computer-assisted proofs for the results in
in Sec.~IV~\cite{CardenesWuttig2026HadamardRepository}. The public entry point \path{Hadamard6/PaperTheorem.lean} verifies Theorem~\ref{thm:classification} and Corollary~\ref{cor:atlas-classification}, including the two structural results stated in Proposition~\ref{prop:publishedinputs}. The \path{certificates/} directory contains the certificates, and \path{certificates/PAPER_CLAIM_AUDIT.md} maps each computer-assisted claim to its verifier.

\bibliography{ref}
\fi

\clearpage
\onecolumngrid

\setcounter{section}{0}
\setcounter{subsection}{0}
\renewcommand{\thesection}{\Roman{section}}

\setcounter{equation}{0}
\numberwithin{equation}{section}
\renewcommand{\theequation}{S.\arabic{section}.\arabic{equation}}

\setcounter{theorem}{0}
\renewcommand{\thetheorem}{S.\arabic{theorem}}

\makeatletter
\renewcommand{\theHsection}{supp.\arabic{section}}
\renewcommand{\theHequation}{supp.\arabic{section}.\arabic{equation}}
\providecommand{\theHtheorem}{}
\renewcommand{\theHtheorem}{supp.\arabic{theorem}}
\makeatother

\makeatletter
\newcommand{\LLM@suppsectionnumber}{}
\newcommand{\LLM@suppsubsectionnumber}{}
\newcommand{\LLM@sectionnumberline}[1]{%
  \xdef\LLM@suppsectionnumber{#1}%
  \hb@xt@\@tempdima{\bfseries #1\hfil}}
\newcommand{\LLM@subsectionnumberline}[1]{%
  \xdef\LLM@suppsubsectionnumber{\LLM@suppsectionnumber.#1}%
  \hb@xt@\@tempdima{\LLM@suppsubsectionnumber\hfil}}
\newcommand{\LLM@subsubsectionnumberline}[1]{%
  \hb@xt@\@tempdima{\LLM@suppsubsectionnumber.#1\hfil}}
\newcommand{\supplementaltableofcontents}{%
  \begin{center}\large\bfseries{Supplemental contents}\end{center}\vspace{0.25em}%
  \begingroup
  \setcounter{tocdepth}{3}%
  \setlength{\parskip}{0pt}%
  \renewcommand*\l@section[2]{%
    \addpenalty{-\@highpenalty}\addvspace{0.45em}%
    \begingroup\bfseries
    \let\numberline\LLM@sectionnumberline
    \@dottedtocline{1}{0em}{3.2em}{##1}{##2}%
    \endgroup}%
  \renewcommand*\l@subsection[2]{%
    \begingroup
    \let\numberline\LLM@subsectionnumberline
    \@dottedtocline{2}{1.5em}{4.2em}{##1}{##2}%
    \endgroup}%
  \renewcommand*\l@subsubsection[2]{%
    \begingroup\itshape
    \let\numberline\LLM@subsubsectionnumberline
    \@dottedtocline{3}{3.5em}{5.0em}{##1}{##2}%
    \endgroup}%
  \@starttoc{stoc}%
  \endgroup}
\let\LLM@addcontentsline\addcontentsline
\renewcommand{\addcontentsline}[3]{%
  \def\LLM@target{#1}%
  \def\LLM@toc{toc}%
  \ifx\LLM@target\LLM@toc
    \LLM@addcontentsline{stoc}{#2}{#3}%
  \else
    \LLM@addcontentsline{#1}{#2}{#3}%
  \fi}
\makeatother
\begin{center}
\large\bfseries Supplementary Material
\end{center}
\newcommand{\subproofheading}[1]{%
  \par\medskip\noindent\textbf{#1}\par\smallskip}

{The supplementary material is divided into two main sections. In section~\ref{sec:supp-classification-proof}, we prove the finite-corner classification using the two published structural results isolated in Proposition~\ref{prop:publishedinputs}. Section~\ref{sec:supp-reconstruction-geometry} supplies the detailed proofs for the four-phase reconstruction.}

The Lean formalization we have also built~\cite{CardenesWuttig2026HadamardRepository} verifies Theorem~\ref{thm:classification} and
Corollary~\ref{cor:atlas-classification}, including the two structural
reductions stated in Proposition~\ref{prop:publishedinputs}. The relevant files are described in the first row of Table~\ref{tab:supp-certificate-index}.

The exact and interval certificates listed in the subsequent rows in Table~\ref{tab:supp-certificate-index}
support the separate reconstruction geometry of Sec.~IV. Each row gives
a concrete repository entry point. The complete claim-to-file map, pinned
dependencies, and reproduction commands are in
\path{certificates/PAPER_CLAIM_AUDIT.md}~\cite{CardenesWuttig2026HadamardRepository}.
Finally, the comparison
with Sz\"oll\H{o}si's Construction~3.1 is proved in Sec.~V of the main text
from Theorem~\ref{thm:classification} and Sz\"oll\H{o}si's published
fixed-corner completeness theorem.

\begin{table}[!ht]
\caption{Formal-verification and certificate entry points.}
\label{tab:supp-certificate-index}
\centering
\footnotesize
\renewcommand{\arraystretch}{1.12}
\begin{tabular}{|p{0.22\textwidth}|p{0.35\textwidth}|p{0.35\textwidth}|}
\hline
Result & Repository entry point & What is verified \\
\hline
Classification and Karlsson coverage
& \path{Hadamard6/PaperTheorem.lean}\newline
  \path{Hadamard6/KarlssonContainment.lean}
& Theorem~\ref{thm:classification}, Corollary~\ref{cor:atlas-classification},
  Proposition~\ref{prop:karlsson-finite-corner}, and the two structural
  reductions of Proposition~\ref{prop:publishedinputs}. \\
\hline
Generic four-phase reconstruction
& \path{certificates/generic_cover/product_cubic_reduction.py}\newline
  \path{certificates/generic_cover/positivity_lemma_reduction.py}\newline
  \path{certificates/generic_cover/exact_lower_block_specialization.py}
& Theorem~\ref{thm:regular-seed-domain}: the quadratic--cubic identities,
  complement positivity, and lower-block sheet selection. \\
\hline
Product-exceptional sectors
& \path{certificates/product_exceptional/tao_product_exceptional_check.py}\newline
  \path{certificates/product_exceptional/karlsson_product_regular_coverage/karlsson_product_exceptional_theorem_check.py}
& Propositions~\ref{prop:supp-karlsson-product-exceptional-singleton}
  and~\ref{prop:supp-tao-product-exceptional}: exact all-frame enumerations
  and the Karlsson-sector refinement. \\
\hline
Global product reach
& \path{certificates/product_escape/global_escape_incidence_check.py}\newline
  \path{certificates/product_escape/dependent_block_threshold_check.py}
& Theorem~\ref{thm:global-product-regular-escape}: the exact incidence,
  endpoint, orbit, and threshold calculations used in the reach proof. \\
\hline
Ramification representative
& \path{certificates/ramification/ramification_seed_certificate.py}
& Figure~\ref{fig:ramifiedHadamard} and
  Table~\ref{tab:supp-ramification-bounds}: exact algebraic seed data and
  rigorous Arb enclosures. \\
\hline
\end{tabular}
\end{table}

\setcounter{tocdepth}{3}
\supplementaltableofcontents
\clearpage

\section{{Proof of Theorem~\ref{thm:classification}}}
\label{sec:supp-classification-proof}

{This section contains the arguments behind the classification proof in Sec.~III of the main text. The manuscript proof uses exactly the two published structural results stated in Proposition~\ref{prop:publishedinputs}: Karlsson's classification of the $H_2$-reducible sector and Sz\"oll\H{o}si's cubic-root row-and-column criterion~\cite{Karlsson2011H2,Karlsson2011ThreeParameter,Szollosi2012FourParameter}.}

\subproofheading{{Theorem~\ref{thm:classification} (Complete finite-corner classification; restated)}}
{\emph{Every order-six complex Hadamard matrix is equivalent to a dephased matrix having a finite-corner witness.}}

{This is the principal theorem proved in this section.}

\subproofheading{{Proposition~\ref{prop:publishedinputs} (Published order-six structural inputs; restated)}}
{\emph{For an order-six complex Hadamard matrix $H$:}}
\begin{enumerate}
\item {$H$ is $H_2$-reducible if and only if it is equivalent to a member of Karlsson's complete three-real-parameter family \cite{Karlsson2011H2,Karlsson2011ThreeParameter}. 
Karlsson's parametrization also covers the parameter values at which both
M\"obius formulas degenerate. The resulting Hadamard matrices are, up to equivalence, members of the two-parameter Fourier family $F_6^{(2)}$ or its transpose \cite[Secs.~4--5 and Theorem~11]{Karlsson2011ThreeParameter}, see also \cite[p.~623, Theorem~2.11]{Szollosi2012FourParameter}.}
\item {If $H$ is equivalent to a dephased complex Hadamard matrix having both a noninitial row and a noninitial column composed entirely of cubic roots of unity, then either $[H] \in \mathcal T_6$ or $[H]\in\KK_6^{(3)}$~\cite[p.~624, Lemma~2.14]{Szollosi2012FourParameter}.}
\end{enumerate}
{These are the two published structural results used in the manuscript proof~\cite{Karlsson2011H2,Karlsson2011ThreeParameter,Szollosi2012FourParameter}. The Lean formalization we have provided~\cite{CardenesWuttig2026HadamardRepository} does not assume them. Instead, it proves the required reductions internally: the simultaneous cubic-Fourier branch leads to the Tao or $H_2$-reducible sectors, and every $H_2$-reducible matrix is reduced to regular Karlsson coordinates or an affine-Fourier seam, with a finite-corner witness in every resulting case.}

\subsection{{Proof of Lemma~\ref{lem:singularcorner}}}
\label{sec:supp-singular-corner}

\subproofheading{{Lemma~\ref{lem:singularcorner} (Singular-corner reduction; restated)}}
{\emph{If a $3\times3$ submatrix of an order-six complex Hadamard matrix is singular, then the matrix contains a $2\times2$ Hadamard submatrix.}}

\begin{proof}
We prove first that singularity forces two rows or two columns of the chosen corner to coincide, and then use full-matrix orthogonality to obtain a $2\times2$ Hadamard submatrix. Move the chosen rows and columns to the first three positions and dephase. The $3\times3$ submatrix is then
\begin{equation}
E(a,b,c,d)=\begin{pmatrix}1&1&1\\1&a&b\\1&c&d\end{pmatrix},
\label{eq:supp-singular-corner-form}
\end{equation}
where $a,b,c,d\in\C$ are the four unfixed phase entries of the dephased submatrix and $|a|=|b|=|c|=|d|=1$.  Its determinant equation is
\begin{equation}
ad-bc-d+b+c-a=0.
\label{eq:singular-corner-determinant}
\end{equation}

We first prove that two rows or two columns of $E$ coincide.  If $a=1$, Eq.~\eqref{eq:singular-corner-determinant} becomes $(b-1)(c-1)=0$; hence either the first two rows or the first two columns coincide. If $a\ne1$ and $b=1$, the same equation becomes $(a-1)(d-1)=0$, so $d=1$ and the first and third columns coincide. The case $a\ne1$ and $c=1$ is the transpose: again $d=1$, and the first and third rows coincide. If $a,b,c\ne1$ but $d=1$, the determinant equation gives $(b-1)(c-1)=0$, a contradiction.

It remains to treat $a,b,c,d\ne1$. Solving Eq.~\eqref{eq:singular-corner-determinant} for $d$ gives
\begin{equation}
(a-1)d=a+bc-b-c.
\label{eq:singular-corner-solved}
\end{equation}
All four parameters have modulus one. Taking absolute values in Eq.~\eqref{eq:singular-corner-solved}, expanding, and clearing the nonzero unit factors gives
\begin{equation}
(a-b)(a-c)(b-1)(c-1)=0.
\label{eq:singular-corner-factor}
\end{equation}

The last two factors are nonzero. Thus $a=b$ or $a=c$. In the first case Eq.~\eqref{eq:singular-corner-determinant} reduces to $(c-d)(1-a)=0$, so $c=d$ and the last two columns coincide. In the second case it reduces to $(b-d)(1-a)=0$, so $b=d$ and the last two rows coincide. This completes the local case split.

Suppose, for example, that rows $r$ and $s$ coincide on the selected three columns. Each of their three relative phases there is $1$. Orthogonality of the two full rows says that the complementary relative phases $z_1,z_2,z_3\in\T$ satisfy $z_1+z_2+z_3=-3$. The triangle inequality gives
\begin{equation}
3=|z_1+z_2+z_3|\leq |z_1|+|z_2|+|z_3|=3.
\end{equation}
Equality forces $z_1,z_2,z_3$ to have a common argument, and their sum is negative real; hence $z_1=z_2=z_3=-1$. Thus the relative phase is simultaneously $1$ on each selected column and $-1$ on each complementary column. Selecting any one column of each kind produces a $2\times2$ submatrix with orthogonal rows and unimodular entries, hence a $2\times2$ complex Hadamard matrix. The coincident-column case is identical after transposition.
\end{proof}

\subsubsection{{Proof of Lemma~\ref{lem:rowcolumn}}}
\label{sec:invariant}

{The cubic invariant is the sign marker used in the corner-routing argument. Complementary blocks reverse its sign, while the identity proved below allows the same argument to be applied after exchanging rows and columns. This is the step that lets Proposition~\ref{prop:main-corner-routing} make both side fibers finite at one corner.}

For any $X\in\T^{3\times3}$, define its row and column cubic invariants by
\begin{equation}
\begin{aligned}
\tau_{\rm r}(X)&=(XX^{\dagger})_{12}(XX^{\dagger})_{23}
(XX^{\dagger})_{31},\\
\tau_{\rm c}(X)&=(X^{\dagger}X)_{12}(X^{\dagger}X)_{23}
(X^{\dagger}X)_{31}.
\end{aligned}
\label{eq:cubicinvariants}
\end{equation}
Row or column phasing leaves the corresponding cyclic product unchanged. An odd permutation complex conjugates it, and an even permutation preserves it. Therefore its real part is invariant under every monomial row or column operation.

\begin{lemma}[Row--column invariant identity]
\label{lem:rowcolumn}
For every $X\in\T^{3\times3}$,
\begin{equation}
\RePart\tau_{\rm r}(X)=\RePart\tau_{\rm c}(X).
\label{eq:rowcolidentity}
\end{equation}
\end{lemma}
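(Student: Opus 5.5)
The plan is to express $\RePart\tau_{\rm r}(X)$ and $\RePart\tau_{\rm c}(X)$ through spectral data that $XX^\dagger$ and $X^\dagger X$ share, namely the traces of their powers. Write $G=XX^\dagger$ and $G'=X^\dagger X$. Both are Hermitian. Because every entry of $X$ is unimodular, every diagonal entry of both is $\sum_j|x_{ij}|^2=3$. Moreover $\tr(G^k)=\tr\bigl((XX^\dagger)^k\bigr)=\tr\bigl((X^\dagger X)^k\bigr)=\tr(G'^k)$ for every $k\ge1$, by cyclicity of the trace.

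The key step is to expand $\tr(G^3)=\sum_{i,j,k}G_{ij}G_{jk}G_{ki}$ for a $3\times3$ Hermitian $G$ with constant diagonal $3$, grouping the index triples by coincidence pattern:
\begin{itemize}
\item The triples with $i=j=k$ contribute $3\cdot 27=81$.
\item The triples with exactly two indices equal contribute $3\sum_{i\ne k}G_{ii}G_{ik}G_{ki}=9\sum_{i\ne k}|G_{ik}|^2$, using $G_{ki}=\overline{G_{ik}}$.
\item The six triples with distinct indices give the three cyclic rotations of $(1,2,3)$, each equal to $\tau_{\rm r}$, and the three rotations of $(1,3,2)$, each equal to $G_{13}G_{32}G_{21}=\overline{\tau_{\rm r}}$. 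Together they contribute $6\RePart\tau_{\rm r}(X)$.
\end{itemize}
Hence
\begin{equation*}
\tr(G^3)=81+9\sum_{i\ne k}|G_{ik}|^2+6\RePart\tau_{\rm r}(X).
\end{equation*}
The same computation for $G'$ gives the identical formula with $\tau_{\rm c}(X)$ in place of $\tau_{\rm r}(X)$.

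To control the middle term, I will use $\tr(G^2)=\sum_i G_{ii}^2+\sum_{i\ne k}|G_{ik}|^2=27+\sum_{i\ne k}|G_{ik}|^2$, and likewise for $G'$. Since $\tr(G^2)=\tr(G'^2)$, the off-diagonal squared sums of $G$ and $G'$ agree. Subtracting the two cubic-trace formulas and using $\tr(G^3)=\tr(G'^3)$ then leaves $6\RePart\tau_{\rm r}(X)=6\RePart\tau_{\rm c}(X)$, which is Eq.~\eqref{eq:rowcolidentity}.

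I do not expect a real obstacle. The only step needing care is the combinatorial bookkeeping in the expansion of $\tr(G^3)$: the multiplicities $3$ and $6$ must be right, and the two orientations of the distinct-index cycles must be correctly identified as $\tau$ and $\overline\tau$. I will check these multiplicities on a concrete example, such as $X=F_3$, where both invariants vanish and $\tr(G^3)=81$, or a matrix with a repeated row, before finalizing.
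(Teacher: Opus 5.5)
Your proof is correct and is essentially the paper's argument. Both use that $XX^\dagger$ and $X^\dagger X$ share spectral invariants: first $\tr(M^2)$ to equate the off-diagonal squared sums, then a cubic invariant whose expansion isolates the real part of the cyclic product. The only difference is that you expand $\tr(M^3)$, with correct multiplicities $81$, $9\sum_{i\ne k}|M_{ik}|^2$ and $6\RePart\tau$, whereas the paper expands $\det M=27-3\sum_{i<j}|M_{ij}|^2+2\RePart(M_{12}M_{23}M_{31})$; the two are interchangeable via Newton's identities.
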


\begin{proof}
Set $A=XX^{\dagger}$ and $C=X^{\dagger}X$.  Both are Hermitian, both have diagonal $(3,3,3)$, and they have the same eigenvalues because $X$ is square. For a Hermitian matrix $M$ with this diagonal, define
\begin{equation}
\sigma(M)=\sum_{1\leq i<j\leq3}|M_{ij}|^2.
\label{eq:sigmam}
\end{equation}
Direct multiplication gives
\begin{equation}
\tr(M^2)=27+2\sigma(M).
\label{eq:tracem2}
\end{equation}
Equality of the spectra of $A$ and $C$ implies equality of their traces of squares, and hence
\begin{equation}
\sigma(A)=\sigma(C).
\label{eq:sigmaequal}
\end{equation}

Expanding a $3\times3$ determinant gives
\begin{equation}
\det M=27-3\sigma(M)
+2\RePart\!\left(M_{12}M_{23}M_{31}\right).
\label{eq:detcubic}
\end{equation}
The equal spectra also give $\det A=\det C$.  Subtracting the two instances of Eq.~\eqref{eq:detcubic} and using Eq.~\eqref{eq:sigmaequal} proves Eq.~\eqref{eq:rowcolidentity}.
\end{proof}

We shall use a normalized form for a general $3\times3$ unimodular matrix:
\begin{equation}
X=\begin{pmatrix}
1&1&1\\
x_1&x_2&x_3\\
y_1&y_2&y_3
\end{pmatrix},
\qquad |x_j|=|y_j|=1.
\label{eq:normalizedx}
\end{equation}
Define the complex numbers
\begin{equation}
S=\sum_{j=1}^{3}x_j,\qquad T=\sum_{j=1}^{3}y_j,\qquad
R=\sum_{j=1}^{3}\overline{x_j}y_j\in\C.
\label{eq:STR}
\end{equation}
Then
\begin{equation}
XX^{\dagger}=
\begin{pmatrix}
3&\overline S&\overline T\\
S&3&\overline R\\
T&R&3
\end{pmatrix},
\qquad
\tau_{\rm r}(X)=\overline S\,\overline R\,T.
\label{eq:gramSTR}
\end{equation}

{We use the notation $\mathcal R_X$ for the normalized physical row-Gram fiber already defined in Eq.~\eqref{eq:main-normalized-fiber} of the main text.}

\subsection{{Proof of Proposition~\ref{prop:main-infinite-fiber-trichotomy}}}
\label{sec:local}

{We now analyze when $\mathcal R_X$ can be infinite. The elimination is kept division-free so that common roots and vanishing denominators remain in the proof.}

\subproofheading{{Proposition~\ref{prop:main-infinite-fiber-trichotomy} (Infinite-fiber trichotomy; restated)}}
{\emph{Let $X\in\T^{3\times3}$ be invertible. If its normalized physical row-Gram fiber is infinite, then at least one of the following holds:}}
\begin{equation}
\begin{aligned}
&XX^{\dagger}=3I_3;\qquad\text{or}\\
&\RePart\tau_{\rm r}(X)<0;\qquad\text{or}\\
&X\text{ contains a }2\times2\text{ complex Hadamard submatrix}.
\end{aligned}
\label{eq:trichotomy}
\end{equation}
{\emph{These alternatives may overlap.}}

\begin{proof}
{Every $Y\in\mathcal R_X$ satisfies $YY^{\dagger}=XX^{\dagger}$. Since $X$ is invertible, this common Gram matrix is positive definite. Therefore every $Y$ in the fiber is invertible.}

Phasing or interchanging the last two rows maps the whole normalized fiber bijectively to another normalized fixed-Gram fiber.  These operations preserve invertibility, $XX^{\dagger}=3I_3$, the real part of the cubic invariant, and the existence of a $2\times2$ Hadamard submatrix.  We may therefore perform them without changing the conclusion.

If only finitely many second rows and finitely many third rows occurred in the fiber, then the fiber itself would be finite.  After interchanging the last two rows if necessary, the set of second-row triples is infinite.  If each of the three coordinate projections of this set were finite, their Cartesian product would be finite.  Hence some fixed coordinate has infinitely many distinct values.  Denote that coordinate and the entry below it by $x$ and $y$.

\subproofheading{Two residual pairs and polynomial elimination}

We reduce infinitude of the fiber to a set of one-variable polynomial identities. Isolate one coordinate from each noninitial row and denote the remaining entries by two residual pairs. After one common permutation of the three columns, write the two noninitial rows and the residual quantities consistently as
\begin{equation}
\begin{gathered}
(x_1,x_2,x_3)=(x,u_1,u_2),\qquad
(y_1,y_2,y_3)=(y,v_1,v_2),\\
\Sigma=x-S,\qquad \Delta=y-T,\qquad
\Psi=\frac{y}{x}-R.
\end{gathered}
\label{eq:SigmaDeltaPsi}
\end{equation}
Equations~\eqref{eq:STR} and \eqref{eq:SigmaDeltaPsi} give
\begin{equation}
u_1+u_2=-\Sigma,
\qquad v_1+v_2=-\Delta,
\qquad
\overline{u_1}v_1+\overline{u_2}v_2=-\Psi.
\label{eq:residualsums}
\end{equation}

Set $\rho=\overline{u_1}u_2$ and $\sigma=\overline{v_1}v_2$.  Because all four entries are unimodular,
\begin{equation}
\begin{aligned}
|\Sigma|^2&=2+2\RePart\rho,\\
|\Delta|^2&=2+2\RePart\sigma,\\
|\Psi|^2&=2+2\RePart(\rho\overline\sigma).
\end{aligned}
\label{eq:residualmoduli}
\end{equation}
The phases of $u_1$ and $v_1$ cancel in the product, and direct expansion gives
\begin{equation}
\Sigma\overline\Delta\Psi
=-(1+\rho)(1+\overline\sigma)(1+\overline\rho\sigma)
=-2-2\RePart(\rho+\sigma+\rho\overline\sigma).
\label{eq:residualproduct}
\end{equation}
Combining Eqs.~\eqref{eq:residualmoduli} and \eqref{eq:residualproduct}, we obtain the exact identity
\begin{equation}
\mathcal H=\overline{\mathcal H},
\qquad
\mathcal H=4-|\Sigma|^2-|\Delta|^2-|\Psi|^2,
\label{eq:residualhaagerup}
\end{equation}
where
\begin{equation}
\mathcal H=(x-S)(y^{-1}-\overline T)(y/x-R).
\label{eq:Hcal}
\end{equation}
The identity in Eq.~\eqref{eq:residualhaagerup} is commonly called Haagerup's trick \MainCiteHaagerupSzollosi. Here, we derive it directly from the two residual coordinate pairs.

On the unit circle, complex conjugation replaces $x$ by $x^{-1}$ and $y$ by $y^{-1}$.  Multiplying the two equations in Eq.~\eqref{eq:residualhaagerup} by the nonzero monomial $xy$ gives two polynomial equations
\begin{equation}
\Phi_{\rm H}(x,y)=(x-S)(1-\overline T y)(y-Rx)
-(1-\overline Sx)(y-T)(x-\overline R y)=0,
\label{eq:Phi}
\end{equation}
and
\begin{equation}
\begin{aligned}
\Gamma(x,y)={}&(x-S)(1-\overline T y)(y-Rx)-4xy\\
&+y(x-S)(1-\overline Sx)
+x(y-T)(1-\overline T y)\\
&+(y-Rx)(x-\overline R y)=0.
\end{aligned}
\label{eq:Gamma}
\end{equation}

Both polynomials are quadratic in $y$.  Write
\begin{equation}
\Phi_{\rm H}=F_1+F_2y+F_3y^2,
\qquad
\Gamma=G_1+G_2y+G_3y^2.
\label{eq:FGcoeffdefinition}
\end{equation}
Their coefficients are
\begin{equation}
\begin{aligned}
F_1={}&-Rx(x-S)+Tx(1-\overline Sx),\\
F_2={}&(x-S)(1+\overline T Rx)
       -(1-\overline Sx)(x+\overline RT),\\
F_3={}&-\overline T(x-S)+\overline R(1-\overline Sx),
\end{aligned}
\label{eq:Fcoeffs}
\end{equation}
and
\begin{equation}
\begin{aligned}
G_1={}&-Rx(x-S)-xT-Rx^2,\\
G_2={}&(x-S)(1+\overline T Rx)-4x
 +(x-S)(1-\overline Sx)\\
&+x(1+|T|^2)+x(1+|R|^2),\\
G_3={}&-\overline T(x-S)-x\overline T-\overline R.
\end{aligned}
\label{eq:Gcoeffs}
\end{equation}

Eliminate $y^2$ by forming $F_3\Gamma-G_3\Phi_{\rm H}$.  Every coordinate pair in every fiber member therefore satisfies
\begin{equation}
\mathcal A(x)+\mathcal B(x)y=0,
\label{eq:linearEliminant}
\end{equation}
where
\begin{equation}
\mathcal A=F_3G_1-F_1G_3,
\qquad
\mathcal B=F_3G_2-F_2G_3.
\label{eq:ABeliminants}
\end{equation}
Substitution of Eqs.~\eqref{eq:Fcoeffs} and \eqref{eq:Gcoeffs}, followed by elementary collection of powers of $x$, yields
\begin{equation}
\mathcal A(x)=x\bigl(|R|^2-|T|^2\bigr)\kappa_S(x),
\qquad \deg\mathcal B\leq3.
\label{eq:Afactor}
\end{equation}
Here, $\deg\mathcal B\leq3$ means that $\mathcal B(x)$ has degree at most three. The factor $\kappa_S$ is
\begin{equation}
\kappa_S(x)=2\overline Sx^2-(|S|^2+3)x+2S.
\label{eq:kappaS}
\end{equation}

If $\mathcal B$ were the zero polynomial, the selected infinite set of $x$-values in Eq.~\eqref{eq:linearEliminant} would force $\mathcal A\equiv0$.  Otherwise discard the finitely many roots of $\mathcal B$ and write
\begin{equation}
y=-\frac{\mathcal A(x)}{\mathcal B(x)}.
\label{eq:yquotient}
\end{equation}
Because $|y|=1$, the Laurent polynomial $|\mathcal A(x)|^2-|\mathcal B(x)|^2$ vanishes at infinitely many unit-circle points and hence vanishes identically.  In this second branch $\mathcal A$ cannot be the zero polynomial, since Eq.~\eqref{eq:linearEliminant} would then give $y=0$ away from finitely many roots.  Thus the two exhaustive branches are the dependent branch
\begin{equation}
\mathcal A\equiv\mathcal B\equiv0,
\label{eq:dependent-branch}
\end{equation}
and the nondependent branch
\begin{equation}
\mathcal A\not\equiv0\ \text{and}\ \mathcal B\not\equiv0.
\label{eq:nondependent-branch}
\end{equation}

\subproofheading{The dependent branch}
There are two cases according to whether both eliminant coefficients vanish identically. We first treat the dependent branch, in which both of its coefficient polynomials vanish identically. This is the branch in which ordinary division would lose information.

Assume first that $\mathcal A\equiv\mathcal B\equiv0$.  The coefficient of $x$ in $\kappa_S$ is $-(|S|^2+3)$, so $\kappa_S$ is never the zero polynomial.  Equation~\eqref{eq:Afactor} therefore implies
\begin{equation}
|R|=|T|.
\label{eq:RTequalmod}
\end{equation}

Suppose $T\neq0$.  Phase the second and third rows so that
\begin{equation}
S=s\geq0,
\qquad T=t>0,
\qquad R=t\zeta,
\qquad |\zeta|=1.
\label{eq:dependentnormalization}
\end{equation}
The transformed second-row projection is still infinite.  Its transformed $\mathcal A$ vanishes by Eq.~\eqref{eq:RTequalmod}. Hence, Eq.~\eqref{eq:linearEliminant} at infinitely many $x$-values forces its transformed $\mathcal B$ to vanish as well.

Write $\mathcal B(x)=\sum_{k=0}^{3}\beta_kx^k$.  Substitution of Eq.~\eqref{eq:dependentnormalization} in Eqs.~\eqref{eq:Fcoeffs}--\eqref{eq:ABeliminants} gives
\begin{equation}
\begin{aligned}
\beta_0&=-t\Pi/\zeta^2,&
\beta_1&=st\Pi/\zeta^2,\\
\beta_2&=-st\Theta/\zeta,&
\beta_3&=t\Theta/\zeta,
\end{aligned}
\label{eq:dependentbetas}
\end{equation}
where
\begin{equation}
\begin{aligned}
\Pi&=s^2\zeta^2-st^2\zeta+3s\zeta+t^2,\\
\Theta&=s^2-st^2\zeta+3s\zeta+t^2\zeta^2.
\end{aligned}
\label{eq:PiTheta}
\end{equation}
Because $t>0$ and $\mathcal B\equiv0$, both $\Pi$ and $\Theta$ vanish. Subtracting them gives
\begin{equation}
(s^2-t^2)(\zeta^2-1)=0.
\label{eq:dependentfactor}
\end{equation}

There are three cases.  If $\zeta=1$, the equation $\Pi=0$ becomes
\begin{equation}
t^2(s-1)=s(s+3).
\label{eq:zetaone}
\end{equation}
It implies $s>1$.  Since a sum of three unit numbers has modulus at most three, $s,t\leq3$.  Rearranging Eq.~\eqref{eq:zetaone} gives
\begin{equation}
t^2-9=\frac{(s-3)^2}{s-1}.
\label{eq:zetaonebound}
\end{equation}
The left side is nonpositive and the right side is nonnegative, so both are zero.  Hence $s=t=3$.  Equality in the triangle inequality makes both noninitial rows constant after phasing, contradicting invertibility.

If $\zeta=-1$, Eq.~\eqref{eq:PiTheta} excludes $s=0$, and Eq.~\eqref{eq:gramSTR} gives
\begin{equation}
\RePart\tau_{\rm r}(X)=-st^2<0.
\label{eq:zetaminustau}
\end{equation}

If $s=t$ and $\zeta^2\neq1$, the equation $\Pi=0$ reduces to
\begin{equation}
s\zeta^2+(3-s^2)\zeta+s=0,
\qquad
\RePart\zeta=\frac{s^2-3}{2s}.
\label{eq:stzeta}
\end{equation}
The Gram determinant is
\begin{equation}
\det(XX^{\dagger})=(s^2-3)(s^2-9).
\label{eq:dependentdet}
\end{equation}
Invertibility makes this determinant positive.  Since $s\leq3$, we obtain $s<\sqrt3$.  The bound $|\RePart\zeta|\leq1$ in Eq.~\eqref{eq:stzeta} gives $s\geq1$.  Therefore
\begin{equation}
\RePart\tau_{\rm r}(X)
=s^3\RePart\zeta
=\frac{s^2(s^2-3)}{2}<0.
\label{eq:dependentnegative}
\end{equation}

It remains to consider $T=R=0$.  A triple of unit numbers with sum zero is, up to a common phase and permutation,
\begin{equation}
(y_1,y_2,y_3)=\lambda(1,\omega,\omega^2).
\label{eq:unitzerosum}
\end{equation}
The condition $R=0$ places the second row in the orthogonal complement of this Fourier row, so
\begin{equation}
(x_1,x_2,x_3)
=\alpha(1,1,1)+\beta(1,\omega^2,\omega).
\label{eq:xFourierdecomp}
\end{equation}
The three equalities $|x_j|=1$ imply that
\begin{equation}
\RePart(\alpha\overline\beta)
=\RePart(\omega\alpha\overline\beta)
=\RePart(\omega^2\alpha\overline\beta).
\label{eq:threeRe}
\end{equation}
Their sum is zero, so they all vanish.  The first two already force $\alpha\overline\beta=0$.  If $\beta=0$, the first two rows are proportional and $X$ is singular.  If $\alpha=0$, the three rows are Fourier rows and
\begin{equation}
XX^{\dagger}=3I_3.
\label{eq:dependentFourier}
\end{equation}

\subproofheading{The nondependent branch}
Assume now that $\mathcal A\not\equiv0$. Phase row two so that $S=s$ is real with $0\leq s\leq3$, and define
\begin{equation}
\delta=|R|^2-|T|^2\neq0,
\qquad
\kappa_s(x)=2sx^2-(s^2+3)x+2s.
\label{eq:deltakappas}
\end{equation}
Write
\begin{equation}
\mathcal B(x)=\beta_0+\beta_1x+\beta_2x^2+\beta_3x^3.
\label{eq:Bbeta}
\end{equation}
The modulus identity following Eq.~\eqref{eq:yquotient} and Eq.~\eqref{eq:Afactor} gives, for $|x|=1$,
\begin{equation}
|\mathcal B(x)|^2=|\delta|^2|\kappa_s(x)|^2.
\label{eq:modulusidentity}
\end{equation}
The right side has Fourier degree at most two.  The coefficient of the third Fourier mode on the left is $\beta_3\overline{\beta_0}$.  Hence
\begin{equation}
\beta_3\overline{\beta_0}=0.
\label{eq:endcoefficient}
\end{equation}

If $\beta_0=0$, write $\mathcal B=xQ$; if $\beta_3=0$, write $\mathcal B=Q$.  Choose either representation if both conditions hold.  In both cases $Q\neq0$, $\deg Q\leq2$, and $|Q|=|\mathcal B|$ on the unit circle.  Define the reversed-conjugate polynomial
\begin{equation}
Q^{\#}(x)=x^2\overline{Q(1/\overline x)}.
\label{eq:Qsharp}
\end{equation}
On $|x|=1$, $Q^{\#}(x)=x^2\overline{Q(x)}$.  The real palindromic polynomial $\kappa_s$ satisfies $\kappa_s^{\#}=\kappa_s$. Multiplying Eq.~\eqref{eq:modulusidentity} by $x^2$ and using polynomial identity gives
\begin{equation}
Q(x)Q^{\#}(x)=|\delta|^2\kappa_s(x)^2.
\label{eq:QQsharp}
\end{equation}

We now split the nondependent branch according to the real number $s=|S|\in[0,3]$ into: $(a)$ $s=0$, $(b)$ $0<s<1$, and $(c)$ $1\leq s\leq3$.

\paragraph{\texorpdfstring{Case $s=0$}{Case s=0}.}

Now $\kappa_0(x)=-3x$, and Eq.~\eqref{eq:modulusidentity} says that $\mathcal B$ has nonzero constant modulus on the unit circle. Such a polynomial is a monomial. If its smallest and largest nonzero exponents were different, their product would give a nonzero highest Fourier mode in the squared modulus.  Thus
\begin{equation}
\mathcal B(x)=b x^m,
\qquad b\neq0,
\qquad m\in\{0,1,2,3\}.
\label{eq:Bmonomial}
\end{equation}
Equation~\eqref{eq:linearEliminant} has no exceptional unit-circle root in this case and gives, for every coordinate,
\begin{equation}
y_j=\lambda x_j^k,
\qquad |\lambda|=1,
\qquad k=2-m\in\{-1,0,1,2\}.
\label{eq:powerrelation}
\end{equation}
The cases $k=0$ and $k=1$ make the third row proportional to the first or second row, contradicting invertibility.  Since $S=0$, the second row is a phased and permuted copy of $(1,\omega,\omega^2)$.  For $k=-1$ or $k=2$, the third row is the other nonconstant Fourier row.  Hence Eq.~\eqref{eq:dependentFourier} holds.

\paragraph{\texorpdfstring{Case $0<s<1$}{Case 0<s<1}.}

Write $r_-<r_+$ for the two roots of the quadratic $\kappa_s$. The discriminant and product of the roots of $\kappa_s$ are
\begin{equation}
\operatorname{disc}(\kappa_s)=(s^2-1)(s^2-9)>0,
\qquad r_-r_+=1.
\label{eq:kappadiscriminant}
\end{equation}
Thus $\kappa_s$ has two distinct positive reciprocal roots $r_-<r_+$.  The constant and leading coefficients on the right side of Eq.~\eqref{eq:QQsharp} are nonzero, so $Q$ has degree two and nonzero constant term.  Unique factorization permits exactly the following allocations:
\begin{equation}
Q\propto\kappa_s,
\qquad
Q\propto(x-r_-)^2,
\qquad
Q\propto(x-r_+)^2.
\label{eq:rootallocations}
\end{equation}

An allocation records how the two copies of each root on the right of Eq.~\eqref{eq:QQsharp} are distributed between $Q$ and $Q^\#$, up to a nonzero scalar factor.

If $Q\propto\kappa_s$, Eq.~\eqref{eq:linearEliminant} makes the third row proportional to the first when $\mathcal B=xQ$, or to the second when $\mathcal B=Q$.  The roots $r_\pm$ are off the unit circle, so no coordinate is exceptional.  Invertibility excludes this allocation.

For a squared-root allocation, let $r$ be the doubled root and define
\begin{equation}
m_r(z)=r\frac{z-r^{-1}}{z-r}.
\label{eq:mobius}
\end{equation}
For $|z|=1$ and real $r$, $|m_r(z)|=1$.  Equation \eqref{eq:linearEliminant} then gives one of the two relations
\begin{equation}
y_j=\lambda m_r(x_j),
\qquad\text{or}\qquad
y_j=\lambda x_jm_r(x_j),
\label{eq:twoMobiusAllocations}
\end{equation}
where $\lambda$ is independent of $j$ and $|\lambda|=1$.

The critical-root equation and root locations are
\begin{equation}
2sr^2-(s^2+3)r+2s=0,
\label{eq:criticalroot}
\end{equation}
\begin{equation}
r_-<s<1<1/s<r_+.
\label{eq:rootlocations}
\end{equation}
Indeed, $\kappa_s(0)=2s>0$ and $\kappa_s(s)=s(s^2-1)<0$, so $r_-\in(0,s)$, and reciprocality gives $r_+>1/s$. In particular, all denominators below are nonzero.

Let $u=x_1x_2x_3$.  Since the $x_j$ are unimodular and their sum is the real number $s$,
\begin{equation}
\sum_{i<j}x_ix_j=u\sum_j\overline{x_j}=us.
\label{eq:secondsymmetric}
\end{equation}
Thus the $x_j$ are the roots of
\begin{equation}
p_u(z)=z^3-sz^2+usz-u.
\label{eq:pu}
\end{equation}
Because $r$ is off the unit circle, $p_u(r)\neq0$.  The logarithmic derivative and Eq.~\eqref{eq:criticalroot} give
\begin{equation}
\sum_{j=1}^3\frac{1}{x_j-r}
=-\frac{p_u'(r)}{p_u(r)}
=\frac{s}{1-sr}.
\label{eq:logderivative}
\end{equation}

Using Eq.~\eqref{eq:logderivative} in the elementary partial-fraction expansions of Eq.~\eqref{eq:mobius}, we obtain
\begin{equation}
\sum_jm_r(x_j)=s,
\label{eq:mobiusSum0}
\end{equation}
\begin{equation}
\sum_j\frac{m_r(x_j)}{x_j}
=\frac{s(r-s)}{1-sr},
\label{eq:mobiusSumMinus}
\end{equation}
and
\begin{equation}
\sum_jx_jm_r(x_j)
=\frac{s(r^{-1}-s)}{1-s/r}.
\label{eq:mobiusSumPlus}
\end{equation}

For either $r=r_-$ or $r=r_+$, the numerator and denominator in each of Eqs.~\eqref{eq:mobiusSumMinus} and \eqref{eq:mobiusSumPlus} have opposite signs by Eq.~\eqref{eq:rootlocations}.  Both displayed quotients are therefore strictly negative.

In the first relation of Eq.~\eqref{eq:twoMobiusAllocations}, $T=\lambda s$ and $R=-\lambda q$ for some $q>0$.  In the second, $R=\lambda s$ and $T=-\lambda q'$ for some $q'>0$.  In either case, Eq.~\eqref{eq:gramSTR} yields
\begin{equation}
\RePart\tau_{\rm r}(X)<0.
\label{eq:mobiusnegative}
\end{equation}

\paragraph{\texorpdfstring{Case $1\leq s\leq3$ and the common-root repair}{Case 1<=s<=3 and the common-root repair}.}

Equation~\eqref{eq:kappadiscriminant} now shows that the roots of $\kappa_s$ lie on the unit circle; at $s=1$ and $s=3$ they merge into the double root $1$. As before, Eq.~\eqref{eq:QQsharp} forces $Q$ to have degree two and nonzero constant term.  A unit-circle root is fixed by the $\#$ operation, with its multiplicity unchanged.  Comparing root multiplicities on both sides of Eq.~\eqref{eq:QQsharp}, including the double-root endpoints, gives
\begin{equation}
Q=\gamma\kappa_s,
\qquad \gamma\neq0.
\label{eq:QgammaKappa}
\end{equation}

Let
\begin{equation}
Z_s=\{z\in\T:\kappa_s(z)=0\},
\qquad \lambda=-\delta/\gamma.
\label{eq:Zs}
\end{equation}
Thus $Z_s$ is the exceptional-root set: the one or two unit-circle zeros of the critical quadratic $\kappa_s$, at which the eliminant cannot be cancelled. If $\mathcal B=xQ$, Eq.~\eqref{eq:linearEliminant} becomes
\begin{equation}
x_j\kappa_s(x_j)(\delta+\gamma y_j)=0;
\label{eq:commonrootcase1}
\end{equation}
if $\mathcal B=Q$, it becomes
\begin{equation}
\kappa_s(x_j)(\delta x_j+\gamma y_j)=0.
\label{eq:commonrootcase2}
\end{equation}
Outside $Z_s$, these equations say respectively $y_j=\lambda$ and $y_j=\lambda x_j$.  A fiber member with no $x_j\in Z_s$ would therefore have two proportional rows.  Every fiber member is invertible, so every second-row triple in the fiber contains at least one member of $Z_s$.

The set $Z_s$ has at most two elements.  Cover the infinite set of second-row triples by the finitely many cells obtained by fixing a position $j$ and a root $z\in Z_s$ with $x_j=z$.  At least one cell contains infinitely many distinct triples.  In that cell the other two entries, say $a,b$, obey
\begin{equation}
a+b=s-z=:c_0.
\label{eq:abfixedsum}
\end{equation}
If $c_0\neq0$, unimodularity gives
\begin{equation}
ab\,\overline{c_0}=ab(\overline a+\overline b)=a+b=c_0,
\qquad
ab=\frac{c_0}{\overline{c_0}}.
\label{eq:abproduct}
\end{equation}
Thus $a,b$ are the two roots, in either order, of one fixed quadratic,
\begin{equation}
\xi^2-c_0\xi+\frac{c_0}{\overline{c_0}}=0.
\label{eq:abquadratic}
\end{equation}
There would be only finitely many triples in that cell, a contradiction. Consequently $c_0=0$, so $s=z$.  Because $s$ is real and at least one while $|z|=1$,
\begin{equation}
s=z=1,
\qquad Z_1=\{1\}.
\label{eq:szone}
\end{equation}

Every fiber member, including the original $X$, has an $x$-coordinate equal to $1$.  After a column permutation, the condition $x_1+x_2+x_3=S=1$ gives
\begin{equation}
(x_1,x_2,x_3)=(1,u,-u),
\qquad |u|=1.
\label{eq:oneuminusu}
\end{equation}
On the last two columns, the first two rows contain
\begin{equation}
\begin{pmatrix}1&1\\u&-u\end{pmatrix},
\label{eq:H2block}
\end{equation}
which is a $2\times2$ complex Hadamard matrix.  

{All branches have now been exhausted, proving Proposition~\ref{prop:main-infinite-fiber-trichotomy}.}
\end{proof}

\subsection{{Proof of Proposition~\ref{prop:main-corner-routing}}}
\label{sec:blockswap}

{Proposition~\ref{prop:main-infinite-fiber-trichotomy} controls one side fiber at a time. We now combine it with the complementary block Gram identities. The aim is to move to a corner where the horizontal and vertical fibers are finite simultaneously.} Throughout this section assume
\begin{equation}
[H]\notin\KK_6^{(3)}.
\label{eq:outsideK}
\end{equation}
Lemma~\ref{lem:singularcorner} and Proposition~\ref{prop:publishedinputs}(1) then imply that every $3\times3$ submatrix of $H$ is invertible.  In particular, all four displayed blocks $E,B,C,D$ are invertible.  Moreover, a candidate block with the same Gram matrix as one of these blocks is also invertible, because that common Gram matrix is positive definite.

Relabeling a fixed corner and dephasing the full matrix act on its candidate sets by permutations and diagonal phase multiplications.  These operations give bijections between the old and new normalized candidate sets.  They therefore preserve whether a candidate set is finite.  We use such relabelings below, but we never assert that unrelated corners have candidate sets of the same size.

\subproofheading{{Proposition~\ref{prop:main-corner-routing} (Corner routing; restated)}}
{\emph{Let $H$ be an order-six Hadamard matrix, written in the block form of Eq.~\eqref{eq:blockform}, and suppose that $[H]\notin\KK_6^{(3)}$. Then either all four displayed blocks are order-three Hadamard matrices, or an allowed row and column permutation, followed by dephasing, produces a finite-corner witness.}}

\begin{proof}
If both invertible-candidate sets at $E$ are nonempty and finite, the assertion already holds.  Under the standing assumption, the actual adjacent blocks are invertible and belong to these sets, so nonemptiness is automatic. Moreover, the fixed positive-definite Gram matrices imply that the complete candidate sets equal their invertible subsets. Suppose first that the horizontal set $\mathcal B_E$ is infinite.  Since
\begin{equation}
BB^{\dagger}=6I_3-EE^{\dagger},
\label{eq:Bgramcomplement}
\end{equation}
the uniquely first-row-normalized copy of the actual block $B$ lies in an infinite normalized fixed-Gram fiber.  Proposition~\ref{prop:main-infinite-fiber-trichotomy} applies.

The $2\times2$ alternative in Proposition~\ref{prop:main-infinite-fiber-trichotomy} is impossible: the same submatrix would occur inside $H$, contrary to Eq.~\eqref{eq:outsideK} and Proposition~\ref{prop:publishedinputs}(1).  If
\begin{equation}
BB^{\dagger}=3I_3,
\label{eq:Borthogonalbranch}
\end{equation}
then Eq.~\eqref{eq:Bgramcomplement} also gives $EE^{\dagger}=3I_3$.  Since $E$ and $B$ are square, both are order-three complex Hadamard matrices, and the block equations then show that $C$ and $D$ are as well. This is the first alternative of the proposition. In the remaining branch,
\begin{equation}
\RePart\tau_{\rm r}(B)<0.
\label{eq:taurbnegative}
\end{equation}

Interchange the two column triples.  Before the subsequent dephasing, the new block display is
\begin{equation}
\widetilde H=\begin{pmatrix}B&E\\D&C\end{pmatrix}.
\label{eq:columntripleswap}
\end{equation}
For each off-diagonal position $i\neq j$, Eq.~\eqref{eq:Bgramcomplement} gives $(EE^{\dagger})_{ij}=-(BB^{\dagger})_{ij}$.  Multiplying the three cyclic off-diagonal entries therefore yields
\begin{equation}
\tau_{\rm r}(E)=-\tau_{\rm r}(B),
\qquad
\RePart\tau_{\rm r}(E)>0.
\label{eq:taurepositive}
\end{equation}

The horizontal candidate set at the new corner $B$ has $E$ as a member. If that set were infinite, Proposition~\ref{prop:main-infinite-fiber-trichotomy} applied to $E$ would give one of three contradictions: $EE^{\dagger}=3I_3$ would force Eq.~\eqref{eq:Borthogonalbranch}; a negative real invariant would contradict Eq.~\eqref{eq:taurepositive}; and a $2\times2$ Hadamard submatrix would contradict Eq.~\eqref{eq:outsideK}.  Hence this horizontal set is finite.

It remains to prove finiteness of the vertical candidate set at the same new corner $B$.  Lemma~\ref{lem:rowcolumn} and Eq.~\eqref{eq:taurbnegative} give
\begin{equation}
\RePart\tau_{\rm c}(B)<0.
\label{eq:taucbnegative}
\end{equation}
The lower block $D$ satisfies $B^{\dagger}B+D^{\dagger}D=6I_3$.  The same three-minus-sign calculation therefore gives
\begin{equation}
\RePart\tau_{\rm c}(D)>0.
\label{eq:taucdpositive}
\end{equation}

If the new vertical candidate set were infinite, apply Proposition~\ref{prop:main-infinite-fiber-trichotomy} to the first-row-normalized adjoint $D^{\dagger}$. Its orthogonal alternative would give $D^{\dagger}D=3I_3$ and then $B^{\dagger}B=3I_3$, contrary to the branch under consideration.  Its negative-invariant alternative contradicts Eq.~\eqref{eq:taucdpositive}, because $\tau_{\rm r}(D^{\dagger})=\tau_{\rm c}(D)$.  Its $2\times2$ alternative again puts such a submatrix in $H$.  Thus the vertical set is finite too.

The only remaining possibility is that the original horizontal set is finite and the original vertical set is infinite.  Apply the argument just given to $H^{\dagger}$, which exchanges horizontal and vertical roles.  After taking adjoints back, it either shows that all four original blocks are order-three Hadamard matrices or supplies a corner of $H$ with both sets finite. These are exactly the two conclusions of the proposition, so the proof is complete.
\end{proof}

\subsection{{Proof of Proposition~\ref{prop:main-fourier-block}}}
\label{sec:fourierblock}

It remains to treat the case in which one $3\times3$ block is itself Hadamard. We show that this places the full matrix in the Karlsson or Tao sector.
Every order-three complex Hadamard matrix is equivalent to
\begin{equation}
F=F_3=\begin{pmatrix}
1&1&1\\
1&\omega&\omega^2\\
1&\omega^2&\omega
\end{pmatrix}.
\label{eq:F3}
\end{equation}
Indeed, after dephasing, each noninitial row contains three unit numbers with sum zero.  Three unit numbers sum to zero only when, after a common phase and a permutation, they are $1,\omega,\omega^2$, and orthogonality of the two noninitial rows fixes their relative order. This proves the assertion directly.

\subproofheading{{Proposition~\ref{prop:main-fourier-block} (Fourier-block alternative; restated)}}
{\emph{If one block in a block partition of an order-six complex Hadamard matrix is an order-three complex Hadamard matrix, then}}
\begin{equation}
{H\sim S_6^{(0)}
\qquad\text{or}\qquad
[H]\in\KK_6^{(3)}.}
\label{eq:fourierblockalternative}
\end{equation}

\begin{proof}
Moving the Hadamard block to the upper left and using the block equations shows that all four blocks are order-three complex Hadamard matrices.  By block-preserving equivalence operations and the uniqueness of $F_3$ just proved, we may write
\begin{equation}
E=F,\qquad B=D_vF,\qquad C=FD_w,
\label{eq:fouriernormalform}
\end{equation}
where
\begin{equation}
D_v=\operatorname{diag}(1,p,q),\qquad
D_w=\operatorname{diag}(1,r,s),
\label{eq:DvDw}
\end{equation}
with $p,q,r,s\in\T$.

The three normalizations can be imposed simultaneously.  First use operations on the top three rows and left three columns to set $E=F$. Operations on the right three columns then normalize the first row and column ordering of $B$. The order-three uniqueness proved above leaves only one phase on each row, giving $B=D_vF$. Any required permutation of the top three rows can be compensated by a monomial operation on the left columns that restores $E=F$.  Finally, operations on the bottom three rows do not change $E$ or $B$ and reduce $C$ to $FD_w$.  Every permutation of three Fourier columns is an affine permutation of their labels and can therefore be absorbed into a monomial row operation.  Common scalar phases are absorbed into $D_v$ and $D_w$, whose first diagonal entries can consequently be fixed to one.

The off-diagonal column-orthogonality equation $E^{\dagger}B+C^{\dagger}D=0$ uniquely determines
\begin{equation}
D=-\frac{1}{3}FD_wF^{\dagger}D_vF.
\label{eq:Dfourierformula}
\end{equation}
Thus it remains only to determine when every entry on the right has modulus one.

Write
\begin{equation}
v=(1,p,q),\qquad w=(1,r,s).
\label{eq:vwhat}
\end{equation}
All subscripts in this calculation are interpreted modulo three because the coordinates form a three-cycle.  The following equation defines the unnormalized three-point Fourier transform:
\begin{equation}
\widehat v_k=\sum_{\ell=0}^{2}v_{\ell}\omega^{k\ell}.
\label{eq:vhat-transform}
\end{equation}
Define
\begin{equation}
A_k=\widehat v_k\overline{\widehat v_{k-1}},
\qquad
\alpha_m=w_m\overline{w_{m+1}}
=(\overline r,r\overline s,s)_m.
\label{eq:Aalpha}
\end{equation}
For a fixed column $k$ in Eq.~\eqref{eq:Dfourierformula}, direct matrix multiplication shows that the three entries before the factor $-1/3$ are the unnormalized Fourier transform of
\begin{equation}
z_m^{(k)}=w_m\widehat v_{k-m}.
\label{eq:zk}
\end{equation}
{For each fixed $k$, apply the same three-point Fourier transform to the triple $z^{(k)}$}
\begin{equation}
{\widehat z_j^{(k)}=\sum_{m=0}^{2}z_m^{(k)}\omega^{jm}.}
\label{eq:zhat-transform}
\end{equation}

For any triple $z$, the three values $|\widehat z_j|^2$ are equal exactly when both nonzero cyclic autocorrelations vanish.  In the present case, unnormalized Parseval identities give
\begin{equation}
\sum_{j=0}^{2}|\widehat z_j^{(k)}|^2
=3\sum_{m=0}^{2}|z_m^{(k)}|^2
=3\sum_{m=0}^{2}|\widehat v_m|^2=27.
\label{eq:parseval27}
\end{equation}
Consequently equal Fourier magnitudes mean that each magnitude is $3$, which is exactly what is required after multiplication by $1/3$ in Eq.~\eqref{eq:Dfourierformula}.

The shift-one autocorrelation is
\begin{equation}
\sum_{m=0}^{2}z_m^{(k)}\overline{z_{m+1}^{(k)}}
=\sum_{m=0}^{2}\alpha_m A_{k-m}.
\label{eq:autoconvolution}
\end{equation}
Indeed, substituting $z_m^{(k)}=w_m\widehat v_{k-m}$ into the left side produces $w_m\overline{w_{m+1}}=\alpha_m$ and $\widehat v_{k-m}\overline{\widehat v_{k-m-1}}=A_{k-m}$. The shift-two autocorrelation is the conjugate of a cyclic reindexing of the same equations.  Hence all entries of $D$ are unimodular exactly when the cyclic convolution, defined by $(\alpha*A)_k=\sum_{m=0}^{2}\alpha_mA_{k-m}$, satisfies
\begin{equation}
\alpha*A=0.
\label{eq:alphaconvolutionA}
\end{equation}

Taking the three-point Fourier transform converts convolution into coordinatewise multiplication:
\begin{equation}
\widehat\alpha_j\widehat A_j=0,\qquad j=0,1,2.
\label{eq:fourierproduct}
\end{equation}
Furthermore,
\begin{equation}
\widehat A_0=\sum_{k=0}^{2}A_k
=3\sum_{m=0}^{2}|v_m|^2\omega^m
=3(1+\omega+\omega^2)=0.
\label{eq:Ahat0}
\end{equation}
Only the two equations
\begin{equation}
\widehat\alpha_1\widehat A_1=0,
\qquad
\widehat\alpha_2\widehat A_2=0
\label{eq:twoproducts}
\end{equation}
remain.

If $\widehat A_1=\widehat A_2=0$, then Eq.~\eqref{eq:Ahat0} implies $A=0$. The three cyclic products in Eq.~\eqref{eq:Aalpha} then allow at most one $\widehat v_k$ to be nonzero.  Inverting the Fourier transform shows that $v$ is a Fourier row.  A permutation of the right column triple makes $B=F$, and Eq.~\eqref{eq:Dfourierformula} then gives $D=-C$.  The resulting $2\times2$ Hadamard submatrices place $[H]$ in $\KK_6^{(3)}$.

If $\widehat\alpha_1=\widehat\alpha_2=0$, then $\alpha$ is constant.  Since $\alpha_0\alpha_1\alpha_2=1$, that constant is a cubic root of unity.  It follows that $w$ is a Fourier row.  A bottom-row permutation gives $C=F$ and $D=-B$, again placing $[H]$ in $\KK_6^{(3)}$.

The only other allocations of the zero factors in Eq.~\eqref{eq:twoproducts} are the two mixed cases.  They are interchanged by complex conjugation and index reversal, so take
\begin{equation}
\widehat A_1=0,\qquad \widehat\alpha_2=0.
\label{eq:mixedallocation}
\end{equation}
Up to order, the two nonconstant Fourier coefficients of $A$ are
\begin{equation}
3\left(q+\frac{\omega}{p}+\frac{\omega^2p}{q}\right),
\qquad
3\left(p+\frac{\omega q}{p}+\frac{\omega^2}{q}\right).
\label{eq:AhatExplicit}
\end{equation}
To derive these coefficients, insert $A_k=\widehat v_k\overline{\widehat v_{k-1}}$ into $\widehat A_j=\sum_kA_k\omega^{jk}$, expand both Fourier sums, and use $\sum_k\omega^{k(a-b+j)}=3$ when $b=a+j\pmod3$, and zero otherwise. Each parenthesis is a sum of three unit numbers whose product is one. Such a sum vanishes only when the summands are a rotated copy of $\{1,\omega,\omega^2\}$. The product condition makes the rotation a cubic root.  Thus the vanishing coefficient in Eq.~\eqref{eq:mixedallocation} forces
\begin{equation}
p,q\in\{1,\omega,\omega^2\}.
\label{eq:pqcubic}
\end{equation}

The two nonconstant Fourier coefficients of $\alpha$ are
\begin{equation}
\overline r+\omega r\overline s+\omega^2s,
\qquad
\overline r+\omega^2r\overline s+\omega s.
\label{eq:alphahatExplicit}
\end{equation}
The same three-unit-number argument gives
\begin{equation}
r,s\in\{1,\omega,\omega^2\}.
\label{eq:rscubic}
\end{equation}
The other mixed allocation gives the identical conclusion.

Equations~\eqref{eq:pqcubic} and \eqref{eq:rscubic} show that the dephased $H$ has both a noninitial row and a noninitial column made entirely of cubic roots of unity.  {Proposition~\ref{prop:publishedinputs}(2) therefore gives the conclusion of Proposition~\ref{prop:main-fourier-block}.} The two pure allocations and two mixed allocations exhaust Eq.~\eqref{eq:twoproducts}, so the proposition is proved.
\end{proof}

\subsection{{Proof of Proposition~\ref{thm:finitecornerintro}}}
\label{sec:directcompletion}
\label{sec:completion}

{Proposition~\ref{prop:main-infinite-fiber-trichotomy} controls infinite side fibers, Proposition~\ref{prop:main-corner-routing} selects a finite corner outside the Fourier-block case, and Proposition~\ref{prop:main-fourier-block} identifies the remaining case as Karlsson or Tao. We now combine these results.}

\subproofheading{{Proposition~\ref{thm:finitecornerintro} (Absence of every finite-corner witness forces Karlsson or Tao; restated)}}
{\emph{If an order-six complex Hadamard matrix $H$ has no equivalent dephased representative with a finite-corner witness, then}}
\begin{equation}
{[H]\in\KK_6^{(3)}\qquad\text{or}\qquad[H]\in\mathcal T_6.}
\label{eq:supp-failed-search-routing}
\end{equation}

\begin{proof}[Proof of Proposition~\ref{thm:finitecornerintro}]
{We prove the contrapositive. Let $H$ be an order-six complex Hadamard matrix satisfying $H\not\sim S_6^{(0)}$ and $[H]\notin\KK_6^{(3)}$. Choose any block partition. By Proposition~\ref{prop:main-corner-routing}, either a block can be moved to the upper left and dephased so that both normalized invertible-candidate sets are nonempty and finite, or all four blocks are order-three complex Hadamard matrices. The second possibility is excluded by Proposition~\ref{prop:main-fourier-block} under the standing assumptions. In the first possibility, all $3\times3$ submatrices are invertible by Lemma~\ref{lem:singularcorner} and Proposition~\ref{prop:publishedinputs}(1). In particular, the two complementary blocks are invertible. The selected representative therefore has a finite-corner witness. Thus every matrix that is neither Karlsson nor Tao has such a witness, which is equivalent to the conclusion of Proposition~\ref{thm:finitecornerintro}.}
\end{proof}

\subsubsection{{Proof of Proposition~\ref{prop:directcompletion}}}

\begin{proposition}[Direct finite completion]
\label{prop:directcompletion}
Fix a dephased corner $E$ for which $\mathcal B_E$ and $\mathcal C_E$ are finite and each contains an invertible candidate.  Because every member of a candidate set has the same Gram matrix, every candidate is then invertible. Every normalized Hadamard completion with invertible complementary blocks occurs in the finite list
\begin{equation}
(B,C)\in\mathcal B_E\times\mathcal C_E,
\qquad
D=-CE^{\dagger}(B^{-1})^{\dagger}.
\label{eq:completionformula}
\end{equation}
Conversely, a candidate pair in Eq.~\eqref{eq:completionformula} produces a complex Hadamard matrix exactly when every entry of the resulting $D$ has modulus one.
\end{proposition}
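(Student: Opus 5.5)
The plan is to prove both directions by block algebra built on the six identities in Eq.~\eqref{eq:blockorthogonality}, together with the invertibility of the candidates.

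\emph{Every candidate is invertible.} Every member of $\mathcal B_E$ has Gram matrix $6I_3-EE^{\dagger}$. For an invertible member $B_0$ this equals $B_0B_0^{\dagger}$, which is positive definite. Hence every $B\in\mathcal B_E$ satisfies $|\det B|^2=\det(6I_3-EE^{\dagger})>0$. The same argument applies to $\mathcal C_E$ with $C^{\dagger}C=6I_3-E^{\dagger}E$.

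\emph{Exhaustiveness.} Let
\[
H=\begin{pmatrix}E&B\\C&D\end{pmatrix}
\]
be any dephased Hadamard completion of $E$ with $B,C$ invertible.
\begin{itemize}
\item The first and third lines of Eq.~\eqref{eq:blockorthogonality}, together with dephasing of the first row and column of $H$, put $B\in\mathcal B_E$ and $C\in\mathcal C_E$. Both sets are finite, so the pair $(B,C)$ lies in the finite list.
\item The identity $EC^{\dagger}+BD^{\dagger}=0$ gives $BD^{\dagger}=-EC^{\dagger}$. Since $B$ is invertible, $D^{\dagger}=-B^{-1}EC^{\dagger}$, and taking adjoints gives $D=-CE^{\dagger}(B^{-1})^{\dagger}$.
\end{itemize}
So $D$ is uniquely forced by Eq.~\eqref{eq:completionformula}, and $H$ appears in the list.

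\emph{Converse.} Fix a listed pair $(B,C)$ and set $D=-CE^{\dagger}(B^{-1})^{\dagger}$. If $H(E;B,C)$ is Hadamard, its entries are unimodular, in particular those of $D$; this direction is trivial. For the other direction, assume $|D_{ij}|=1$ for all $i,j$. It remains to show $H^{\dagger}H=6I_6$, which is the claim of Eq.~\eqref{eq:maincompletionidentities} in Proposition~\ref{prop:algorithmvalid}. Since $H$ is square, $H^{\dagger}H=6I_6$ gives $HH^{\dagger}=6I_6$ as well. The blocks of $H^{\dagger}H$ are checked as follows.
\begin{itemize}
\item \emph{Upper-left block.} $E^{\dagger}E+C^{\dagger}C=6I_3$ holds by the definition of $\mathcal C_E$.
\item \emph{Off-diagonal block.} Compute $C^{\dagger}D=-C^{\dagger}CE^{\dagger}(B^{\dagger})^{-1}=-(6I_3-E^{\dagger}E)E^{\dagger}(B^{\dagger})^{-1}$. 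Factor this as $-E^{\dagger}(6I_3-EE^{\dagger})(B^{\dagger})^{-1}$ and substitute $6I_3-EE^{\dagger}=BB^{\dagger}$. This yields $C^{\dagger}D=-E^{\dagger}B$, so $E^{\dagger}B+C^{\dagger}D=0$.
\item \emph{Lower-right block.} Write $D^{\dagger}D=B^{-1}EC^{\dagger}CE^{\dagger}(B^{\dagger})^{-1}$. Substitute $C^{\dagger}C=6I_3-E^{\dagger}E$ and use $E(6I_3-E^{\dagger}E)E^{\dagger}=(6I_3-EE^{\dagger})EE^{\dagger}=BB^{\dagger}(6I_3-BB^{\dagger})$. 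Conjugating by $B^{-1}$ then gives $D^{\dagger}D=6I_3-B^{\dagger}B$, so $B^{\dagger}B+D^{\dagger}D=6I_3$.
\end{itemize}
Together these give $H^{\dagger}H=6I_6$, and unimodularity of every entry makes $H$ a complex Hadamard matrix.

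The only delicate step is the lower-right block, specifically the commutation $E(6I_3-E^{\dagger}E)E^{\dagger}=(6I_3-EE^{\dagger})EE^{\dagger}$. It is best handled by noting that both sides equal $6EE^{\dagger}-(EE^{\dagger})^2$, so the identity is exact.
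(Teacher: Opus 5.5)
Your proposal is correct and follows the paper's proof essentially step for step: you solve $EC^{\dagger}+BD^{\dagger}=0$ for the forced block, then verify $C^{\dagger}D=-E^{\dagger}B$ and $D^{\dagger}D=6I_3-B^{\dagger}B$ via the same $6X-X^2$ identity with $X=EE^{\dagger}$. Your explicit checks that every candidate is invertible (from the positive-definite common Gram matrix) and that the actual blocks lie in $\mathcal B_E\times\mathcal C_E$ are small additions that the paper leaves implicit.
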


This is Sz\"oll\H{o}si's fixed-corner embedding criterion, including the forced block $D=-CE^{\dagger}(B^{-1})^{\dagger}$ \MainCiteSzollosi. We restate and reprove it here because the completed procedure of Definition~\ref{def:completed-output} requires both directions for the complete normalized candidate sets rather than for a selected generic branch.

\begin{proof}
Any completion satisfies the off-diagonal block equation $EC^{\dagger}+BD^{\dagger}=0$.  Since $B$ is invertible, solving this equation for $D$ gives Eq.~\eqref{eq:completionformula}; hence the formula is necessary and unique.

Conversely, choose $B\in\mathcal B_E$ and $C\in\mathcal C_E$ invertible and define $D$ by Eq.~\eqref{eq:completionformula}.  The candidate equations give
\begin{equation}
\begin{aligned}
C^{\dagger}D
&=-C^{\dagger}CE^{\dagger}(B^{-1})^{\dagger}\\
&=-(6I_3-E^{\dagger}E)E^{\dagger}(B^{-1})^{\dagger}\\
&=-E^{\dagger}(6I_3-EE^{\dagger})(B^{-1})^{\dagger}
=-E^{\dagger}B.
\end{aligned}
\label{eq:crossblockcompletion}
\end{equation}

Set $X=EE^{\dagger}$.  Since $BB^{\dagger}=6I_3-X$, direct expansion gives
\begin{equation}
\begin{aligned}
E(6I_3-E^{\dagger}E)E^{\dagger}
&=6X-X^2,\\
B(6I_3-B^{\dagger}B)B^{\dagger}
&=6(6I_3-X)-(6I_3-X)^2\\
&=6X-X^2.
\end{aligned}
\label{eq:completionidentity}
\end{equation}
Therefore
\begin{equation}
\begin{aligned}
D^{\dagger}D
&=B^{-1}E(6I_3-E^{\dagger}E)E^{\dagger}(B^{-1})^{\dagger}\\
&=B^{-1}B(6I_3-B^{\dagger}B)B^{\dagger}(B^{-1})^{\dagger}\\
&=6I_3-B^{\dagger}B.
\end{aligned}
\label{eq:Dcolumnnorm}
\end{equation}

Equations~\eqref{eq:candidatesets}, \eqref{eq:crossblockcompletion}, and \eqref{eq:Dcolumnnorm}, together with the adjoint of Eq.~\eqref{eq:crossblockcompletion}, give
\begin{equation}
\begin{pmatrix}E&B\\C&D\end{pmatrix}^{\!\dagger}
\begin{pmatrix}E&B\\C&D\end{pmatrix}=6I_6.
\label{eq:completedunitarity}
\end{equation}
The entries of $E,B,C$ already have modulus one.  If the entries of $D$ do too, Eq.~\eqref{eq:completedunitarity} is exactly the complex Hadamard condition.  If any entry of $D$ fails to have modulus one, the block matrix fails the entrywise requirement in Definition~\ref{def:hadamard}.  This proves both directions.

\end{proof}

\subsection{{Proof of Proposition~\ref{prop:algorithmvalid}}}
\subproofheading{{Proposition~\ref{prop:algorithmvalid} (Soundness of the retained output; restated)}}
{\emph{Every matrix retained by the branch-complete finite-dilation procedure is a complex Hadamard matrix.}}

\begin{proof}
{A retained pair consists of normalized candidates $B\in\mathcal B_E$ and $C\in\mathcal C_E$ with the forced block $D=-CE^\dagger(B^{-1})^\dagger$. Proposition~\ref{prop:directcompletion} shows that the resulting block matrix is unitary up to the factor $6$. The retention test requires every entry of $D$ to have modulus one, while the entries of $E,B,C$ are phases by construction. The retained matrix is therefore complex Hadamard.}
\end{proof}

{Proposition~\ref{prop:directcompletion} therefore gives a finite reconstruction procedure from any witnessing corner: solve the two candidate systems, construct $D$ from Eq.~\eqref{eq:completionformula}, and retain the pairs for which $D$ is unimodular. This is the output rule in Definition~\ref{def:completed-output}.}

\subproofheading{{Oriented leading coefficients certify finiteness}}
\begin{lemma}[Oriented leading coefficients certify finiteness]
\label{lem:supp-oriented-leading-finiteness}
Fix a normalized physical side fiber and form its fundamental sextic for each of the two orders of the noninitial rows. If both sextics have nonzero leading coefficient, then the side fiber is finite. Consequently, if the two horizontal and two vertical oriented leading coefficients are nonzero and the actual adjacent blocks are invertible, the selected corner is a finite-corner witness.
\end{lemma}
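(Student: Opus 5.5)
We argue by contraposition: if the side fiber is infinite, then at least one oriented sextic vanishes identically in its leading coefficient. Work with the horizontal fiber, written in the coordinates of Eq.~\eqref{eq:main-fixed-gram-coordinate-form}. Exactly as in the proof of Proposition~\ref{prop:main-infinite-fiber-trichotomy}, an infinite fiber has an infinite set of second-row triples, possibly only after interchanging the two noninitial rows. This interchange is precisely the change of orientation. In that orientation some coordinate $x=x_j$ takes infinitely many unit-circle values, and every fiber member satisfies the division-free companion relation $\mathcal A(x)+\mathcal B(x)y=0$.

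For each such value of $x$, the pair $(x,y)$ satisfies the two parent quadratics, and $|y|=1$. Two subcases arise.
\begin{itemize}
\item If $\mathcal B(x)\neq0$, then $|\mathcal A(x)|=|\mathcal B(x)|$.
\item If $\mathcal B(x)=0$, then also $\mathcal A(x)=0$.
\end{itemize}
In either subcase the Laurent polynomial $x^3\bigl(\mathcal A\mathcal A^\#-\mathcal B\mathcal B^\#\bigr)$ vanishes at $x$. This polynomial is the fundamental sextic $\Phi$ of that orientation. Having infinitely many roots, it must be the zero polynomial, and in particular its leading coefficient $c_6$ is zero. This contradicts the hypothesis for that orientation, so the horizontal fiber is finite.

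The vertical fiber is handled the same way after transposing, which turns the column-Gram condition into a row-Gram condition, as in the Tao argument of Proposition~\ref{prop:tao-finite-corner}. For the consequence, recall that the candidate sets share a Gram matrix. If the actual blocks $B$ and $C$ are invertible, that Gram matrix is positive definite, so every candidate is invertible. The sets $\mathcal B_E^\times=\mathcal B_E$ and $\mathcal C_E^\times=\mathcal C_E$ are then finite, and they are nonempty because they contain the actual blocks. This matches Definition~\ref{def:finitecorner}.

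The main obstacle is making sure that the sextic used in the argument is literally the one whose leading coefficient is assumed nonzero. This requires two checks.
\begin{itemize}
\item The paper's oriented sextic must be defined through the same uncancelled $\mathcal A,\mathcal B$ as in Eq.~\eqref{eq:ABeliminants}, with the normalization $x^3$. I would verify this from the definition of $\Phi_h$ in Eq.~\eqref{eq:horizontal-fundamental-sextic}.
\item The ``vanishes identically'' step needs care on the branch where $\mathcal B\equiv0$. There $\Phi=x^3\mathcal A\mathcal A^\#$, and $\mathcal A\equiv0$ is forced on infinitely many points, so $\Phi\equiv0$ still holds.
\end{itemize}
Hence no branch escapes, and the contrapositive holds in every case.
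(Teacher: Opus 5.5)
Your proposal is correct and follows the paper's own argument. An infinite fiber forces an infinite coordinate projection in one of the two row orders. The uncancelled companion relation together with $|y|=1$ then makes that orientation's cleared sextic $x^3(\mathcal A\mathcal A^\#-\mathcal B\mathcal B^\#)$ vanish at infinitely many unit-circle points, hence identically, so its leading coefficient is zero, and transposition handles the vertical side.

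Your separate treatment of the $\mathcal B(x)=0$ subcase is harmless but not needed, since $|\mathcal A(x)|=|\mathcal B(x)|\,|y|$ covers it automatically. Your bookkeeping of nonemptiness and invertibility through the actual blocks spells out what the paper's one-line proof leaves implicit.
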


\begin{proof}
{An infinite fiber has an infinite ordered coordinate projection, forcing the cleared Haagerup identity in Eqs.~\eqref{eq:Phi}--\eqref{eq:Gamma}, and hence its corresponding oriented sextic leading coefficient, to vanish.  Applying this to both row orders and to the transpose contradicts the hypotheses.}
\end{proof}

\subsection{{Proof of Proposition~\ref{prop:karlsson-finite-corner}}}
\label{sec:supp-karlsson-witnesses}

The preceding argument supplies finite-corner witnesses outside the named exceptional sectors. To complete the classification proof, we now establish the corresponding witness theorem for every Karlsson matrix.

The simultaneous-degeneracy boundary is the affine Fourier family
\begin{equation}
F_6^{(2)}(z_1,z_2)=
\begin{pmatrix}
1&1&1&1&1&1\\
1&-1&z_1&-z_1&z_2&-z_2\\
1&1&\omega&\omega&\omega^2&\omega^2\\
1&-1&\omega z_1&-\omega z_1&\omega^2z_2&-\omega^2z_2\\
1&1&\omega^2&\omega^2&\omega&\omega\\
1&-1&\omega^2z_1&-\omega^2z_1&\omega z_2&-\omega z_2
\end{pmatrix},
\label{eq:supp-fourier-family}
\end{equation}
together with its transpose. It is $H_2$-reducible.

\subproofheading{{Proposition~\ref{prop:karlsson-finite-corner} (Finite-corner witnesses for the Karlsson family; restated; computer-assisted)}}
{\emph{Every equivalence class in $\KK_6^{(3)}$ has a finite-corner witness. Consequently,}}
\begin{equation}
{\KK_6^{(3)}\subset\FCAtlas_6^{\mathrm{fc}}.}
\label{eq:supp-full-karlsson-containment}
\end{equation}

{The resultant eliminations, affine-Fourier boundary calculation, and Bernstein-positivity argument used below are checked exactly in Lean; see the classification and Karlsson-coverage certificate group in Table~\ref{tab:supp-certificate-index}.}

\begin{proof}
Starting from Karlsson's complete block parametrization \cite{Karlsson2011H2,Karlsson2011ThreeParameter}, our division-free seam analysis uses the following canonical nondegenerate chart:
\begin{equation}
u=\cos\theta,\qquad v=\sin\theta,\qquad p=e^{\I\phi},
\qquad
0\leq\theta,\phi<\pi.
\label{eq:supp-karlsson-canonical}
\end{equation}
Thus $u^2+v^2=1$, $v\geq0$, and $\operatorname{Im}p\geq0$. The two M\"obius maps have distinct single-degeneracy curves.  Their simultaneous-degeneracy locus gives the affine-Fourier family in Eq.~\eqref{eq:supp-fourier-family} and its transpose.  The single-degeneracy curves will instead be covered below by reversing the Karlsson half-angle orientation.

We first cover the simultaneous-degeneracy boundary. In zero-based indices, use the six row/column corners
\begin{equation}
\begin{gathered}
(012;024),\ (013;024),\ (025;024),\\
(012;025),\ (013;025),\ (025;025).
\end{gathered}
\label{eq:supp-fourier-six-corners}
\end{equation}
For each corner let $W_\nu(z_1,z_2)$ be the product of its three block determinants and four oriented leading fundamental coefficients. Exact elimination produces all $245$ pairwise resultants among their irreducible factors. Reciprocal unit-circle reduction leaves $25$ irreducible first-phase conditions. Quotient-ring tests eliminate every compatible second phase, and the $18$ residual second-phase branches are excluded by exact reciprocal unit-circle resultants. Hence
\begin{equation}
\{(z_1,z_2)\in\T^2:W_1=\cdots=W_6=0\}=\varnothing.
\label{eq:supp-fourier-no-common-zero}
\end{equation}
At least one of the six corners therefore has nonzero determinant and oriented-leading-coefficient factors at every affine-Fourier point. Lemma~\ref{lem:supp-oriented-leading-finiteness} makes it a finite-corner witness. Transposition gives the same result for the transposed family.

It remains to cover the nondegenerate chart.  Use the single mixed/mixed corner
\begin{equation}
I=J=\{1,3,4\}.
\label{eq:supp-karlsson-fixed-corner}
\end{equation}
After cross-ratio dephasing, its eleven sufficient witness factors are three block determinants and the two elementary factors in each of four oriented leading coefficients.  Modulo $u^2+v^2=1$, the determinants are unit monomials times
\begin{equation}
3+\I\sqrt3(u+pv),\qquad
3+\I\sqrt3(pv-u),\qquad
p+\frac{\I}{\sqrt3}(pu-v),
\label{eq:supp-karlsson-determinants}
\end{equation}
up to nonzero constants and unit monomials. They cannot vanish because $|u|+|v|\leq\sqrt2<\sqrt3$. Four further factors are the repeated squares
\begin{equation}
-3\left(pv-u+\frac{\I}{\sqrt3}\right)^2,\qquad
-3\left(pu-v+\frac{\I p}{\sqrt3}\right)^2.
\label{eq:supp-karlsson-squares}
\end{equation}
Either vanishing equation would give $v\operatorname{Im}p=-1/\sqrt3$, impossible on the canonical domain.

The remaining four factors are quadratic in the relevant unit phase.  Put
\begin{equation}
t=\tan(\theta/2),\qquad
M_+(t,p)=(1+t^2)(p^2-1)
 +\I\sqrt3(1-t^2)(p^2+1).
\label{eq:supp-karlsson-Mplus}
\end{equation}
Define the factor in the opposite orientation by
\begin{equation}
M_-(t,p)=(1+t^2)(p^2-1)
 -\I\sqrt3(1-t^2)(p^2+1).
\label{eq:supp-karlsson-Mminus}
\end{equation}
A unit root would be common to the quadratic and its reciprocal-conjugate.  Their four exact resultants are, up to nonzero constants and unit monomials,
\begin{equation}
\begin{split}
&t^2M_+(t,p)^2R(t,p),\quad t^2M_+(t,p)^2R(t,p),\\
&t^2p^4M_+(t,p)^2R(t,p),\quad
t^2p^4M_+(t,p)^2R(t,p).
\end{split}
\label{eq:supp-karlsson-resultants}
\end{equation}
These factors arise by eliminating the unit-phase variable between each hard quadratic and its reciprocal-conjugate. The explicit factors $t$ and $M_+$ record the known boundary degeneracies. After removing them, all four orientations leave the same residual polynomial $R(t,p)$. The factor $t=0$ is on the already covered Fourier boundary.  The equation $M_+=0$ by itself is not a Fourier seam: it is one of the two singly degenerate M\"obius curves.  For $t>0$, however, the reciprocal half-angle representative is completely explicit.  To fix conventions, put
\begin{equation}
\begin{gathered}
 u_t=\frac{1-t^2}{1+t^2},\qquad v_t=\frac{2t}{1+t^2},\qquad
 F_2=\begin{pmatrix}1&1\\1&-1\end{pmatrix},\\
 \Lambda(t,p)=\begin{pmatrix}u_t&pv_t\\v_t/p&-u_t\end{pmatrix},\qquad
 A(t,p)=F_2\left(-\frac12I_2+\frac{\I\sqrt3}{2}\Lambda(t,p)\right),
 \qquad B(t,p)=-F_2-A(t,p),\\
 Z_R(z)=\begin{pmatrix}1&1\\z&-z\end{pmatrix},\qquad
 Z_L(z)=\begin{pmatrix}1&z\\1&-z\end{pmatrix}.
\end{gathered}
\label{eq:supp-karlsson-block-data}
\end{equation}
In the standard $2\times2$-block order, the raw Karlsson matrix used here is
\begin{equation}
 H_{\mathrm K}(t,p,z_1,z_2,z_3,z_4)=
 \begin{pmatrix}
 F_2&Z_R(z_1)&Z_R(z_2)\\
 Z_L(z_3)&\frac12Z_L(z_3)A(t,p)Z_R(z_1)
             &\frac12Z_L(z_3)B(t,p)Z_R(z_2)\\
 Z_L(z_4)&\frac12Z_L(z_4)B(t,p)Z_R(z_1)
             &\frac12Z_L(z_4)A(t,p)Z_R(z_2)
 \end{pmatrix}.
\label{eq:supp-karlsson-raw-matrix}
\end{equation}
Here $(z_1,z_2,z_3,z_4)\in\T^4$ satisfy Karlsson's M\"obius relations
\begin{equation}
\begin{aligned}
z_3^2&=\mathcal M_A(z_1^2)=\mathcal M_B(z_2^2),\\
z_4^2&=\mathcal M_B(z_1^2)=\mathcal M_A(z_2^2),
\end{aligned}
\qquad
\mathcal M_X(\zeta)=
\frac{X_{12}^{\,2}\zeta-X_{11}^{\,2}}
{\overline{X_{11}}^{\,2}\zeta-\overline{X_{12}}^{\,2}}
\quad(X=A,B).
\label{eq:supp-karlsson-mobius-domain}
\end{equation}
When a displayed M\"obius denominator degenerates, these relations are read after cross multiplication and the resulting point is handled by the seam alternatives already separated above \cite{Karlsson2011H2,Karlsson2011ThreeParameter}. Let $\widehat H_{\mathrm K}$ be the same matrix with rows and columns both put in the mixed order $(1,3,4,2,5,6)$.  Let $\tau=(1\ 4)(2\ 5)(3\ 6)$, which exchanges its two three-column halves, and set
\begin{equation}
 r=(1,z_3^{-1},-z_3^{-1},-1,z_4^{-1},-z_4^{-1}).
\label{eq:supp-karlsson-reciprocal-phases}
\end{equation}
Then, entry by entry,
\begin{equation}
 \widehat H_{\mathrm K}(1/t,p,-z_2,-z_1,z_3^{-1},z_4^{-1})_{ij}
 =r_i\widehat H_{\mathrm K}(t,p,z_1,z_2,z_3,z_4)_{i,\tau(j)}.
\label{eq:supp-karlsson-reciprocal-entrywise}
\end{equation}
Indeed, with $S=\begin{psmallmatrix}0&1\\1&0\end{psmallmatrix}$ and $D=\begin{psmallmatrix}1&0\\0&-1\end{psmallmatrix}$, direct substitution in Eq.~\eqref{eq:supp-karlsson-block-data} gives
\begin{equation}
 A(1/t,p)=S B(t,p)D,\qquad B(1/t,p)=S A(t,p)D.
\label{eq:supp-karlsson-reciprocal-cores}
\end{equation}
Equations~\eqref{eq:supp-karlsson-raw-matrix} and \eqref{eq:supp-karlsson-reciprocal-cores} then give Eq.~\eqref{eq:supp-karlsson-reciprocal-entrywise}, without using a M\"obius denominator.  Since $|z_3|=|z_4|=1$, every component of $r$ is a phase, and the fixed mixed-order permutation transports this identity back to standard order.  Thus
\begin{equation}
M_+(1/t,p)=t^{-2}M_-(t,p),
\qquad
H_{\mathrm K}(t,p,z_1,z_2,z_3,z_4)\sim
H_{\mathrm K}(1/t,p,-z_2,-z_1,z_3^{-1},z_4^{-1}).
\label{eq:supp-karlsson-reciprocal-orientation}
\end{equation}
Moreover,
\begin{equation}
M_++M_-=2(1+t^2)(p^2-1),
\qquad
M_+-M_-=2\I\sqrt3(1-t^2)(p^2+1).
\label{eq:supp-karlsson-common-degeneracy}
\end{equation}
On the canonical domain $t>0$, $|p|=1$, $\operatorname{Im}p\ge0$, and $p\ne-1$, the equations $M_+=M_-=0$ therefore force $p=1$ and $t=1$.  This is the doubly degenerate affine-Fourier point already covered above.  Away from that point, if $M_+(t,p)=0$, then $M_-(t,p)\ne0$, so the equivalent reciprocal representative has $M_+\ne0$.  Thus one of the two orientations always reaches the certified fixed-corner chart.  The same residual $R$ occurs in all four resultants.

Set $p=(1+\I\rho)/(1-\I\rho)$, with $\rho\geq0$.  After a positive denominator is cleared, $R$ is the negative of a real polynomial $S(t,\rho)$.  Compactify the quadrant by
\begin{equation}
t=\frac{x}{1-x},\qquad \rho=\frac{y}{1-y},
\qquad (x,y)\in[0,1]^2.
\label{eq:supp-karlsson-compactification}
\end{equation}
{The Bernstein-positivity step is exact: all subdivision boxes and coefficients are rational, and Lean checks the resulting positivity conclusion.}
Write the compactified polynomial in the tensor Bernstein basis $\sum_{i=0}^{16}\sum_{j=0}^{8}b_{ij} \binom{16}{i}x^i(1-x)^{16-i} \binom{8}{j}y^j(1-y)^{8-j}$, where the numbers $b_{ij}$ are its Bernstein coefficients. Exact dyadic de Casteljau subdivision~\cite{Farouki2012} reexpresses these coefficients on each half-box and terminates in ten rational boxes, on each of which every Bernstein coefficient is strictly positive.  Since the Bernstein basis functions are nonnegative and sum to one, this proves $S(t,\rho)>0$ throughout the closed quadrant, so $R\neq0$. All eleven witness factors in the fixed corner are nonzero in whichever of the two equivalent orientations has $M_+\ne0$. Lemma~\ref{lem:supp-oriented-leading-finiteness} makes both complete side fibers finite, and forced completion recovers that representative, hence also its original Karlsson equivalence class. Together with the doubly-degenerate Fourier-boundary argument, this proves $\KK_6^{(3)}\subset\FCAtlas_6^{\mathrm{fc}}$. Tao belongs to $\FCAtlas_6^{\mathrm{fc}}$ but not to $\KK_6^{(3)}$: every cross ratio of Tao's matrix is a cubic root of unity, whereas an $H_2$-reducible matrix has a cross ratio equal to $-1$, which is not a cubic root of unity. Hence the inclusion is proper.
\end{proof}

\subproofheading{{Proof of Theorem~\ref{thm:classification}}}
\begin{proof}[Proof of Theorem~\ref{thm:classification}]
{Proposition~\ref{thm:finitecornerintro} gives a finite-corner witness when the matrix is neither Karlsson nor Tao, Proposition~\ref{prop:tao-finite-corner} gives one for Tao, and Proposition~\ref{prop:karlsson-finite-corner} gives one for every Karlsson matrix. Hence every order-six Hadamard matrix is equivalent to a dephased matrix having a finite-corner witness.}
\end{proof}

\subproofheading{{Corollary~\ref{cor:atlas-classification} (Exact retained output; restated)}}
{\emph{The branch-complete dilation procedure is sound and exhaustive:}}
\begin{equation}
{\FCAtlas_6^{\mathrm{fc}}=\HH_6.}
\label{eq:supp-complete-classification}
\end{equation}

\begin{proof}
{Theorem~\ref{thm:classification} supplies a finite-corner witness for every class. At that corner the actual adjacent blocks occur in the complete finite candidate lists, and Proposition~\ref{prop:directcompletion} recovers the actual fourth block. Thus $\HH_6\subseteq\FCAtlas_6^{\mathrm{fc}}$. Proposition~\ref{prop:algorithmvalid} gives the reverse inclusion. No regular-sheet or closure assumption is used.}
\end{proof}

\section{{Proofs of Theorems~\ref{thm:regular-seed-domain} and~\ref{thm:global-product-regular-escape}}}
\label{sec:supp-reconstruction-geometry}
\label{sec:supp-atlas-proofs}

{This section proves the results stated in Sec.~IV of the main text. We first construct the division-free fixed-Gram incidence correspondence, then derive the quadratic--cubic reconstruction, determine its physical seed domain, prove the global product-regular reach theorem, and record the ramification geometry.}

\def\AtlasEmbedded{1}

\subproofheading{{The global fixed-Gram incidence correspondence}}
\label{sec:supp-global-incidence}

{These equations are the uncancelled parent system behind the companion polynomials and product quadratic used in Sec.~IV of the main text. They are also needed when a product-regular guard vanishes because the rational companion formula would then lose common numerator--denominator zeros. The system below therefore supplies both the regular derivation and the exceptional-fiber interpretation.}

For a triple $z=(z_1,z_2,z_3)\in(\C^*)^3$ of nonzero complex numbers, define the elementary symmetric functions
\begin{equation}
e_1(z)=z_1+z_2+z_3,\qquad
e_2(z)=z_1z_2+z_1z_3+z_2z_3,\qquad
e_3(z)=z_1z_2z_3.
\label{eq:supp-elementary}
\end{equation}

Let $^{\#}$ be the involution of the Laurent polynomial ring that replaces each torus variable by its inverse and conjugates scalar coefficients.  On the torus, $f^\#=\overline f$. Extend it to the localized Laurent field by
\begin{equation}
\left(\frac fg\right)^\#=\frac{f^\#}{g^\#}
\qquad(g\ne0).
\label{eq:supp-sharp-localization}
\end{equation}
For the paired variables $x,y$, the physical torus is $\T^6=\{(x,y)\in\C^6:|x_j|=|y_j|=1\text{ for }j=1,2,3\}$.

Fix trace data $(s,s^\#,t,t^\#,r,r^\#)$.
For paired triples $x,y\in(\C^*)^3$, impose
\begin{equation}
\begin{aligned}
e_1(x)&=s,& e_2(x)&=s^\#e_3(x),\\
e_1(y)&=t,& e_2(y)&=t^\#e_3(y),\\
\sum_jy_j\!\prod_{k\ne j}x_k&=re_3(x),&
\sum_jx_j\!\prod_{k\ne j}y_k&=r^\#e_3(y).
\end{aligned}
\label{eq:supp-six-incidence}
\end{equation}

The complex algebraic-torus fiber is, precisely,
\begin{equation}
\mathfrak F^{\C}_{s,t,r}
=\{(x,y)\in(\C^*)^6:\text{the six equations in
Eq.~\eqref{eq:supp-six-incidence} hold}\}.
\label{eq:supp-complex-fiber}
\end{equation}
Define its physical fixed locus by
\begin{equation}
\mathfrak F^{\mathrm{phys}}_{s,t,r}
=\mathfrak F^{\C}_{s,t,r}\cap\T^6.
\label{eq:supp-physical-fiber}
\end{equation}

Associate to either kind of point the matrix
\begin{equation}
X(x,y)=\begin{pmatrix}1&1&1\\x_1&x_2&x_3\\y_1&y_2&y_3\end{pmatrix}.
\label{eq:supp-X}
\end{equation}

\subproofheading{{Exact fixed-Gram presentation}}

\begin{lemma}[Exact fixed-Gram presentation]
\label{lem:supp-incidence-gram}
On the physical torus, Eq.~\eqref{eq:supp-six-incidence} holds if and only if
\begin{equation}
X(x,y)X(x,y)^\dagger=
\begin{pmatrix}3&s^\#&t^\#\\s&3&r^\#\\t&r&3\end{pmatrix}.
\label{eq:supp-Xgram}
\end{equation}
Simultaneously permuting the three pairs $(x_j,y_j)$ defines an $S_3$-action on the ordered solution fiber and only permutes the columns of $X$. The ordered fiber is identified with the normalized candidate set; its quotient parametrizes candidates only up to column relabeling.
\end{lemma}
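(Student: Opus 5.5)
The plan is to compute the Gram matrix of $X(x,y)$ directly and match it entry by entry with the six incidence equations. Write $X=X(x,y)$ as in Eq.~\eqref{eq:supp-X}. By direct multiplication, the diagonal of $XX^\dagger$ consists of $\sum_j|1|^2=3$, $\sum_j|x_j|^2$ and $\sum_j|y_j|^2$. On the physical torus the last two sums are automatically $3$, so no condition arises there. The off-diagonal entries are
\[
(XX^\dagger)_{21}=\sum_j x_j,\qquad (XX^\dagger)_{31}=\sum_j y_j,\qquad (XX^\dagger)_{32}=\sum_j y_j\overline{x_j},
\]
together with their complex conjugates in the transposed positions. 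Hermiticity means it suffices to impose equality at the three lower entries, with the upper entries then automatically conjugate. I will keep all six equations, however, because $s^\#$ is formally a separate datum.

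Next I translate each lower-entry condition into the corresponding pair of incidence equations, using only that every variable is nonzero and unimodular. On $\T^6$ we have $\overline{x_j}=x_j^{-1}$. Hence the identity $e_2(x)=e_3(x)\sum_j x_j^{-1}$ shows that the second equation $e_2(x)=s^\#e_3(x)$ is equivalent to $\sum_j\overline{x_j}=s^\#$, that is, to the $(1,2)$ entry being $s^\#$. Likewise, $\sum_j y_j\prod_{k\ne j}x_k=e_3(x)\sum_j y_j/x_j$, so the fifth equation is equivalent to $(XX^\dagger)_{32}=r$. The sixth equation is equivalent to $\sum_j x_j/y_j=r^\#$, which is the $(2,3)$ entry. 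The first and third equations are literally the $(2,1)$ and $(3,1)$ entries. Dividing by $e_3(x)$ or $e_3(y)$ is reversible because these products are nonzero, so every step is an equivalence. This proves the if-and-only-if in Eq.~\eqref{eq:supp-Xgram}.

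There is one consistency point to address. On the torus, $s^\#$ must equal $\overline s$ for the fiber to be nonempty, and the displayed Gram matrix is Hermitian exactly then. The lemma needs no such assumption, because both sides are simply compared entrywise.

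For the $S_3$ statement, a simultaneous permutation $\pi$ of the pairs $(x_j,y_j)$ leaves each equation in Eq.~\eqref{eq:supp-six-incidence} invariant, since each is symmetric under simultaneous relabeling of the index $j$. Its effect on $X$ is $X\mapsto XP_\pi$, and $XP_\pi P_\pi^\dagger X^\dagger=XX^\dagger$. Finally, I identify the ordered fiber with the normalized candidate set. A point $(x,y)\in\mathfrak F^{\mathrm{phys}}$ gives a unimodular $X$ whose first row is all ones and whose Gram matrix is the prescribed one. Conversely, any such normalized $X$ determines $(x,y)$ uniquely by reading off rows two and three. Passing to the $S_3$-quotient forgets only the column order. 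The argument is elementary and I expect no real obstacle; the only care needed is to ensure that each clearing of denominators is by a nonvanishing monomial, so that no solutions are gained or lost.
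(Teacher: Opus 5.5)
Your proposal is correct and follows essentially the same route as the paper. You identify the six off-diagonal entries of $XX^\dagger$ with the incidence equations via the reversible division by the nonzero products $e_3(x)$ and $e_3(y)$, you note that the diagonal is automatically $3$ on the torus, and you treat simultaneous pair permutations as $X\mapsto XP_\pi$. Keeping $s^\#$ as an independent datum and the Hermitian-consistency remark are harmless additions to the paper's argument.
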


\begin{proof}
The first off-diagonal row product is $\sum_jx_j=s$, and its conjugate is the reciprocal sum $\sum_jx_j^{-1}=e_2(x)/e_3(x)=s^\#$.  The same argument gives the two relations for $y$.  Dividing the fifth equation of Eq.~\eqref{eq:supp-six-incidence} by $e_3(x)$ gives $\sum_jy_j/x_j=r$, and dividing the sixth by $e_3(y)$ gives $\sum_jx_j/y_j=r^\#$. These are exactly the six off-diagonal entries of Eq.~\eqref{eq:supp-Xgram}. The diagonal entries equal three because all three entries of each row are phases.  The converse follows by reversing the same divisions, which are legitimate because all coordinates are nonzero.  The final assertion is immediate from Eq.~\eqref{eq:supp-X}.
\end{proof}

\subproofheading{{Physical soundness and completeness for one corner}}

For the four-phase corner $E(a,b,c,d)$, the main article gives all six trace data in Eq.~\eqref{eq:trace-data-main}.  Its horizontal data are
\begin{equation}
s_h=-(1+a+b),\qquad t_h=-(1+c+d),\qquad
r_h=-\left(1+\frac ca+\frac db\right),
\label{eq:supp-horizontal-data}
\end{equation}
with their $^{\#}$-images, meaning $a,b,c,d$ inverted inside these same expressions, e.g.\ $s_h^\#=-(1+a^{-1}+b^{-1})$, not the numerical reciprocal $1/s_h$.  The vertical data are obtained from $E^T$:
\begin{equation}
s_v=-(1+a+c),\qquad t_v=-(1+b+d),\qquad
r_v=-\left(1+\frac ba+\frac dc\right).
\label{eq:supp-vertical-data}
\end{equation}
By Lemma~\ref{lem:supp-incidence-gram}, the physical fibers $\mathfrak F^{\mathrm{phys}}_{s_h,t_h,r_h}$ and $\mathfrak F^{\mathrm{phys}}_{s_v,t_v,r_v}$ are exactly the complete normalized candidate sets $\mathcal B_E$ and $\mathcal C_E$, after transposing the vertical candidate.  No corresponding identification is claimed for the unrestricted complex fibers.

Introduce independent paired triples $x,y,p,q$ and form
\begin{equation}
B=X(x,y),\qquad C^T=X(p,q).
\label{eq:supp-BC}
\end{equation}
Here all four triples belong to $(\C^*)^3$ on the algebraic chart and to $\T^3$ on its physical locus.

For a matrix $M$ with entries in this localized Laurent field, $M^\#$ means sharp applied entrywise, $(M^\#)_{ij}=(M_{ij})^\#$. Define the formal algebraic star operation by
\begin{equation}
M^\star=(M^\#)^T,
\label{eq:supp-formal-star}
\end{equation}
which is an algebraic surrogate for conjugate transpose. Sharp inverts the variables and conjugates scalar coefficients in each entry, then $T$ transposes the matrix. On the physical torus, sharp equals the ordinary adjoint.

Consider the star-stable complex open set
\begin{equation}
\det B(\det B)^\#\det C(\det C)^\#\ne0.
\label{eq:supp-complex-open}
\end{equation}
This is the complement of the zero set of the displayed Laurent polynomial, where all required inverses and their sharp transforms exist.

We define the formal completion
\begin{equation}
D^{\C}=-CE^\star(B^{-1})^\star.
\label{eq:supp-D}
\end{equation}
The word \emph{formal} means that this lower-right block is computed in the localized Laurent coordinate ring before any point is required to lie on the physical torus.

We then impose
\begin{equation}
D^{\C}_{ij}(D^{\C}_{ij})^\#=1,\qquad 1\le i,j\le3.
\label{eq:supp-nine-flatness}
\end{equation}

On the unrestricted complex algebraic torus, Eq.~\eqref{eq:supp-nine-flatness} is only a formal sharp-stable equation. It does not assert the pointwise analytic condition $|D^{\C}_{ij}|=1$, because $^\#$ is not complex conjugation there. The two statements become equivalent only after restriction to the physical compact torus, where $^\#$ evaluates as ordinary complex conjugation.

After multiplication by nonzero torus monomials and the displayed determinant factors, these are nine ordinary polynomial equations. Together with the twelve side equations they define a formal twenty-one-equation complex corner correspondence. On the physical torus, $M^\star=M^\dagger$, so the completion becomes
\begin{equation}
D^{\mathrm{phys}}=-CE^\dagger(B^{-1})^\dagger,
\label{eq:supp-physical-D}
\end{equation}

and Eq.~\eqref{eq:supp-nine-flatness} says exactly that every entry of $D^{\mathrm{phys}}$ has modulus one.

\begin{proposition}[Physical soundness and completeness for one corner]
\label{prop:supp-one-corner}
Let $\mathcal Y_E\subset\T^{16}$ be the set of $\eta=(a,b,c,d,x,y,p,q)$ such that: (i) $B=X(x,y)$ and $C^T=X(p,q)$ satisfy the incidence equations Eq.~\eqref{eq:supp-six-incidence} (and its transpose) for the corner $E(a,b,c,d)$; (ii) $B,C$ are invertible and Eq.~\eqref{eq:supp-physical-D} is entrywise unimodular; (iii) both complete physical side fibers at $E$ are nonempty and finite.  {After clearing denominators, (i) and the flatness part of (ii) define a locally closed Laurent-algebraic correspondence on $\det B\,\det C\ne0$.} Condition (iii) is not an algebraic equation, being the intrinsic finite-corner selector of Definition~\ref{def:finitecorner}.  The matrices reconstructed from $\mathcal Y_E$ are exactly the outputs of our completed finite-dilation procedure at the corner $E$.
\end{proposition}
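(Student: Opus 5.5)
The proof will establish a two-way correspondence between points $\eta\in\mathcal Y_E$ and retained outputs $H(E;B,C)$ of the branch-complete procedure at the fixed corner $E=E(a,b,c,d)$. The two directions are handled separately, and the algebraic-structure remark is treated last.

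\emph{From $\mathcal Y_E$ to retained outputs.} Given $\eta\in\mathcal Y_E$, Lemma~\ref{lem:supp-incidence-gram} turns condition (i) into membership statements. The equations for $(x,y)$ with horizontal data~\eqref{eq:supp-horizontal-data} say exactly that $B=X(x,y)\in\mathcal B_E$. The transposed equations for $(p,q)$ with vertical data~\eqref{eq:supp-vertical-data} say that $C^T=X(p,q)$ has the transposed fixed Gram matrix, that is, $C\in\mathcal C_E$. The normalizations (first row of $B$, first column of $C$ equal to ones) are built into the form of $X$. Condition (iii) is exactly Definition~\ref{def:finitecorner}, since every member of a fixed positive-definite Gram fiber is invertible. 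Hence $E$ passes Step~3 and $(B,C)\in\mathcal B_E^\times\times\mathcal C_E^\times$. On $\T^{16}$ we have $M^\star=M^\dagger$, so $D^{\C}$ specializes to $D^{\mathrm{phys}}=D(E;B,C)$ of Eq.~\eqref{eq:maincompletionformula}. The unimodularity in (ii) is then precisely the Step~5 retention test. Proposition~\ref{prop:directcompletion} confirms that the retained block matrix is Hadamard.

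\emph{From retained outputs to $\mathcal Y_E$.} For the converse, take any retained output $H(E;B,C)$. The normalized forms of $B$ and $C$ have unimodular entries, so we read off $(x,y)$ from rows two and three of $B$, and $(p,q)$ from columns two and three of $C$, giving a point of $\T^{16}$. By reversing each implication above, conditions (i)--(iii) hold. The point and the output determine each other once the ordering of columns of $B$ and rows of $C$ is fixed. If we only track candidates up to the $S_3$ relabelling of Lemma~\ref{lem:supp-incidence-gram}, the statement reads as equality of reconstructed matrix sets, which is what is claimed.

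\emph{Algebraic-structure remark.} The twelve incidence equations become polynomial after multiplying by monomials; this is reversible on $(\C^*)^{16}$. The nine flatness equations~\eqref{eq:supp-nine-flatness} become polynomial after clearing the nonvanishing factors $\det B(\det B)^\#\det C(\det C)^\#$ together with monomials. The result is a closed subset of the principal open set $\det B\,\det C\neq0$, hence locally closed. We then intersect with the real-algebraic torus. Condition (iii) must be kept outside this algebraic description, since it is a finiteness property of the whole fiber and not a pointwise equation.

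The main obstacle is the exact match between the sharp-formal completion and the physical one. This requires checking that $^\#$ evaluates to complex conjugation on the torus, and that the determinant guards make $(B^{-1})^\star=(B^\dagger)^{-1}$ legitimate there. Both follow from $|z|=1\Rightarrow z^{-1}=\bar z$ applied entrywise.
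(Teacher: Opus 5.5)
Your proposal is correct and follows essentially the same route as the paper. Lemma~\ref{lem:supp-incidence-gram} identifies (i) with membership in $\mathcal B_E$ and $\mathcal C_E$, (iii) together with invertibility is the guard of Definition~\ref{def:finitecorner}, and on the torus $M^\star=M^\dagger$, so Proposition~\ref{prop:directcompletion} turns the unimodularity in (ii) into the Step~5 retention test. Beyond this, you spell out two points the paper's short proof leaves implicit: that a positive-definite fixed Gram matrix forces $\mathcal B_E=\mathcal B_E^\times$ (likewise for $\mathcal C_E$), and the converse direction.
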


\begin{proof}
The two physical incidence systems are exactly the complete physical candidate sets by Lemma~\ref{lem:supp-incidence-gram}.  The intrinsic finite-fiber condition is therefore exactly the guard in Definition~\ref{def:finitecorner}.  The determinant conditions select exactly the permitted candidate pairs. Proposition~\ref{prop:directcompletion} then states that Eq.~\eqref{eq:supp-physical-D} is the unique possible completion and that Eq.~\eqref{eq:supp-nine-flatness} is equivalent to the Hadamard entrywise condition.  These are precisely the five steps of our completed procedure.
\end{proof}

\subsection{{Proof of Theorem~\ref{thm:regular-seed-domain}}}
\label{sec:supp-generic-cover}

Set $u=e_3(x)$.  The first two equations of Eq.~\eqref{eq:supp-six-incidence} are equivalent to saying that the $x_j$ are the roots of the self-inversive cubic
\begin{equation}
q_{s,u}(z)=z^3-sz^2+us^\#z-u.
\label{eq:supp-product-cubic}
\end{equation}
At a physical torus point, where sharp evaluates as complex conjugation and $u^\#=\overline u$, self-inversive means that the reciprocal-conjugate root set equals the original root set when $uu^\#=1$.

On the locus where the companion elimination is defined, the uncancelled linear relation is
\begin{equation}
A(z)+B(z)y=0,
\label{eq:supp-companion-linear}
\end{equation}
where
\begin{equation}
A(z)=z(rr^\#-tt^\#)
\bigl(-ss^\#z+2s+2s^\#z^2-3z\bigr)
\label{eq:supp-A-explicit}
\end{equation}
and $B(z)=b_0+b_1z+b_2z^2+b_3z^3$ with
\begin{equation}
\begin{aligned}
b_0={}&-(r^\#)^2t+r^\#stt^\#-3r^\#s-s^2t^\#,
\\
b_1={}&r(r^\#)^2-rr^\#st^\#+(r^\#)^2s^\#t
-r^\#ss^\#tt^\#+3r^\#ss^\#-r^\#tt^\#
+s^2s^\#t^\#+st(t^\#)^2,
\\
b_2={}&-r(r^\#)^2s^\#+rr^\#ss^\#t^\#+rr^\#t^\#
-rs(t^\#)^2-r^\#s(s^\#)^2+r^\#s^\#tt^\#
-3ss^\#t^\#-t(t^\#)^2,
\\
b_3={}&-rr^\#s^\#t^\#+r(t^\#)^2+r^\#(s^\#)^2+3s^\#t^\#.
\label{eq:supp-B-explicit}
\end{aligned}
\end{equation}
Define the reciprocal fundamental sextic by
\begin{equation}
\Phi_{\rm fund}(z)=z^3\bigl(A(z)A(z)^\#-B(z)B(z)^\#\bigr)
=\sum_{k=0}^{6}c_kz^k.
\label{eq:supp-fundamental-sextic}
\end{equation}
Thus $c_k$ is the coefficient of $z^k$ in the displayed polynomial.  The leading coefficient is
\begin{equation}
\begin{aligned}
c_6=-(&r^2t^\#-rs^\#tt^\#+3rs^\#+(s^\#)^2t)\\
&\mathrel{\phantom{-}}\cdot
(rr^\#s^\#t^\#-r(t^\#)^2-r^\#(s^\#)^2-3s^\#t^\#).
\end{aligned}
\label{eq:supp-c6-explicit}
\end{equation}
Exact expansion also gives
\begin{equation}
\begin{gathered}
c_5=-2sc_6,\qquad c_1=-2s^\#c_0,\\
s^\#c_3+(1+ss^\#)(c_4-s^2c_6)=0,\\
s^\#c_2-s(c_4-s^2c_6)-(s^\#)^3c_0=0.
\end{gathered}
\label{eq:supp-coefficient-identities}
\end{equation}
These identities give
\begin{equation}
\Phi_{\rm fund}(z)=c_6q_{s,u}(z)q_{s,v}(z),
\label{eq:supp-sextic-factor}
\end{equation}
where $u,v$ are the two roots of
\begin{equation}
(1+ss^\#)c_6U^2+c_3U+(1+ss^\#)c_0=0.
\label{eq:supp-product-quadratic}
\end{equation}

The quadratic--cubic cover is used only when $c_6\neq0$.  If $c_6=0$, Eq.~\eqref{eq:supp-product-quadratic} drops degree and neither the two-root description nor the quotients in Eq.~\eqref{eq:supp-uv} below are invoked. {Such a point remains in the division-free incidence correspondence of Sec.~\ref{sec:supp-global-incidence}. A finite physical fiber is retained directly, while a positive-dimensional physical fiber is handled by Proposition~\ref{prop:main-infinite-fiber-trichotomy} and the corner-routing argument.}

Operationally the reconstruction proceeds in the opposite order from this derivation: first solve Eq.~\eqref{eq:supp-product-quadratic} for the product $u$, then solve the cubic $q_{s,u}$ for $x_1,x_2,x_3$, and finally recover $y_j=-A(x_j)/B(x_j)$. The same three steps applied to $E^T$ give the vertical candidate, and the affine sheet matching below pairs the two product roots consistently.

This final quotient is used only when $B(x_j)\neq0$.  If $B(x_j)=0$, no value is assigned by the rational companion formula.  Instead one returns to the uncancelled parent equations, equivalently to $\mathfrak F^{\C}_{s,t,r}$ in Eq.~\eqref{eq:supp-complex-fiber}: $A(x_j)\neq0$ gives no solution, whereas $A(x_j)=B(x_j)=0$ is retained and decided by the complete division-free fiber analysis.  Thus no $0/0$ branch is discarded.

\begin{lemma}[Product-factor identities; computer-assisted]
\label{lem:supp-product-factor}
Equations~\eqref{eq:supp-sextic-factor} and \eqref{eq:supp-product-quadratic} are polynomial identities on the regular fixed-Gram locus.  In particular,
\begin{equation}
u+v=-\frac{c_3}{(1+ss^\#)c_6},\qquad
uv=\frac{c_0}{c_6}.
\label{eq:supp-uv}
\end{equation}
\end{lemma}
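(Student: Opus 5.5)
The factorization \eqref{eq:supp-sextic-factor} and the quadratic \eqref{eq:supp-product-quadratic} are polynomial identities in the trace data $(s,s^\#,t,t^\#,r,r^\#)$, treated as independent indeterminates. The plan is to verify them as identities in the polynomial ring and to read off \eqref{eq:supp-uv} by Vieta.

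\textbf{Step 1: expand both sides.} Write $q_{s,u}(z)q_{s,v}(z)$ and use $u+v=\sigma$, $uv=\pi$. The product has the shape
\[
z^6-2sz^5+(s^2+\sigma s^\#)z^4-\bigl(\sigma+\sigma ss^\#\bigr)z^3+(\pi (s^\#)^2+\sigma s)z^2-2\pi s^\# z+\pi .
\]
I will recheck the middle coefficients carefully by direct expansion rather than rely on this sketch.

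\textbf{Step 2: match coefficients.} Matching $c_6$ times this polynomial against $\Phi_{\rm fund}$ gives seven equations.
\begin{itemize}
\item The $z^6$ equation is trivial.
\item The $z^5$ and $z^1$ equations are the identities $c_5=-2sc_6$ and $c_1=-2s^\#c_0$ of Eq.~\eqref{eq:supp-coefficient-identities}, using $\pi=c_0/c_6$ from the $z^0$ equation.
\item The $z^3$ equation gives $-(1+ss^\#)\sigma c_6=c_3$, which is the stated formula for $u+v$.
\item The $z^4$ and $z^2$ equations must then hold automatically. After substituting $\sigma$ and $\pi$, they reduce exactly to the last two relations of Eq.~\eqref{eq:supp-coefficient-identities}; for $z^4$, for instance, one needs $c_4-s^2c_6=\sigma s^\# c_6$.
\end{itemize}
So, wherever $c_6\ne0$ and $1+ss^\#\ne0$, the factorization holds with $u,v$ defined by \eqref{eq:supp-uv}. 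Hence $u,v$ are precisely the roots of \eqref{eq:supp-product-quadratic}.

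\textbf{Step 3: prove the coefficient identities.} This is the main obstacle. The coefficients $c_k$ are explicit but large polynomials, built from $A$ in Eq.~\eqref{eq:supp-A-explicit} and from $b_0,\dots,b_3$. I would compute $c_k$ symbolically from
\[
z^3(AA^\#-BB^\#),
\]
using that $B^\#$ has coefficients $b_k^\#$ attached to $z^{-k}$. The four identities would then be checked by exact expansion in the ring $\mathbb{Q}(\I)[s,s^\#,t,t^\#,r,r^\#]$; this is the computer-assisted part, tied to the \path{product_cubic_reduction.py} certificate.

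A conceptual cross-check is also available. Write $B(z)=\sum_k b_kz^k$. The first two incidence equations of Eq.~\eqref{eq:supp-six-incidence} say that the roots $x_j$ of $\Phi_{\rm fund}$ in a fiber have $e_1=s$ and $e_2=s^\#e_3$. Since generic fibers exist with both product values, the sextic must vanish on two self-inversive cubics of this form, which forces the factorization on a Zariski-dense set and hence identically. Still, the clean proof is the direct coefficient identity.

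\textbf{Step 4: Vieta and the nondegenerate locus.} Finally, $uv=c_0/c_6$ and $u+v$ follow by Vieta from \eqref{eq:supp-product-quadratic}. The only divisions are by $c_6$ and $1+ss^\#$, and both are nonzero on the regular locus; on the torus $1+ss^\#=1+|s|^2>0$.
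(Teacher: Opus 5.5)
Your proposal is correct and follows the paper's own proof: expand $c_6q_{s,u}q_{s,v}$, read $uv=c_0/c_6$ off the constant term and $u+v$ off the middle coefficient, reduce the remaining four coefficients to the exactly verified identities of Eq.~\eqref{eq:supp-coefficient-identities}, and finish with Vieta. One small precision: the $z^2$ equation is $(1+ss^\#)(c_2-(s^\#)^2c_0)+sc_3=0$, which equals the fourth listed identity only after you use the third and cancel a factor $s^\#$; this is harmless, because these are identities in the polynomial ring.
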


\begin{proof}
The coefficient identities in this proof are checked by exact symbolic arithmetic; see the generic four-phase reconstruction certificate group in Table~\ref{tab:supp-certificate-index}.
Expand $c_6q_{s,u}q_{s,v}$.  Its leading coefficient is $c_6$, and reciprocity fixes the coefficients opposite one another.  Matching the constant term gives $uv=c_0/c_6$, while matching the middle coefficient gives the displayed expression for $u+v$. Substitution into the remaining four coefficients gives zero identically after the fixed-Gram trace relations are used. Conversely, Vieta's formulas for Eq.~\eqref{eq:supp-product-quadratic} give Eq.~\eqref{eq:supp-uv}, completing the factorization.
\end{proof}

\begin{lemma}[{Sufficiency of the regular companion chart; computer-assisted}]
\label{lem:supp-companion-sufficiency}
{Let $Q$ be the product quadratic.  If $c_6R\ne0$, $Q(u)=0$, $q_{s,u}$ is simple, $R=\operatorname{Res}_z(q_{s,u},B)$, and $y_j=-A(x_j)/B(x_j)$ at its roots, then $\sum_jy_j=t$, $\sum_jy_j/x_j=r$, and $y_jy_j^\#=1$.  Hence Eq.~\eqref{eq:supp-six-incidence} holds and the eliminant introduces no extraneous regular point.}
\end{lemma}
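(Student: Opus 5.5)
The plan is to check the three claims directly: unimodularity of the $y_j$, the sum $\sum_j y_j=t$, and the sum $\sum_j y_j/x_j=r$. Each one follows from an identity in the regular companion chart. Once these hold, the fifth and sixth equations of Eq.~\eqref{eq:supp-six-incidence} follow after multiplying by $e_3(x)$ and $e_3(y)$ and applying $^\#$. The first two equations are built in, because the $x_j$ are the roots of $q_{s,u}$.

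\emph{Unimodularity.} Since $Q(u)=0$ and $c_6\ne0$, Lemma~\ref{lem:supp-product-factor} shows that $q_{s,u}$ divides $\Phi_{\rm fund}$. Hence $A(x_j)A(x_j)^\#=B(x_j)B(x_j)^\#$ at every root. The resultant guard gives $B(x_j)\neq0$, and then $y_jy_j^\#=A A^\#/(BB^\#)=1$. Note that $B^\#(x_j)\ne0$ as well, since $x_j^3B^\#(x_j)$ is a multiple of $B(x_j)$ under the self-inversive symmetry of the root set. Establishing that symmetry will need $uu^\#=1$, or at least its formal analogue $u^\#=v$-type relations coming from Eq.~\eqref{eq:supp-uv}.

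\emph{Sums.} I would treat the two sums as symmetric functions of the roots of the simple cubic $q_{s,u}$. Put $y(z)=-A(z)/B(z)$. The sum $\sum_j y(x_j)$ is a trace in $K[z]/(q_{s,u})$, where $K$ is the coefficient field generated by $s,t,r$, their sharps, and $u$. Because $R\ne0$, $B$ is invertible modulo $q_{s,u}$. Concretely:
\begin{enumerate}
\item Compute $B^{-1}\bmod q_{s,u}$ by extended Euclid; its denominator is $R$.
\item Reduce $-A\,B^{-1}$ and $-A\,B^{-1}z^{-1}$ to quadratic remainders $\alpha_0+\alpha_1z+\alpha_2z^2$.
\item Evaluate the traces via Newton sums $p_0=3$, $p_1=s$, $p_2=s^2-2us^\#$.
\end{enumerate}
This expresses $\sum_j y_j-t$ and $\sum_j y_j/x_j-r$ as rational functions in $(s,t,r,\text{sharps},u)$ with denominators $R$ and $c_6$. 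The claim is then that the numerators lie in the ideal generated by $Q(u)$. I would verify this by reducing modulo $Q$, with $u^2$ replaced using Eq.~\eqref{eq:supp-uv}, and checking that the result is zero, exactly as in Lemma~\ref{lem:supp-product-factor}. This step is a certificate computation, and its symbolic reduction is where I expect the main obstacle: the expressions are large, and the reduction may only vanish after using $\delta=rr^\#-tt^\#\ne0$, or after $A$ is factored as $z\,\delta\,\kappa$. If the plain ideal-membership test fails, I would first try clearing the factor $\delta$ and then including the relation coming from $\Phi_{\rm fund}$'s reciprocity, $c_5=-2sc_6$ and so on.

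\emph{Conceptual backup.} If the computation is unwieldy, there is an alternative route. The true fiber $\mathfrak F^{\C}_{s,t,r}$ projects onto roots of $\Phi_{\rm fund}$, and generically each product sheet $u$ contains genuine solutions, for which the sums hold. The two sum identities are algebraic in the parameters, so they hold on a Zariski-dense set and hence identically on the open set $c_6R\ne0$. The difficulty with this route is showing density of genuine points in each sheet; that is why I would rely primarily on the exact trace certificate.
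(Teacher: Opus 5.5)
Your plan is essentially the paper's proof. Both work in $K[z]/(q_{s,u})$ and clear denominators with $\det M_B=R$ (up to a unit, with an extra factor $u$ for $\sum_j y_j/x_j$). Both then check by exact, computer-assisted reduction that the cleared trace defects are multiples of $Q(u)$, take $y_jy_j^\#=1$ from $\Phi_{\rm fund}=c_6q_{s,u}q_{s,v}$, and recover the remaining parent equations by Vieta and $^\#$.

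Your caveat that this needs $u^\#=u^{-1}$ is exactly the convention the paper adopts when it extends $^\#$ to the localized splitting algebra by $x_j^\#=x_j^{-1}$. Under that convention, $B^\#(x_j)=B(x_j)^\#$ is invertible simply because $^\#$ is a ring automorphism. Your stated reason, that $x_j^3B^\#(x_j)$ is a multiple of $B(x_j)$, is not the right justification.
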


\begin{proof}
The quotient-algebra reductions below are checked exactly; see the generic four-phase reconstruction certificate group in Table~\ref{tab:supp-certificate-index}.
{Work in the ordered splitting algebra of $q_{s,u}$, localized at its Vandermonde and $\prod_jB(x_j)$.  Sharp reciprocity extends there by $u^\#=u^{-1}$ and $x_j^\#=x_j^{-1}$.  If $M_B$ is multiplication by $B$ in $K[z]/(q_{s,u})$, then $\det M_B$ equals $R$ up to a unit and exact reduction gives}
\begin{equation}
{\det(M_B)\left(\sum_jy_j-t\right)=Q(u)P_t(u),
\qquad
u\det(M_B)\left(\sum_j\frac{y_j}{x_j}-r\right)=Q(u)P_r(u).}
\label{eq:supp-companion-remainders}
\end{equation}
{Since $u\det M_B\ne0$, these give the two traces. Eq.~\eqref{eq:supp-sextic-factor} gives $y_jy_j^\#=1$. Vieta and sharp then give the other four parent equations.}
\end{proof}

\subproofheading{{Definition~\ref{def:product-regularity} (Product-regular lift; restated)}}
{Here ``lift'' means the affine-minus matched horizontal--vertical lift in Eq.~\eqref{eq:supp-minus-matching}, with common root labelling modulo simultaneous $S_3$.} The lift is \emph{product regular} when the product quadratics have their expected degree, the coordinate cubics have simple roots, all companion denominators are nonzero, and
\begin{equation}
{\begin{gathered}
\det(E)\det(B)(\det B)^\#\det(C)(\det C)^\#c_{6,h}c_{6,v}
\delta_h\delta_v
{}\operatorname{Disc}(q_h)\operatorname{Disc}(q_v)R_hR_v\ne0,
\\ \delta_\nu=r_\nu r_\nu^\#-t_\nu t_\nu^\#\quad(\nu=h,v).
\end{gathered}}
\label{eq:supp-product-regularity-guards}
\end{equation}
{Here $R_\nu=\operatorname{Res}(q_\nu,B_\nu)$ after clearing denominators.} The vertical quantities are obtained from $E^{\mathsf T}$. These guards select the quadratic--cubic formulas treated below. The global incidence system remains valid when one of them vanishes.

The following result identifies which part of this algebraic double cover is physical. It is the positivity step used in Theorem~\ref{thm:regular-seed-domain}.

\subproofheading{{Complement positivity}}

\begin{lemma}[Complement positivity; computer-assisted]
\label{lem:supp-complement-positivity}
Let $E$ be an invertible $3\times3$ phase matrix, set $G=6I_3-EE^\dagger$, and suppose the fixed-Gram product construction for $G$ satisfies the regularity guards.  If its normalized product discriminant satisfies $\omega_{\rm n}(G)\leq0$, then $G>0$.
\end{lemma}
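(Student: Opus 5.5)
The plan is to reduce positivity of $G=6I_3-EE^\dagger$ to a statement about the eigenvalues of the positive definite matrix $P=EE^\dagger$, and then to read the sign of the relevant eigenvalue off the product discriminant. Since $E$ is invertible and unimodular, $P>0$ with $\tr P=9$. Hence $G>0$ holds exactly when $\lambda_{\max}(P)<6$. Because the three eigenvalues of $P$ are positive and sum to $9$, at most one of them can reach $6$. It therefore suffices to exclude a single eigenvalue $\lambda\ge6$.

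Equivalently, $G$ has either exactly one nonpositive eigenvalue or none. Its determinant then distinguishes the cases: $\det G<0$ signals one negative eigenvalue. The boundary case $\det G=0$ needs separate treatment, and requires the other two eigenvalues of $G$ to be positive. Since $\tr G=9$ and at most one eigenvalue can be nonpositive, this is automatic. So the target reduces to proving $\det G>0$. Using Eq.~\eqref{eq:detcubic} with diagonal $3$, I would write $\det G$ in terms of the off-diagonal data $(s,t,r)$ of $G$ as
\[
\det G=27-3\bigl(ss^\#+tt^\#+rr^\#\bigr)+2\RePart(\overline s\,\overline r\,t),
\]
exactly as in Eq.~\eqref{eq:gramSTR}.

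The core step is an algebraic identity relating $\Omega$ (equivalently $\omega_{\rm n}$, up to the positive factor $|c_6|^2/\delta^2$ from Eq.~\eqref{eq:omega-residual-discriminant-bridge}) to $\det G$. I would aim to show that, on the regular locus and modulo the fixed-Gram trace relations, $U U^\#-4$ factors as a negative multiple of $\det G$ times a manifestly nonnegative expression, or more plausibly as $-\det G\cdot\Lambda$ plus a sum of squares. Concretely, I would compute $U=-c_3/((1+ss^\#)c_6)$ and $V=c_0/c_6$ symbolically from Eqs.~\eqref{eq:supp-c6-explicit}--\eqref{eq:supp-coefficient-identities}. I would then clear the denominator $|c_6|^2(1+ss^\#)^2$ and reduce the result in the ring generated by $s,t,r$ and their sharps. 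The expected outcome is an identity of the form $|c_6|^2(1+|s|^2)^2\,\omega_{\rm n}=-\det(G)\,W+N$ with $W>0$ and $N\ge0$ on the physical data. Such an identity gives $\omega_{\rm n}\le0\Rightarrow\det G\ge0$. The boundary argument above then upgrades this to $G>0$, since a vanishing $\det G$ must also force $N=0$, and this would be checked against the guards.

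The main obstacle is finding this certificate. The polynomials are large, and the identity need not hold globally but only after imposing that $G$ arises from a unimodular $E$ with a rank-constrained complement. If the direct decomposition fails, I would switch to a continuity argument on the connected regular region. Along any path in $\{\omega_{\rm n}<0\}$, a sign change of $\det G$ would create a physical $B$ with $BB^\dagger=G$ singular. This can be excluded because $\det B(\det B)^\#\neq0$ is a guard. I would then check positivity at one explicit point of each component, for instance the Tao seed with $G=BB^\dagger$ positive. The endpoint $\omega_{\rm n}=0$ would follow by closure together with the determinant guard.
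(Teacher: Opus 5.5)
Your reduction to $\det G>0$ is correct; the trace argument is a clean substitute for the paper's Sylvester step. The certificate shape you anticipate is also essentially the one the paper finds. After gauging $s=S$, $t=T$, $r=R\zeta$ with $S,T,R\ge0$, exact reduction gives $|P_1|^2|P_2|^2\,\omega_{\rm n}(G)=(R^2-T^2)^2\{L^2-\det(G)\,Q(J)\}$, where $J=2\RePart(\overline s\,\overline r\,t)$ is the cubic term in your formula for $\det G$.

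The gap is that you simply posit $W>0$. Here $W=(R^2-T^2)^2Q(J)$, and $Q$ is a quadratic in $J$ that is not sign-definite as a polynomial in the Gram data: at $S^2=T^2=R^2=3$ it equals $4J(J-9)$. So its positivity cannot emerge from the reduction in the ring generated by $s,t,r$ and their sharps that you propose. Any sign information has to come from the fact that $G$ is the complement of a \emph{unimodular} $E$, and your plan contains no step that uses this. Without such a step, $\omega_{\rm n}\le0$ does not yield $\det G\ge0$.

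The paper closes the gap with two ideas your plan lacks.

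First, complementation negates every off-diagonal Gram entry. So $p=S^2+T^2+R^2$ is unchanged while $J$ flips sign: $\det G=27-3p+J$ and $\det(EE^\dagger)=27-3p-J$. For $J\ge0$ this gives $\det G=\det(EE^\dagger)+2J>0$ from invertibility of $E$ alone, with no discriminant needed.

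Second, for $J<0$ the same factorization is applied to $EE^\dagger$ itself. $E$ is a physical point of its own fixed-Gram fiber, so that fiber's product roots are phases and its normalized discriminant is $\le0$. This yields $\det(EE^\dagger)\,Q(-J)\ge L^2$, hence $Q(-J)\ge0$, extended by density where auxiliary factors vanish. The exact identity $Q(J)-Q(-J)=-2J\bigl((p-7)^2+32\bigr)$ then gives $Q(J)>0$. Only at that point does $\omega_{\rm n}(G)\le0$ force $\det G\ge L^2/Q(J)\ge0$. The determinant guard, via the formal identity $BB^\star=G$ which gives $\det G=\det B(\det B)^\#\ne0$, removes equality.

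Your continuity fallback does not replace this step, for three reasons.
\begin{itemize}
\item Speaking of a \emph{physical} $B$ over $\{\omega_{\rm n}<0\}$ presupposes Theorem~\ref{thm:regular-seed-domain}, whose proof uses this lemma. Only the formal identity $BB^\star=G$ is available.
\item That identity only shows $\det G$ has constant sign on each connected component of the guarded part of $\{\omega_{\rm n}\le0\}$. You have neither a description of those components nor a base point in each.
\item The Tao seed you suggest is a poor anchor. Its actual horizontal candidate has second row $(\omega,\omega^2,\omega^2)$, so the corresponding coordinate cubic has a repeated root and that lift fails the discriminant guard.
\end{itemize}
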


\begin{proof}
The factorization and sign identities below are checked by exact symbolic arithmetic; see the generic four-phase reconstruction certificate group in Table~\ref{tab:supp-certificate-index}.
Gauge the off-diagonal entries of $G$ as $s=S$, $t=T$, and $r=R\zeta$, where $S,T,R\geq0$ and $|\zeta|=1$.  Put
\begin{equation}
\begin{gathered}
X=S^2,\quad Y=T^2,\quad Z=R^2,\quad
J=STR(\zeta+\zeta^{-1}),\\
p=X+Y+Z,\quad e_2=XY+XZ+YZ,\quad e_3=XYZ.
\end{gathered}
\label{eq:supp-positivity-invariants}
\end{equation}
Exact reduction of the product discriminant gives
\begin{equation}
\omega_{\rm n}(G)=
\frac{(R^2-T^2)^2\{L^2-\det(G)Q(J)\}}
{|P_1|^2|P_2|^2},
\label{eq:supp-positivity-factorization}
\end{equation}
where $P_1,P_2$ are the two nonzero companion factors,
\begin{equation}
\begin{aligned}
L={}&2R^4-R^2S^2T^2+5R^2S^2+5R^2T^2-45R^2\\
&+2S^4+5S^2T^2-45S^2+2T^4-45T^2+243,
\end{aligned}
\label{eq:supp-positivity-L}
\end{equation}
and
\begin{equation}
\begin{aligned}
Q(J)={}&4J^2-(p^2-14p+81)J-p^3+45p^2-567p+2187\\
&+(30-2p)e_2+(2p-30)e_3.
\end{aligned}
\label{eq:supp-positivity-Q}
\end{equation}
The same exact expansion yields
\begin{equation}
\begin{gathered}
\det G=27-3p+J,\qquad
\det(EE^\dagger)=27-3p-J,\\
Q(J)-Q(-J)=-2J\bigl((p-7)^2+32\bigr).
\end{gathered}
\label{eq:supp-positivity-sign-identities}
\end{equation}

Apply Eq.~\eqref{eq:supp-positivity-factorization} first to $EE^\dagger$.  The matrix $E$ is a physical point of its own fixed-Gram fiber, so one product root is a phase, and self-inversiveness makes the other a phase as well. Its normalized discriminant is therefore nonpositive, which gives $Q(-J)\geq0$ whenever the auxiliary elimination factors are nonzero. Those factors are nonzero Laurent polynomials in the entries of a phase matrix. Since the phase torus is Zariski dense, their common nonvanishing set is dense, and intersecting with the open set of invertible matrices preserves density. Every invertible phase matrix is therefore a limit of such points. Because $Q(-J)$ is polynomial in the Gram invariants, continuity gives $Q(-J)\geq0$ also when an auxiliary factor vanishes.

The parent fixed-Gram equations give the formal identity
\begin{equation}
BB^\star=G.
\label{eq:supp-positivity-formal-gram}
\end{equation}
Indeed, their six incidence equations are precisely the off-diagonal entries of this identity, while the diagonal entries equal three because each phase variable multiplied by its sharp is one. Consequently, the star-stable determinant guard gives
\begin{equation}
\det G=\det(B)(\det B)^\#\ne0.
\label{eq:supp-positivity-determinant-guard}
\end{equation}

If $J\geq0$, Eq.~\eqref{eq:supp-positivity-sign-identities} and $\det E\ne0$ give $\det G>0$. If $J<0$, the last identity gives $Q(J)>0$. The $\delta$- and companion-factor guards make the remaining prefactor in Eq.~\eqref{eq:supp-positivity-factorization} strictly positive. Hence $\omega_{\rm n}(G)\leq0$ implies
\begin{equation}
L^2-\det(G)Q(J)\leq0,
\end{equation}
so $\det G\geq L^2/Q(J)\geq0$. Equation~\eqref{eq:supp-positivity-determinant-guard} excludes equality, and therefore $\det G>0$. The diagonal entries of $G$ equal $3$, while its leading $2\times2$ principal minor is positive because two rows of the invertible phase matrix $E$ cannot be proportional.  Sylvester's criterion now gives $G>0$.
\end{proof}

\subproofheading{{The product-regular guards cover the localization}}

\begin{lemma}[The product-regular guards cover the localization]
\label{lem:supp-regular-localization}
{Every denominator in the reconstruction, matching, and cubic-norm identity is nonzero at a product-regular physical lift. Hence, the generic identities specialize there.}
\end{lemma}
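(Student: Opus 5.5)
The plan is to list every denominator that appears in the three families of generic identities and match each one to a factor of the guard product in Eq.~\eqref{eq:supp-product-regularity-guards}, or to a quantity that is automatically nonzero on the physical torus. The three families are the regular companion reconstruction (Lemmas~\ref{lem:supp-product-factor} and~\ref{lem:supp-companion-sufficiency}), the affine-minus matching Eq.~\eqref{eq:affine-minus-matching-main}, and the cubic-norm/flatness identities for $D$. Once every denominator is shown to be nonzero at a product-regular physical lift, evaluation at that point is a ring homomorphism from the localized product function field to $\C$, so each generic identity specializes to a true numerical identity.

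\emph{Denominators from the product quadratic and companion chart.} The quotients $U_h,V_h$ and the roots $u_\pm$ carry the denominators $(1+s_hs_h^\#)c_{6,h}$. On the torus, $1+s_hs_h^\#=1+|s_h|^2>0$, and $c_{6,h}\ne0$ is a guard. The companion quotients $y_j=-A_h(x_j)/B_h(x_j)$ need $\prod_jB_h(x_j)\ne0$. This product equals $R_h$ up to the leading coefficient of $q_h$, which is $1$, so it is guarded. The splitting-algebra localization used in Lemma~\ref{lem:supp-companion-sufficiency} needs the Vandermonde, whose square is $\operatorname{Disc}q_h\ne0$, together with $u$. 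Since $u$ is a product of phases, $|u|=1$. The vertical side is handled by the same argument with the $v$-labelled guards.

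\emph{Denominators from matching and from the formal completion.} The matching constant $\kappa=\delta_vc_{6,h}/(\delta_hc_{6,v})$ has denominator $\delta_hc_{6,v}$, which is guarded. The coefficients of the self-inversive identities Eq.~\eqref{eq:self-inversive-product-data} involve only $c_{6,\nu}$ and $1+|s_\nu|^2$. The formal completion $D^{\C}=-CE^\star(B^{-1})^\star$ inverts $\det B$ and $(\det B)^\#$. The nine flatness equations, after clearing, carry the factors $\det C(\det C)^\#$ and possibly $\det E$. All of these are guarded, and on the torus $(\det B)^\#=\overline{\det B}$.

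\emph{Main obstacle.} The hardest step is to make sure the certified generic identities (Proposition~\ref{prop:supp-generic-flatness} and the cubic-norm identity) were proved in a localization whose inverted elements are exactly products of the listed factors and physical units. The exact reduction might introduce an auxiliary resultant or a Gr\"obner leading coefficient that is not obviously among the guards. My first check would be to read off the denominator set from the certificate and factor each element. The goal is to show that every irreducible factor either divides some guard, or is a monomial in the phase variables, or is a positive real quantity on the torus such as $1+ss^\#$. If a stray factor did appear, I would show it equals a guard times a unit modulo the fixed-Gram relations. Failing that, I would add the factor to the guard list in Definition~\ref{def:product-regularity}.
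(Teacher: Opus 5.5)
Your strategy matches the paper's: list every denominator in the companion reconstruction, the affine-minus matching, and the interpolation/norm step, and check each one against a guard factor or a quantity that is automatically nonzero on the torus. The items you treat are handled correctly:
\begin{itemize}
\item $1+ss^\#=1+|s|^2>0$ and the guards $c_{6,\nu}$;
\item $\prod_jB_\nu(x_j)=R_\nu$, since $q_\nu$ is monic;
\item the Vandermonde, whose square is $\operatorname{Disc}q_\nu$;
\item $\delta_hc_{6,v}$ in $\kappa$, and the determinant guards for $B^{-1}$ and $(B^{-1})^\star$.
\end{itemize}
However, you never complete the enumeration. Your last paragraph defers to reading the denominator set off a certificate. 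Your fallback of adding any stray factor to the guard list in Definition~\ref{def:product-regularity} would change the statement rather than prove it.

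There is in fact exactly one denominator that is not a guard factor, and dispatching it is the only nontrivial step of the lemma. Solving $q_{s,u}(x)=0$ for the product gives $u(xs^\#-1)=x^2(s-x)$. Hence the two remaining coordinate roots satisfy $x'x''=u/x=x(s-x)/(xs^\#-1)$. This quotient is used when passing from one coordinate root to the product and the other roots. Its denominator $xs^\#-1$ is not among the eleven guards, so you must show it cannot vanish.

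At a physical lift, $x\overline{s}=1$ with $|x|=1$ forces $|s|=1$ and $x=s$. Then $A(z)=z\delta\,\bigl(2s^\#z^2-(ss^\#+3)z+2s\bigr)$ vanishes at $z=s$, because $s^\#s^2=s$. The resultant guard gives $B(s)\neq0$, so $y=-A(s)/B(s)=0$, contradicting $y\in\T$. Equivalently, the companion relation forces $A(s)=B(s)=0$, which makes $s$ a common root of $q_{s,u}$ and $B$ and gives $R=0$.

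You should also state that the affine-minus interpolant introduces only the guarded Vandermonde. Likewise, $\operatorname{Norm}_{A/A_v}d$ is the determinant of a multiplication matrix, so it adds no new denominators. With these points supplied, your specialization argument goes through.
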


\begin{proof}
Phase monomials never vanish, and $1+ss^\#=1+|s|^2>0$. The guards $c_{6,h},c_{6,v}$ permit the product quadratics, and the two discriminant guards make the coordinate roots simple. The resultant guards say that the companion denominators $B_h,B_v$ do not vanish at those roots, while $\delta_h,\delta_v$ and determinant guards permit the affine matching and the matrix inverses in the forced block.

{The only further denominator, $xs^\#-1$ in $x(s-x)/(xs^\#-1)$, would force $x=s$ and then $A(s)=B(s)=0$, contrary to the resultant guard.  Interpolation adds only the guarded Vandermonde, and a norm is a determinant.}
\end{proof}

\subproofheading{{Theorem~\ref{thm:regular-seed-domain} (Regular seed-domain theorem; restated)}}
{\emph{A matched product-regular lift above a seed $(a,b,c,d)\in\T^4$ is physical if and only if $\omega_{\rm n}(a,b,c,d)\leq0$. Whenever both matched branches satisfy the product-regularity guards, $\omega_{\rm n}<0$ gives two distinct physical product sheets, while at $\omega_{\rm n}=0$ they coalesce into one physical sheet. On each physical sheet, the affine minus matching determines the vertical root from the horizontal root, and the forced fourth block is unimodular.}}

\begin{proof}
A physical candidate has a phase product root, so self-inversiveness of the product quadratic gives $\omega_{\rm n}\leq0$.  Conversely, suppose $\omega_{\rm n}\leq0$. Both product roots $u$ are distinct phases in the strict case and coincident at equality, and the odd self-inversive cubic $q_{s,u}$ has a phase root $x$.  {Lemma~\ref{lem:supp-companion-sufficiency} makes $y=-A(x)/B(x)$ a phase and supplies the full parent system.  For $v=(1,x,y)^{\mathsf T}$ and $N=G-vv^\dagger$, that system gives $\det N=0$. The determinant Lemma~\ref{lem:supp-complement-positivity} and the rank-one downdate criterion give $N\succeq0$ and hence $|s-x|\le2$.}

{For the other roots, Vieta gives $x'+x''=s-x$ and $x'x''=u/x$.  If $\beta^2=u/x$, self-inversiveness makes $\beta^{-1}(s-x)\in\mathbb R$, and the preceding bound puts it in $[-2,2]$. Hence, $x',x''\in\mathbb T$.  Equations~\eqref{eq:supp-sextic-factor} and \eqref{eq:supp-fundamental-sextic} then make every companion $y_j$ a phase. Transposition proves the vertical statement.}

{Proposition~\ref{prop:supp-generic-flatness} and Lemma~\ref{lem:supp-regular-localization} then make the forced block unimodular, including at the coalesced product root.}
\end{proof}

Thus, on the guarded regular chart, a product root is physical precisely for $\omega_{\rm n}\leq0$. Every companion denominator used below is nonzero, and the reconstructed paired entries satisfy the two trace equations.

Let $A(z),B(z)$ be the uncancelled companion numerator and denominator obtained from the two cross-trace equations.  Whenever $B(x_j)\ne0$, the paired entry is
\begin{equation}
y_j=-\frac{A(x_j)}{B(x_j)}.
\label{eq:supp-companion}
\end{equation}
Reduction modulo $q_{s,u}$ proves
\begin{equation}
\sum_jy_j=t,\qquad \sum_j\frac{y_j}{x_j}=r,\qquad
y_jy_j^\#=1.
\label{eq:supp-companion-traces}
\end{equation}
Thus each physical root $u$ of Eq.~\eqref{eq:supp-product-quadratic} produces one complete generic side candidate up to simultaneous $S_3$ relabeling.

\subproofheading{{The generic cover is nonsplit}}

\begin{proposition}[The generic cover is nonsplit; computer-assisted]
\label{prop:supp-nonsplit}
The discriminant of Eq.~\eqref{eq:supp-product-quadratic}, after substituting the four seed phases $a,b,c,d$, is not a square in $\mathbb Q(a,b,c,d)$.  Hence the generic correspondence is a nontrivial double cover of the seed torus.
\end{proposition}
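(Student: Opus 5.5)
The plan is to show that the discriminant
\[
\Delta(a,b,c,d)=c_3^2-4(1+s_hs_h^\#)^2c_6c_0
\]
of Eq.~\eqref{eq:supp-product-quadratic}, specialized at the horizontal trace data of Eq.~\eqref{eq:supp-horizontal-data}, is not a square in $\mathbb Q(a,b,c,d)$. It suffices to produce a one-parameter specialization along which $\Delta$ restricts to a nonsquare. If $\Delta=F^2$ in $\mathbb Q(a,b,c,d)$, then every specialization at which $F$ is defined stays a square. So I would fix three of the seeds at convenient rational or cyclotomic-free values, for instance $b=c=1$ and $d=-1$, avoiding the singular-corner locus of Lemma~\ref{lem:singularcorner} and the locus $c_6=0$. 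I would then view $\Delta$ as a Laurent polynomial in $a$ alone.

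First I would compute $c_0,c_3,c_6$ exactly from Eqs.~\eqref{eq:supp-A-explicit}--\eqref{eq:supp-fundamental-sextic}, replacing $s^\#,t^\#,r^\#$ by the sharp-images (so $a^{-1}$ appears, not $1/s$). I would clear the monomial denominator $a^N$. Choosing $N$ even changes $\Delta$ only by a square factor.

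Next I would factor the resulting univariate polynomial $P(a)\in\mathbb Q[a]$ into irreducibles over $\mathbb Q$, then exhibit an irreducible factor of odd multiplicity. That factor shows $P$, and hence $\Delta$, is not a square in $\mathbb Q(a)$. As a cheaper certificate, I could instead evaluate at two further rational values $a=a_1$ and $a=a_2$, check that $P(a_i)\in\mathbb Q$ is not a rational square, and additionally confirm that the leading coefficient is nonzero, so the specialization is legitimate. A single rational nonsquare value already refutes $\Delta=F^2$, provided $F$ has no pole there. This is guaranteed if I choose points where the denominator $(1+s_hs_h^\#)c_6$ and the monomial factors are nonzero.

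The main obstacle is bookkeeping rather than ideas. The sharp-images must be treated as independent Laurent expressions, not conjugates, and the chosen specialization must avoid every degeneracy. In particular I must avoid $c_6=0$, where the quadratic drops degree, and $\delta_h=0$, where $A\equiv0$ and the cover may collapse. If my first specialization happens to give a square (for example, on a symmetric slice like $a=d$, where the two product roots may be related by an extra involution), I would switch to a generic-looking integer slice such as $(b,c,d)=(2,3,5)$, which is formal and need not lie on the torus.

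Finally I would deduce the geometric conclusion. Since $\Delta$ is not a square, the product quadratic is irreducible over the function field $\mathbb Q(a,b,c,d)$. Its two roots $u_\pm$ are therefore exchanged by monodromy, so the correspondence $\{(a,b,c,d,u)\}$ is an irreducible, nontrivial double cover of the seed torus rather than two disjoint sections. The exact polynomial arithmetic would be recorded alongside the other certificates in Table~\ref{tab:supp-certificate-index}.
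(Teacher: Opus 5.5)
Your main route is the paper's: specialize three seeds, certify that the univariate specialization is not a square, and lift back. Your fallback slice $(b,c,d)=(2,3,5)$ is exactly the paper's. The paper applies $\gcd(P,P')=1$ to the residual $\Omega$ after dividing out the square factors $(1+s_hs_h^\#)^2\delta_h^2$. Your odd-multiplicity test on the full discriminant does the same job and correctly tolerates those square factors. Two points need repair.

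\textbf{Your first slice is degenerate.} The slice $b=c=1$, $d=-1$ sits on the degeneracy you say you will avoid. There $t_h=-1$ and $r_h=-a^{-1}$, so $r_hr_h^\#=t_ht_h^\#=1$ and $\delta_h\equiv 0$; the corner contains $F_2$ on rows and columns $1,3$. Since $\delta_h^2$ divides the discriminant (it is one of the factors removed to form $\Omega$), $\Delta$ vanishes identically on that slice and tells you nothing. Start from a slice such as $(2,3,5)$.

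\textbf{The final inference conflates $\mathbb Q$ with $\mathbb C$.} This is the substantive issue. Your last paragraph passes from irreducibility of the product quadratic over $\mathbb Q(a,b,c,d)$ to monodromy exchanging $u_\pm$ over the seed torus. That step needs $\Delta$ to be a non-square in $\mathbb C(a,b,c,d)$. If $\Delta$ were $2h^2$, the quadratic would be irreducible over $\mathbb Q(a,b,c,d)$ and every rational evaluation would be a non-square. Yet the complex correspondence would split into two disjoint sections.

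So your ``cheaper certificate'' (rational non-square values) proves only the arithmetic sentence of the statement. It cannot support the geometric conclusion, which is what is used later: irreducibility of the complex normalization $\widetilde{\mathcal S}$ in Proposition~\ref{prop:supp-generic-flatness}.

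Your factorization certificate does suffice, but you must say why. The complex roots of an irreducible factor over $\mathbb Q$ are simple and are not shared with the other factors. Hence an odd-multiplicity factor, or $\gcd(P,P')=1$ with $P$ nonconstant, makes $P$ a non-square in $\mathbb C[a]$. Moreover, $\Delta$ is a Laurent polynomial, and the complex Laurent ring is a UFD, hence integrally closed. So any square root of $\Delta$ in $\mathbb C(a,b,c,d)$ is itself a Laurent polynomial and specializes at $(2,3,5)$.

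That same fact replaces your pole discussion. Poles can only lie on $abcd=0$, and $(1+s_hs_h^\#)c_6$ plays no role for $\Delta$ as you defined it. The paper's version of this point is that the reduced denominator of $\Omega$ is the square monomial $a^4b^4c^4d^4$.
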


\begin{proof}
The discriminant reduction, specialization, and square-freeness calculation below are checked exactly: see the generic four-phase reconstruction certificate group in Table~\ref{tab:supp-certificate-index}.
Clear the manifest square factors from the discriminant and denote the remaining seed rational function by $\Omega(a,b,c,d)$.  At the exact specialization $(b,c,d)=(2,3,5)$, direct expansion gives
\begin{equation}
\Omega(a,2,3,5)=\frac{4P(a)}{50625a^4},
\label{eq:supp-discriminant-specialization}
\end{equation}
where
\begin{equation}
\begin{aligned}
P(a)={}&39438400a^8+817556640a^7+6990420276a^6\\
&+30537666468a^5+70114019337a^4+81963192384a^3\\
&+50372453664a^2+15833180160a+2057529600.
\end{aligned}
\label{eq:supp-discriminant-polynomial}
\end{equation}
The Euclidean algorithm gives $\gcd(P,P')=1$, so every irreducible factor of $P$ occurs with multiplicity one.  Thus the right-hand side of Eq.~\eqref{eq:supp-discriminant-specialization} is not a square in $\mathbb Q(a)$.  Before this one-variable specialization, the reduced denominator of $\Omega(a,b,c,d)$ is the square monomial $a^4b^4c^4d^4$.  If $\Omega$ were a square in $\mathbb Q(a,b,c,d)$, unique factorization in $\mathbb Q[a,b,c,d]$ would therefore force its reduced numerator to be a nonzero rational square times a polynomial square.  Substituting $(b,c,d)=(2,3,5)$ would make the nonzero polynomial $4P(a)$ a square in $\mathbb Q[a]$, contrary to $\gcd(P,P')=1$.  This contradiction proves nonsplitting.
\end{proof}

{This proposition rules out a rational choice of product sheet over the original seed field. Equivalently, the generic function field has the form}
\begin{equation}
{\mathbb Q(a,b,c,d)(w),\qquad w^2=\Omega(a,b,c,d).}
\label{eq:supp-double-cover-field}
\end{equation}
{It does not claim that the cover remains nonrational after an arbitrary birational change of coordinates.}

Transpose the seed to obtain the vertical product quadratic.  Its residual discriminant equals the horizontal one under $b\leftrightarrow c$. Consequently the two product coordinates generate the same quadratic function field.  If $U_h,V_h$ and $U_v,V_v$ denote the respective sums and products of the two roots, the two possible identifications are affine in either root.  Define the companion determinants by
\begin{equation}
\delta_h=r_hr_h^\#-t_ht_h^\#,
\qquad
\delta_v=r_vr_v^\#-t_vt_v^\#.
\label{eq:supp-companion-determinants}
\end{equation}
The flat identification is
\begin{equation}
m=\frac{U_v-\kappa(2u-U_h)}{2},\qquad
\kappa=\frac{\delta_vc_{6,h}}{\delta_hc_{6,v}},
\label{eq:supp-minus-matching}
\end{equation}
where $\delta_h,\delta_v$ are the nonzero companion determinants on the regular locus.

\subproofheading{{Generic lower-block flatness}}

\begin{proposition}[{Generic lower-block norm identity and physical flatness; computer-assisted}]
\label{prop:supp-generic-flatness}
{Over the nonsplit function field $K$ of Eq.~\eqref{eq:supp-double-cover-field}, put $A_h=K[x]/(q_h)$, $A_v=K[p]/(q_v)$, and $A=A_h\otimes_KA_v$.  For the affine-minus interpolant $d\in A$ and $N=\operatorname{Norm}_{A/A_v}d$,}
\begin{equation}
{N(p)N(p)^\#=1\qquad\text{in }A_v.}
\label{eq:supp-flatness-cubic-norm-identity}
\end{equation}
{Thus every product-regular physical specialization has an entrywise unimodular forced block.}
\end{proposition}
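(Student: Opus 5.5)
The plan is to reduce entrywise unimodularity of the forced block to a single norm identity for each row of $D$, and then to close the gap between ``product of moduli equals one'' and ``each modulus equals one'' using the Gram identities together with AM--GM.

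\textbf{Step 1: the interpolant $d$.} On a product-regular lift, the $(i,j)$ entry of $D=-CE^\star(B^{-1})^\star$ is a function of the $j$-th horizontal pair $(x_j,y_j)$ and the $i$-th vertical pair $(p_i,q_i)$, with the seed fixed. I will make this precise as follows.
\begin{itemize}
\item Substitute $y_j=-A_h(x_j)/B_h(x_j)$ and $q_i=-A_v(p_i)/B_v(p_i)$.
\item Express $(B^{-1})^\star$ by cofactors. Its entries are symmetric functions of the other two columns, so they reduce to polynomials in $x_j$ with coefficients in $K$, using $e_1=s_h$, $e_2=us_h^\#$, $e_3=u$.
\item Lagrange-interpolate over the roots of $q_h$. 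The only new denominator is the Vandermonde, which is guarded by $\operatorname{Disc}q_h$.
\end{itemize}
This produces the affine-minus interpolant $d\in A=A_h\otimes_K A_v$, with the vertical root $m$ tied to $u$ by Eq.~\eqref{eq:supp-minus-matching}. Lemma~\ref{lem:supp-regular-localization} guarantees that all denominators introduced here are units at product-regular points. Then $N=\operatorname{Norm}_{A/A_v}d$ is the product $\prod_j d(x_j,p)$, i.e.\ the product of the three entries in the row of $D$ labelled by the vertical root $p$.

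\textbf{Step 2: the norm identity (the main obstacle).} This is an exact symbolic verification of $N N^\#-1=0$ in $A_v$, and I expect the size of the expressions to be the real difficulty. To keep it manageable I will use the following reductions in order.
\begin{itemize}
\item Since $N$ is symmetric in the $x_j$, compute it as a resultant-type expression $\operatorname{Res}_x(q_h,\cdot)$ instead of working in the splitting algebra. This leaves an element of $K[p]/(q_v)$ whose coefficients are rational in $u$ and the seed.
\item Eliminate $m$ through the affine matching.
\item Reduce $u$ modulo the product quadratic, i.e.\ modulo $w^2=\Omega$.
\item Reduce $p$ modulo $q_v$.
\item Check that the remainder vanishes.
\end{itemize}
The sharp involution acts by $u^\#=u^{-1}$, $m^\#=m^{-1}$ and $p^\#=p^{-1}$. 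This is consistent because both quadratics are self-inversive by Eq.~\eqref{eq:self-inversive-product-data}. If the direct computation is too large, I would first verify the identity at many random rational specializations of $(a,b,c,d)$ as a sanity check, and then carry out the exact reduction coefficientwise in $w$.

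\textbf{Step 3: physical specialization.} At a product-regular physical lift, every guard is nonzero, so the identity specializes; $^\#$ becomes complex conjugation. Hence for each row $i$,
\[
\prod_{j=1}^3 |D_{ij}|^2=1 .
\]
On the other hand, Proposition~\ref{prop:directcompletion} gives
\[
\begin{pmatrix}E&B\\ C&D\end{pmatrix}^{\!\dagger}\begin{pmatrix}E&B\\ C&D\end{pmatrix}=6I_6 .
\]
For a square matrix this also gives $HH^\dagger=6I_6$, so $DD^\dagger=6I_3-CC^\dagger$. Since $C$ is unimodular, the diagonal of $CC^\dagger$ is $3$, and therefore
\[
\sum_{j=1}^3 |D_{ij}|^2=3 .
\]
The nonnegative numbers $|D_{ij}|^2$ have arithmetic mean $1$ and geometric mean $1$. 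Equality in AM--GM forces $|D_{ij}|=1$ for every $j$, and this holds for every row $i$. Consequently the forced block is entrywise unimodular at every product-regular physical specialization, including the coalesced sheet $\omega_{\rm n}=0$.
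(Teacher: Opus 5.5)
Your proof is correct, and Step~3 is the same closing move as the paper's: the norm identity gives the row products, the Gram identity gives row sums equal to $3$, and equality in AM--GM does the rest. Your derivation of the row sums from $HH^\dagger=6I_6$ is more explicit than the paper's citation of Proposition~\ref{prop:directcompletion}, which literally controls $D^\dagger D$.

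The genuine difference is Step~2. The paper never expands $NN^\#-1$; it argues by density instead.
\begin{itemize}
\item Bondal and Zhdanovskiy supply a four-real-dimensional family of Hadamard matrices. After removing the Karlsson sector and Tao, the classification theorem covers the remainder by finitely many finite-corner loci whose seed maps have finite fibres.
\item By Hardt triviality, some frame therefore has a four-dimensional, hence Zariski-dense, physical seed image. Its lift to the double cover, which is irreducible by Proposition~\ref{prop:supp-nonsplit}, is then dense.
\item At each lifted Hadamard seed the actual matrix is recovered, so one of the two matchings has $F_\pm=N_\pm N_\pm^\#-1=0$. Irreducibility forces $F_+\equiv0$ or $F_-\equiv0$.
\item A single exact specialization at $(2,3,5,7)$, where $F_+\neq0$, selects the affine-minus matching.
\end{itemize}

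What each route buys:
\begin{itemize}
\item The paper trades your large elimination for one specialization. The price is importing an existence theorem, semialgebraic dimension theory, the nonsplit irreducibility result, and the main classification.
\item Your route needs none of these and tests the named matching directly. Combined with Theorem~\ref{thm:regular-seed-domain}, it would even reprove the existence of a four-parameter family rather than assume it.
\item The cost of your route is that the proof is only as good as the completed exact reduction in $K[p]/(q_v)$. Random rational specializations are a sanity check, not a proof, and the size of that reduction is exactly what the paper's argument avoids.
\end{itemize}

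Two small points for executing your plan.

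First, the formal identity $BB^\star=G:=6I_3-EE^\star$ gives $(B^{-1})^\star=(B^\star)^{-1}=G^{-1}B$. Hence $D=-CE^\star G^{-1}B$, and $d(x,p)=-(1,p,q)\,E^\star G^{-1}(1,x,y)^{\mathsf T}$ is bilinear in the two pairs. This removes the cofactor and interpolation bookkeeping in Step~1.

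Second, self-inversiveness of the vertical quadratic only gives $m^\#\in\{m^{-1},m'^{-1}\}$. That the affine-minus matching together with $u^\#=u^{-1}$ yields $m^\#=m^{-1}$ is equivalent to the seed identity $\kappa^\#V_v=\kappa V_h$. Include this in the exact check rather than assuming it.
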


\begin{proof}
The exact specialization used to select the physical sheet is checked by the generic four-phase reconstruction certificate group in Table~\ref{tab:supp-certificate-index}.
{Bondal and Zhdanovskiy's Theorems~17 and~22 give an irreducible complex component $\mathcal M$ whose physical locus contains a four-real-dimensional family of order-six Hadamard matrices~\MainCiteBondal. Removing the Karlsson sector, of real dimension at most three, and the Tao point leaves a set $\mathcal U$ with}
\begin{equation}
{\dim_{\C}\mathcal M=4,
\qquad
\dim_{\R}\mathcal U=4.}
\label{eq:supp-flatness-dimensions}
\end{equation}
{For each ordered frame $\alpha$, let $X_\alpha\subset\mathcal U$ be its finite-corner locus and $\sigma_\alpha:X_\alpha\to\T^4$ its seed map.  Hardt triviality makes the finite-fibre locus semialgebraic~\MainCiteBasu. The finitely many $X_\alpha$ cover the four-dimensional set $\mathcal U$, and every $\sigma_\alpha$ has finite fibers. Hence, some image has dimension four and is Zariski dense in $\mathcal S=(\mathbb C^*)^4$.}

{Let $\pi:\widetilde{\mathcal S}\to\mathcal S^\circ$ be the finite normalization in the nonsplit quadratic field, irreducible by Proposition~\ref{prop:supp-nonsplit}. For the lift $L$ of the dense Hadamard seed locus, $\pi(\overline L^{\rm Zar})$ is closed and dense. Hence, it is $\mathcal S^\circ$, and dimension forces $\overline L^{\rm Zar}=\widetilde{\mathcal S}$. The nonempty product-regular localization is therefore dense. There the two matchings give regular norm remainders $F_\pm=N_\pm N_\pm^\#-1$ whose zero loci cover $L$; irreducibility makes one vanish identically. The exact specialization $(a,b,c,d)=(2,3,5,7)$ has $F_+\ne0$ and verifies nonemptiness, proving Eq.~\eqref{eq:supp-flatness-cubic-norm-identity} for $F_-$.}

{At a physical point the three values of $N$ are the row products of $D$, whereas Proposition~\ref{prop:directcompletion} gives $\sum_j|D_{ij}|^2=3$.  Thus}
\begin{equation}
{1=\frac13\sum_j|D_{ij}|^2\ge
\left(\prod_j|D_{ij}|^2\right)^{1/3}=1.}
\end{equation}
{The inequality of arithmetic and geometric means then gives $|D_{ij}|=1$.}
\end{proof}

\subsection{{Proof of Theorem~\ref{thm:global-product-regular-escape}}}
\label{sec:supp-product-escape}

The 400 unframed corners are indexed by three-element row and column subsets.  Choosing a pivot and ordering the other two selected rows and columns gives $3!^2=36$ frames per corner, hence $14{,}400$ frames.  A retained frame is product regular precisely when the eleven guards in Eq.~\eqref{eq:supp-product-regularity-guards} are nonzero.  A retained frame with a nonempty exact zero support is labelled exceptional, while nonretained corners are labelled separately.

Row and column permutations act bijectively on the 400 corners and the $14{,}400$ frames, transporting the normalized fibers, actual blocks, guards, and product equations. Consequently the existence of a product-regular frame is invariant under standard equivalence, although an individual preferred frame is not intrinsic. The full zero support is retained when several guards vanish simultaneously.

At class level write
\begin{equation}
\mathcal P_6:=\{[H]\in\HH_6:\text{$[H]$ admits a product-regular frame}\}.
\label{eq:supp-product-regular-class-locus}
\end{equation}
The global result below is a coverage statement about the explicit quadratic--cubic charts, not a definition of a residual class.

\subproofheading{{The product-exceptional Karlsson class}}

\begin{proposition}[The product-exceptional Karlsson class; computer-assisted]
\label{prop:supp-karlsson-product-exceptional-singleton}
Let
\begin{equation}
H_\times=
\begin{pmatrix}
1&1&1&1&1&1\\
1&-1&1&\I&-1&-\I\\
1&-1&-1&-\I&\I&1\\
1&\I&-\I&-1&-\I&\I\\
1&1&\I&-\I&-1&-1\\
1&-\I&-1&\I&1&-1
\end{pmatrix}.
\label{eq:supp-product-exceptional-karlsson-matrix}
\end{equation}
Then
\begin{equation}
\KK_6^{(3)}\setminus\mathcal P_6=\{[H_\times]\}.
\label{eq:supp-karlsson-product-exceptional-singleton}
\end{equation}
\end{proposition}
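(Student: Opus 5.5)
We must prove two things: every Karlsson class other than $[H_\times]$ admits a product-regular frame, and $[H_\times]$ admits none. The plan reuses the Karlsson chart already organised in the proof of Proposition~\ref{prop:karlsson-finite-corner}. That proof gives the canonical nondegenerate chart $(t,p)$ with Möbius phases $z_1,\dots,z_4$, the reciprocal-orientation equivalence of Eq.~\eqref{eq:supp-karlsson-reciprocal-orientation}, and the affine-Fourier boundary $F_6^{(2)}(z_1,z_2)$ together with its transpose. On each piece we write the eleven guards of Eq.~\eqref{eq:supp-product-regularity-guards} as explicit Laurent polynomials in the chart parameters. The determinant, $c_{6}$, and $\delta$ factors are controlled by the same block-determinant and oriented-leading-coefficient calculations as before. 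The new factors are the discriminants $\operatorname{Disc}(q_h)$ and $\operatorname{Disc}(q_v)$ and the resultants $R_h$ and $R_v$, which must be evaluated at the actual product roots $u=e_3(x)$ and $m=e_3(p)$ of the adjacent blocks.

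\textbf{Step 1: covering the chart by frames.} We choose a finite list of frames: the fixed mixed corner $I=J=\{1,3,4\}$ under all $36$ pivot/orderings, plus several further corners. For each frame we form the product $W_\nu$ of its eleven guards, pulled back to the Karlsson parameter space. We then show that the common zero set $\{W_1=\cdots=W_N=0\}$ on the physical parameter domain consists only of parameters equivalent to $H_\times$. The method is the same pipeline as in Eq.~\eqref{eq:supp-fourier-no-common-zero}: pairwise resultants, reduction by reciprocal unit-circle conditions, and elimination of the remaining phases. On the two-dimensional affine-Fourier boundary this is a finite exact elimination. On the three-parameter open chart we first eliminate $z$ using the Möbius relations, reducing to $(t,p)$ and one branch choice. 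We then certify nonvanishing outside finitely many exceptional algebraic points, using Bernstein positivity on the compactified square as in Eq.~\eqref{eq:supp-karlsson-compactification}.

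\textbf{Step 2: the exceptional points.} The surviving common zeros should be finitely many explicit points, presumably with entries in $\{\pm1,\pm\I\}$. For each one we dephase it and check equivalence to $H_\times$, for instance by comparing invariants such as the multiset of cross ratios or the Haagerup set. We also check that points related by the reciprocal orientation map to the same class.

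\textbf{Step 3: $[H_\times]$ has no product-regular frame.} This direction is a finite exact enumeration. For each of the $14{,}400$ frames of $H_\times$, we compute the guards in $\mathbb Q(\I)$ and exhibit one that vanishes. Since existence of a product-regular frame is invariant under equivalence (as noted before Eq.~\eqref{eq:supp-product-regular-class-locus}), this one representative suffices. The symmetry group of $H_\times$ can reduce the work to orbit representatives. Because $H_\times$ has entries in $\{\pm1,\pm\I\}$, many frames are singular, giving $\det E=0$. For the others, we expect $\delta$ or $c_6$ to vanish, or a repeated root in $q_h$.

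\textbf{Main obstacle.} The hard part is Step 1 on the three-parameter interior. The resultant and discriminant guards depend on which product root $u_\pm$ the actual block realises, so they are not polynomial in the chart parameters until the sheet is fixed. I would first express $u=e_3(x)$ directly from the explicit Karlsson block entries; this is a monomial in $z_1,z_2$ and the entries of $A$ and $B$. That makes every guard an honest Laurent polynomial. The remaining risk is that the common vanishing locus of the chosen frames turns out to be positive-dimensional. In that case more frames must be added until the intersection collapses to the orbit of $H_\times$.
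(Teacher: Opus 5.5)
\textbf{Gap: the reverse inclusion.} Your plan does not yet prove that every Karlsson class other than $[H_\times]$ has a product-regular frame.

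First, the open chart is mis-dimensioned. The M\"obius relations fix $z_2,z_3,z_4$ (up to signs) in terms of $(t,p)$ and a phase $z_1$ that remains free. So the chart has three real parameters, and the guards depend on that phase. A Bernstein certificate on the compactified $(t,p)$ square therefore cannot describe the common zero set of several frames' guard products. The paper's finite-corner argument does eliminate the unit phase by a reciprocal-conjugate resultant, but one factor at a time. That shows a single factor never vanishes; it does not intersect zero loci coming from different frames.

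Second, and decisively, you expect the common zeros to collapse to finitely many explicit points, and that is what fails for the natural frames. The mixed corner $\{1,3,4\}$ fails in both of its effective orders on nonempty strata that are not just preimages of $[H_\times]$. The paper resolves these by a division-free analysis of the $49$ possible pairs of guard failures:
\begin{itemize}
\item $40$ pairs are empty;
\item one pair is incompatible with Karlsson's canonical domain;
\item six mixed pairs are rescued by explicit alternative frames;
\item one pair is removed by the reciprocal presentation.
\end{itemize}
The last stratum is handled structurally, not by more frames. It produces $-1$ in three distinct columns, so the Matszangosz--Sz\"oll\H{o}si theorem routes it to the transposed-Fourier or two-circulant sector. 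There the affine-Fourier family and its transpose have product-regular frames, and an exact divisor calculation in the two-circulant sector leaves only $[H_\times]$. Your fallback, adding frames ``until the intersection collapses,'' supplies neither this identification nor a concrete finite computation replacing it. In particular you give no way to carry out the elimination in three real parameters with both sheet and sign branches. So Step~1 as written does not establish the inclusion.

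\textbf{Smaller points.} In Step~2, agreement of invariants such as cross-ratio multisets or the Haagerup set is only a necessary condition for equivalence. To place a surviving point in $[H_\times]$ you must exhibit an explicit monomial equivalence, or pin the class down inside an identified sector as the paper does.

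Step~3, exact evaluation over all $14{,}400$ frames of one representative, is the paper's argument, but your predictions about which guard fails are wrong. All $400$ positional corners of $H_\times$ are invertible. Having $H_2$ submatrices does not force singular corners; Lemma~\ref{lem:singularcorner} only goes the other way. In every frame it is a resultant guard that fails, because an actual root of the coordinate cubic annihilates the companion denominator in both directions.

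You should also state that $H_\times H_\times^\dagger=6I_6$ and that its leading $2\times2$ block is $F_2$. This gives $[H_\times]\in\KK_6^{(3)}$.
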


\begin{proof}
{The all-frame calculation for $H_\times$, its independent companion check, and the reverse-inclusion analysis are performed in exact arithmetic; see the product-exceptional certificate group in Table~\ref{tab:supp-certificate-index}.}
Exact arithmetic gives $H_\times H_\times^\dagger=6I_6$. Its leading
$2\times2$ block is $F_2$, so $[H_\times]\in\KK_6^{(3)}$. All $400$ of its
positional $3\times3$ corners are invertible, with determinant-norm census
$(N_4,N_8,N_{16},N_{20})=(120,120,80,80)$. Nevertheless, exact evaluation
over all $14{,}400$ ordered frames shows that an actual coordinate-cubic root
annihilates the companion denominator in both directions. Thus every frame
fails a resultant guard and $[H_\times]\notin\mathcal P_6$.

For the reverse inclusion, apply Karlsson's standard parametrization away
from its affine-Fourier boundary and use the two effective orders of its
mixed corner. A product-exceptional point must fail both orders in one
direction. The division-free pair analysis has $49$ cases: $40$ are empty,
the imbalance pair is incompatible with Karlsson's canonical domain, six
mixed pairs admit explicit rescue frames, and the reciprocal presentation
removes one further pair. The sole remaining pair produces $-1$ in three
distinct columns. The theorem of Matszangosz and Sz\"oll\H{o}si
\cite{MatszangoszSzollosi2024} therefore routes it to the transposed-Fourier
or two-circulant sector. Every affine-Fourier matrix and its transpose has a
product-regular frame, while the exact divisor calculation in the
two-circulant sector leaves precisely $[H_\times]$. Transposition treats a
double failure in the other direction, and $H_\times^{\mathsf T}\sim H_\times$.
This proves Eq.~\eqref{eq:supp-karlsson-product-exceptional-singleton}.
\end{proof}

\subproofheading{{Tao is product exceptional}}

\begin{proposition}[Tao is product exceptional; computer-assisted]
\label{prop:supp-tao-product-exceptional}
Tao's class has no product-regular frame:
\begin{equation}
\mathcal T_6\cap\mathcal P_6=\varnothing.
\label{eq:supp-tao-product-exceptional}
\end{equation}
\end{proposition}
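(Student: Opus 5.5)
The plan is to show that at every one of the $14{,}400$ ordered frames of a representative of Tao's class, at least one of the eleven guards in Eq.~\eqref{eq:supp-product-regularity-guards} vanishes. Existence of a product-regular frame is invariant under standard equivalence, because row and column permutations and phasings transport frames, fibers and guards bijectively. It therefore suffices to work with the single dephased matrix $S_6^{(0)}$ of Eq.~\eqref{eq:taomatrix} and enumerate its frames exactly. All entries lie in $\mathbb Q(\omega)$, so every seed, trace datum $(s,t,r)$, coefficient $c_k$, discriminant and resultant is an element of $\mathbb Q(\omega)$. Each guard can therefore be evaluated exactly, with no numerics.

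Before the enumeration, I would look for a structural reason, so that the computation becomes a confirmation rather than the whole argument. The guiding observation is the computation already made in Proposition~\ref{prop:tao-finite-corner}: for the leading corner, $(BB^\dagger)_{23}=0$. Every $3\times3$ submatrix of $S_6^{(0)}$ has cubic-root entries. Hence every row inner product of a complementary block is a sum of three cube roots of unity, namely $0$, $3\lambda$, or $\lambda(1-\omega^{\pm1})$-type values. Consequently, many trace data $s,t,r$ vanish or have equal moduli.

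I would split the frames by orbit. The automorphism group of $S_6^{(0)}$ is large, so the $400$ corners fall into few orbits under it. For each orbit I would take one representative and check which guard fails. In particular, I expect $\delta_h=|r_h|^2-|t_h|^2=0$ or $\delta_v=0$ to hold very often, since $|r|,|t|\in\{0,\sqrt3,3\}$. I also expect $c_{6}=0$ whenever one off-diagonal Gram entry vanishes, by inspecting the factorization in Eq.~\eqref{eq:supp-c6-explicit}. For frames where neither happens, I would compute $\operatorname{Disc}q_h$ and $R_h$. The actual coordinate cubic has cube-root roots, and I expect either a repeated root or a common root with $B_h$.

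The main obstacle is that the guards depend on the ordering within a frame, that is, on the pivot and orientation, and not only on the corner. A purely orbit-theoretic argument therefore has to track the full stabilizer action on frames, not just on corners. Rather than trusting a hand orbit count, I would fall back on the exact enumeration of all $14{,}400$ frames in $\mathbb Q(\omega)$, recording for each frame a vanishing guard. I would cross-check that enumeration against the orbit predictions, and treat Tao's matrix as self-transposed up to equivalence so that the horizontal and vertical checks coincide.
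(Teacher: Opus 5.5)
Your proposal is correct and is essentially the paper's proof. The paper's argument is exactly an exact enumeration over $\mathbb Z[\omega]/(\omega^2+\omega+1)$ of the $14{,}400$ ordered frames of $S_6^{(0)}$: $1{,}440$ frames fail a leading-coefficient guard, $7{,}200$ of the rest fail an imbalance guard $\delta_h\delta_v$, and in the remaining $5{,}760$ an actual side candidate has a repeated coordinate, so a coordinate-cubic discriminant vanishes and the resultant guard is never needed.

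One of your heuristics is off, though this is harmless because you rely on the enumeration. By Eq.~\eqref{eq:supp-c6-explicit}, a single vanishing trace datum does not kill $c_6$; for instance, $r=0$ gives $c_6=3(s^\#)^3tt^\#$. So at the leading Tao frame, where $r_h=0$ and $|s_h|=|t_h|=\sqrt3$, both $c_{6,h}$ and $\delta_h$ are nonzero, and it is the repeated entry $\omega^2$ in the second row of $B$ that breaks product regularity.
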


\begin{proof}
The enumeration below is performed exactly in $\mathbb Z[\omega]/(\omega^2+\omega+1)$; see the product-exceptional certificate group in Table~\ref{tab:supp-certificate-index}.
Work exactly in $\mathbb Z[\omega]/(\omega^2+\omega+1)$ and enumerate the
$120^2=14{,}400$ ordered frames of the standard Tao representative. Exactly
$12{,}960$ frames have both leading coefficients nonzero, and $5{,}760$ of
these also have both imbalance factors nonzero. In every remaining frame an
actual horizontal or vertical candidate has a repeated coordinate, so its
coordinate cubic is not simple. Hence every Tao frame fails at least one
product-regularity guard.
\end{proof}

{For a phase block $X$, write $\gamma_{ij}=\sum_kX_{ik}\overline{X_{jk}}$. Call an unordered row pair $\{i,j\}$ \emph{admissible} when its three entrywise row ratios are distinct and, for the remaining row $\ell$, $|\gamma_{i\ell}|\ne|\gamma_{j\ell}|$. The block is row-admissible if at least one row pair is admissible, and it is column-admissible when $X^{\mathsf T}$ is row-admissible. We call a direction \emph{obstructed} when it is not admissible.}

\begin{lemma}[{Admissibility supplies the remaining product guards}]
\label{lem:supp-admissibility-guards}
{For an invertible member $X=(\mathbf1;x;y)$ of a finite physical side fiber of a non-$H_2$ matrix, an admissible ordered row pair also has $c_6\ne0$ and $R=\operatorname{Res}_z(q_{s,u},B)\ne0$.}
\end{lemma}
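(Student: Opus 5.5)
We work in the normalized form $X=(\mathbf 1;x;y)$ with the admissible ordered pair placed as rows $2,3$. This means the row ratios $y_j/x_j$ are distinct, or equivalently the cross-ratios relative to row $1$ are distinct, depending on which pair is chosen. After a row relabeling and rephasing we may assume the pair is $\{2,3\}$ or $\{1,2\}$. In both cases the data $(s,t,r)$ of Eq.~\eqref{eq:gramSTR} are available, and we read admissibility as two conditions. The first is distinct coordinates in the relevant triple. The second is $|R|\ne|T|$ (or the analogous modulus inequality), i.e.\ $\delta=rr^\#-tt^\#\ne0$.

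\textbf{Step 1: $c_6\neq0$.} We argue by contradiction. Suppose $c_6=0$. By the coefficient identities \eqref{eq:supp-coefficient-identities}, $c_5=0$. Reciprocity of $\Phi_{\rm fund}$ (it is $z^3$ times a $\#$-real Laurent polynomial) then gives $c_0=c_1=0$ up to conjugation. So $\Phi_{\rm fund}$ drops to a reciprocal quartic or lower, and the quadratic \eqref{eq:supp-product-quadratic} degenerates.

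On the physical fiber, each coordinate $x_j$ still satisfies $A(x_j)+B(x_j)y_j=0$ with $|y_j|=1$. I would use the explicit factorization \eqref{eq:supp-c6-explicit}: $c_6$ is the product of the $\overline{\phantom{x}}$-leading coefficient of $A^\#$-type terms against $b_3$. So $c_6=0$ means $b_3=0$ or the conjugate-paired factor vanishes, which by symmetry is $b_0^\#$-type.

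Exactly as in the nondependent branch of Proposition~\ref{prop:main-infinite-fiber-trichotomy}, $b_3=0$ (or $b_0=0$) lets one write $\mathcal B=Q$ or $xQ$ with $\deg Q\le2$. Comparing $|A|=|B|$ at the three distinct points $x_j$, together with the root-counting allocation argument of that proof, then forces one of two outcomes. Either $y_j=\lambda x_j^k$ for all $j$, which makes rows proportional or Fourier and contradicts invertibility or the $|\gamma|$ inequality. Or some $x_j$ lies in $Z_s$, which produces the $(1,u,-u)$ pattern and hence an $H_2$ submatrix, which is excluded.

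The hypothesis $\delta\ne0$ from admissibility is what keeps $A\not\equiv0$, so we are in the nondependent branch.

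\textbf{Step 2: $R\ne0$.} Suppose $\operatorname{Res}(q_{s,u},B)=0$. Then some root $x_j$ of $q_{s,u}$, which is an actual coordinate because the roots of $q_{s,u}$ are exactly $x_1,x_2,x_3$ when $u=e_3(x)$, satisfies $B(x_j)=0$. The linear relation then forces $A(x_j)=0$. Since $\delta\ne0$ and $x_j\ne0$, this gives $\kappa_S(x_j)=0$, so $x_j\in Z_s$ is a unit-circle root of $\kappa_s$.

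We then repeat the common-root repair. Consider the other two coordinates, with $a+b=s-x_j$. If $s\ne x_j$, the common zero is an isolated incidence, and I would try to show it contradicts distinctness of the ratios or of the $|\gamma|$ values. The approach is to compute $B$ and $\kappa_s$ sharing a root, which by \eqref{eq:supp-B-explicit} imposes a relation on $(s,t,r)$. Pushing that relation through $\det(XX^\dagger)>0$ should yield either $|R|=|T|$ or $s=1$ with $x_j=1$. The case $s=1$ gives the $H_2$ pattern \eqref{eq:oneuminusu}.

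\textbf{Main obstacle.} The hard part is Step 2's isolated common-root case. The infinite-fiber argument of Proposition~\ref{prop:main-infinite-fiber-trichotomy} used infinitude to force $s=z$, whereas here the fiber is finite, so that pigeonhole is unavailable. I would first try an exact elimination: compute $\operatorname{Res}_z(\kappa_s,B)$ as a polynomial in $(s,t,r)$ and their sharps, and factor it. I expect the factors to be $\delta$, a factor vanishing only at $s=1$, and a factor forcing coincident ratios. If this factorization does not close, I would fall back on a computer-assisted certificate of the kind used elsewhere in the Supplement.
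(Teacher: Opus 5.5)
Your overall plan follows the paper's: for $c_6$ you reduce to the nondependent branch of the trichotomy, with $\delta\ne0$ keeping $\mathcal A\not\equiv0$. For $R$ you correctly pass from a common root $\rho=x_j$ of $q_{s,u}$ and $B$ to $A(\rho)=0$, and hence $\kappa_s(\rho)=0$. The proof does not close, however.

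In Step~1 you assert that some $x_j\in Z_s$ ``produces the $(1,u,-u)$ pattern.'' In the infinite-fiber proof, that conclusion ($s=z=1$) came from a pigeonhole over infinitely many triples. This is exactly the obstacle you later flag in Step~2, and you leave it unresolved there. The missing idea is the $u$-independent polynomial identity
\[
(1-sz)\,q_{s,u}'(z)+s\,q_{s,u}(z)=-z\,\kappa_s(z)\qquad(s\ \text{phased real}).
\]
At any root $\rho$ of $q_{s,u}$ it gives $q_{s,u}'(\rho)=\rho\kappa_s(\rho)/(s\rho-1)$. So a coordinate $\rho=x_j$ with $\kappa_s(\rho)=0$ has only two options. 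Either it is a repeated root of $q_{s,u}$, contradicting the distinct-ratio half of admissibility. Or $s\rho=1$, i.e.\ $s=\rho=1$, which is the $H_2$ edge of Eq.~\eqref{eq:oneuminusu}. This one line closes both the $1\le s\le3$ branch of Step~1 and all of Step~2, with no computer certificate. Your proposed elimination through $\operatorname{Res}_z(\kappa_s,B)$ targets the wrong pair of polynomials. That resultant does not involve $u$, so it throws away the one fact that matters: $\rho$ is also a root of $q_{s,u}$.

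Second, the allocation step in Step~1 has three outcomes, not two. For $0<s<1$, the doubled-root allocations $Q\propto(z-r_\pm)^2$ give the M\"obius relations of Eq.~\eqref{eq:twoMobiusAllocations}. These have $b_3=0$ or $b_0=0$, hence $c_6=0$. They are generically compatible with invertibility, distinct coordinates, and the absence of $H_2$ submatrices. They are also compatible with $\delta\ne0$: there $|T|=s$ while $|R|=s|r-s|/|1-sr|\ne s$. They are excluded only by the hypothesis that the side fiber is finite. By Eqs.~\eqref{eq:mobiusSum0}--\eqref{eq:mobiusSumPlus}, the traces $\sum_jy_j$ and $\sum_jy_j/x_j$ do not depend on $u$, so every unimodular triple with sum $s$ lifts to a fiber member. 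Your proposal never uses finiteness, which should have been a warning sign.

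Finally, make explicit the step that feeds the allocation argument. $c_6=0$ gives $c_5=c_0=c_1=0$, so $\Phi_{\rm fund}=z^2(c_4z^2+c_3z+c_2)$. The three distinct nonzero roots $x_j$ then force $\Phi_{\rm fund}\equiv0$. Unique factorization of $QQ^\#=\delta^2\kappa_s^2$ needs this polynomial identity; comparing $|A|$ and $|B|$ at three points does not supply it.
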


\begin{proof}
{The uncancelled companion equations make the three distinct nonzero $x_j$ roots of $\Phi_{\rm fund}$.  If $c_6=0$, reciprocity and Eq.~\eqref{eq:supp-coefficient-identities} therefore force $\Phi_{\rm fund}\equiv0$. Phase such that $s\in[0,3]$. Then}
\begin{equation}
{A(z)=\delta z\kappa_s(z),\quad
\kappa_s(z)=2sz^2-(s^2+3)z+2s,\quad
Q(z)Q^\#(z)=\delta^2\kappa_s(z)^2.}
\label{eq:supp-admissibility-kappa}
\end{equation}
{Comparison in $|B|^2=\delta^2|\kappa_s|^2$ gives $B=z^\varepsilon Q$ ($\varepsilon=0$ or $1$) and the last identity.  If $s=0$, it yields either proportional rows or the one-parameter triples $\mu(1,\omega,\omega^2)$, contradicting invertibility or finiteness.  If $0<s<1$, unique factorization gives $Q\propto\kappa_s$ or $Q\propto(z-r_\pm)^2$. The first is proportional and the latter are the infinite M\"obius fibres in Eqs.~\eqref{eq:twoMobiusAllocations}--\eqref{eq:mobiusSumPlus}.  If $1\le s\le3$, it gives $Q\propto\kappa_s$, and a common root $\rho$ then satisfies either $q'_{s,u}(\rho)=\rho\kappa_s(\rho)/(s\rho-1)=0$ or $s=\rho=1$, respectively contradicting simplicity or exposing an $H_2$ edge. Thus $c_6\ne0$.}

{If $R=0$, a simple common root of $q_{s,u}$ and $B$ also annihilates $A$. The same alternative gives a multiple root or the excluded $H_2$ edge. Hence $R\ne0$.}
\end{proof}

{Thus, at a finite non-$H_2$ corner, row/column admissibility is equivalent to existence of a product-regular frame: invertibility and admissibility give all guards except $c_6,R$, supplied by the lemma, and the three unordered pairs exhaust the orientations.}

\begin{lemma}[{Complete repeated-ratio trichotomy}]
\label{lem:supp-repeated-ratio-trichotomy}
{Let $X\in\T^{3\times3}$ be invertible and row-obstructed, and suppose that at least one row pair is not simple.  Outside a visible $H_2$ edge, row and column phasing and permutations put $X$ in exactly one of the following three types:}
\begin{align}
{X_1(a)}&{=\begin{pmatrix}1&1&1\\1&1&a\\1&a&1\end{pmatrix},}
&&{\text{exactly one simple row pair},}
\label{eq:supp-one-simple-normal-form}\\
{X_0(a)}&{=\begin{pmatrix}1&1&1\\1&1&a\\1&a^{-1}&1\end{pmatrix},}
&&{\text{no simple row pair},}
\label{eq:supp-no-simple-normal-form}\\
{X_2(c)}&{=\begin{pmatrix}
1&1&1\\
1&1&-\dfrac{c(c+2)}{2c+1}\\
1&-\dfrac{c(2c+1)}{c+2}&c
\end{pmatrix},}
&&{\text{exactly two simple row pairs}.}
\label{eq:supp-two-simple-normal-form}
\end{align}
{Here the phase parameters are restricted by the stated simplicity and invertibility conditions.  The first two types are column-obstructed.  The third type is column-admissible. If $q$ is its common squared row-correlation modulus and $\theta=\RePart\tau_{\rm r}(X_2)$, then}
\begin{equation}
{3q-q^2+2\theta=0.}
\label{eq:supp-two-simple-negative-invariant}
\end{equation}
{Consequently an invertible $X_2$ has $\theta<0$, unless it is an order-three Hadamard block.}
\end{lemma}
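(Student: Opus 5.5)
We will normalize first and then split into cases by the number of simple row pairs. Since some row pair, say rows $1,2$ after a permutation, is not simple, two of its entrywise ratios coincide. They cannot all three coincide, because then the two rows would be proportional, contradicting invertibility. After column phasing and a column permutation we can therefore write the first two rows as $(1,1,1)$ and $(1,1,a)$ with $a\neq1$. Row phasing of the third row gives $(1,b,c)$. The remaining pairs $\{1,3\}$ and $\{2,3\}$ have ratio triples $(1,b,c)$ and $(1,b,c/a)$, and the Gram entries are
\[
\gamma_{12}=2+\overline a,\qquad \gamma_{13}=1+\overline b+\overline c,\qquad \gamma_{23}=1+\overline b+a\overline c .
\]
Row obstruction then says that every simple pair $\{i,j\}$ satisfies $|\gamma_{i\ell}|=|\gamma_{j\ell}|$. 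We split according to how many of $\{1,3\},\{2,3\}$ are simple.

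\emph{No simple pair.} Both of the other pairs have a repeated ratio. We list the possible coincidences $1=b$, $1=c$, $b=c$ for the first pair and $1=b$, $1=c/a$, $b=c/a$ for the second. Any choice that makes two columns or two rows equal is discarded by invertibility. Any choice that produces a $2\times2$ block of the form $\begin{pmatrix}1&1\\ z&-z\end{pmatrix}$ is discarded as a visible $H_2$ edge. We expect the survivors, after permutations, to be $c=1$ together with $b=a^{-1}$, which is exactly $X_0(a)$.

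\emph{Exactly one simple pair.} One of the two remaining pairs repeats a ratio, and the other satisfies a single modulus equation. Solving the repeat (for instance $b=1$, or $c=a$ up to symmetry) and substituting into the modulus equation is expected to force $b=a$, $c=1$, which is $X_1(a)$. Any extra root of this equation should again be an $H_2$ edge or a singular matrix.

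\emph{Exactly two simple pairs.} The pairs $\{1,3\}$ and $\{2,3\}$ are both simple, so the obstruction gives
\[
|\gamma_{12}|=|\gamma_{23}|=|\gamma_{13}|.
\]
These are two real equations in the three phases $a,b,c$. Rewriting them as self-inversive Laurent equations and eliminating $a$ and $b$ should leave the one-parameter family $X_2(c)$, with $b=-c(2c+1)/(c+2)$ and $a$ the displayed expression. I expect this elimination, done cleanly and division-free with every degenerate branch identified as singular or $H_2$, to be the main obstacle. I would do it by resultants and check the factorization exactly, in the same style as the other computer-assisted lemmas.

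For the column statements we transpose the normal forms. $X_1(a)$ is symmetric, so row obstruction transfers directly to the columns. $X_0(a)^{\mathsf T}=X_0(a^{-1})$ after a swap, so it is column-obstructed as well. For $X_2(c)$ we exhibit one explicit column pair with distinct ratios and unequal correlation moduli, which shows that it is column-admissible.

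Finally, in type $X_2$ all three squared row-correlation moduli equal a common value $q$, so $\sigma(XX^\dagger)=3q$. Eq.~\eqref{eq:detcubic} then gives $\det(XX^\dagger)=27-9q+2\theta$. Substituting the parametrization is expected to yield the identity $3q-q^2+2\theta=0$ of Eq.~\eqref{eq:supp-two-simple-negative-invariant}. Combining the two relations gives
\[
\det(XX^\dagger)=q^2-12q+27=(q-3)(q-9).
\]
Since $q\le 9$ and invertibility forces $\det(XX^\dagger)>0$, we get $q<3$, and hence $\theta=(q^2-3q)/2<0$ unless $q=0$. The case $q=0$ means $XX^\dagger=3I_3$, the order-three Hadamard exception.
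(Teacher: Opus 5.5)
Your plan is correct and is essentially the paper's proof: normalize to rows $(1,1,1),(1,1,a),(1,b,c)$, impose equimodularity of the row correlations, transpose for the column claims, and use $\det(XX^\dagger)=27-9q+2\theta=(q-3)(q-9)>0$ to get the sign of $\theta$. The paper carries out directly the computations you left as expectations, in two places.
\begin{itemize}
\item In the zero- and one-simple-pair cases it normalizes around the row shared by the two nonsimple pairs. The third row is then $(1,b,1)$, and the branches $b=a$ and $b=a^{-1}$ of $2(a-b)(ab-1)/(ab)=0$ give $X_1$ and $X_0$ directly. Your single-pair normalization also works, with two caveats: survivors such as $(b,c)=(\overline a,a)$ are $X_1$ only after renormalizing, and several branches are excluded by the simplicity hypothesis rather than by singularity or an $H_2$ edge.
\item In the two-simple-pair case no resultants are needed. $|\gamma_{13}|^2-|\gamma_{23}|^2$ is a unit times $(a-1)(b+1)(ab-c^2)$, and after substituting $ab=c^2$, $|\gamma_{12}|^2-|\gamma_{13}|^2$ is a unit times $(b-c)(c-1)(bc+2b+2c^2+c)$.
\end{itemize}
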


\begin{proof}
{Dephase a repeated pair to $(1,1,a)$ and the third row to $(1,b,1)$.  Obstruction gives $2(a-b)(ab-1)/(ab)=0$: the two branches produce $X_1$ and $X_0$, and transposition shows both are column-obstructed.}

{For exactly two simple pairs, dephase to $X=(\mathbf1;(1,1,a);(1,b,c))$.  Equimodularity and the exclusion of equal rows and $H_2$ edges give successively}
\begin{equation}
{0=\frac{(a-1)(b+1)(ab-c^2)}{abc},\quad
0=\frac{(b-c)(c-1)(bc+2b+2c^2+c)}{bc^2},\quad
(a,b)=-\left(\frac{c(c+2)}{2c+1},\frac{c(2c+1)}{c+2}\right).}
\end{equation}
{This is $X_2(c)$ and the two displayed parameters are phases.  Writing $x=\RePart c$, direct cancellation gives}
\begin{equation}
{|\eta_{13}|^2-|\eta_{12}|^2=-\frac{36(1-x^2)}{5+4x},\qquad
0<\det(XX^\dagger)=(q-3)(q-9).}
\end{equation}
{The first expression vanishes only at the $H_2$ endpoints, and the column ratios have no other collision, proving column admissibility.  The second and Eq.~\eqref{eq:supp-two-simple-negative-invariant} give $q<3$ and $2\theta=q(q-3)<0$, unless $q=0$ and $X$ is order-three Hadamard.}
\end{proof}

\begin{lemma}[{Degenerate diagonal closure; computer-assisted}]
\label{lem:supp-degenerate-diagonal-closure}
{Let $H=\left(\begin{smallmatrix}E&B\\C&D\end{smallmatrix}\right)$ be an order-six Hadamard matrix with four invertible blocks. Then:}
\begin{enumerate}
\item {If $E\sim X_1(a)$, or if $E\sim X_0(a)$ and $\RePart\tau_{\rm r}(E)\ge0$, then $[H]\in\KK_6^{(3)}\cup\mathcal T_6$.}
\item {If $[H]$ is in neither of those sectors, $E=E_M(a,b)$, $D$ is admissible in both directions, and $\Xi_M(a,b)\ne0$, then the cubic $p_{a,b}$ in Eq.~\eqref{eq:supp-main-coordinate-cubic} is simple.}
\end{enumerate}
\end{lemma}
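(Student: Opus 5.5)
I plan to treat the two parts separately. Both reduce to explicit normal forms combined with the block identities of Eq.~\eqref{eq:blockorthogonality}, and I expect the second to need a computer-assisted elimination.

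\textbf{Part 1, type $X_1(a)$.} The block $E\sim X_1(a)$ has two rows $(1,1,1)$ and $(1,1,a)$. These agree on two selected columns, so in $H$ these two rows already have relative phase $1$ on two of the six columns. I then return to the full row-orthogonality of these two rows: the remaining four relative phases $z_1,\dots,z_4$ satisfy $a+z_1+z_2+z_3+z_4=-2$. I would pair this with the same constraint coming from the third row of $E$, which is $(1,a,1)$ and agrees with the first row on a different pair of columns. The aim is to force one relative phase to equal $-1$ on some column. Together with a column on which the relative phase is $1$, this exposes a $2\times2$ Hadamard submatrix, so $[H]\in\KK_6^{(3)}$ by Proposition~\ref{prop:publishedinputs}(1).

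If this direct route fails, I would instead use the column side. Since $X_1$ is column-obstructed, the same computation applied to $E^\dagger$ and $C$ gives a symmetric set of constraints. The last resort is to enumerate, exactly, the completions of $E=X_1(a)$ through the incidence system of Eq.~\eqref{eq:supp-six-incidence}, and to check that every completion either has a repeated ratio of $-1$ or has cubic-root rows and columns. In the latter case Proposition~\ref{prop:publishedinputs}(2) gives $\KK_6^{(3)}\cup\mathcal T_6$.

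\textbf{Part 1, type $X_0(a)$.} For $X_0(a)$ I would compute $\tau_{\rm r}$ directly. This should give a real polynomial in $\RePart a$, and the hypothesis $\RePart\tau_{\rm r}\ge 0$ then restricts $a$ to an arc. On that arc I would run the analysis of Proposition~\ref{prop:main-infinite-fiber-trichotomy} for the complementary block $B$, using $\tau_{\rm r}(B)=-\tau_{\rm r}(E)$ as in Eq.~\eqref{eq:main-complementary-cubic-sign}. The target is to force the dependent case, so that $B$ is a Fourier or Möbius-type block. Proposition~\ref{prop:main-fourier-block} or an exposed $H_2$ edge then finishes the argument.

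\textbf{Part 2.} Here I would compute the discriminant of the coordinate cubic $p_{a,b}$ symbolically as a function of the seed $E_M(a,b)$. I would then show that it divides, modulo the unit-circle relations, into $\Xi_M(a,b)$ times factors that correspond to the following conditions:
\begin{itemize}
\item a repeated ratio in $D$ (excluded because $D$ is admissible in both directions);
\item an $H_2$ edge (excluded because $[H]\notin\KK_6^{(3)}$);
\item the Tao or cubic-root configuration (excluded because $[H]\notin\mathcal T_6$).
\end{itemize}
The mechanism is that a double root of $p_{a,b}$ forces two equal coordinates in an actual side candidate. Via the forced block $D=-CE^\dagger(B^{-1})^\dagger$, this should translate into an equality of ratios in $D$, which contradicts admissibility. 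I would verify this last translation by exact Gröbner reduction.

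\textbf{Main obstacle.} I expect the hardest step to be the elimination in Part~1 for $X_0$ on the arc $\RePart\tau_{\rm r}\ge0$. The argument there must exclude positive-dimensional completions without the sign contradiction that drives the corner-routing proof, and this is likely where exact certificates become unavoidable.
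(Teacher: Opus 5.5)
\textbf{Main gap: the $X_0(a)$ branch of Part~1.} Proposition~\ref{prop:main-infinite-fiber-trichotomy} only gives necessary conditions when a fiber is \emph{infinite}. Nothing in the hypotheses makes the side fiber through $B$ infinite, so it cannot ``force the dependent case.'' Even when it does apply, $\RePart\tau_{\rm r}(B)=-\RePart\tau_{\rm r}(E)\le0$ is compatible with its M\"obius alternative. So your route gives neither a contradiction nor a path into $\KK_6^{(3)}\cup\mathcal T_6$.

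The missing idea is that this branch is empty. The coordinates of a noninitial row of the complementary side block are the roots of $f_a(z)=z^3+(a+2)z^2-(2a+1)z-a$, and $\operatorname{Disc}f_a/a^2=16\RePart\tau_{\rm r}(X_0(a))$. If the three roots are phases with product $a$, then $\operatorname{Disc}f_a/a^2=-\prod_{i<j}|x_i-x_j|^2\le0$. Hence the mere existence of a completion forces $\RePart\tau_{\rm r}(E)\le0$. Under your hypothesis only $\RePart\tau_{\rm r}(E)=0$ survives, where $f_a$ has roots $r,r,w$. Vieta gives $w=-2(r+1)/(r^2+1)$, $a=r^2w$ and $r^4-4r^3-6r^2-4r+1=0$. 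Completing the unique invertible overlap gives $|D_{22}|^2-1=(r+1)(r^2-5r-2)$, which does not vanish on the unit circle subject to the quartic. So $X_0(a)$ with nonnegative invariant has no Hadamard completion at all, and the claim holds vacuously. The sign argument you expected to be unavailable is exactly what works, but through the discriminant of the side cubic rather than the complementary sign flip.

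\textbf{The other two steps need correction.}

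For $X_1(a)$, your direct route cannot succeed. Orthogonality of row $1$ with rows $2$ and $3$ gives $a+z_1+z_2+z_3=-2$ (three side phases, not four) and its analogue. These relations by themselves leave a positive-dimensional family of side phases and force no $-1$. In the paper the $-1$ sits in the forced block: $D_{33}=-1$ for every invertible pairing at generic $a$. Your fallback enumeration is the right kind of argument, but it needs two preliminaries. First, use $0<\det(6I_3-EE^\dagger)=-4(\RePart a+2)(7\RePart a+2)$ to exclude $a=\pm\I$, where the leading coefficients vanish and finiteness is not certified. Second, treat $a=\pm1$ (singular or $H_2$) and $a=\omega$ separately; the latter has twelve explicit candidates and $144$ pairings, routed by Proposition~\ref{prop:publishedinputs}(2).

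In Part~2 your predicted mechanism is off: a double root $r,r,w$ of $p_{a,b}$ does not produce a repeated ratio in $D$. With $N=r^2+2r-1$ and $M=r^2-2r-1$, Vieta gives $w=-M/N$ and $b=-r^2M/(aN)$. The unique invertible overlap then completes to a dephased block $D=E_M(a^{-1},b^{-1})$, whose ratios are in general distinct. The contradiction with admissibility comes from every $E_M$ being obstructed in both directions through equimodular correlations. Relatedly, admissibility of $D$ is not a condition on $(a,b)$ until you complete. So factoring $\operatorname{Disc}p_{a,b}$ into conditions on $D$ is not the right formulation; parametrize by the double root and complete instead.
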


\begin{proof}
The eliminations, repeated-root reductions, and endpoint enumeration below are checked exactly; see the global product-escape certificate group in Table~\ref{tab:supp-certificate-index}.
{For $E=X_1(a)$, positivity of the actual Gram complement gives}
\begin{equation}
{0<\det(6I_3-EE^\dagger)=-4(\RePart a+2)(7\RePart a+2).}
\end{equation}
{This excludes the leading-coefficient exceptions $a=\pm\I$. Exact elimination leaves only $a=\pm1$ and $a^2+a+1=0$. Otherwise every invertible pairing has $D_{33}=-1$ and is $H_2$-reducible. The first two values are singular or $H_2$. At the remaining value, take $a=\omega$. Every adjacent candidate satisfies}
\begin{equation}
{\sum_jx_j=\sum_jy_j=-(2+\omega),
\qquad \sum_j y_j/x_j=0.}
\end{equation}
{The zero sum makes $(y_j/x_j)$ a phased permutation of $(1,\omega,\omega^2)$.  The fixed sums and Parseval then make the cyclic autocorrelations of $x$ and $y$ vanish, so each is a common phase times the cubic roots. Its prescribed sum fixes that phase to a sixth root.  Thus there are exactly $12$ candidates. Exact enumeration of their $144$ pairings gives only $H_2$-reducible or all-cubic-root completions, the latter routed by Proposition~\ref{prop:publishedinputs}(2).}

{For $E=X_0(a)$, a complement row has polynomial $f_a(z)=z^3+(a+2)z^2-(2a+1)z-a$.  For phase roots, $\operatorname{Disc}f_a/a^2=16\RePart\tau_{\rm r}(X_0(a))\le0$.  At equality, writing its roots as $r,r,w$, Vieta and exact completion give}
\begin{equation}
{\begin{gathered}
w=-\frac{2(r+1)}{r^2+1},\quad
a=r^2w,\qquad
r^4-4r^3-6r^2-4r+1=0,\qquad
|D_{22}|^2-1=(r+1)(r^2-5r-2).
\end{gathered}}
\end{equation}
{The two row multisets are $\{r,r,w\}$ and $\{r^{-1},r^{-1},w^{-1}\}$. Only one overlap is invertible. Neither final factor vanishes on the unit circle subject to the quartic, so $X_0$ has no nonnegative-invariant completion.}

{Finally, if $E=E_M(a,b)$, $\Xi_M\ne0$, and $p_{a,b}$ has roots $r,r,w$, Vieta gives, for $N=r^2+2r-1$ and $M=r^2-2r-1$,}
\begin{equation}
{w=-\frac{M}{N},\qquad
b=-\frac{r^2M}{aN},\qquad
a^2N+2ar(r+1)^2-r^2M=0.}
\label{eq:supp-main-repeated-root-relations}
\end{equation}
{Here $MN\ne0$; a triple root or aligned repeated positions is singular.  The unique invertible overlap completes, after dephasing, to}
\begin{equation}
{\widehat D=
\begin{pmatrix}1&1&1\\1&a^{-1}&b^{-1}\\1&b^{-1}&a/b\end{pmatrix}
=E_M(a^{-1},b^{-1}).}
\label{eq:supp-main-repeated-opposite}
\end{equation}
{Every $E_M$ is obstructed in both directions, contradicting admissibility of $D$; hence $p_{a,b}$ is simple.}
\end{proof}

\subproofheading{{Two-sided obstruction}}

\begin{lemma}[{Nonnegative two-sided obstruction; computer-assisted}]
\label{lem:supp-two-sided-badness}
{Let a finite-corner presentation have invertible blocks and suppose that its horizontal side is row-obstructed and its vertical side is column-obstructed. Their real row-cubic invariants agree. Assume that this common invariant is nonnegative. Then $[H]\in\KK_6^{(3)}\cup\mathcal T_6$. In particular, outside the Karlsson and Tao sectors, both required directions cannot be obstructed at such a corner.}
\end{lemma}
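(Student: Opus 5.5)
We first pin down what "obstructed" can mean for the two adjacent blocks, and then use the repeated-ratio trichotomy. Let the presentation be $H=\left(\begin{smallmatrix}E&B\\C&D\end{smallmatrix}\right)$ with all blocks invertible. Write $G_h=6I_3-EE^\dagger=BB^\dagger$ and $G_v=6I_3-E^\dagger E=C^\dagger C$.

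\emph{Agreement of invariants.} The off-diagonal entries of $G_h$ are the negatives of those of $EE^\dagger$, so $\tau_{\rm r}(B)=-\tau_{\rm r}(E)$. Likewise $\tau_{\rm c}(C)=-\tau_{\rm c}(E)$. Lemma~\ref{lem:rowcolumn} gives $\RePart\tau_{\rm r}(E)=\RePart\tau_{\rm c}(E)$. Hence
\[
\RePart\tau_{\rm r}(B)=\RePart\tau_{\rm c}(C)=\RePart\tau_{\rm r}(C^{\mathsf T})=-\RePart\tau_{\rm r}(E).
\]
This proves the agreement claim.

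Now assume this common value $\theta$ is nonnegative and that $[H]\notin\KK_6^{(3)}\cup\mathcal T_6$; we seek a contradiction.

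\emph{Case split by the obstruction type.} Row-obstruction of $B$ means that every unordered row pair either has a repeated entrywise ratio or satisfies $|\gamma_{i\ell}|=|\gamma_{j\ell}|$. Suppose first that some pair of $B$ is not simple. Lemma~\ref{lem:supp-repeated-ratio-trichotomy} puts $B$ in type $X_1$, $X_0$ or $X_2$; a visible $H_2$ edge is already excluded.
\begin{itemize}
\item Type $X_2$ forces $\theta<0$ unless $B$ is order-three Hadamard. The first contradicts $\theta\ge0$. The second places $H$ in $\KK_6^{(3)}\cup\mathcal T_6$ by Proposition~\ref{prop:main-fourier-block}.
\item Types $X_1$ and $X_0$: swap column triples so that $B$ becomes the corner $\widetilde E$ of $\left(\begin{smallmatrix}B&E\\D&C\end{smallmatrix}\right)$. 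All four blocks remain invertible, so Lemma~\ref{lem:supp-degenerate-diagonal-closure}(1) applies. For $X_0$ it needs $\RePart\tau_{\rm r}(B)\ge0$, which is exactly our hypothesis. In both types it yields Karlsson or Tao.
\end{itemize}
The vertical side is the same argument applied to $C^{\mathsf T}$, via the transpose of $H$.

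\emph{The all-simple case, where the real work lies.} It remains to treat both sides having all three pairs simple but equimodular. Then, for each pair, the two correlations against the third row have equal modulus. Row-obstruction therefore forces $|\gamma_{12}|=|\gamma_{13}|=|\gamma_{23}|=:\sqrt q$ for $B$, and similarly for $C$. The Gram matrix $G_h$ then has constant off-diagonal modulus, and the same holds for $EE^\dagger$.

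The plan is to parametrize this equimodular corner as a normal form $E=E_M(a,b)$, presumably of the shape displayed in Eq.~\eqref{eq:supp-main-repeated-opposite}. We then:
\begin{enumerate}
\item write the complement-row cubic $p_{a,b}$ and impose that both $G_h$ and $G_v$ are equimodular;
\item use the determinant identity $\det G=27-3p+J$ together with positivity of $G$, which forces $q<3$;
\item compare with $\theta\ge0$, mimicking the $X_2$ computation $2\theta=q(q-3)$, to get a sign contradiction except on the degeneracy locus $\Xi_M=0$ or on repeated roots.
\end{enumerate}
The repeated-root branch is handled by Lemma~\ref{lem:supp-degenerate-diagonal-closure}(2), run in reverse. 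There $D$ completes to $E_M(a^{-1},b^{-1})$, which is itself obstructed in both directions, so the same argument applies with the roles of $E$ and $D$ exchanged. The $\Xi_M=0$ locus and the endpoint values are finitely many exact cases. These we would settle by exact elimination, checking that every invertible pairing is $H_2$-reducible or an all-cubic-root completion, the latter routed by Proposition~\ref{prop:publishedinputs}(2).

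\textbf{Expected obstacle.} The main obstacle is this equimodular elimination: obtaining a clean identity linking $q$, $\theta$ and $\det$ for the doubly obstructed corner, and certifying that the exceptional branches close. I would first try to prove an analogue of Eq.~\eqref{eq:supp-two-simple-negative-invariant} directly from $\det(G_h)>0$ and $\det(EE^\dagger)>0$ with equal off-diagonal moduli.
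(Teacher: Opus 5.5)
\textbf{There is a genuine gap, in the all-simple case.} The following parts of your proposal match the paper:
\begin{itemize}
\item the agreement of the real cubic invariants;
\item the non-simple branch: $X_2$ is excluded by the sign unless it is an order-three Hadamard block, and $X_1,X_0$ are routed by Lemma~\ref{lem:supp-degenerate-diagonal-closure}(1) after moving $B$, resp.\ $C^{\mathsf T}$, to the corner;
\item the observation that, when all pairs are simple, the complement identities make $E$ row- and column-equimodular.
\end{itemize}
What you propose after that would not close the proof. First, bi-equimodularity must actually be factored. After discarding the linear $H_2$-edge factors it leaves \emph{two} families, the symmetric $E_M(a,b)$ and the circulant $E_C(b)$ of Eq.~\eqref{eq:supp-biequimodular-two-forms}. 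You only anticipate $E_M$.

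Second, and decisively, your step~3 cannot work on $E_M$. This family is two-dimensional: all three row correlations have modulus $|1+a+b|$, and $\tau_{\rm r}(E_M)=ab\,\overline{(1+a+b)}^{\,3}$. So $q$ and $\theta$ are independent there, and no identity of the form $2\theta=\pm q(3-q)$ holds. Positivity of $EE^\dagger$ and of $6I_3-EE^\dagger$ only gives $2|\theta_E|<27-9q$. That is compatible with $\theta_E\le0$: take $a,b$ near $\omega,\omega^2$ with a suitable argument of $1+a+b$. The relation $2\theta_E=q(3-q)$ holds exactly on the curve $\Xi_M=0$. That curve is not a finite set of exceptional cases; it is precisely where your sign argument belongs. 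There $\det(EE^\dagger)=(3-q)(q+9)>0$ gives $q<3$, and then $\theta_E\le0$ forces $q=\theta_E=0$, a Fourier block. You have the two regions reversed.

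\textbf{The missing idea is that on the open set $\Xi_M\ne0$ no Gram or sign data suffice; you must use unimodularity of the forced block $D$.} In the paper this goes as follows.
\begin{itemize}
\item The horizontal product equation collapses to $\Xi_M^2(u-ab)^2$, so $u=ab$.
\item The coordinate cubic $p_{a,b}$ is simple simply because the actual side $B$ has distinct ratios. So there is no repeated-root branch. Your reverse use of Lemma~\ref{lem:supp-degenerate-diagonal-closure}(2) with $E$ and $D$ exchanged is unnecessary, and it is not justified anyway, since the hypotheses concern rows of $B$ and columns of $C$.
\item Row interchange and Vieta put the third-row entries in $\{b/x_k\}$, leaving six pairings.
\item These pairings are excluded by exact residuals:
  \begin{itemize}
  \item the identity-pairing residual $I(a,b)$;
  \item the transposition resultant $-ab\operatorname{Disc}p_{a,b}$;
  \item opposite cycles, which expose an $H_2$ edge;
  \item the equal-cycle resultant, whose factor $R_{\rm tor}$ has no phase zero.
  \end{itemize}
\item The seams $ab=1$, $a=b^2$ and $a^2=b$ lead to singular, $H_2$, order-three-Hadamard or circulant leaves.
\item $E_C$ needs its own resultant computation.
\end{itemize}
Without an argument of this type, one that tests the forced completion rather than only the side Gram matrices, the all-simple case remains open.
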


\begin{proof}
The row--column factorizations, pairing resultants, and Cayley-coordinate positivity checks below are exact; see the global product-escape certificate group in Table~\ref{tab:supp-certificate-index}.
{Write the presentation as $H=\left(\begin{smallmatrix}E&B\\C&D\end{smallmatrix}\right)$, with $B$ the horizontal side and $C$ the vertical side.  The complement identities give $BB^\dagger=6I-EE^\dagger$ and $C^\dagger C=6I-E^\dagger E$.  Complementation negates each off-diagonal entry, and Lemma~\ref{lem:rowcolumn} identifies the real row and column cubic invariants of $E$. Hence}
\begin{equation}
{\theta_0:=\RePart\tau_{\rm r}(B)
=\RePart\tau_{\rm r}(C^{\mathsf T})
=-\RePart\tau_{\rm r}(E)\ge0.}
\label{eq:supp-two-sided-common-invariant}
\end{equation}

{If $B$ or $C^{\mathsf T}$ has a nonsimple pair, Lemma~\ref{lem:supp-repeated-ratio-trichotomy} gives $X_1$, $X_0$, or $X_2$.  The nonnegative invariant excludes $X_2$ except for an order-three Hadamard block, while Lemma~\ref{lem:supp-degenerate-diagonal-closure}(1) routes $X_1,X_0$.}

{Otherwise both side blocks are simple.  Obstruction and complementation make the dephased $E=(\mathbf1;(1,a,b);(1,c,d))$ row- and column-equimodular.  After using symmetry to replace $a=bc$ by $b=c$, the complete factorization is}
\begin{align}
{0}&{=-\frac{(a+1)(a-bc)(b-c)}{abc},}
\label{eq:supp-biequimodular-first-split}\\
{0}&{=\frac{(a-d)(b+1)(ad-b)}{abd}
=\frac{(a+b)(b^2-d)(ad-b)}{ab^2d}.}
\label{eq:supp-biequimodular-second-split}
\end{align}
{The explicit linear factors are $H_2$ edges. The remaining alternatives are}
\begin{equation}
{E_M(a,b)=\begin{pmatrix}1&1&1\\1&a&b\\1&b&b/a\end{pmatrix},
\qquad
E_C(b)=\begin{pmatrix}1&1&1\\1&b^2&b\\1&b&b^2\end{pmatrix}.}
\label{eq:supp-biequimodular-two-forms}
\end{equation}

{For $E_M$, let $q$ be the common squared correlation modulus, $\theta_E=-\theta_0\le0$, and $\Xi_M=a^2b^2\{2\theta_E-q(3-q)\}$.  Up to a nonzero phase and the positive factor $1+|1+a+b|^2$, the product equation is}
\begin{equation}
{\Xi_M^2(u-ab)^2.}
\label{eq:supp-main-product-split}
\end{equation}
{If $\Xi_M=0$, then $0<\det(EE^\dagger)=(3-q)(q+9)$ gives $q<3$, while $2\theta_E=q(3-q)$ and $\theta_E\le0$ force $q=\theta_E=0$, and Proposition~\ref{prop:main-fourier-block} applies.}

{If $\Xi_M\ne0$, Eq.~\eqref{eq:supp-main-product-split} fixes $u=ab$, with simple coordinate cubic}
\begin{equation}
{p_{a,b}(x)=x^3+(1+a+b)x^2-(ab+a+b)x-ab.}
\label{eq:supp-main-coordinate-cubic}
\end{equation}
{Row interchange and Vieta identify the third-row roots as $\{b/x_1,b/x_2,b/x_3\}$, so only six pairings remain. Put $T_k=2x_k^3+(1+a+b)x_k^2+ab$, with residuals}
\begin{align}
{I(a,b)}&{:=a^2b^2+5a^2b+a^2+5ab^2+5ab+b^2=0,}
\label{eq:supp-main-identity-pairing}
\\
{\operatorname{Res}_{x_k}(p_{a,b},T_k)}&{=-ab\operatorname{Disc}p_{a,b},}
\\
{0}&{=-ab(a-b^2)(a^2-b)(ab-1)R_{\rm tor}(a,b).}
\label{eq:supp-main-pairing-resultant}
\end{align}
{The first line (identity pairing), written with $C_0=\cos[(\alpha+\beta)/2]$ and $T_0=\cos[(\alpha-\beta)/2]$, has imaginary factors $\sin[(\alpha+\beta)/2](C_0+5T_0)$. The two branches give either the positive remainder $1+2C_0^2/25$ or $\{a,b\}=\{\omega,\omega^2\}$.  The second line excludes the three transpositions by simplicity. Opposite $3$-cycles expose an $H_2$ edge, and equal cycles give the last line.}

{Under Cayley coordinates, $R_{\rm tor}$ is a nonzero positive denominator times $P_-P_+$. As quadratics in $t$, their leading coefficients are $z^2-z+1$ and $z^2+z+1$, while}
\begin{equation}
{\operatorname{Disc}_tP_-= -z^2(3z^2-2z+3),
\qquad
\operatorname{Disc}_tP_+= -z^2(3z^2+2z+3).}
\end{equation}
{These are negative for real $z\ne0$, where $z=0$ is singular, and the Cayley endpoints give $2(b^4+14b^2+1)$ or $2(a^4+14a^2+1)$, nonzero on $\T$. Thus $R_{\rm tor}$ has no phase zero.  On $ab=1$, the roots are $1,r,r^{-1}$ and the flatness factor $(a-1)(r+1)(ar-1)$ gives respectively a singularity, an $H_2$ edge, or an order-three Hadamard block.}

{The factors $a=b^2$ or $a^2=b$ give the circulant $E_C(b)$. Apart from the routed roots of $(b^2+1)(b^2+b+1)$, its product is $b^3$, and}
\begin{equation}
{\begin{gathered}
p_C(x)=(x-b)\{x^2+(b+1)^2x+b^2\},\\
\operatorname{Res}_v\!\left(v^2+(b+1)^2v+b^2,2b^2+bv+b+v+1\right)
=-2b(b^2+1)(b^2+b+1).
\end{gathered}}
\end{equation}
{The identity residual $b^2(b^2+b+1)(b^2+4b+1)$ and transposition resultant $-b^3\operatorname{Disc}p_C$ exclude four pairings. Opposite cycles are $H_2$, and the displayed resultant routes equal cycles. Propositions~\ref{prop:publishedinputs}(1) and \ref{prop:main-fourier-block} route the resulting $H_2$ and order-three-Hadamard leaves.}
\end{proof}

\subproofheading{{The surviving block-polarized normal form}}

\begin{corollary}[The surviving block-polarized normal form]
\label{cor:supp-block-polarized-normal-form}
{Suppose a non-Karlsson, non-Tao completion $H=\left(\begin{smallmatrix}E&B\\C&D\end{smallmatrix}\right)$ has four invertible blocks, $E$ is obstructed in both directions, $D$ is admissible in both directions, and $\RePart\tau_{\rm r}(E)\ge0$.  Then, up to phasing and permutation, $E=E_M(a,b)$ in Eq.~\eqref{eq:supp-biequimodular-two-forms}, and its common squared correlation modulus $q$ and cubic invariant $\theta$ satisfy $2\theta=q(3-q)>0$.}
\end{corollary}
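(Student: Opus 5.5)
The plan is to extract the corollary from the case analysis already carried out in the proof of Lemma~\ref{lem:supp-two-sided-badness}, with the roles of the blocks arranged so that $E$ is the obstructed block. The starting point is that $E$ is row- and column-obstructed, invertible, and $\RePart\tau_{\rm r}(E)\ge0$. The complement identities $BB^\dagger=6I-EE^\dagger$ and $C^\dagger C=6I-E^\dagger E$ negate off-diagonal Gram entries, and Lemma~\ref{lem:rowcolumn} identifies row and column real invariants. Hence $\RePart\tau_{\rm r}(B)=\RePart\tau_{\rm r}(C^{\mathsf T})=-\RePart\tau_{\rm r}(E)\le0$.

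\textbf{Step 1: nonsimple row pairs in $E$.} Suppose $E$ has a row pair that is not simple (a repeated ratio). Lemma~\ref{lem:supp-repeated-ratio-trichotomy}, outside $H_2$ edges that are excluded by the non-Karlsson hypothesis, puts $E$ in one of the normal forms $X_1$, $X_0$, $X_2$. The type $X_2$ is column-admissible, which contradicts two-sided obstruction. The type $X_1$ is routed into $\KK_6^{(3)}\cup\mathcal T_6$ by Lemma~\ref{lem:supp-degenerate-diagonal-closure}(1). The same lemma handles $X_0$, because the invariant is nonnegative. All three outcomes are therefore excluded, and the same argument applies to $E^{\mathsf T}$. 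So every row pair and every column pair of $E$ is simple.

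\textbf{Step 2: reduction to $E_M$.} With all pairs simple, obstruction means equimodularity in both directions. Following the factorization in Eqs.~\eqref{eq:supp-biequimodular-first-split}--\eqref{eq:supp-biequimodular-second-split}, every linear factor is an $H_2$ edge, so only the two forms $E_M(a,b)$ and $E_C(b)$ survive. $E_C(b)$ is the special case $a=b^2$ of $E_M$, but it must still be checked to be consistent with the invariant condition. I expect to absorb it either as a specialization of $E_M$ or by the same invariant computation below.

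\textbf{Step 3: the invariant relation.} Here $D$ enters. In Lemma~\ref{lem:supp-two-sided-badness} the sign convention had $\theta_E\le0$. Now $\theta=\RePart\tau_{\rm r}(E)\ge0$, so I would compute $2\theta-q(3-q)$ directly for $E_M$; up to monomials this is the quantity $\Xi_M/(a^2b^2)$. There are two cases.
\begin{itemize}
\item[(a)] If $\Xi_M\ne0$, then Lemma~\ref{lem:supp-degenerate-diagonal-closure}(2), using admissibility of $D$, gives a simple coordinate cubic. Rerunning the pairing analysis of Eqs.~\eqref{eq:supp-main-identity-pairing}--\eqref{eq:supp-main-pairing-resultant} then shows that every pairing falls into the excluded sectors, or forces $\widehat D=E_M(a^{-1},b^{-1})$ as in Eq.~\eqref{eq:supp-main-repeated-opposite}. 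Since every $E_M$ is obstructed, this contradicts admissibility of $D$.
\item[(b)] Hence $\Xi_M=0$, that is, $2\theta=q(3-q)$. Invertibility gives $\det(EE^\dagger)=(3-q)(q+9)>0$, so $q<3$. If $q=0$, then $E$ is an order-three Hadamard block, which Proposition~\ref{prop:main-fourier-block} excludes. Therefore $q\in(0,3)$ and $2\theta=q(3-q)>0$.
\end{itemize}

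The main obstacle is Step 3(a): showing that in the $\Xi_M\ne0$ branch every completion either lies in the excluded sectors or makes $D$ itself of $E_M$ type. This needs the full six-pairing resultant analysis, and in particular the positivity of $R_{\rm tor}$ on the torus. I would first try to reuse that calculation verbatim, observing that it never used the sign of $\theta_E$ except through $\Xi_M$.
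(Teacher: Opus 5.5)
Your proposal is correct and follows essentially the paper's own proof: the trichotomy together with Lemma~\ref{lem:supp-degenerate-diagonal-closure}(1) eliminates repeated ratios, biequimodularity leaves $E_M$ (with $E_C$ as a leaf), the case $\Xi_M\ne0$ is routed by Lemma~\ref{lem:supp-degenerate-diagonal-closure}(2) followed by the reused six-pairing analysis, and the conclusion comes from $(3-q)(q+9)>0$ with $q\neq0$. Your exclusion of $X_2$ by column-admissibility is an equally valid substitute for the negative-invariant exclusion; two slips are harmless: $E_C(b)$ is literally $E_M(b,b^2)$ after a column swap (while $E_M(b^2,b)\sim E_C(b^{-1})$), and the outcome $\widehat D=E_M(a^{-1},b^{-1})$ belongs to the repeated-root step inside Lemma~\ref{lem:supp-degenerate-diagonal-closure}(2), not to the simple-cubic pairing analysis.
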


\begin{proof}
{The trichotomy and Lemma~\ref{lem:supp-degenerate-diagonal-closure} eliminate repeated ratios.  Equations~\eqref{eq:supp-biequimodular-first-split}--\eqref{eq:supp-biequimodular-second-split} leave only routed $H_2$/circulant leaves or $E_M$.  For $E_M$, the same lemma and Eqs.~\eqref{eq:supp-main-identity-pairing}--\eqref{eq:supp-main-pairing-resultant} route $\Xi_M\ne0$, and hence $\Xi_M=0$ and $2\theta=q(3-q)$. Positivity gives $(3-q)(q+9)>0$, while $q=0$ is order-three Hadamard. Thus $0<q<3$ and $\theta>0$.}
\end{proof}

\subproofheading{{Theorem~\ref{thm:global-product-regular-escape} (Exact product-regular reach; restated; computer-assisted)}}
{\emph{The only order-six classes without a product-regular frame are Tao's class and the single Karlsson class represented by $H_\times$:}}
\begin{equation}
{\HH_6\setminus\mathcal P_6
=\mathcal T_6\,\dot\cup\,\{[H_\times]\}.}
\label{eq:supp-global-product-regular-reach}
\end{equation}
{\emph{Theorem~\ref{thm:regular-seed-domain} characterizes the physical domain of each such presentation.}}

\begin{proof}
The incidence counts, endpoint enumeration, orbit reduction, and dependent-block threshold below are checked exactly; see the global product-escape certificate group in Table~\ref{tab:supp-certificate-index}.
Split the 400 $3\times3$ subblocks into 100 squares determined by complementary row and column partitions.  The four cubic invariants in such a square have signs $(\theta,-\theta,-\theta,\theta)$.  If a side fiber containing a positive-invariant block were infinite, the fixed-Gram trichotomy would force a Fourier block or an $H_2$ submatrix, while the negative-invariant alternative is unavailable. The same conclusion holds at $\theta=0$. Let $n_W(H)$ denote the number of the 400 positional corners that are finite-corner witnesses for $H$. Each positional corner belongs to a unique partition square, so the two witnesses supplied by distinct squares are distinct. Thus, outside Karlsson and Tao,
\begin{equation}
n_W(H)\ge 2\cdot100=200.
\label{eq:supp-witness-multiplicity}
\end{equation}

{Assume no frame is product regular.  For a nonzero-sign square, let $P,Q$ be its positive diagonal blocks and let $R(X),C(X)$ denote row and column obstruction.  At each negative diagonal corner, Lemma~\ref{lem:supp-admissibility-guards} gives at least one obstruction, while Lemma~\ref{lem:supp-two-sided-badness} excludes two. Hence}
\begin{equation}
R(P)\mathbin{\mathsf{xor}}C(Q)=1,
\qquad
R(Q)\mathbin{\mathsf{xor}}C(P)=1.
\label{eq:supp-badness-xor}
\end{equation}
Up to transposition and exchanging $P,Q$, there are two patterns.  In the direction-polarized pattern both blocks are {row-obstructed and column-admissible}.  {The trichotomy excludes every nonsimple type: $X_1,X_0$ are column-obstructed and $X_2$ has negative invariant outside the Fourier case.}  {Row obstruction} then makes the three row-correlation moduli equal. Write their common square as $q$ and the positive cubic invariant as $\theta$. Exact reduction of the two row-equimodularity equations gives, for every pair of column correlations $\eta_i,\eta_j$,
\begin{equation}
\bigl(3q-q^2+2\theta\bigr)
\bigl(|\eta_i|^2-|\eta_j|^2\bigr)=0.
\label{eq:supp-direction-polarization}
\end{equation}
The first factor is positive. This is immediate for $q\leq3$. For $3<q<9$, invertibility of the block gives the row-Gram determinant $27-9q+2\theta>0$, an
\begin{equation}
(3q-q^2+2\theta)-(27-9q+2\theta)=-(q-3)(q-9)>0.
\end{equation}
The endpoint $q=9$ would make the block singular. Hence the column correlations are equimodular too, contradicting column admissibility. Direction polarization is impossible.

In the remaining block-polarized pattern, one of $P,Q$ is {obstructed in both directions and the other is admissible in both}. Corollary~\ref{cor:supp-block-polarized-normal-form} shows that its sole non-Karlsson, non-Tao survivor is, up to row and column phasing and permutation,
\begin{equation}
E_M(a,b)=
\begin{pmatrix}1&1&1\\1&a&b\\1&b&b/a\end{pmatrix},
\qquad a,b\in\T,
\label{eq:supp-dependent-normal-form}
\end{equation}
on the dependent curve.  All three off-diagonal entries of both Gram matrices then have the same squared modulus $q$, and exact substitution reduces the curve equation to
\begin{equation}
2\theta=q(3-q)>0.
\label{eq:supp-dependent-weight}
\end{equation}
We call precisely such a block \emph{positive dependent}.

{For a zero-sign square the same two lemmas give Eq.~\eqref{eq:supp-badness-xor}.  In the direction-polarized pattern the trichotomy excludes nonsimple types. In the all-simple case Eq.~\eqref{eq:supp-direction-polarization} with $\theta=0$ and $0<\det(PP^\dagger)=27-9q$ gives $q<3$. The case $q=0$ is Fourier, while $q>0$ forces column equimodularity, contradicting admissibility. In the block-polarized pattern Corollary~\ref{cor:supp-block-polarized-normal-form} forces positive invariant.  Thus no zero-sign square occurs outside Karlsson and Tao, without any reduction to $X(\omega)$.}

{Therefore all $100$ squares are nonzero-sign and supply distinct positive dependent blocks:}
\begin{equation}
N_{\mathrm{dep}}\ge100.
\label{eq:supp-dependent-lower}
\end{equation}

We now bound the same number from above.  The physical range of the dependent curve follows from a short exact sign calculation.  {The values $a=-1$ or $b=-1$ expose an $H_2$ edge and have already been excluded. Hence both phases have finite real Cayley coordinates.} Write its two phases in Cayley coordinates $t,z\in\mathbb R$, set $\rho=tz$ and $U=(t+z)^2$, and put
\begin{equation}
\begin{aligned}
F_\rho(U)={}&U^2-(2\rho^2+12\rho+6)U
 +(\rho-3)(\rho+1)(\rho+3)^2,\\
q={}&\frac{U+\rho^2+6\rho+9}{U+(\rho-1)^2}.
\end{aligned}
\label{eq:supp-dependent-cayley}
\end{equation}
The dependent equation is $F_\rho(U)=0$, while real $t,z$ give $U\geq0$ and $U\geq4\rho$.  Moreover, Eq.~\eqref{eq:supp-dependent-weight} and $|\theta|\le q^{3/2}$ imply $3-q\le2\sqrt q$, hence $q\ge1$. The signs of
\begin{equation}
\begin{gathered}
F_\rho(0)=(\rho-3)(\rho+1)(\rho+3)^2,\\
F_\rho(4\rho)=(\rho-9)(\rho+1)^2(\rho+3),\\
F_\rho(12\rho+9-\rho^2)=4\rho^2(\rho-9)(\rho+1)
\end{gathered}
\label{eq:supp-dependent-sign-table}
\end{equation}
are used below. Put $U_0=12\rho+9-\rho^2$, $M=\max(0,4\rho)$, and
\begin{equation}
\begin{aligned}
N&=U+\rho^2+6\rho+9,&
D&=U+(\rho-1)^2,\\
N-D&=8(\rho+1),&
9D-5N&=4(U-U_0).
\end{aligned}
\label{eq:supp-dependent-range-identities}
\end{equation}
The denominator $D$ is positive on the physical branch: equality would require $U=0$ and $\rho=1$, contrary to $U\ge4\rho$. Since $q=N/D\ge1$, the first identity gives $\rho\ge-1$.

If $U_0<M$, physicality immediately gives $U-U_0>0$. Suppose $U_0\geq M$. Then $-1\leq\rho\leq9$, and eqs.~\eqref{eq:supp-dependent-sign-table} give $F_\rho(M)\leq0$ and $F_\rho(U_0)\leq0$. Both points therefore lie between the roots of the upward-opening quadratic.  A root satisfying $U\geq M$ is the upper root, except possibly when the lower root equals $M$.  That boundary case is explicit.  For $\rho\leq0$, $M=0$ and $F_\rho(M)=0$ in $[-1,0]$ only at $\rho=-1$, but there $U_0=-4<M$, contrary to the present case. For $0\leq\rho\leq9$, $M=4\rho$ and $F_\rho(M)=0$ only at $\rho=9$, where $M=U_0=36$.  Thus the boundary case also has $U\geq U_0$, while the upper root has this inequality because $U_0$ lies between the two roots. The second identity now gives the sharp range
\begin{equation}
1\le q<\frac95;
\label{eq:supp-dependent-range}
\end{equation}
At $q=9/5$, Eqs.~\eqref{eq:supp-dependent-cayley}-- \eqref{eq:supp-dependent-sign-table} give either $(\rho,U)=(0,9)$ or $(9,36)$ (the remaining algebraic value has $U=-4$).  These force respectively $a=1$ or $b=1$, or $a=b$, and are the already treated low-simplicity seams.  This proves the strict upper endpoint. Fix a row triple and write $G$ for the column Gram matrix $G:=X^*X$ of the $3\times6$ restriction $X$ to those rows. Then $G^2=6G$, $\operatorname{rank}G=3$, and the squared moduli of its 15 off-diagonal entries are edge weights on $K_6$. If five column triples were dependent, no two could be complementary because complementary blocks carry opposite cubic-invariant sign. Their incidence pattern would therefore be one of 24 $S_6$-orbits. Exact rational elimination makes 20 types force $q=9/5$.  Each of the other four contains three faces of a tetrahedron. After gauging its three incident Gram entries positive, all eight remaining phase choices give a nonzero $4\times4$ Gram determinant throughout Eq.~\eqref{eq:supp-dependent-range}, contradicting rank three.  Thus each of the 20 row triples supports at most four dependent blocks and
\begin{equation}
N_{\mathrm{dep}}\le20\cdot4=80.
\label{eq:supp-dependent-upper}
\end{equation}
Equations~\eqref{eq:supp-dependent-lower} and \eqref{eq:supp-dependent-upper} are incompatible. Therefore every class outside Karlsson and Tao has a product-regular frame. Proposition~\ref{prop:supp-karlsson-product-exceptional-singleton} covers every Karlsson class except $[H_\times]$, while Proposition~\ref{prop:supp-tao-product-exceptional} excludes Tao and the first part of Proposition~\ref{prop:supp-karlsson-product-exceptional-singleton} excludes $[H_\times]$. {Together with Lemma~\ref{lem:supp-admissibility-guards}, these statements prove the displayed reach conclusion without an additional hypothesis.}
\end{proof}

\subsubsection{{Product-sheet ramification locus}}
\label{sec:supp-product-ramification}

For a product-regular framed corner, where the listed leading, matching, block-determinant, simple-cubic-root, and companion-denominator localization factors are nonzero, write $\Phi_h(z)=\sum_{k=0}^6c_{k,h}z^k$ for the horizontal fundamental sextic of Eq.~\eqref{eq:supp-fundamental-sextic}. Define its residual product discriminant by
\begin{equation}
\Omega(a,b,c,d):=
\frac{c_{3,h}^2-4(1+s_hs_h^\#)^2c_{6,h}c_{0,h}}
{(1+s_hs_h^\#)^2(r_hr_h^\#-t_ht_h^\#)^2}.
\label{eq:supp-ramification-factorization}
\end{equation}
The numerator before the displayed square factors are removed is exactly the discriminant of the horizontal product quadratic. The horizontal and vertical residual factors agree. A product-regular framed corner is called \emph{product ramified} when $\Omega=0$ and \emph{product \'etale} when $\Omega\ne0$, This frame label is defined before any historical sector split.

At class level, the product-ramification locus from the main text is
\begin{equation}
\begin{aligned}
\mathcal R_{6,\mathrm{prod}}=\{[H]\in\HH_6:\;&\text{some representative of $[H]$ has a product-regular frame}\\
&\text{with $\omega_{\rm n}=0$}\}.
\end{aligned}
\label{eq:supp-product-branch-class-locus}
\end{equation}
{Equation~\eqref{eq:omega-residual-discriminant-bridge} shows that $\Omega=0$ and $\omega_{\rm n}=0$ define the same product-regular seeds. The residual factor is sharp-real:}
\begin{equation}
\Omega^\#=\Omega.
\label{eq:supp-ramification-sharp-reality}
\end{equation}
At a physical torus point, sharp evaluates as complex conjugation, so Eq.~\eqref{eq:supp-ramification-sharp-reality} makes $\Omega$ real-valued. This is the product-sheet branch locus: the two roots of the product quadratic merge.  It is distinct from the Jacobian ramification divisor of the global ordered six-moment map.

On $\Omega=0$, the horizontal product is
\begin{equation}
u_0=-\frac{c_{3,h}}{2(1+s_hs_h^\#)c_{6,h}},
\label{eq:supp-ramification-u0}
\end{equation}
and the candidate block is recovered from the roots of $q_{s_h,u_0}$ and their companion values.  The vertical block is obtained from the transposed seed and its repeated product $m_0$. The final block is again $D=-CE^\dagger(B^{-1})^\dagger$.

\subproofheading{{Necessary automorphism for four-circulant-block matrices}}

\begin{lemma}[Necessary automorphism for four-circulant-block matrices]
\label{lem:supp-dita-automorphism}
Every matrix equivalent to a matrix built from four $3\times3$ circulant blocks has a monomial automorphism whose row and column permutation parts both have cycle type $(3,3)$.
\end{lemma}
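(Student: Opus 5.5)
The plan is to exhibit the automorphism explicitly on the circulant-block matrix itself and then transport it along the equivalence. Let $P$ be the $3\times3$ cyclic shift permutation matrix, $P e_k = e_{k+1 \bmod 3}$. A $3\times3$ matrix $X$ is circulant exactly when $PXP^{-1}=X$, i.e.\ when it commutes with $P$.

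\textbf{Step 1: the block-circulant matrix.} Suppose
\[
H_0=\begin{pmatrix}A&B\\C&D\end{pmatrix}
\]
with $A,B,C,D$ circulant, and set $\Pi=P\oplus P$. Block multiplication gives
\[
\Pi H_0\Pi^{-1}=\begin{pmatrix}PAP^{-1}&PBP^{-1}\\PCP^{-1}&PDP^{-1}\end{pmatrix}=H_0 .
\]
So the pair $(\Pi,\Pi^{-1})$ satisfies $\Pi H_0\Pi^{-1}=H_0$. This is a monomial automorphism of $H_0$ in the sense of Definition~\ref{def:equivalence}, with trivial diagonal parts. Both its row part $\Pi$ and its column part $\Pi^{-1}$ are permutations consisting of two disjoint $3$-cycles, one on each index triple, so both have cycle type $(3,3)$.

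\textbf{Step 2: transport to an equivalent matrix.} Let $H=M_rH_0M_c$ with $M_r=D_rP_r$ and $M_c=P_cD_c$ monomial. Put
\[
\Pi_r'=M_r\Pi M_r^{-1},\qquad \Pi_c'=M_c^{-1}\Pi^{-1}M_c .
\]
Then
\[
\Pi_r' H\,\Pi_c'=M_r\Pi H_0\Pi^{-1}M_c=M_rH_0M_c=H .
\]
Both $\Pi_r'$ and $\Pi_c'$ are monomial, because monomial matrices form a group.

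\textbf{Step 3: cycle types are preserved.} The permutation part of a product of monomial matrices is the product of their permutation parts, since diagonal factors only rescale entries. Hence the permutation part of $\Pi_r'$ is $P_r\Pi P_r^{-1}$, and that of $\Pi_c'$ is $P_c^{-1}\Pi^{-1}P_c$. Conjugation preserves cycle type, so both are of type $(3,3)$.

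There is no genuine obstacle here. The only point needing care is Step 3: checking that the diagonal phases $D_r,D_c$ do not change the permutation part, and fixing the convention that an automorphism means a pair $(M_r,M_c)$ of monomial matrices with $M_rHM_c=H$.
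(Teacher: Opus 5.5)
Your proof is correct and takes the same approach as the paper. The paper also uses $Q=\operatorname{diag}(P,P)$ with $QHQ^{-1}=H$, then observes that standard equivalence conjugates this automorphism by monomial matrices, which preserves cycle type in $S_6$. Your Steps~2--3 spell out what the paper leaves implicit: the explicit transport formula, and the fact that taking the permutation part of a monomial matrix is a group homomorphism.
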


\begin{proof}
Write the matrix in block form
\begin{equation}
H=\begin{pmatrix}A&B\\ C&D\end{pmatrix},
\label{eq:supp-four-circulant-block-form}
\end{equation}
where all four blocks are circulant, and let $P$ be the $3\times3$ cyclic permutation matrix.  Each block commutes with $P$.  Consequently, with $Q=\operatorname{diag}(P,P)$,
\begin{equation}
QHQ^{-1}=H.
\label{eq:supp-four-circulant-automorphism}
\end{equation}
The permutation underlying both $Q$ and $Q^{-1}$ consists of two 3-cycles.  Standard equivalence conjugates this automorphism by monomial matrices.  Its row and column permutation parts are therefore conjugated in $S_6$, which preserves their cycle type.  Hence the stated automorphism is necessary throughout the entire equivalence orbit.
\end{proof}

\subproofheading{{Representative ramification seed and numerical diagnostics}}

{Let $\xi_0$ be the unique real root of}
\begin{equation}
\begin{aligned}
P(\xi)={}&16\xi^7+120\xi^6+428\xi^5+952\xi^4\\
&+1363\xi^3+1231\xi^2+664\xi+176.
\end{aligned}
\label{eq:supp-ramification-polynomial}
\end{equation}
{The representative Cayley seed is}
\begin{equation}
{(a,b,c,d)=\left(\I,\I,\I,\frac{1+\I\xi_0}{1-\I\xi_0}\right).}
\label{eq:supp-ramification-seed}
\end{equation}
{The term \emph{Cayley seed} refers to the map $\xi\mapsto(1+\I\xi)/(1-\I\xi)$, which sends the real line to the unit circle apart from the limiting point $-1$~\cite{Cayley1846}. The cleared residual discriminant at this seed is, up to a nonzero rational factor,}
\begin{equation}
(\xi+1)P(\xi).
\label{eq:supp-ramification-specialization}
\end{equation}
Here the \emph{cleared residual discriminant} is the residual branch factor $\Omega$ after substituting the Cayley coordinates and multiplying by the nonzero denominator powers needed to obtain an ordinary polynomial. Arb (arbitrary-precision ball arithmetic~\cite{Johansson2017Arb}) isolates the real root and gives the diagnostic margins shown in Table~\ref{tab:supp-ramification-bounds}.
{Every entry in Table~\ref{tab:supp-ramification-bounds} is a rigorous lower bound obtained from outward-rounded 256-bit Arb enclosures. The exact seed data and interval calculation comprise the ramification-representative certificate group in Table~\ref{tab:supp-certificate-index}.}

\begin{table}[h]
\caption[Rigorous 256-bit Arb lower bounds at the representative ramification seed.]
{{Rigorous 256-bit Arb lower bounds at the representative ramification seed.}}
\label{tab:supp-ramification-bounds}
\centering
\renewcommand{\arraystretch}{1.12}
\begin{tabular}{|l|c|}
\hline
quantity & lower bound\\
\hline
companion denominator modulus & $1.675730$\\
\hline
product-cover regularity factors & $1.870721$\\
\hline
pairwise Cayley-root separation & $0.745366$\\
\hline
block determinant modulus & $1.930433$\\
\hline
all $225$ Karlsson-equation moduli & $0.110717$\\
\hline
Tao witness $\lvert\rho^3-1\rvert$ & $1.996596$\\
\hline
all $1600$ block-automorphism witnesses & $1.067118$\\
\hline
\end{tabular}
\end{table}
The positive margins show numerically that the displayed completion is product regular and separated from the Karlsson, Tao, and four-circulant tests used here. They are reported as evidence about this representative seed, not as a theorem about a neighborhood or the dimension of its class-space image. No classification or product-coverage result depends on these numerical comparisons.

\ifdefined\SupplementOnly
\bibliography{ref}
\fi

\end{document}